\theoremstyle{plain}
\newtheorem{Proposition}{Proposition}
\newtheorem{Lemma}{Lemma}
\newtheorem{Theorem}{Theorem}
\theoremstyle{remark}
\newtheorem{Assumption}{Assumption}
\newtheorem{Definition}{Definition}
\newtheorem{Setting}{Setting}
\newcommand{\R}{\mathbb{R}}
\newcommand{\N}{\mathbb{N}}
\newcommand{\E}{\mathbb{E}}
\newcommand{\Cov}{\operatorname{Cov}}
\newcommand{\EINS}{\vmathbb{1}}           
\newcommand{\Pj}{\mathbb{P}}
\newcommand\argmin{\operatornamewithlimits{argmin}}
\newcommand{\Clambda}{C_1}
\newcommand{\CtrendSeasonalBound}{C_2}
\newcommand{\CfinalEstimate}{C_3}
\newcommand{\CvarianceEstimateLower}{C_4}
\newcommand{\CvarianceEstimate}{C_5}
\newcommand{\CremainderSeasonal}{C_6}
\newcommand{\CfinalTrendEstimate}{C_7}
\newcommand{\CerrorTrend}{C_8}
\newcommand{\Cdeterminant}{C_9}
\newcommand{\ClengthL}{C_{10}}
\newcommand{\Ctrace}{C_{11}}
\newcommand{\CcofactorXu}{C_{12}}
\newcommand{\CnoiseLin}{C_{13}}
\newcommand{\Cr}{C_{14}}
\newcommand{\CcostEst}{C_{15}}
\newcommand{\CcostExtraCp}{C_{16}}
\newcommand{\Cmiss}{C_{17}}
\newcommand{\CextraCP}{C_{18}}
\newcommand{\CremainderTrend}{C_{19}}
\newcommand{\CfinalSeasonalEstimate}{C_{20}}
\newcommand{\CsizeOfVzero}{C_{21}}
\newcommand{\CtildeQ}{C_{22}}
\newcommand{\CprobBoundTildeq}{C_{23}}
\newcommand{\CsuccessProbability}{C_{24}}
\newcommand{\CfinalAnomalyEstimate}{C_{25}}
\newcommand{\mn}{{m_n}}
\newcommand{\rQ}{{r_Q}}
\newcommand{\Ktau}{{K_{\tau}}}
\begin{document}

\title{Detection of collective and point anomalies in the presence of trend and seasonality}

\author{Yiyin Zhang, Florian Pein, Idris A.\ Eckley}

\maketitle

\begin{abstract}
Detecting anomalies in time series data is a challenging task with broad relevance in many applications. Existing methods work effectively only under idealized conditions, typically focusing on point anomalies or assuming a constant baseline. Our approach overcomes these limitations by detecting both collective and point anomalies, while allowing for polynomial trends and seasonal patterns. We establish statistical theory demonstrating that our method accurately decomposes the time series into anomaly, trend, seasonality, and remainder components. We further show that the approach provides a consistent estimate of the number of anomalies and their locations. Simulation studies confirm its strong detection performance with finite samples, and an application to energy price data illustrates its practical utility.
\end{abstract}

\section{Introduction}\label{section:introduction}
        

The challenge of modelling in the presence of anomalies has been an area of active research for many years. 
For example, a rich literature exists describing inference methods that can cope with the effect of outlying observations. Notable contributions include the foundational work of \citet{huber1981robust} and \citet{rousseeuw1984robust} who introduced M- and S-estimators to provide robust estimates of the mean and variance, effectively mitigating the impact of outliers. In addition, several  time series approaches have been developed to provide methods which are robust to anomalous observations. See, for example, work developing robust modelling for ARMA \citep{muler2009robust}, ARCH \citep{muler2002robust}, and GARCH \citep{muler2008robust} settings. A recent overview is also provided by \citet{maronna2019robust}. This article addresses a different challenge, namely the identification of point and collective anomalies in the presence of trend and seasonality. 


Traditionally, anomaly detection focuses on identifying individual data points or entire sequences that deviate from expected patterns. More recent research, however, has shifted toward integrating changepoint detection with anomaly detection to capture collective anomalies -- groups of consecutive data points that appear anomalous only when considered together. A prominent example is the work of \citet{fisch2020real}, who introduced the CAPA (Collective And Point Anomalies) algorithm, capable of detecting both collective and point anomalies effectively. It operates under the assumption \eqref{eq:capa_assumption} that the data sequence follows a constant baseline with Gaussian distributed noise, treating any deviations as anomalies:
\begin{equation}
\label{eq:capa_assumption}
    Y = A + B + \varepsilon,
\end{equation}
where $ Y = (y_1, \ldots, y_n) $ is the observed time series, $ A = (a_1, \ldots, a_n) $ represents point and collective anomalies, $ B $ is the constant baseline, and $ \varepsilon = (\varepsilon_1,\ldots,\varepsilon_n) $ is the noise vector. By employing a penalized cost method, CAPA efficiently identifies both the number and locations of point and collective anomalies, balancing model complexity with accuracy. However, real-world time series data often exhibit more complicated structures, including trends, seasonal patterns, and noise. Recognizing these patterns is vital for accurate analysis across various domains such as finance \cite{liu2008isolation, kumar2006financial}, energy \cite{hong2016energy, amjady2010short}, and healthcare \cite{clifford2006advanced, liu2015early}. An example of such data is displayed in Figure~\ref{fig:data_oracle}, consisting of a seasonal component and a quadratic trend:
\begin{equation}
    \label{eq:model_assumption}
    Y = A + S + T + \varepsilon.
\end{equation}
The observed data $Y$, anomaly component $A$, and noise $\varepsilon$ follow the structure in \eqref{eq:capa_assumption}. Instead of a constant trend, we use a polynomial trend $T=(t_1,\ldots,t_n)$ and a periodic seasonal component $S=(s_1,\ldots,s_n)$; the full model appears in Section~\ref{sec:model}. Intuitively, $A$ captures atypical deviations, $T$ the long-term deterministic evolution, $S$ recurring fluctuations, and $\varepsilon$ the remaining random variation.

\begin{figure}[htbp]
    \centering
    \includegraphics[width=0.6\textwidth]{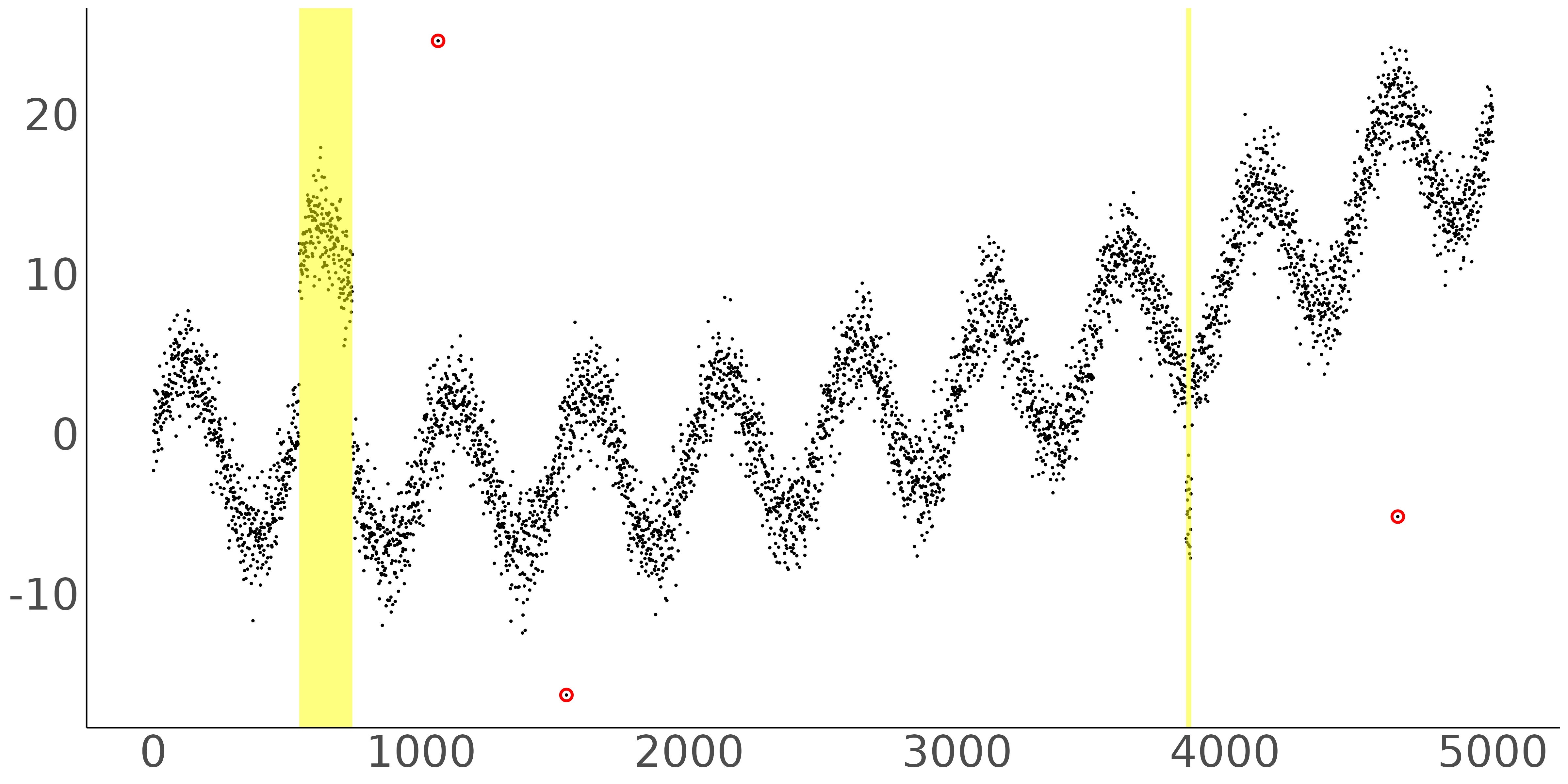}
    \caption{A simulated example illustrating a time series that includes a quadratic trend and a sinusoidal seasonal component, as well as both collective and point anomalies. The yellow-shaded regions highlight the collective anomalies, while red-circled points represent the point anomalies.}
    \label{fig:data_oracle}
\end{figure}

Distinguishing between genuine anomalies and typical variations within such data pre\-sents a significant challenge, particularly when trends and seasonal effects are pronounced. In such situations, CAPA and other variants can face limitations. For example, the presence of strong trend and seasonal components can obscure or mask anomalies, leading to false positives, as seen in Figure \ref{fig:capa}.

\begin{figure}[htbp]
    \centering
    \includegraphics[width=0.6\textwidth]{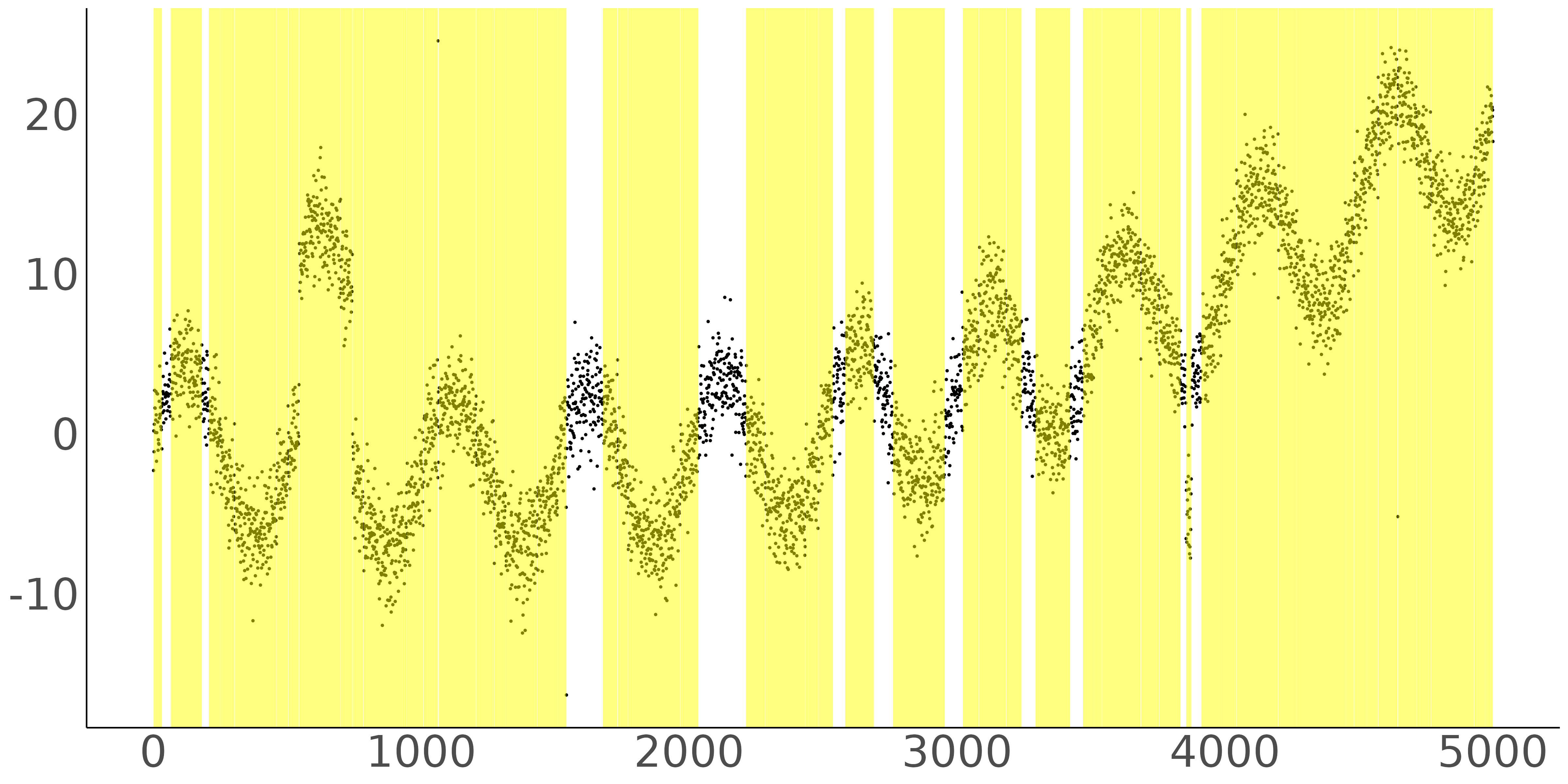}
    \caption{\small 
    CAPA applied to the example shown in Figure~\ref{fig:data_oracle}, with the yellow areas highlighting the collective anomalies estimated by CAPA. 
    }
    \label{fig:capa}
\end{figure}

The challenge of decomposing a time series \emph{without} anomalies into different components has been well studied in the literature, for an overview see \citep[Chapter 6]{hyndman2018forecasting} and \citep{dama2021time, DasBarman2025}. 
Established methods include Seasonal and Trend decomposition using Loess (STL) \citep{cleveland1990stl}, its version robust to outliers RobustSTL \citep{wen2019robuststl}, and Prophet \citep{taylor2018forecasting}. Such methods have been designed for a model without (collective) anomalies, i.e.~model \eqref{eq:model_assumption} without the anomaly component $A$, or it being limited to point anomalies. Conversely, these approaches are not limited to polynomial trends, but allow a smoothly changing trend. \citet{wen2019robuststl} demonstrate that such a decomposition enhances the precision of anomaly detection by estimating trend and seasonal components and then focusing on the remainder to detect any anomalies. This usually works well for point anomalies, but struggles when time series contain collective anomalies. This is illustrated in Figure \ref{fig:stl_decompose}, where we apply STL to the time series shown in Figure \ref{fig:data_oracle} which contains a collective anomaly, trend, and seasonality. Note how the collective anomaly structure is incorporated into the estimated trend and seasonal components, thus hindering the detection of anomalies. RobustSTL is specifically designed to be robust to abrupt changes while providing a decomposition into trend, seasonality, and remainder. However, practically it suffers from being a computationally intensive procedure. Due to its comparatively slow runtime, we did not apply it to the example shown in Figure \ref{fig:data_oracle} with $5\,000$ observations, but only to settings with fewer data points. In these experiments, we observed that the method tends to mix parts of the trend component into the seasonal component and occasionally introduces spurious point anomalies. Nevertheless, it produces trend and remainder estimates that are (nearly) free of seasonality.

\begin{figure}[htbp]
    \centering
\includegraphics[width=0.9\textwidth]{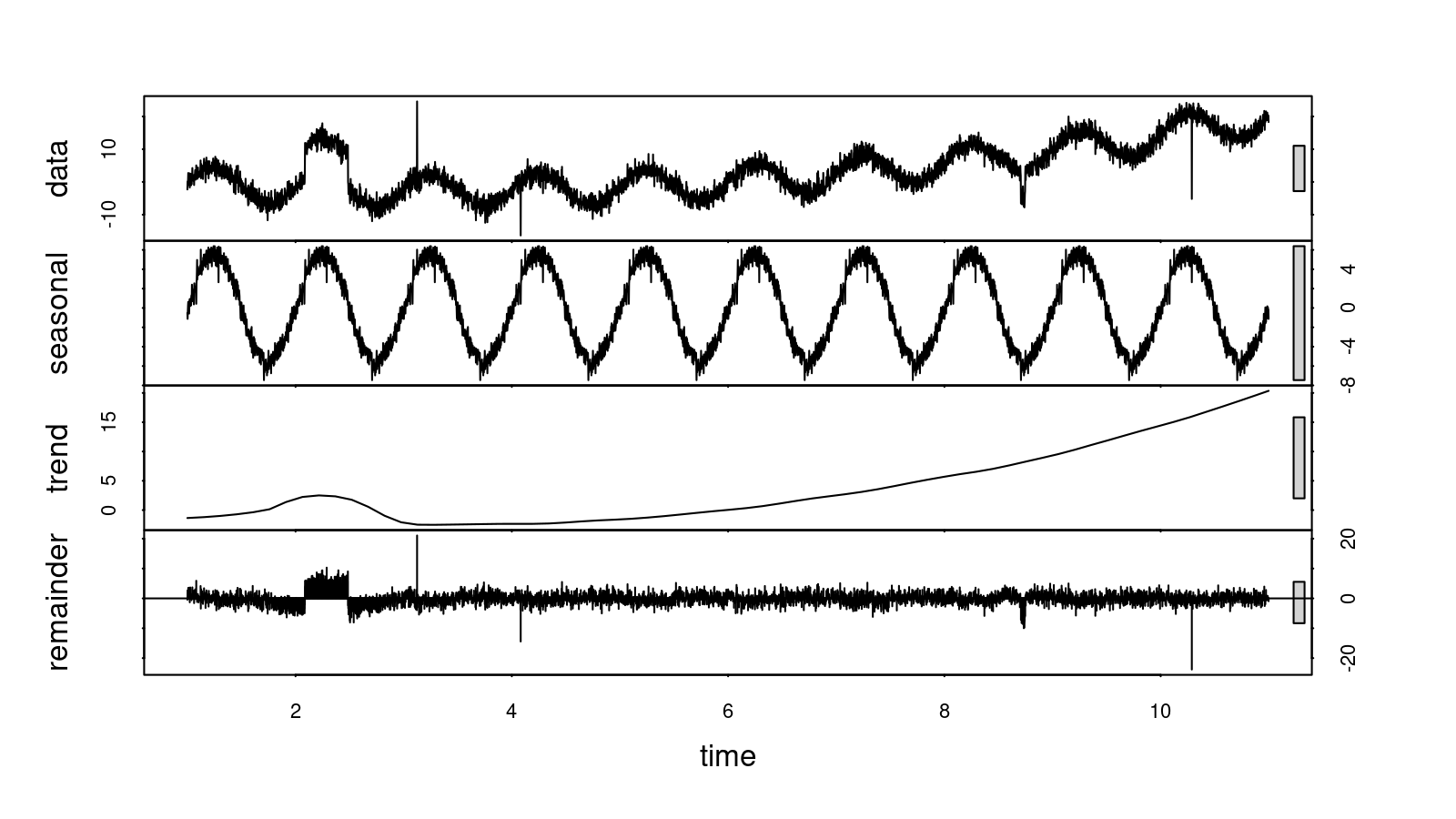}
    \caption{\small Decomposition of the simulated example illustrated in Figure \ref{fig:data_oracle} using STL. It consists of four subplots: the original data; the decomposed seasonal component; the decomposed trend; and the remainder. The collective anomalies have affected the decomposition, showcasing noticeable variations in both the seasonal and trend components.}
    \label{fig:stl_decompose}
\end{figure}
Consequently, current methods are not well-suited for handling such complex anomaly structures, highlighting the need for more robust approaches. To overcome the limitations of existing methods, we propose \textbf{S}easonal \textbf{T}rend \textbf{A}nomaly \textbf{D}etection, (\textbf{STAD}), a novel  approach extending existing anomaly detection methods by allowing a polynomial trend and seasonality as part of the typical pattern. At the same time, STAD decomposes the data into four components: anomaly, trend, seasonality, and remainder. 

Our proposed method consists of a few steps. Firstly, it uses differenced data, where the lag-difference is a multiple of the seasonal length. This removes the seasonal component and allows to estimate the trend in the next step. Standard robust methods can effectively estimate the polynomial trend at presence of point anomalies but struggle with collective anomalies, since they are non-symmetric and contain potentially a large number of anomalous points. We aim to overcome this challenge by finding a subsample of the data that is free of collective anomalies and large enough to estimate the trend reliably. To this end, we propose a novel systematic sampling strategy that gives us a collection of subsamples. We subtract the estimated trend and obtain a robust estimate of the seasonality. Similarly, once the seasonality is estimated, the trend can be estimated without that differencing is required. To improve accuracy we make us of it and re-estimate the trend and seasonal component once before estimating the anomalies. Section~\ref{section:method} gives full details of the proposed methodology.

In Section~\ref{section:theory} we establish statistical theory to confirm the STAD's detection properties. We show that it successfully decomposes the time series by providing asymptotic L2 bounds for the error of its estimation of the anomaly, trend and seasonal component. Furthermore, we show consistent estimation of the number of anomalies and that estimation of start and end points have a small asymptotic error. These results substantially build on the results established by \citet{fisch2020real} for CAPA.

The finite sample performance of STAD is studied via simulation studies in Section~\ref{section:simulation}. 
In Section~\ref{section:applications} we use STAD to detect anomalies in the price of energy data. A brief illustration of the effectiveness of our method is given in Figures~\ref{fig:STAD_decompose}~and~\ref{fig:STAD_capa}, in which we have applied STAD to the data shown Figure~\ref{fig:data_oracle}.

\begin{figure}[hbtp]
    \centering   \includegraphics[width=0.6\textwidth]{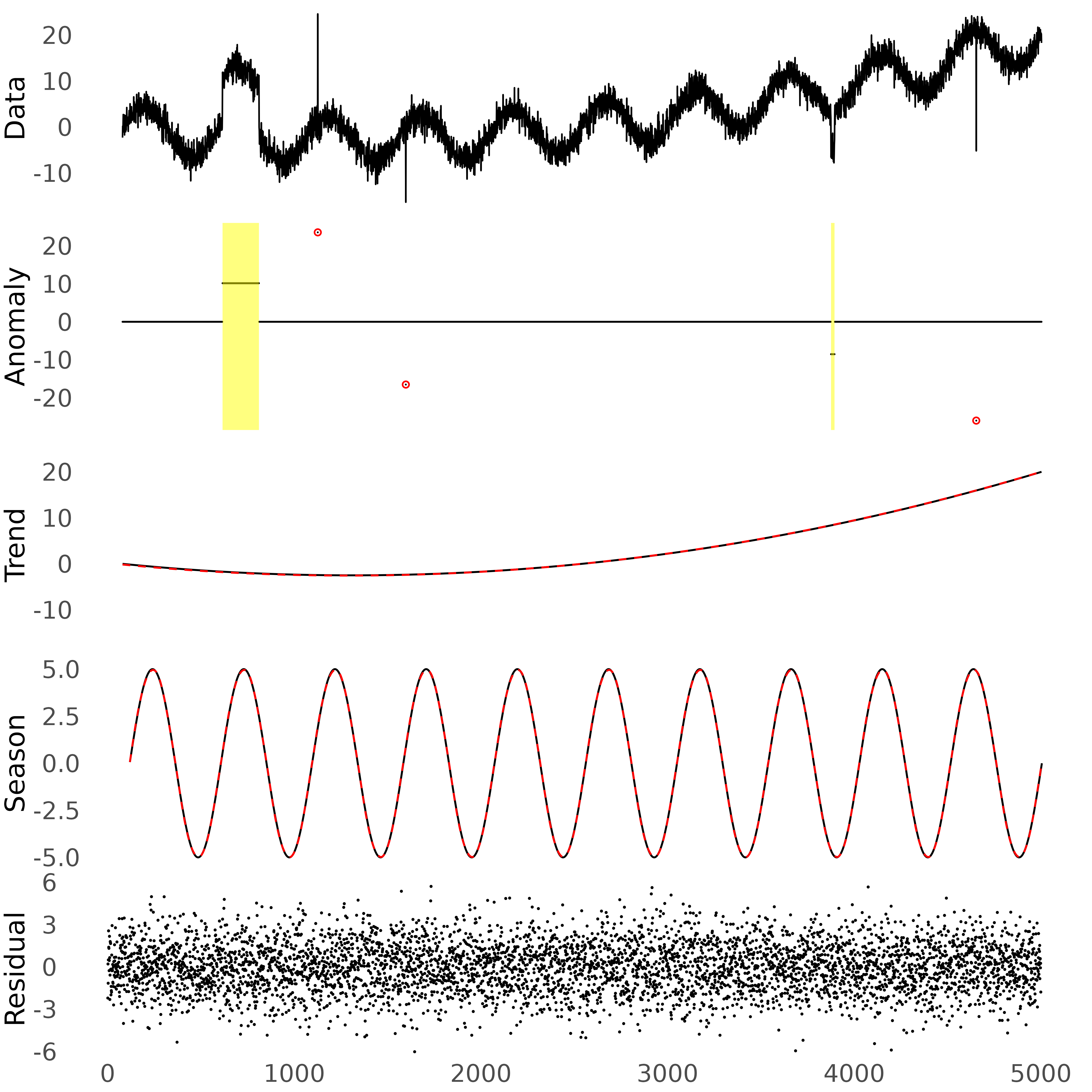}
    \caption{\small The decomposition and anomaly detection results using STAD. It contains five subplots: the original data from Figure \ref{fig:data_oracle}; the estimated anomaly component, with yellow areas for collective anomalies and red circles for point anomalies; the trend and seasonal component (the solid black lines are the truth and the dashed red lines are the estimates); and the estimated remainder. Different to CAPA and STL, confer Figures~\ref{fig:capa}~and~\ref{fig:stl_decompose}, it does not include false positives and decomposes the data successfully into anomaly, trend, seasonal, and residual components. }
    \label{fig:STAD_decompose}
\end{figure}

\begin{figure}[hbtp]
    \centering
    \includegraphics[width=0.6\textwidth]{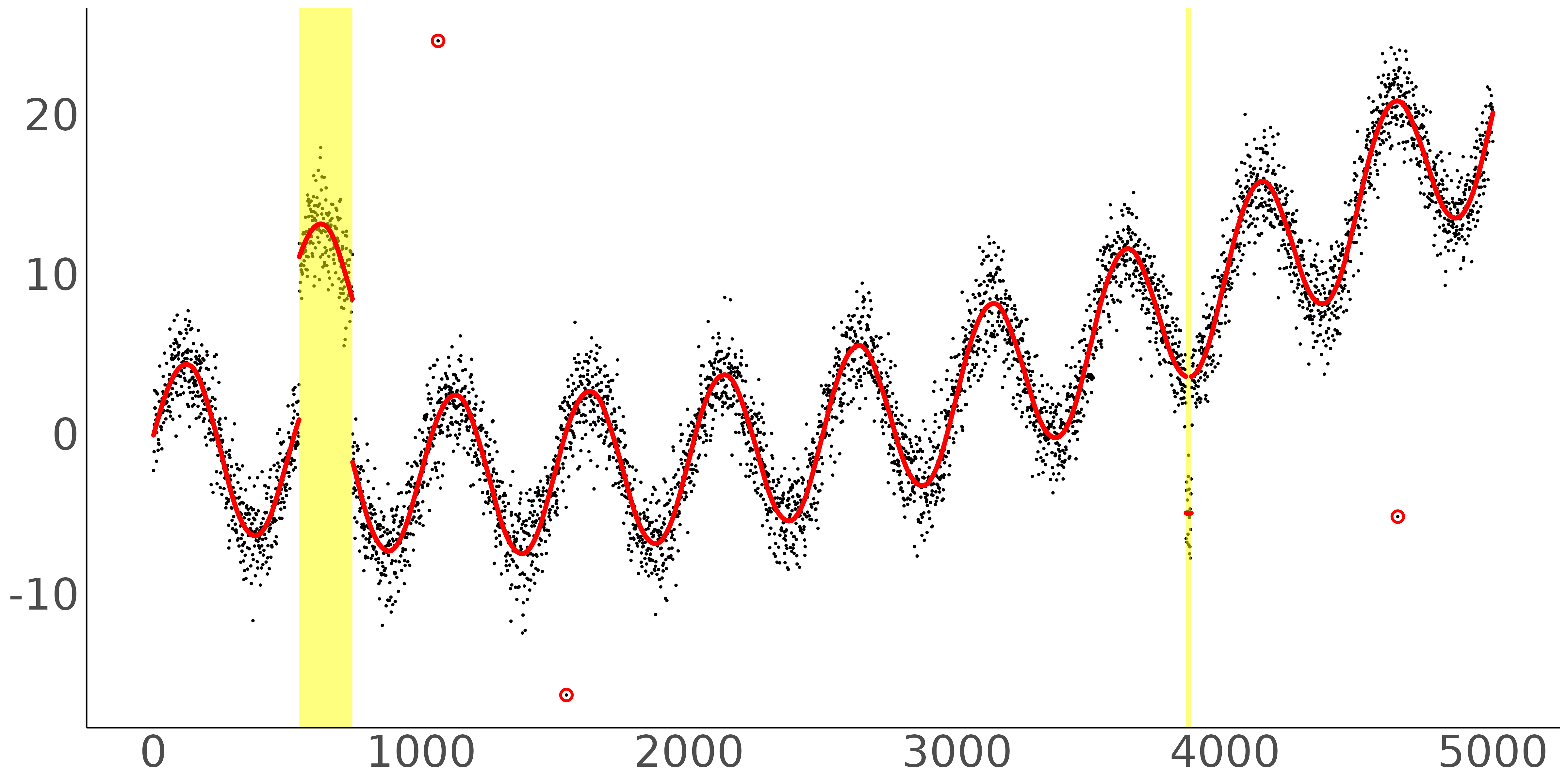}
    \caption{\small The anomaly detection results using STAD on the example in Figure \ref{fig:data_oracle}. The yellow areas indicate the estimated collective anomalies, the red circles mark the detected point anomalies, and the red solid line represents the estimated signal, including anomaly, trend, and seasonal components.}
    \label{fig:STAD_capa}
\end{figure}

\section{Methodology}
\label{section:method}

In this section we introduce the core of the STAD methodology. The approach essentially consists of four key steps:

\paragraph{Step 1: Initial Estimation} 
Obtain an initial trend estimation $\hat{T}_{0}$ by applying the proposed robust trend estimator in \eqref{eq:trend_estimator} to a differenced sequence of $Y$, which has removed the seasonality. Subsequently, obtain an initial estimate of the seasonality $\hat{S}_{0}$ by applying the robust seasonality estimator in \eqref{eq:season_m_estimator} to the remainder $Y - \hat{T}_{0}$.

\paragraph{Step 2: Trend Estimation}\label{step:trend}
Obtain a refined trend estimation $\hat{T}$ by applying the robust trend estimator in \eqref{eq:trend_estimator} to the remainder $Y - \hat{S}_{0}$. This process simultaneously estimates anomalous intervals, denoted by $\hat{\mathcal{A}}_0$.

\paragraph{Step 3: Seasonality Estimation}
Obtain a refined seasonality estimation $\hat{S}$ by applying the robust seasonality estimator in \eqref{eq:season_m_estimator} to the remainder $Y - \hat{T}$, but excluding the estimated anomalous intervals $\hat{\mathcal{A}}_0$.

\paragraph{Step 4: Anomaly Detection}\label{step:anomaly}
Finally, we estimate the anomalous intervals $\mathcal{A}$ and the anomaly component $A$ by our modified version of the CAPA algorithm, detailed in Section~\ref{section:anomaly_detection}, applied to the remainder $Y - \hat{S} - \hat{T}$.

We begin in Section~\ref{sec:model} by introducing our modelling framework. If the noise level $\sigma_0$ is unknown, then we estimate it by the robust estimator $\hat{\sigma}_0$ as defined in \eqref{eq:sigma0_iqr}. To ease presentation, we start with simplified settings: In Section~\ref{section:anomaly_detection} we explain how we estimate the anomaly component once trend and seasonality have been removed. In Section~\ref{section:trend_estimation} we propose a robust method to estimate the trend in the presence of collective anomalies, but under the assumption that the seasonality has been removed, either by estimation or as detailed in Section~\ref{section:initial_trend} by differencing the data. Similarly, Section~\ref{section:season} details the estimation of the seasonal component in the presence of collective anomalies, but when the trend has been removed. All estimators are summarized in Algorithms~\ref{alg:stad}, \ref{alg:estimate_trend}, \ref{alg:estimate_season}, and \ref{alg:estimate_anomaly} in Appendix~\ref{sec:pseudocodes}.

\subsection{Model}
\label{sec:model}
Recall from Section \ref{section:introduction} that the observation vector $Y = (y_1,\ldots,y_n)$ is given by the composition
\begin{equation}\label{eq:assumptionY}
Y = A + T + S + \varepsilon,
\end{equation}
with anomalies $A = (a_1, \ldots, a_n)$, trend $T = (t_1,\ldots,t_n)$, seasonality $S = (s_1,\ldots,s_n)$, and noise $\varepsilon = (\varepsilon_1, \ldots, \varepsilon_n)$. The components of \eqref{eq:assumptionY} are defined as follows: 

\textit{Anomalies}, $A$, consist of both collective and point anomalies
\begin{equation*}
     a_i = a_i^{\operatorname{coll}} + a_i^{\operatorname{point}}.
\end{equation*}
Here, collective anomalies are defined as
\begin{equation*}
    a_i^{\operatorname{coll}} = 
    \begin{cases}
    \mu_k, & \text{if } b_k < i \leq e_k, \\
    0, & \text{otherwise},
    \end{cases}
\end{equation*}
with $K \in \mathbb{N}$ being the number, $0 \leq b_1 < e_1 \leq \cdots \leq b_K < e_K \leq n$ being the start and end points, and $\mu_1,\ldots,\mu_K$ being the magnitudes of the collective anomalies. Furthermore, let $\mathcal{A} = \bigcup_{k = 1}^{K} (b_k, e_k]$ be the union of all anomalous intervals. Let $O = \{o_1, \ldots, o_m\}$ be the index set of the point anomalies, i.e.~$ a_i^{\operatorname{point}} \neq 0 $ if and only if $ i \in O $.

The \textit{trend} is given by a polynomial of degree $Q$, i.e.
\begin{equation}
    \label{eq:trend_matrix}
    T = X \beta,
\end{equation}
with regression matrix $ X = (x_{i,j})_{i = 1, \ldots, n;\, j = 1, \ldots, Q+1} $ having entries
\begin{equation}
    \label{x_iq}
    x_{i,j} = \left( \frac{i}{n} \right)^{j - 1},
\end{equation}
and coefficients $ \beta = (\beta_0, \ldots, \beta_Q) \in \mathbb{R}^{Q + 1} $.

We assume that the \textit{seasonality} has a known period $P$, and is given by 
\begin{equation}
    \label{eq:season_index}
    s_i = \vartheta_{(i \bmod P)},
\end{equation}
with $\vartheta_0,\ldots,\vartheta_{P - 1} \in \mathbb{R}$. For identifiability, we further assume 
\begin{equation}
    \label{eq:season_sum}
    \sum_{i = 1}^{n} s_i = 0.
\end{equation}
I.e.,~the global constant is assumed to be specified by the trend, without a baseline in the anomaly component or a contribution by the seasonality.

Finally, we assume the \textit{noise} $\varepsilon_1,\ldots,\varepsilon_n$ to be i.i.d. centered Gaussian distributed with variance $\sigma_0^2$, i.e. 
\begin{equation}
    \label{assump:varepsilon}
    \varepsilon_i \sim \mathcal{N}(0, \sigma_0^2).
\end{equation}
The global standard deviation $\sigma_0$ can be pre-estimated using robust difference based estimators \citep{frick2014multiscale, dette1998estimating}. More precisely, we use the IQR-based estimator from \citet{pein2017heterogeneous}:
\begin{equation}
    \label{eq:sigma0_iqr}
    \hat{\sigma}_0 = \frac{1}{\sqrt{2}} \cdot \frac{\text{IQR}(\Delta y)}{\text{IQR}_{\Phi}},
\end{equation}
where $\Delta y = (y_{i+1} - y_i)_{i=1}^{n-1}$ are the first order differences, $\text{IQR}(\Delta y)$ is the interquartile range of $\Delta y$, and $\text{IQR}_{\Phi}$ is the interquartile range of a standard Gaussian distribution. The factor $1/\sqrt{2}$ corrects for the variance inflation introduced by differencing.

\subsection{Anomaly Detection}
\label{section:anomaly_detection}
We now present our methodology for estimating the collective and point anomalies. To this end, we assume that trend and seasonality have been removed, up to small estimation errors $R, W \in \mathbb{R}^n$, i.e.,~our method is based on the model
\begin{equation} 
\label{eq:anomaly_model}
Y = A + R + W + \varepsilon.
\end{equation}
To infer the number and locations of anomalous segments, we use a penalized cost approach analogous to CAPA \citep{fisch2022linear}. We employ a change-in-mean version and assume that the trend estimation has removed any baseline, i.e.~we apply CAPA with baseline $\mu_0 = 0$. More precisely, for any vector of observations $ Z \coloneq (z_1, z_2, \ldots, z_n)$ we define the costs for a constant segment from $i$ to $j$ as
\begin{equation}
    \label{eq:segment_cost}
    C_{i,j}(Z) \coloneq \sum_{l=i}^{j} z_l^2,\quad 1 \leq i \leq j \leq n.
\end{equation}
Let $ \tilde{\mathcal{A}} \coloneq \bigcup_{k = 1}^{\tilde{K}} (\tilde{b}_k, \tilde{e}_k]$ be a candidate set of anomalies. Here, $ \tilde{b}_k $ and $ \tilde{e}_k $ denote the start and end of the $ k $-th anomalous segment, respectively. Let further $ \tilde{\mu} = (\tilde{\mu}_1, \ldots, \tilde{\mu}_{\tilde{K}}) $ be the candidate vector for the anomaly magnitudes. We then define their penalised costs as
\begin{equation}
    \label{eq:anomaly_cost}
    C(Z, \tilde{\mathcal{A}}, \tilde{\mu}) \coloneq \sum_{k=0}^{\tilde{K}} C_{\tilde{e}_k + 1, \tilde{b}_{k+1}}(Z) + \sum_{k=1}^{\tilde{K}} C_{\tilde{b}_k + 1, \tilde{e}_k}(Z - \tilde{\mu}_k) + \tilde{K} \lambda_{\operatorname{coll}} \hat{\sigma}_0^2 \log(n),
\end{equation}
 By convention, we set $ \tilde{e}_0 = 0 $ and $ \tilde{b}_{\tilde{K} + 1} = n$ and for any constant $a \in \mathbb{R}$ we write $Z - a$ for the vector $(z_1 - a, \ldots, z_n - a)$. The first two terms in \eqref{eq:anomaly_cost} quantify the deviation of observations from the baseline $0$ and the anomaly magnitude respectively. The final term, $\tilde{K} \lambda_{\operatorname{coll}} \hat{\sigma}_0^2 \log(n)$ is a SIC-type penalty. The cost function can readily be extended to allow for point anomalies. Specifically, let $\tilde{\mathcal{O}} \subset \{1, \ldots, n\}$ be a candidate set of point anomalies. Then, \eqref{eq:anomaly_cost} is replaced by 
\begin{equation}\label{eq:
anomaly_cost_point}
\begin{split}
    C_O(Z, \tilde{\mathcal{A}}, \tilde{\mu}) \coloneq & \sum_{\scriptscriptstyle i \notin \tilde{\mathcal{A}} \cup \tilde{\mathcal{O}}} z_i^2  + \sum_{\scriptscriptstyle k=1}^{\scriptscriptstyle \tilde{K}} \Big[ C_{\tilde{b}_k + 1, \tilde{e}_k}(Z - \tilde{\mu}_k) + \lambda_{\operatorname{coll}} \hat{\sigma}_0^2 \log(n)\Big] 
     + \sum_{\scriptscriptstyle i \in \tilde{\mathcal{O}}} \big[ \lambda_{\operatorname{point}} \hat{\sigma}_0^2 \log(n)\big],
\end{split}
\end{equation}
The reformulation of the first term in \eqref{eq:anomaly_cost} follows directly from \eqref{eq:segment_cost}. The second term is unchanged and for the final term note that the unpenalised cost term is equal to $0$ for a point anomaly, since the estimated function value is equal to the observation value. Similar to the collective anomaly penalty in \eqref{eq:anomaly_cost}, we add a SIC-type penalty $\lambda_{\operatorname{point}} \hat{\sigma}_0^2 \log(n)$. By default, we recommend $\lambda_{\operatorname{coll}} = 4$ for collective anomalies and $\lambda_{\operatorname{point}} = 3$ for point anomalies. The reasoning behind this choice is discussed in Section~\ref{Simu:penalty}.

The final estimate of the anomalies is obtained by minimizing the penalized cost
\begin{equation}
    \label{eq:capa_estimator}
    (\hat{\mathcal{A}}, \hat{\mu}) \coloneq \operatorname*{argmin}_{(\tilde{\mathcal{A}}, \tilde{\mu})} C(Z, \tilde{\mathcal{A}}, \tilde{\mu}),
\end{equation}
and the final estimate of the anomaly component is $\hat{A} = (\hat{a}_1,\ldots,\hat{a}_n)$ with
\begin{equation}\label{eq:estimateAnomalyComponent}
  \hat{a}_i = \begin{cases}
\hat{\mu}_k & \text{if } \exists\ k\,:\, \hat{b}_k < i \leq \hat{e}_k,\\
y_i & \text{if } i \in \hat{\mathcal{O}},\\
0 & \text{otherwise.}
\end{cases}
\end{equation}
The penalized cost is additive over segments and can therefore be minimized via dynamic programming, leveraging the shared zero mean across non-anomalous segments \citep{fisch2022linear}. To calculate \eqref{eq:capa_estimator}, we employ the Generalized Functional Pruning Optimal Partitioning (GFPOP) software package \citep{hocking2022generalized}.

\subsection{Trend Estimation in the presence of anomalies}
\label{section:trend_estimation}
We now turn to consider the challenge of estimating the polynomial trend in the presence of anomalies, and in particular collective anomalies. We begin by assuming that the seasonality has been removed up to a small estimation error $W \in \mathbb{R}^n$, that is, we model our data by
\begin{equation}
\label{eq:trend_anomaly_model}
Y = T + A + W + \varepsilon.
\end{equation} 
Our procedure is based on finding an index set \(U \subset \{1,\ldots,n\}\), such that $U \cap \mathcal{A} = \emptyset$, and \(U\) is large enough and well structured to permit a reliable estimation of the trend. Specifically, we propose in Section~\ref{section:sampling} a sampling procedure that obtains a collection of index sets $\mathcal{U}$. For each $U \in \mathcal{U}$, the trend is then estimated by a robust estimator using only the observations in $U$, see Section~\ref{section:M-estimator}. To select our final estimate we apply our anomaly detection method to the remainder of the trend estimate and choose the one that minimises the costs \eqref{eq:anomaly_cost}, full details are given in Section~\ref{section:trend_model_select}.

\subsubsection{M-estimator}
\label{section:M-estimator}
Given an index set $U \subset \{1,\ldots, n\}$, which may include both anomalous and non-anomalous observations, we estimate \(\beta\) by an M-estimator using solely the data in $U$. We aim to obtain a good trend estimate if $U \cap \mathcal{A} = \emptyset$. We use the following wording in the following. 
\begin{Definition}\label{def:anomalous}
We call an index set $U \subset \{1,\ldots,n\}$ \textit{anomalous} if and only if at least one of its points is part of a collective anomaly, i.e.~$U \cap \mathcal{A} \neq \emptyset$.
\end{Definition}
If $U$ were free of any anomalies, then least squares regression would clearly be appropriate. However, since point anomalies may be present, a robust trend estimator is preferable. Because of its strong robustness properties, we adopt the M-estimator \citep{huber1964robust} with Tukey's biweight loss. For any observation vector $Z = (z_1, \ldots, z_n)$ and any index set $U$, we define an M-estimator for the polynomial trend $T$ in \eqref{eq:trend_anomaly_model} by
\begin{equation}
    \label{eq:m-estimator}
    \hat{\beta}(U) \coloneq \operatorname*{argmin}_{\beta \in \mathbb{R}^{Q + 1}} \sum_{i \in U
}\rho\left[\frac{z_{i} - \sum_{q=0}^{Q}\beta_{q}x_{i, q}}{\hat{\sigma_0}}\right],
\end{equation}
where \(\rho(\cdot)\) is the robust loss function, $x_{i,q}$ are the polynomial regressors \eqref{x_iq} and $\hat{\sigma}_0$ is the estimated scale parameter \eqref{eq:sigma0_iqr}. For the loss function $\rho$, we choose Tukey's biweight loss function \citep{beaton1974fitting},
\begin{equation}
    \label{eq:tukey_loss}
    \rho_c(x) = 
    \begin{cases}
    \frac{c^2}{6} \left[ 1 - \left(1 - \left(\frac{x}{c}\right)^2\right)^3 \right], & \text{if } |x| \leq c, \\
    \frac{c^2}{6}, & \text{if } |x| > c.
    \end{cases}
\end{equation}
Unlike for instance Huber's loss function which only reduces the contribution of outliers to the loss, the biweight function does not increase for outliers beyond the threshold $c$; set to 4.685 by default. M-estimator \citep{huber1964robust} with Tukey's biweight loss is robust against substantial outlier contamination while maintaining statistical efficiency \citep{alma2011comparison}.

The non-convex optimization challenge coming from using Tukey's biweight loss is alleviated by our sampling approach as we rerun the algorithm multiple times for different observations. We employ the widely used iteratively reweighted least squares (IRLS) algorithm \citep{beaton1974fitting} to compute \eqref{eq:m-estimator}, details are provided in Appendix~\ref{IRLS}.



\subsubsection{Sampling Procedure}
\label{section:sampling}

We aim at sampling a collection of index sets $\mathcal{U}$ that contains at least one index set $U$ that is non-anomalous, i.e.~$U \cap \mathcal{A} = \emptyset$, and secondly that is large enough and well structured to allow a good estimate of $\beta$. 

We use the following structured partitioning approach to sample $U$. First, we divide the index set $\{1,\ldots,n\}$ into \(2(Q+1)\) disjoint segments $u_1,\ldots,u_{2(Q+1)}$ of roughly equal length (be made more precise later).
We will sample a sub-segment from either each odd or from each even indexed segment. More precisely, let $l \sim \operatorname{Bern}(\frac{1}{2})$ be a Bernoulli-distributed random variable. We then sample sub-segments from $u_{l}, u_{l + 2}, \ldots, u_{l + 2Q}$ (see below). This constructions ensures that the sampled sub-segments are well separated from each other, which is a key requirement to estimate the $Q + 1$ parameters in $\beta$ well.


We now detail how we sample from $u_k$. We further partition each segment $u_k$ into $B$ sub-segments of roughly equal length. In total, we obtain
\begin{equation}
\label{eq:V_B_Q}
V = B \times 2(Q + 1)
\end{equation}  
sub-segments
\begin{equation}
    \label{eq:segments_sample}
    \tilde{u}_k \coloneq \left\{ \Big\lfloor (k - 1) \frac{n}{V} \Big\rceil + 1, \ldots, \Big\lfloor k \frac{n}{V} \Big\rceil \right\}, \quad k = 1, \ldots, V.
\end{equation}  
We then sample one sub-segment from each selected segment. More precisely, we obtain the index set
\begin{equation}
   \label{eq:final_sample}
   U \coloneq \bigcup_{k = 0}^{Q} \tilde{u}_{(2k + l) B + m_k},
\end{equation}  
where $l \sim \operatorname{Bern}\big(\frac{1}{2}\big)$ is a Bernoulli-distributed random variable and $m_k \sim \text{Uniform}\big(\{1,\ldots,B\}\big)$. All random variables are independent of each other.

In summary, suppose appropriately chosen tuning parameters (see Section~\ref{section:trend_model_select}), the proposed sampling procedure obtains a collection of index sets $\mathcal{U}$ that ensures the following: With high probability at least one of the $U \in \mathcal{U}$ is non-anomalous, see Lemma~\ref{lemma:probGoodSample}. At the same time each $U \in \mathcal{U}$ is constructed such that if it is non-anomalous, then the trend is estimated well (with high probability), see Lemma~\ref{lemma:trendEstimation}. A key step was Lemma~\ref{lemma:determinant} in which we have shown that the determinant of $X_U^\top X_U$ is lower bounded. This lemma relies heavily on the fact that the $\tilde{u}_{(2k + l) B + m_k}$'s are well separated from each other.

\subsubsection{Trend estimate selection}
\label{section:trend_model_select}

To obtain a good estimation of the trend $T$, we repeat the sampling procedure multiple times. We also have to select the tuning parameter \(B\) and polynomial degree \(Q\). 

To this end, let \(\mathbf{B} = \{B_1,\ldots,B_b\}\) and \(\mathbf{Q} = \{Q_1,\ldots,Q_q\}\) be the candidate sets for \(B\) and \(Q\), respectively. For each combination \(\tilde{B} \in \mathbf{B}\) and \(\tilde{Q} \in \mathbf{Q}\) we repeat the sampling procedure in Section~\ref{section:sampling} $J$ times. We obtain the collection of index sets
\begin{equation*}
  \mathcal{U} \coloneq \big\{U_{\tilde{B}, \tilde{Q}, j} \subset \{1,\ldots,n\} : \tilde{B} \in \mathbf{B},\, \tilde{Q} \in \mathbf{Q},\, j = 1,\ldots,J  \big\},
\end{equation*}
and trend estimates
\begin{equation}\label{eq:setOfTrendEstimates}
\hat{\mathcal{T}} = \big\{\hat{T}_{U} : U \in \mathcal{U} \big\},
\end{equation}
with $\hat{T}_{U} \coloneq X \hat{\beta}(U)$ and $\hat{\beta}(\cdot)$ as defined in \eqref{eq:m-estimator}.

We then select the final trend estimate \(\hat{T}\) by selecting the one that minimizes the penalized cost of the anomaly estimator in \eqref{eq:anomaly_cost} applied to $Y - \hat{T}_{U}$ plus a penalty for the degree of the trend, i.e.
\begin{equation}
    \label{eq:trend_estimator}
    \hat{T} = \argmin_{T \in \hat{\mathcal{T}}} \min_{(\tilde{\mathcal{A}}, \tilde{\mu})}  C(Y - T, \tilde{\mathcal{A}}, \tilde{\mu}) + \tilde{Q}\, \hat{\sigma}_0 \log(n).
\end{equation}

The tuning parameter $J$ is chosen to be sufficiently large such that with high probability at least one of the $J$ sampled segments is non-anomalous, confer Lemmas~\ref{lemma:intersectionAnomaly}~and~\ref{lemma:probGoodSample}. By default we suggest $J = 20$ based on the simulations in Section~\ref{subsection:sensitivity_J}. The optimal choice of the tuning parameter $B$ is contingent on anomaly characteristics, e.g. number and length of collective anomalies. If $B$ is too small and collective anomalies are long, then all samples in $\mathcal{U}$ may intersect with collective anomalies. Conversely, if $B$ is too large, then the sample $U_{j, B, Q}$ is too small to allow a good estimation of the trend. These points are further illustrated in Appendix~\ref{simu:B}. Finally, the correct polynomial degree $Q$ should be contained in $\mathbf{Q}$, though a too large degree may still allow good estimates. Based on empirical studies, in the absence of further information we recommend $\mathbf{B} = \{1, 3, 5\}$ and $\mathbf{Q} = \{0, 1, 2, 3\}$ as they provide a good trade-off between statistical efficiency and computational demands.

\subsection{Initial trend estimation}
\label{section:initial_trend}
To estimate the trend without interference from the seasonality, we apply differencing to $Y$ with lag $D = C \cdot P$, where $C \in \mathbb{N}$. More precisely, we use the differenced series $Y_D$, with entries
\begin{equation*}
y_{D,i} = y_{i + D} - y_i, \quad \text{for} \quad i = 1, \ldots, n - D,
\end{equation*}
which has no seasonal component, confer Lemma~\ref{lemma:differencedSequence}. The differenced sequence has trend
\begin{equation*}
T_D = X_D \boldsymbol{\beta}_D,
\end{equation*}
where the regression matrix $X_D$ has entries
\begin{equation*}
(X_D)_{i,q} \coloneq X_{i+D, q + 1} - X_{i, q + 1} = \left(\frac{i+D}{n}\right)^q - \left(\frac{i}{n}\right)^q, 
\end{equation*}
for $i = 1,\ldots,n - D,\ q = 1,\ldots,Q$, and parameter vector $\boldsymbol{\beta}_D = (\beta_1,\ldots,\beta_Q)$.
Lemma~\ref{lemma:differencedSequence} confirms this trend and details the effects of differencing on the noise and anomaly components, respectively. We then estimate $\boldsymbol{\beta}_D$ and hence all entries of $\boldsymbol{\beta}$ except $\beta_0$ by applying the trend estimation method from Section~\ref{section:trend_estimation} to $Y_D$. We denote the resulting parameter estimate by 
\begin{equation*}
    \hat{\beta}_{D} = (\hat{\beta}_{D, 1}, \ldots, \hat{\beta}_{D, Q})
\end{equation*}
and we obtain the initial trend $\hat{T}_0$ as
\begin{equation}\label{eq:initial_trend}
    \hat{T}_0 \coloneq X \begin{pmatrix} 0 \\ \hat{\beta}_D \end{pmatrix}.
\end{equation}

The selection of an appropriate multiplier $C$ is crucial, as an inappropriate choice can lead to non-detectable anomalies. Theoretical justification is provided in Lemma~\ref{lemma:differencedSequence}. By default, $C$ is chosen to be an integer such that $D = C \cdot P \approx 0.1 n$ and if $P \ge n/10$, then $C=1$.

\subsection{Seasonal Component Estimation in the presence of anomalies}
\label{section:season}
We now introduce a robust estimator for the seasonal component that allows for the presence  of anomalies. To this end, we assume that the trend has been removed, up to a small estimation error $R \in \mathbb{R}^n$. Under this assumption, the data follows the model:  
\begin{equation}
\label{eq:season_model}
Y = S + A + R +\varepsilon.
\end{equation}

We also allow the estimator to take into account an estimation of the anomalies $\hat{\mathcal{A}}$, obtained by \eqref{eq:trend_estimator}. Let $\hat{\mathcal{A}} = \emptyset$ for the initial seasonality estimation $\hat{S}_0$.

For $p = 0,\ldots,P -1$, to estimate $\vartheta_p$ we use the index set
\begin{equation}\label{eq:setSeasonalEstimation}
V_p = \big\{ i \in \{1,\ldots,n\} \setminus \hat{\mathcal{A}} : (i \bmod P) = p\big\},
\end{equation}
and Tukey's M-estimator
\begin{equation}
\label{eq:season_m_estimator}
   \hat{\vartheta}_p = \hat{\beta}(V_p),
\end{equation}
with $\hat{\beta}$ being the M-estimator from \eqref{eq:m-estimator} and $\beta \in \mathbb{R}$. For robustness, we initialize the estimator with the median of $Y_{V_p}$ rather than using Ordinary Least Squares (OLS). The final seasonal estimate $\hat{S} = (\hat{s}_1,\ldots,\hat{s}_n)$ is then defined as
\begin{equation}\label{eq:estimateSeasonality}
\tilde{s}_i \coloneq \hat{\vartheta}_{i \bmod P} \text{ and } \hat{s}_i \coloneq \tilde{s}_i - n^{-1} \sum_{i = 1}^{n} \tilde{s}_{i}.
\end{equation}
This ensures that $\hat{S}$ satisfies \eqref{eq:season_sum}.

\subsubsection{Smoothing}\label{sec:smoothing}
If the underlying seasonality is assumed to be smooth, we further improve the estimate $\hat{S}$ by applying spline smoothing. This enhances accuracy and interpretability. We choose spline smoothing due to its statistical robustness, computational efficiency, and flexibility in capturing complex seasonal patterns \citep{de1978practical}, but other smoothing approaches would be suitable as well. More precisely, we apply cubic smoothing splines to $(\hat{\vartheta}_0, \ldots, \hat{\vartheta}_{P - 1})$ with the smoothing parameter selected by Generalized Cross-Validation. To this end, we use the R function \textit{smooth.spline}.

\section{Theory}\label{section:theory}

With our methodology now established, we discuss theoretical properties of the STAD approach in detecting anomalies and decomposing the time series into anomalies, trend, seasonality, plus a remainder. Specifically in Theorem~\ref{theorem:main} we establish asymptotic L2-bounds on the error of each component's estimate individually. Theorem~\ref{theorem:main} also shows that STAD estimates the number of anomalies consistently and their start and end points are close to the true ones. More precisely, we show that $\hat{\mathcal{A}} \in \mathcal{B}$, where
\begin{equation*}
\mathcal{B} \coloneq \{\tilde{\mathcal{A}} \text{ s.t. } \tilde{K} = K, \vert \tilde{b}_k - b_k \vert < \kappa_{k}, \vert \tilde{e}_k - e_k \vert  < \kappa_{k} ,\ \forall\ k = 1,\ldots,K\}.
\end{equation*}
Here, the errors $\kappa_k$ are defined as
\begin{equation}\label{eq:definitionKappak}
\kappa_k \coloneq \left\{\bigg\lceil \Clambda\frac{\sigma_0^2}{\mu_k^2}\log(n)^{1 + \delta} \bigg\rceil,\ 4^{Q + 1}\right\},\quad k = 1,\ldots,K,
\end{equation}
with $\Clambda > 0$ being a large enough constant such that the last step in the proof of Lemma~\ref{lemma:costMissingCp} holds. The maximum with $4^{Q + 1}$ ensures that we have locally enough observations. For Theorem~\ref{theorem:main} to hold we require the following assumptions.
\begin{Assumption}\label{assumption:segmentLength}
There exists a $\tilde{\delta} > 0$ such that
\begin{equation}\label{eq:firstAssumptionSegmentLength}
e_k - b_k \geq \frac{\sigma_0^2}{\mu_k}\log(n)^{1 + \delta + \tilde{\delta}},\ \forall\ k = 1,\ldots,K.
\end{equation}
Furthermore, we require that
\begin{equation}\label{eq:secondAssumptionSegmentLength}
\min_{k = 0,\ldots,K} \{b_{k + 1} - e_k\} \geq 3\max_{k = 1,\ldots,K} \{e_k - b_k\} + 1,
\end{equation}
and $D = C P$ is such that 
\begin{equation}\label{eq:conditionD}
2 \max_{k = 1,\ldots,K} \{e_k - b_k\} \leq D = C P \leq \min_{k = 0,\ldots,K} \{b_{k + 1} - e_k\} - \max_{k = 1,\ldots,K} \{e_k - b_k\} - 1.
\end{equation}
\end{Assumption}
 Equation~\eqref{eq:firstAssumptionSegmentLength} in Assumption~\ref{assumption:segmentLength} is a standard assumption on the length and size of an collective anomaly to be detectable. It mirrors Assumption~2 in \citep{fisch2020real} for the CAPA algorithm in the absence of trend and seasonality. Note that the result $\hat{\mathcal{A}} \in \mathcal{B}$ is identical, up to the definition of $\kappa_k$ as we consider a change-in-mean instead of a change-in-mean-and-variance problem. The second requirement \eqref{eq:secondAssumptionSegmentLength} ensures that anomalies are separated and hence individually detectable after differencing and is not required in the absence of either trend or seasonality. 

\begin{Assumption}\label{assumption:anomalyLengthCombined}
The total length of collective anomalies is bounded. More precisely, there exists a $\gamma < 1$ such that
\begin{equation*}
\sum_{k = 1}^{K} e_k - b_k < \gamma \frac{1}{\max(4 (Q + 1), 8)}.
\end{equation*}
\end{Assumption}
Assumption~\ref{assumption:anomalyLengthCombined} ensures that we have locally enough non-anomalous observations to estimate trend and seasonality well. Remarkably, the total length of anomalies can be of order $n$. This shows that we have overcome the substantial limitation of CAPA given by Assumption~2(b) in \citep{fisch2020real} that anomalies can be at most of order $\sqrt{n}$ to allow a good estimate of the baseline. The $4 (Q + 1)$-term is required for a good trend estimation, while the $8$ is required for a good seasonality estimation.

\begin{Assumption}\label{assumption:gridBandQ}
\begin{enumerate}[(i)]
\item \label{assumption:gridB} There exists a large enough $B \in \mathbf{B}$ such that \eqref{eq:conditionOnB},~\eqref{eq:secondConditionB},~and~\eqref{eq:thirdConditionB} are satisfied.
\item \label{assumption:gridQ} We have that $Q \in \mathbf{Q}$.
\end{enumerate}
\end{Assumption}
Assumption~\ref{assumption:gridBandQ} is on the tuning parameters $Q$ and $B$ to ensure that we can estimate the trend well. It requires the true order of the polynomial $Q$ to be in the set $\mathbf{Q}$. This could be weakened to only a $\tilde{Q} \geq Q$ being required. Secondly, the tuning parameter $B$ has to be large enough. We remark that the precise conditions work well for the asymptotic theory, but are not of practical relevance. We refer to Section~\ref{section:trend_model_select} and Appendix~\ref{simu:B} for a discussion of its choice.

\begin{Assumption}\label{assumption:trendSeasonalBound}
There exits a constant $\CtrendSeasonalBound > 0$ such that
\begin{equation}\label{eq:maxTrend}
\max_{i = 2,\ldots,n} \vert t_i - t_{i - 1} \vert \leq \CtrendSeasonalBound \sigma_0,
\end{equation}
and
\begin{equation}\label{eq:maxSeason}
\max_{i = 2,\ldots,n} \vert s_i - s_{i - 1} \vert \leq \CtrendSeasonalBound \sigma_0.
\end{equation}
\end{Assumption}
Assumption~\ref{assumption:trendSeasonalBound} is only used to bound the variance estimate $\hat{\sigma}_0^2$, see Proposition~\ref{proposition:varianceEstimate}. The assumption could be relaxed by allowing a small proportion of points to fail conditions \eqref{eq:maxTrend} and \eqref{eq:maxSeason}.

Building on these assumptions, Theorem~\ref{theorem:main} establishes that STAD successfully solves the structural challenge of decomposing a time series into anomalies, trend, and seasonality and to estimate collective anomalies when the typical distribution is not stationary. Our theory also accounts for the initial variance estimate. However since, to the best of our knowledge, finite sample bounds for robust estimators are not available, we assume that we have no point anomalies and use the least squares estimator instead of Tukey's estimator. I.e.,~the loss function $\rho$ is given by the $L_2$-norm. For the seasonal component, \eqref{eq:season_m_estimator} is replaced by the median of the observations indexed by $V_p$. Note that this median was the initial estimate prior to the Tukey estimate in \eqref{eq:season_m_estimator}, so in other words, we omit the refinement by the Tukey estimate. Using robust estimators instead will have slightly worse statistical efficiency, but better robustness to potential anomalies in $U$, either small parts of collective anomalies or point anomalies.

\begin{Theorem}\label{theorem:main}
We assume the model from Section~\ref{sec:model}. Let $\hat{A}, \hat{T}, \hat{S}$, and $\hat{\mathcal{A}}$ be the STAD estimators as defined in Section~\ref{section:method}. Suppose Assumptions~\ref{assumption:segmentLength},~\ref{assumption:anomalyLengthCombined},~\ref{assumption:gridBandQ},~and~\ref{assumption:trendSeasonalBound} hold. Let $J \to \infty$, as $n \to \infty$. Then,
\begin{equation}\label{eq:goodFinalAnomalyDetection}
\Pj\big(\hat{\mathcal{A}} \in \mathcal{B}\big) \to 1, \text{ as } n \to \infty.
\end{equation}
Furthermore, there exists a constant $\CfinalEstimate > 0$ such that
\begin{equation}\label{eq:goodFinalEstimate}
\Pj\Big( \max\big\{  \| \hat{A} - A \|_2^2,\, \| \hat{T} - T \|_2^2,\, \| \hat{S} - S \|_2^2\big\} \leq \CfinalEstimate \sigma_0^2 \log(n) \Big) \to 1,
\end{equation}
as $n \to \infty$.
\end{Theorem}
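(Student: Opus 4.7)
The plan is to propagate error bounds sequentially through the four steps of STAD, working on a single good event of probability tending to one formed by intersecting all the auxiliary events from the referenced lemmas. I begin by conditioning on the variance estimation event: Proposition~\ref{proposition:varianceEstimate} together with Assumption~\ref{assumption:trendSeasonalBound} yields $\CvarianceEstimateLower \sigma_0^2 \leq \hat{\sigma}_0^2 \leq \CvarianceEstimate \sigma_0^2$ with high probability, so every subsequent penalty term of the form $\lambda \hat{\sigma}_0^2 \log(n)$ is comparable to $\lambda \sigma_0^2 \log(n)$.

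For Step~1, I would apply the differencing Lemma~\ref{lemma:differencedSequence} to verify that $Y_D$ has no seasonality and that, under condition \eqref{eq:conditionD}, collective anomalies remain separated after differencing. Then Lemma~\ref{lemma:probGoodSample} ensures that for the good $B \in \mathbf{B}$ from Assumption~\ref{assumption:gridBandQ}\eqref{assumption:gridB}, the probability that at least one of the $J$ sampled index sets is non-anomalous tends to one as $J \to \infty$. On such a non-anomalous sample, Lemma~\ref{lemma:trendEstimation} (which relies on the determinant lower bound of Lemma~\ref{lemma:determinant}) gives an $L_2$-error of order $\sigma_0^2 \log(n)$ for the resulting trend estimate. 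To show that the selection rule \eqref{eq:trend_estimator} actually picks such a good candidate, I would compare penalized costs: any candidate coming from an anomalous sample produces residuals with a large non-anomalous bias, which either forces spurious collective anomalies (at an additional cost of order $n$) or inflates the sum of squares; both exceed the $\tilde Q \hat{\sigma}_0 \log(n)$ penalty gap. The initial seasonality $\hat{S}_0$ is then obtained via the median on each $V_p$; because Assumption~\ref{assumption:anomalyLengthCombined} bounds the contamination of each $V_p$ well below the median breakdown point $1/2$, standard median concentration combined with the trend remainder delivers a pointwise error of order $\sqrt{\log(n)/|V_p|}$ times $\sigma_0$.

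For Steps~2 and~3, I would repeat the trend and seasonality arguments with the initial estimates absorbed into the remainder $W$ and $R$ respectively. The key is that the error bounds already established feed directly into the hypotheses of Lemma~\ref{lemma:trendEstimation} and the median concentration, yielding refined bounds $\|\hat{T}-T\|_2^2, \|\hat{S}-S\|_2^2 \leq \CfinalTrendEstimate \sigma_0^2 \log(n)$ and $\CfinalSeasonalEstimate \sigma_0^2 \log(n)$. Some care is needed in Step~3 because the set $V_p$ is restricted to indices outside $\hat{\mathcal{A}}_0$, but Assumption~\ref{assumption:anomalyLengthCombined} still leaves a positive fraction of $V_p$ available.

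For Step~4, the residuals $Y - \hat{T} - \hat{S}$ equal $A + \varepsilon + R + W$ with $\|R + W\|_2^2 = O(\sigma_0^2 \log n)$ on the good event. The anomaly detection is a modified CAPA problem with a small $L_2$ bias; the analysis mirrors that of \citet{fisch2020real} via the auxiliary Lemmas \ref{lemma:costMissingCp} (missing a true anomaly costs at least $(e_k-b_k)\mu_k^2$ minus the saved penalty, which is positive by \eqref{eq:firstAssumptionSegmentLength}) and \ref{lemma:costExtraCp} (adding a spurious anomaly requires a noise spike of magnitude larger than what is allowed by Gaussian tails with the chosen penalty). Intersecting these bounds yields $\tilde K = K$ and $|\tilde b_k - b_k|, |\tilde e_k - e_k| < \kappa_k$, giving \eqref{eq:goodFinalAnomalyDetection}; the $\kappa_k$-error together with the anomaly magnitude $\mu_k^2$ immediately translates into $\|\hat A - A\|_2^2 \leq \CfinalAnomalyEstimate \sigma_0^2 \log(n)$.

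The main obstacle is Step~4: the remainders $R, W$ are deterministic biases that are not mean-zero on any single segment, so the clean CAPA arguments of \citet{fisch2020real} do not transfer verbatim. The constant $\Clambda$ in \eqref{eq:definitionKappak} must be chosen large enough to absorb both the Gaussian fluctuation and the bias contribution of $R + W$ over any candidate segment, which is the role played by Lemma~\ref{lemma:costMissingCp}. A secondary subtlety is that the penalized-cost selection in \eqref{eq:trend_estimator} must be shown to reject all anomalous samples; here I would exploit that the cost difference is of order $n \mu_k^2$ for an anomalous sample versus only $O(\log n)$ for a non-anomalous sample, comfortably dominating the $\tilde Q \hat{\sigma}_0 \log n$ penalty correction.
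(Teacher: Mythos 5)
Your overall architecture coincides with the paper's proof: bound $\hat{\sigma}_0$, difference to kill the seasonality, use the sampling lemmas to get one non-anomalous index set, propagate $O(\sigma_0^2\log n)$ errors through trend, seasonality and anomaly steps, and finish with a union bound. However, two of your steps would fail as argued. First, the trend-selection step: you claim a candidate built from an anomalous sample either forces spurious collective anomalies ``at an additional cost of order $n$'' or inflates the sum of squares, and later that the cost gap is of order $n\mu_k^2$ so that all anomalous samples are rejected. Neither claim is correct, and neither is what actually has to be shown. An extra estimated collective anomaly is essentially free in fit (its level is a free parameter) and costs only $\lambda_{\operatorname{coll}}\hat{\sigma}_0^2\log(n)$ in penalty, so a badly biased trend candidate can be largely compensated by relocating or adding a modest number of estimated anomalies at total extra cost of order $\log(n)^{1+\delta}$, not $n$. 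Moreover, the selection rule \eqref{eq:trend_estimator} need not, and cannot be shown to, reject every anomalous sample (such a sample can still produce an accurate fit); what must be proved is that whatever candidate minimizes the penalized cost satisfies $\hat{\mathcal{A}}\in\mathcal{B}$ and $\|\hat{T}-T\|_2^2 = O(\sigma_0^2\log n)$. The paper achieves this through a quantitative bound on how poorly any degree-$Q$ polynomial can mimic a step function near a jump (Lemma~\ref{lemma:meanMissCp}), the resulting cost-gap Lemmas~\ref{lemma:costMissingCp}, \ref{lemma:costExtraCp} and \ref{lemma:differenceCosts}, and the contradiction argument in Propositions~\ref{proposition:withinSetB} and~\ref{proposition:goodFinalTrendEstimate} showing a diverging trend error cannot be absorbed by the anomaly component. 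Your sketch offers no substitute for this machinery.

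Second, the initial seasonality estimate: you invoke the median's breakdown point plus ``standard median concentration'' to claim a pointwise error of order $\sigma_0\sqrt{\log(n)/|V_p|}$. In Step~1 nothing has been removed ($\hat{\mathcal{A}}=\emptyset$), so each $V_p$ may contain a constant fraction (up to roughly $1/8$) of anomalous observations of arbitrary magnitude. Staying below the breakdown point only prevents the median from exploding; it does not make the contamination bias vanish, and the median of each $V_p$ is generically shifted by a constant-order amount, so your claimed rate is false as stated. The paper's resolution is structural: collective anomalies cover whole periods except at $O(K)$ boundary indices, hence the contamination pattern is essentially identical across residue classes, all $P$ medians share a common shift $\tilde{q}$ (Lemma~\ref{lemma:medianInterval}), and this common shift is annihilated by the centering in \eqref{eq:estimateSeasonality}. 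Without this common-bias cancellation the bound on $\hat{S}_0$ fails, and the subsequent reduction of $Y-\hat{S}_0$ to Setting~\ref{setting:trend}\eqref{setting:Trenda} breaks down. The remaining steps of your outline (differencing, sampling probability, determinant and trace bounds, the CAPA-style cost analysis for Step~4 with the remainders absorbed into $\Clambda$) do track the paper's argument.
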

\begin{proof}
See Appendix~\ref{sec:proofs}.
\end{proof}

\section{Simulation study}\label{section:simulation}

We now assess STAD's finite sample performance by simulation studies. We compare STAD's performance against oracle versions of itself that know the trend (Oracle-Trend), the seasonality (Oracle-Seasonality), and both components (Full Oracle). For all methods we use the default choices for tuning parameters as detailed in Section~\ref{section:method}. A comparison with other methods is omitted, as most existing anomaly methods focus on point anomalies only and to the best of our knowledge no existing methods achieves acceptable results for detecting collective anomalies at the presence of significant trends and seasonality such that a systematic comparison is not meaningful, c.f.~the illustrations in Section~\ref{section:introduction}.

Our simulatios are based on synthetic time series of length $n = 5000$, generated from the model in Section~\ref{sec:model}. We chose the trend component to be $t_i = 2(i/n)^2 - 2(i/n)$, a sine-wave seasonality $s_i = 2 \sin(2\pi i / P)$, where $P$ is the period length, and the noise component is $\epsilon_i \sim \mathcal{N}(0, 1)$. Simulations (not displayed) showed that results are stable in the choice of the concrete form of the trend and seasonality, so for brevity we do not vary their form in the following. Instead we consider various settings for the anomalies and also vary in one study the period length $P$, which is otherwise chosen as $P=250$.

In earlier experiments (not displayed) we found that the anomaly length, the number of seasonal repetitions, and of course the signal to noise ratio have the most dominant effect on the performance of all contrasted methods. We demonstrate the first two points in the most simple setting with a single anomaly: In Sections~\ref{simulation:lengthSize}~and~\ref{simulation:seasonalLength} we vary the length of the single anomaly and the period length, respectively. Finally, in Section~\ref{simulation:multiple} we consider a more evolved setting with multiple anomalies which considers different signal to noise ratios but also different anomaly lengths. In between we discuss tuning parameter choices. Most tuning parameters have been theoretically discussed in Section~\ref{section:method}. So, in Sections~\ref{subsection:sensitivity_J}~and~\ref{Simu:penalty} we focus on the number of repetitions $J$ and the penalty, respectively.

All simulations are repeated $1\;000$ times. We then evaluate the methods based on their average accuracy in detecting the collective anomalies. We define an estimation as positive if the estimated number of anomalies is equal to the true number of anomalies and if start and end points differ by at most $\lfloor \log(n) \rfloor$. We omit other metrics for brevity and since we found for instance that the average mean square error for the anomaly, trend, and seasonal component estimation shows qualitatively the same result as the accuracy. Also changing the $\lfloor \log(n) \rfloor$ tolerance level for the locations of anomalies allows for the same comparisons.

\subsection{Impact of anomaly length}\label{simulation:lengthSize}
We consider a single collective anomaly and vary its length. The starting position is drawn uniformly from all valid positions. To create a challenging setting, we also adjust the magnitude such that Full Oracle has slightly more than 90\% accuracy.

\begin{table}[htbp]
\centering
\begin{minipage}{0.8\textwidth}
    \captionsetup{width=\linewidth}
    \caption{\footnotesize Average anomaly detection accuracy of STAD compared to three oracle version of it. We vary the length and magnitude of a single anomaly. STAD performs nearly as good as the oracles unless the anomaly is longer than a quarter of the total time series.}
    \label{tab:accuracy_comparison} 
    \setlength{\tabcolsep}{3pt} 
    \centering
    \begin{tabular}
    {@{}lccccc@{}}
    \toprule
    Length (\%) & Magnitude  & STAD & Oracle Trend & Oracle Season & Oracle \\
    \midrule  
    5 (0.1\%)   & 3.24 & 0.903 & 0.907 & 0.909 & 0.918 \\
    28 (0.56\%)  & 1.42 & 0.905 & 0.907 & 0.919 & 0.924 \\
    51 (1.02\%)  & 1.25 & 0.916 & 0.921 & 0.923 & 0.926 \\
    108 (2.16\%) & 1.24 & 0.921 & 0.920 & 0.926 & 0.923 \\
    278 (5.56\%) & 1.21 & 0.917 & 0.915 & 0.925 & 0.924 \\
    768 (15.36\%)& 1.20 & 0.929 & 0.934 & 0.941 & 0.938 \\
    1178 (23.56\%)& 1.19 & 0.909 & 0.917 & 0.920 & 0.926 \\ 
    1278 (25.56\%)& 1.19 & 0.903 & 0.916 & 0.905 & 0.920 \\ 
    1378 (27.56\%)& 1.19 & 0.887 & 0.920 & 0.886 & 0.926 \\ 
    1499 (29.98\%)& 1.19 & 0.770 & 0.914 & 0.761 & 0.912 \\ 
    1678 (33.56\%)& 1.18 & 0.565 & 0.920 & 0.566 & 0.926 \\ 
    \bottomrule
    \end{tabular}
\end{minipage}
\end{table}

Table~\ref{tab:accuracy_comparison} confirms that STAD performs very well and detects the anomaly almost as good as the oracles. Only when the anomaly is longer than a quarter of the total length of the time series its performance degenerates significantly. In these setting it fails to estimate the trend well, as evidenced by the performance of Oracle Season which performs more or the same as STAD. This is because both methods struggle to find a non-anomalous sample $U$ as even for large $B$ too many sub-segments are anomalous.

\subsection{Effect of the seasonal length} \label{simulation:seasonalLength}

In this simulation study we vary he seasonal period $P$. To this end, we reconsider a representative case from Table~\ref{tab:accuracy_comparison} with an anomaly length of 768, but results do not depend much on this choice as long as the anomaly is not too long.

\begin{table}[htbp]
\centering
\begin{minipage}{0.8\textwidth}
    \captionsetup{width=\linewidth}
    \caption{\footnotesize 
    We consider a single anomaly of length 768 and vary the seasonal period $P$. STAD performs almost as good as the oracles unless less than five seasonal cycles $P/n$ are present.
    \label{tab:accuracy_season}
    }
    \centering
    \begin{tabular}{@{}lcccc@{}}
    \toprule
    Seasonal Period ($P$) & STAD & Oracle Trend & Oracle Season & Oracle \\ 
    \midrule
    50   & 0.936 & 0.937 & 0.941 & 0.938 \\
    100  & 0.933 & 0.930 & 0.941 & 0.938 \\
    250  & 0.933 & 0.934 & 0.941 & 0.938 \\
    500  & 0.931 & 0.932 & 0.941 & 0.938 \\
    800  & 0.924 & 0.911 & 0.941 & 0.938 \\
    1000 & 0.890 & 0.928 & 0.941 & 0.938 \\
    1500 & 0.023 & 0.008 & 0.941 & 0.938 \\
    \bottomrule
    \end{tabular}
\end{minipage}
\end{table}
Table~\ref{tab:accuracy_season} confirms STAD's excellent performance, matching the oracles, across a wide range of seasonal repeats. However, when there are less than five repeats of the season, c.f.~$P = 1500$, both STAD and Oracle-Trend show significant degradation in performance, while oracles that know the seasonality still perform well. In such settings a very small number of points is available to estimate each value in the seasonal component and hence the anomaly effects the estimation too much.

\subsection{Choice of J} 
\label{subsection:sensitivity_J}
We now investigate choice of the tuning parameter $J$, the number of sampling iterations used to estimate the trend. We use an anomaly length of 1378 as this was a challenging scenario in our previous experiments, c.f.~Table~\ref{tab:accuracy_comparison}. We vary $J \in \{5, 10, 20, 30, 40, 50\}$.
\begin{table}[htbp]
\centering
\begin{minipage}{0.8\textwidth}
    \captionsetup{width=\linewidth}
    \caption{\footnotesize Average anomaly detection accuracy as a function of $J$. Results improve with increasing $J$. Results support our default of $J=20$ as the performance increase only marginally afterwards.
    }
    \label{tab:accuracy_vs_samples}
    \centering
    \begin{tabular}{@{}lccccc@{}}
    \toprule
    \textbf{$J$} & \textbf{STAD} & \textbf{Oracle Trend} & \textbf{Oracle Season} & \textbf{Oracle} \\
    \midrule
    5   & 0.701 & 0.920 & 0.715 & 0.926 \\
    10  & 0.832 & 0.920 & 0.839 & 0.926 \\
    20  & 0.887 & 0.920 & 0.886 & 0.926 \\
    30  & 0.896 & 0.920 & 0.908 & 0.926 \\
    40  & 0.909 & 0.920 & 0.915 & 0.926 \\
    50  & 0.904 & 0.920 & 0.910 & 0.926 \\
    \bottomrule
    \end{tabular}
\end{minipage}
\end{table}

Table~\ref{tab:accuracy_vs_samples} shows that unsurprisingly the accuracy of STAD improves with increasing $J$. Since the accuracy improves only marginally after $20$, we recommend $J = 20$ as a default choice. Choosing a larger value may improve results but at the expense of larger computational demands.

\subsection{Effect of the penalty}\label{Simu:penalty}
We now investigate the sensitivity of STAD to the collective anomaly penalty $\lambda_{\operatorname{coll}}$. We contrast its performance in three scenarios. Firstly, an easy scenario with a short anomaly of length 278 and moderate jump size 1.21 with many repeats of the seasonality ($P=250$). Secondly, a challenging scenario with a long anomaly of length 1378 and magnitude $1.19$ and a low-frequency seasonality ($P=1000$). Thirdly, a setting with a high signal to noise ratio, using the same short anomaly and seasonality as in the first case but with a significantly larger jump size of $4.2$. 

\begin{figure}
    \centering
    \begin{subfigure}[b]{0.32\textwidth}
        \centering
        \includegraphics[width=\textwidth]{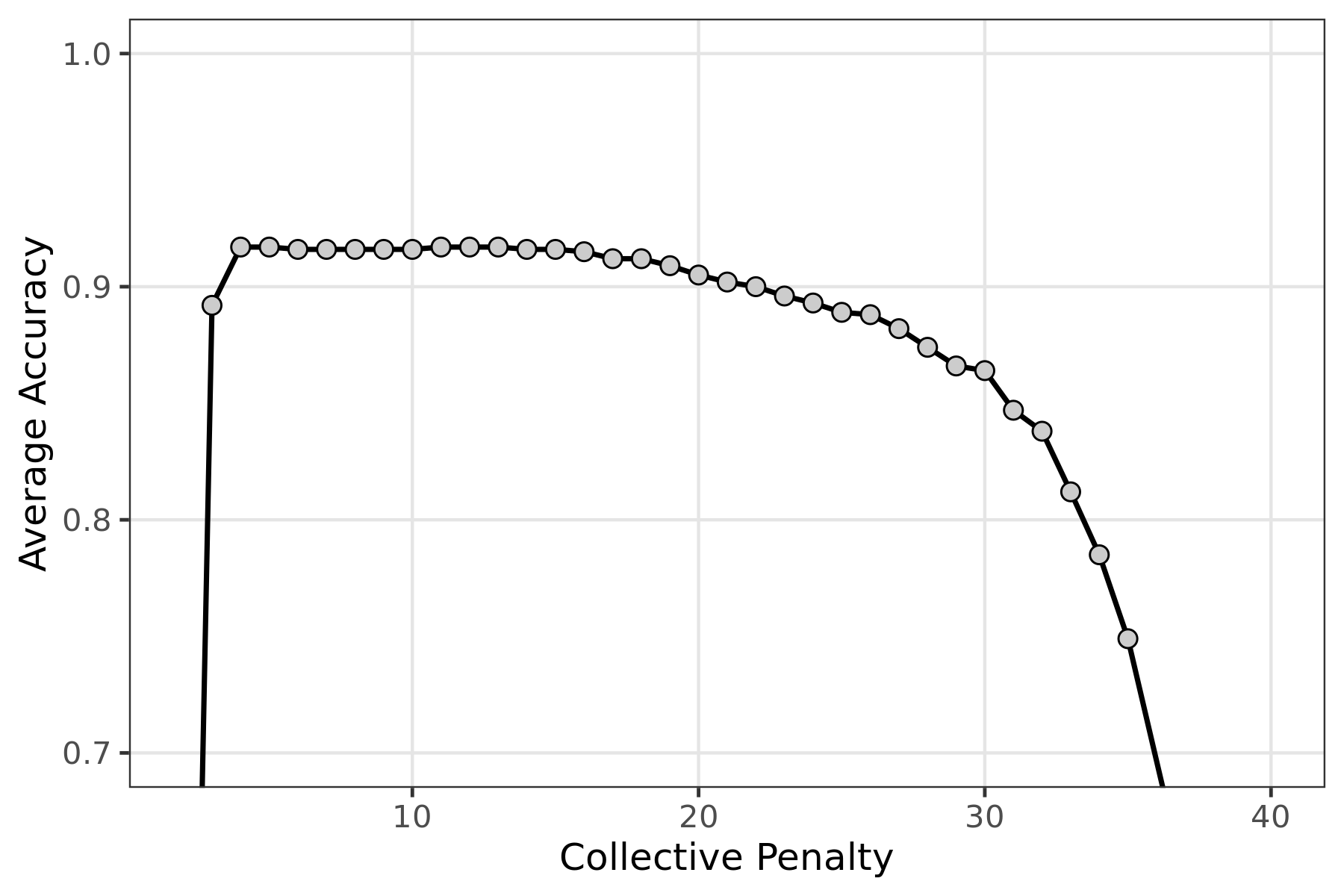}
        \caption{Scenario 1: Short Anomaly, Moderate Jump}
        \label{fig:pencoll_L278}
    \end{subfigure}
    \hfill 
    \begin{subfigure}[b]{0.32\textwidth}
        \centering
        \includegraphics[width=\textwidth]{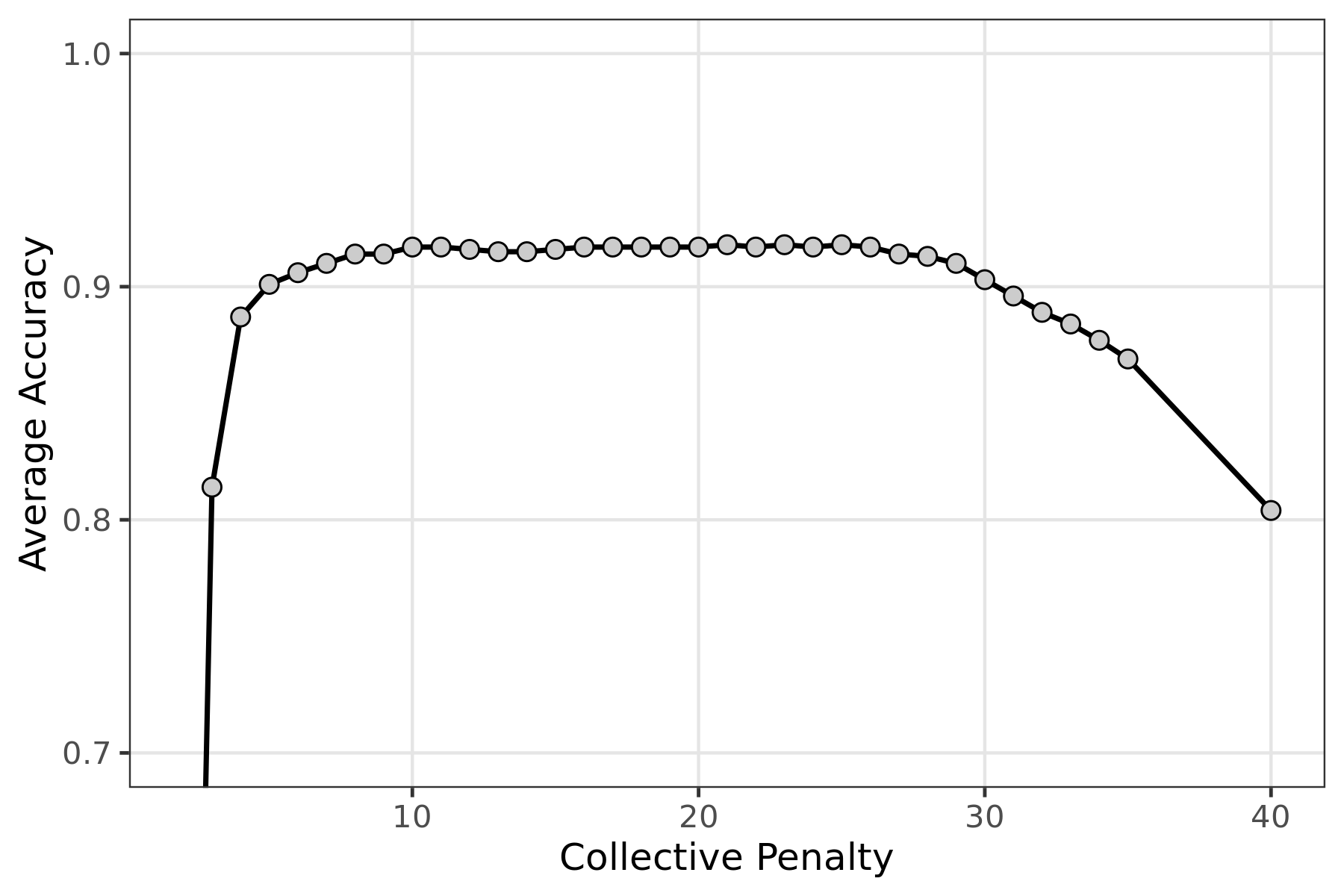}
        \caption{Scenario 2: Long Anomaly, Long Seasonality}
        \label{fig:pencoll_L1378}
    \end{subfigure}
    \begin{subfigure}[b]{0.32\textwidth}
        \centering
        \includegraphics[width=\textwidth]{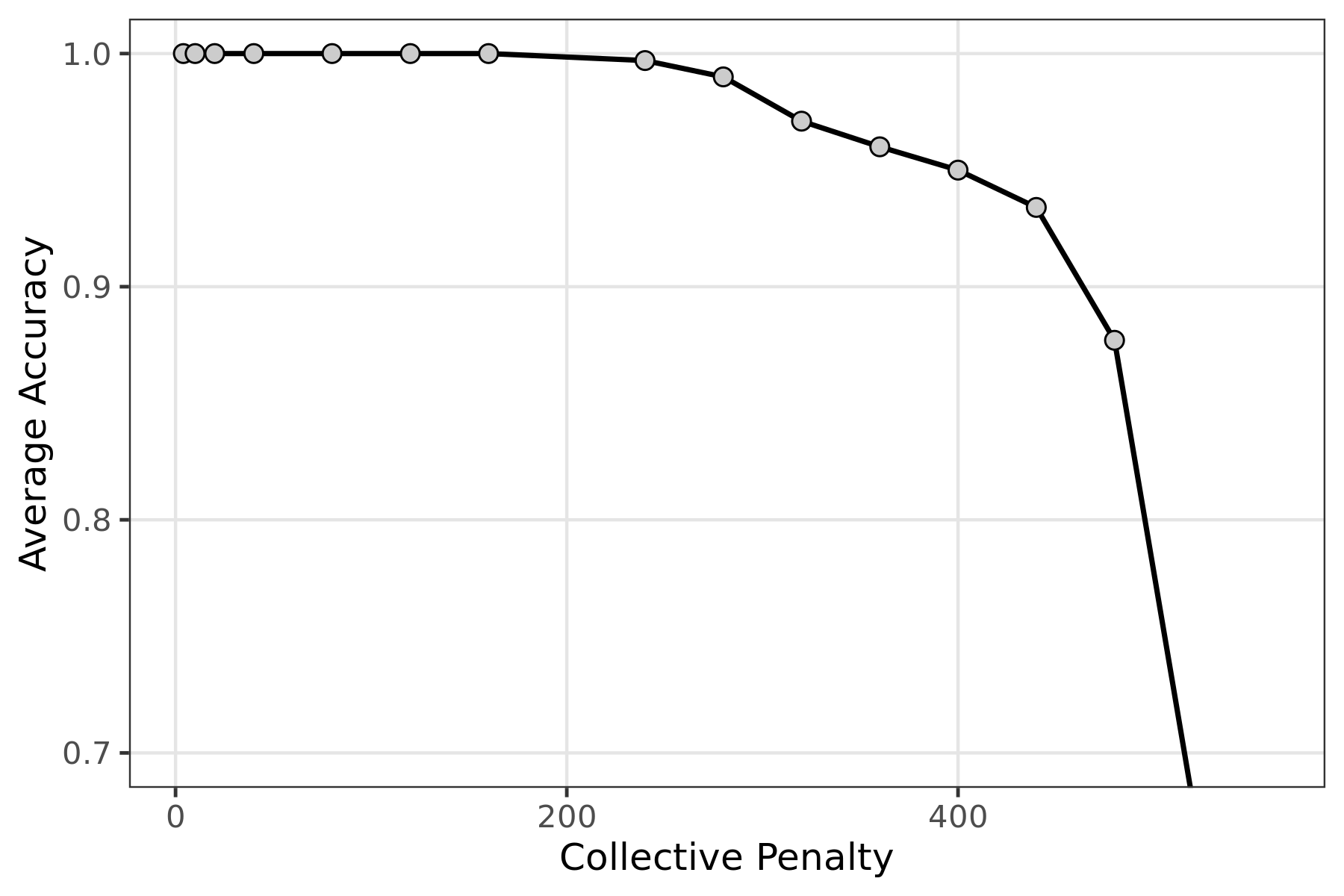} 
        \caption{Scenario 3: Short Anomaly, Large Jump}
        \label{fig:pencoll_L278_largejump}
    \end{subfigure}
    
    \caption{
        \footnotesize STAD's average detection accuracy as a function of the collective anomaly penalty $\lambda_{\operatorname{coll}}$. It shows that a wide range of penalty values offer a good performance.}
    \label{fig:pencoll_effect_all} 
\end{figure}
Figure~\ref{fig:pencoll_effect_all} shows how the penalty parameter $\lambda_{\operatorname{coll}}$ affects detection accuracy. As expected, in all scenarios, accuracy first rises sharply, then enters a wide, stable plateau of near-optimal performance, and finally declines as the penalty becomes too large. Based on these results, by default we recommend $\lambda_{\operatorname{coll}} = 4$ for collective anomalies and $\lambda_{\operatorname{point}} = 3$ for point anomalies. Note that this choice is slightly larger than the usual $\operatorname{BIC}$ choice that would be $3$ and $2$, respectively. This is to account for the additional errors resulting from the trend and seasonality estimation. However, if the anomaly is of high magnitude then the penalty can be chosen much larger. We will make use of this finding in our application in Section~\ref{section:applications}.

\subsection{Multiple collective anomalies}
\label{simulation:multiple}

Finally, we consider a setting with multiple anomalies. Our setting is similar to the one in \citep{fisch2022linear}, but we restrict ourself to changes-in-mean only. The number of anomalies is drawn randomly from a Poisson distribution with intensity $2.5$. The length $L$ of each anomaly is drawn from a Poisson distribution with mean $\mu_L \in \{30, 100, 400, 800\}$ and the magnitude of each anomaly is sampled from a uniform distribution taking values between $a$ and $b$. We consider $(a,b) = (1,3)$ (weak signals) and $(a,b) = (2,5)$ (strong signals). All random variables are independent of each other.

\begin{table}[htbp]
\centering
\begin{minipage}{0.8\textwidth}\captionsetup{width=\linewidth}
    \caption{\footnotesize
    Average detection accuracy in a setting with multiple anomalies. We consider different anomaly lengths ($\mu_L$) and magnitudes $(a,b)$. Results are consistent with the single anomaly scenario in Section~\ref{simulation:lengthSize}. STAD performs very well unless anomalies are long, in this case $\mu_L = 800$.}
    \label{tab:multi_accuracy_comparison}
    \centering
    \begin{tabular}{@{}lccccc@{}} 
    \toprule
    $\mu_L$ & $(a, b)$ & \textbf{STAD} & \textbf{Oracle Trend} & \textbf{Oracle Season} & \textbf{Oracle} \\
    \midrule
    30  & (1, 3) & 0.734 & 0.739 & 0.746 & 0.754 \\
    100 & (1, 3) & 0.845 & 0.845 & 0.847 & 0.852 \\
    400 & (1, 3) & 0.850 & 0.854 & 0.849 & 0.858 \\
    800 & (1, 3) & 0.646 & 0.872 & 0.647 & 0.871 \\
    \midrule
    30  & (2, 5) & 0.904 & 0.905 & 0.904 & 0.905 \\
    100 & (2, 5) & 0.909 & 0.909 & 0.909 & 0.909 \\
    400 & (2, 5) & 0.896 & 0.907 & 0.905 & 0.910 \\
    800 & (2, 5) & 0.655 & 0.894 & 0.676 & 0.911 \\
    \bottomrule
    \end{tabular}
\end{minipage}
\end{table}

Again, Table~\ref{tab:multi_accuracy_comparison} shows that STAD performs very well, almost matching the performance of the oracles unless anomalies are very long, i.e.~$\mu_L = 800$. This matches the findings in Section~\ref{simulation:lengthSize}. Unsurprisingly, the performance of all methods improves when the signal to noise ratio is higher, i.e.~$(a,b) = (2,5)$.

\section{Application}
\label{section:applications}
The utility of STAD on real data is illustrated by applying it to historical electricity export prices from the Octopus Agile export tariff for the East Midlands (EM) distribution network region in the United Kingdom\footnote{\footnotesize Energy data: \url{https://agile.octopushome.net/historical-data}}. The time series represents the electricity export price, $Y_t$, in pence per kilowatt-hour (p/kWh). We exported the period from April 17, 2025 (00:00 UTC) to May 26, 2025 (23:30 UTC), but we excluded weekends to maintain a daily seasonal pattern and hence to have enough repetitions of the seasonality. The recording frequency is 30 minutes, yielding to a total of $N = 1344$ data points. We chose this data because of its real-world dynamics, specifically its daily seasonality and a long-term trend.

We set $P = 48$, $C = 1$, $J = 50$, $Q \in \{1,\dots,10\}$, $B \in \{1,\dots,8\}$, $\lambda_{\operatorname{coll}} = 28$, and $\lambda_{\operatorname{point}} = 24$. Multiple of these values are increased in comparison to our default values as the underlying trend and seasonal patterns are rather complicated and the real data is also unlikely to perfectly fit our model. Furthermore, anomalies of interest seem to have a large magnitude, so the simulations in Section~\ref{Simu:penalty} suggests that a larger penalty is possible. For comparison we apply CAPA to the dataset with $\lambda_{\operatorname{coll}} = 200$, and $\lambda_{\operatorname{point}} = 150$.

\begin{figure}[htbp]
    \centering 
    \begin{subfigure}[b]{0.48\textwidth}
        \centering
        \includegraphics[width=\linewidth]{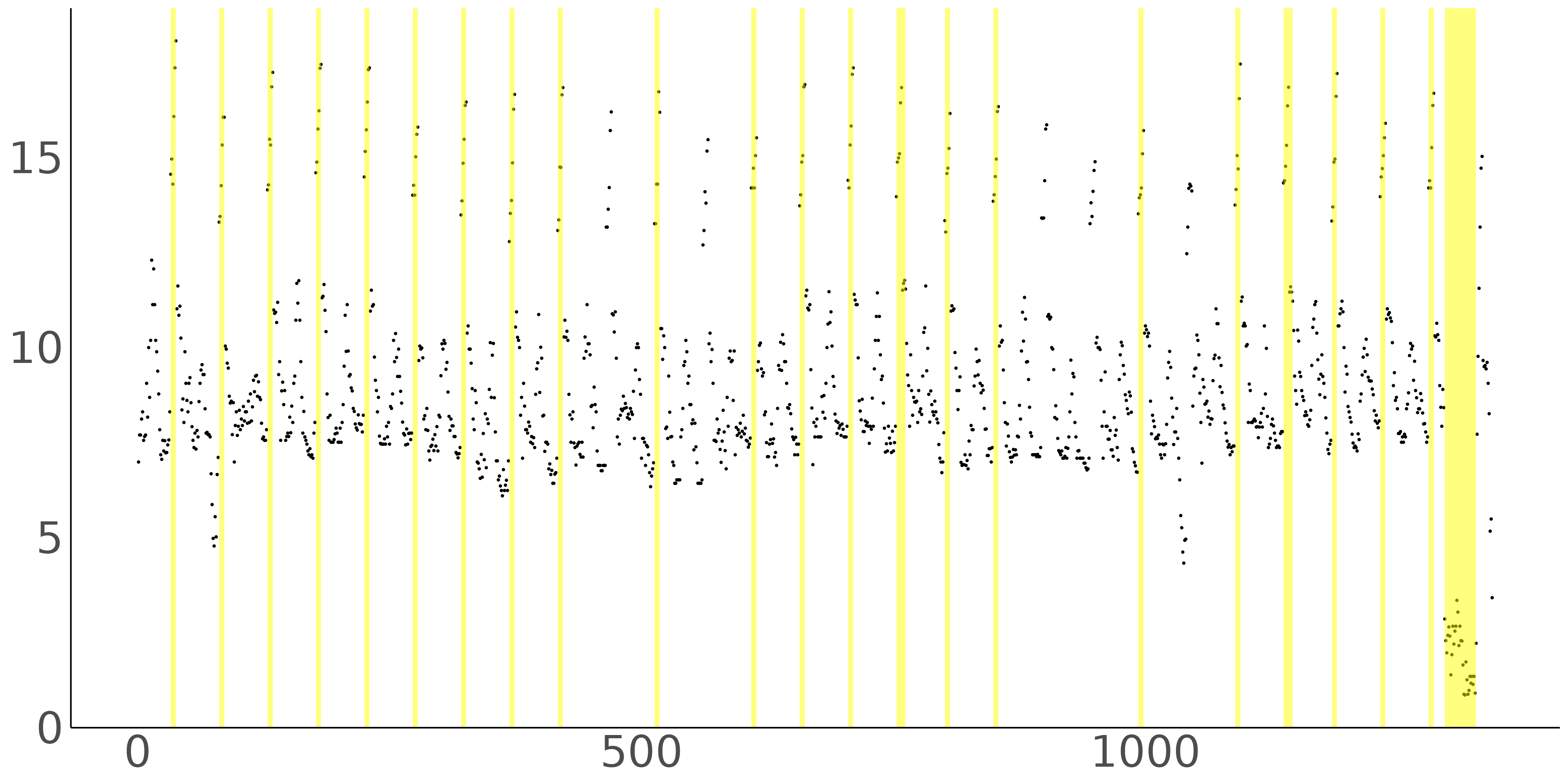} 
        \caption{CAPA}
        \label{fig:capa_app_results}
    \end{subfigure}
    \hfill
    \begin{subfigure}[b]{0.48\textwidth} 
        \centering
        \includegraphics[width=\linewidth]{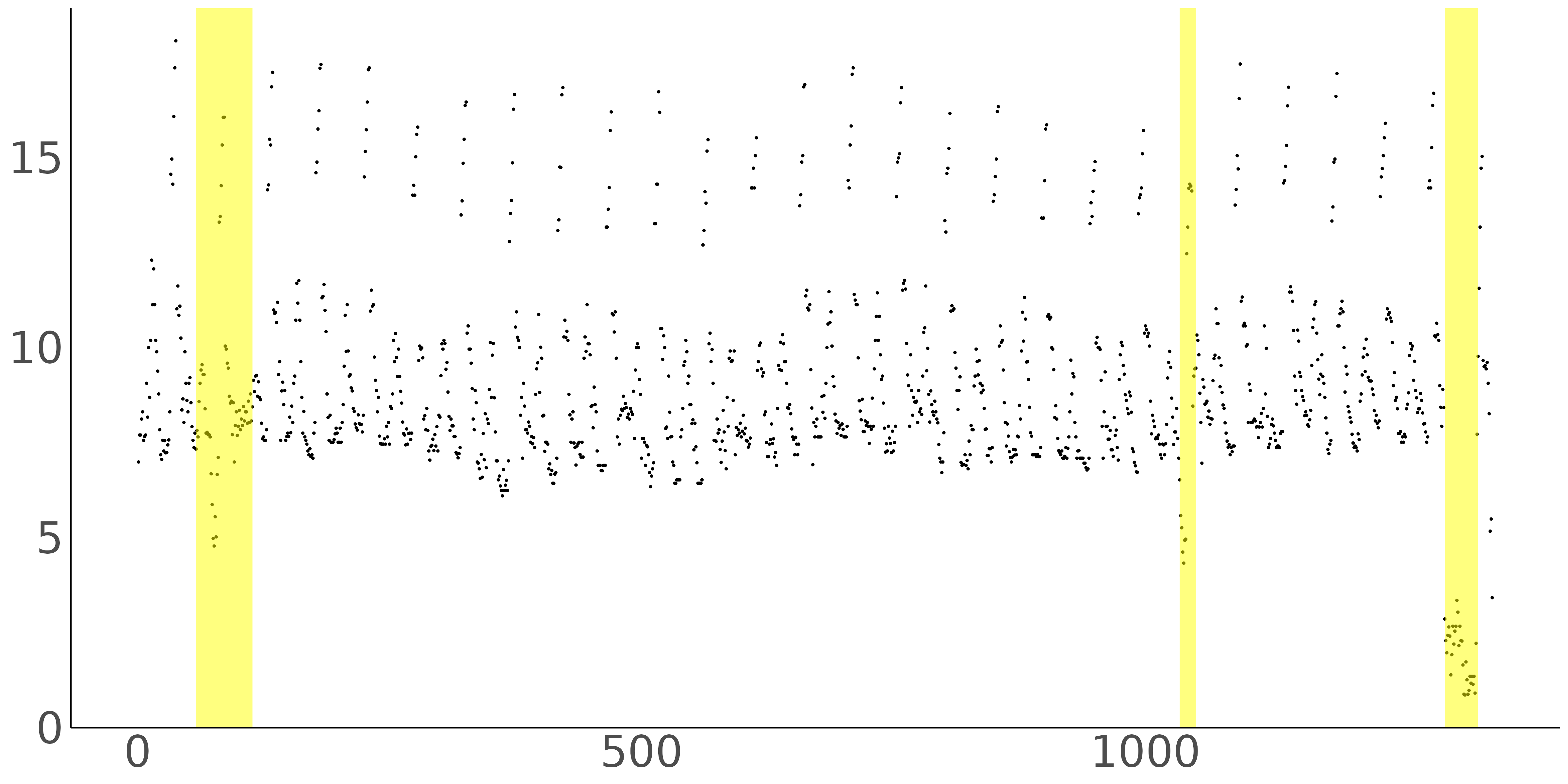}
        \caption{STAD}
        \label{fig:STAD_app_results}
    \end{subfigure}
    \caption{\footnotesize Comparison of CAPA and STAD applied to the Octopus Agile East Midlands export price data (April 17, 2025 -- May 26, 2025). Despite the extremely large penalty CAPA detects a large number of anomalies, mostly because of the seasonal pattern. It is unlikely that these findings are meaningful which illustrates the demand for a method like STAD that allows for a varying baseline. In comparison STAD detects three anomalies that appear reasonable.}
    \label{fig:anomaly_results_energy_app}
\end{figure}

Figure~\ref{fig:anomaly_results_energy_app} demonstrates that CAPA, even with an extremely large penalty, detects anomalies because of the underlying seasonality. In comparison STAD identifies three significant collective anomalies which appear to the naked eye as atypical behavior. To validate our findings, we compared the price data to renewable energy generation from the UK National Grid\footnote{\footnotesize UK Generation Data: National Grid ESO Data Portal, \url{https://data.nationalgrideso.com/}}.

\begin{figure}[htbp]
    \centering   \includegraphics[width=0.7\textwidth]{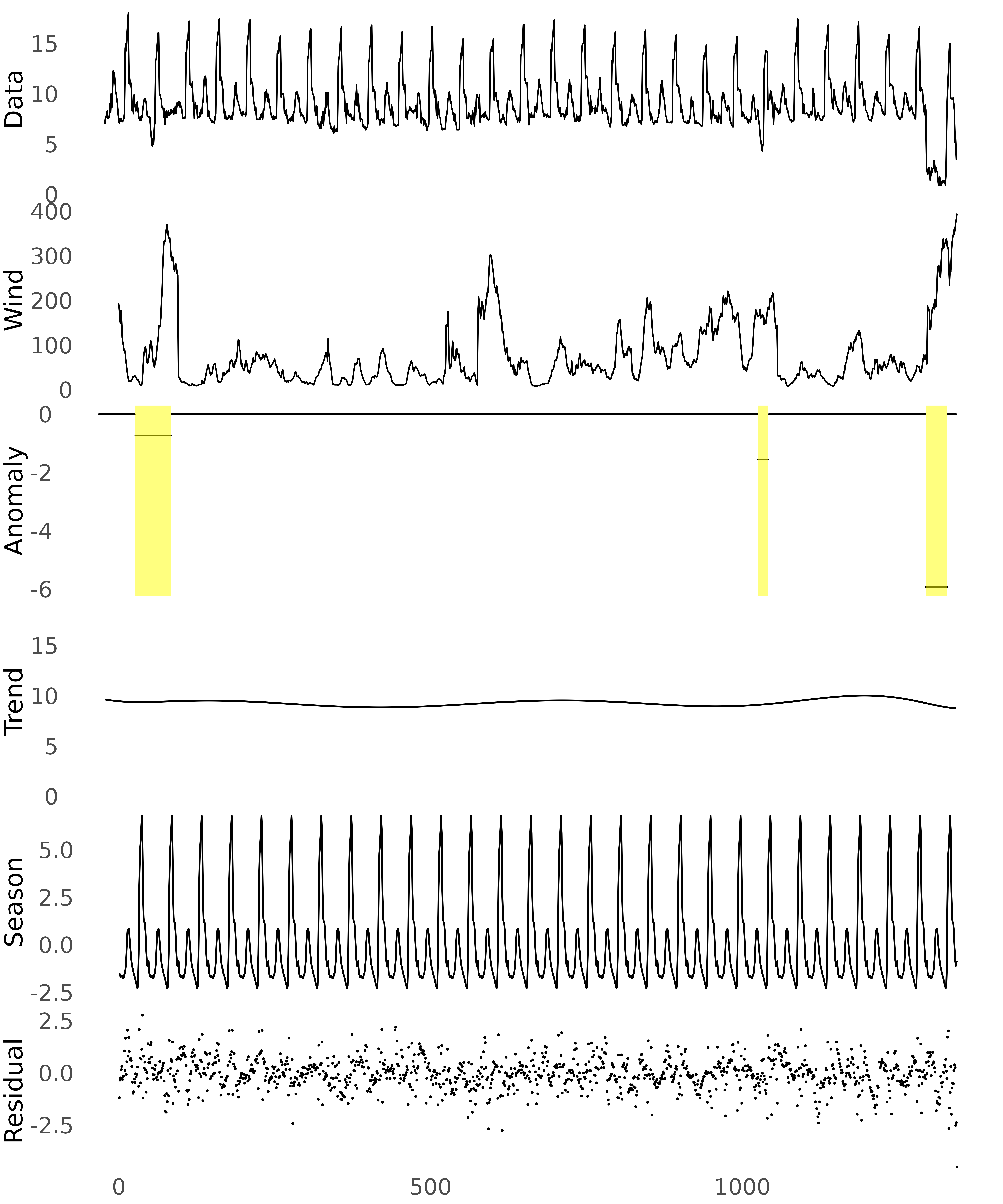}
    \caption{
    \footnotesize Decomposition of the Octopus Agile electricity price data by STAD as illustrated in Figure~\ref{fig:STAD_decompose}. It also displays the energy generation using wind. Two of the tree anomalies matches regions of high energy generation.
}
\label{fig:STAD_electricity_decompose}
\end{figure}

The obtained anomaly, trend, and seasonality in Figure~\ref{fig:STAD_electricity_decompose} appears to be reasonable. We also see that at least two anomalies matches times with extremely large energy generation using wind.

\section{Discussion}

This article introduces STAD. The approach overcomes key practical limitations of existing statistical methods for detecting collective anomalies, which either assume a constant baseline or fail to handle collective anomalies effectively. Both the theoretical results and the numerical experiments undertaken demonstrate that our approach successfully addresses this structural challenge and performs well across a wide range of settings. At the same time, the methodology remains restricted to polynomial trends and stationary seasonal components. Developing methods capable of handling smoothly varying functions and accommodating amplitude and phase modulation in the seasonal component are interesting avenues for future research.

\newpage

\appendix

\section{Pseudocodes}\label{sec:pseudocodes}

\begin{algorithm}[htbp]
\caption{The \textbf{S}easonal \textbf{T}rend \textbf{A}nomaly \textbf{D}etection algorithm (\textbf{STAD}), see Section~\ref{section:method}.}
\label{alg:stad}
\begin{algorithmic}[1]
\Require Time series $\mathbf{Y} = (y_1,\ldots,y_n) \in \mathbb{R}^n$, difference $D = C P$ and other tuning parameters required by \Call{estimateTrend}{}, \Call{estimateSeason}{}, and \Call{estimateAnomaly}{} as described in Algorithms~\ref{alg:estimate_trend},~\ref{alg:estimate_season},~and~\ref{alg:estimate_anomaly}.
\Ensure Estimated anomalies $\hat{\mathcal{A}}$, anomaly component $\hat{A} \in \mathbb{R}^n$, trend component $\hat{T} \in \mathbb{R}^n$, and seasonal component $\hat{S} \in \mathbb{R}^n$.
\Function{STAD}{$Y, D, \mathbf{Q}, \mathbf{B}, J, P$}
\State $\hat{\sigma}_0 \gets \operatorname{IQR}\big((y_2 - y_1,\ldots, y_n - y_{n - 1})\big) / \operatorname{IQR}_{\Phi} / \sqrt{2}$ \eqref{eq:sigma0_iqr}
\State $Y_{\operatorname{diff}} \gets ( y_{D + 1} - y_{1},\ldots,y_{n} - y_{n - D})$

\State $\hat{\beta}_D \gets  \Call{estimateTrend}{Y_{\operatorname{diff}}, \mathbf{Q} - 1, \mathbf{B}, J}$

\State $\hat{S}_0 \gets \Call{estimateSeason}{Y - X (0, \hat{\beta}_D)^\top, P, \{1,\ldots,n\}}$

\State $\hat{T}, \hat{\mathcal{A}}_0 \gets \Call{estimateTrend}{Y - \hat{S}_0, \mathbf{Q}, \mathbf{B}, J}$

\State $\hat{S} \gets \Call{estimateSeason}{Y - \hat{T}, P, \{1,\ldots,n\} \setminus \hat{\mathcal{A}}_0}$

\State $\hat{\mathcal{A}}, \hat{A} \gets \Call{estimateAnomaly}{Y - \hat{T} - \hat{S}}$

\State \Return $\hat{\mathcal{A}}, \hat{A}, \hat{T}, \hat{S}$
\EndFunction
\end{algorithmic}
\end{algorithm}

\begin{algorithm}[htbp]
\caption{The robust trend estimator in Section~\ref{section:trend_estimation}.}
\label{alg:estimate_trend}
\begin{algorithmic}[1]
    \Require Time series $\mathbf{Y} \in \mathbb{R}^n$, tuning parameter sets $\mathbf{Q}$ and $\mathbf{B}$, number of repetitions $J$, and parameters for the cost $C$, see \Call{estimateAnomaly}{} in Algorithm~\ref{alg:estimate_anomaly}.
    \Ensure Estimated trend component $\hat{T} \in \mathbb{R}^n$ and set of anomalous points $\tilde{\mathcal{A}}$.
    \Function{estimateTrend}{$\mathbf{Y}, \mathbf{Q}, \mathbf{B}, J$}
        \For {$Q' \in \mathbf{Q}$, $B' \in \mathbf{B}$, $j \in \{1,\ldots,J\}$}
            \State $l \sim \operatorname{Bern}\big(\frac{1}{2}\big)$
            \State $m_k \sim \operatorname{Uniform}\big(\{1,\ldots,B\}\big)$
            \State $U \gets \bigcup_{k = 0}^{Q'} \tilde{u}_{(2k + l) B' + m_k}$ \eqref{eq:final_sample}
            \State $\hat{T}_{B', Q', j} \gets X \hat{\beta}(U)$ with $\hat{\beta}$ as in \eqref{eq:m-estimator}
        \EndFor
        \State $\hat{T} \gets \argmin_{\hat{T}_{B', Q', j}} \min_{(\tilde{\mathcal{A}}, \tilde{\mu})} C(Y - T, \tilde{\mathcal{A}}, \tilde{\mu}) + Q' \hat{\sigma}_0 \log(n)$ \eqref{eq:trend_estimator}           
        \State $\tilde{\mathcal{A}} \gets$ anomalous points in the minimum above.              
        \State \Return $\hat{T}, \tilde{\mathcal{A}}$
    \EndFunction
\end{algorithmic}
\end{algorithm}

\begin{algorithm}[htbp]
    \caption{The robust seasonality estimator in Section~\ref{section:season}.}
    \label{alg:estimate_season}
    \begin{algorithmic}[1]
        \Require Time series $\mathbf{Y} \in \mathbb{R}^n$, seasonality period $P$, index set $\mathcal{I}$
        \Ensure Estimated seasonal component $\hat{S} \in \mathbb{R}^n$.
        \Function{estimateSeason}{$\mathbf{Y}, P, \mathcal{I}$}
            \For{$p \gets 0 \text{ to } P-1$}
                \State $V_p \gets \big\{ i \in \mathcal{I} : (i \bmod P) = p\big\}$ \eqref{eq:setSeasonalEstimation}
                \State $\hat{\vartheta}_p \gets \hat{\beta}(V_p)$ \eqref{eq:season_m_estimator}; with $\hat{\beta}$ as in \eqref{eq:m-estimator}
            \EndFor

            \If{seasonality is smooth}
                \State $\mathbf{\hat{\vartheta}} \gets \text{SmoothSpline}(\mathbf{\hat{\vartheta}})$
                \Comment{see Section~\ref{sec:smoothing}}
            \EndIf
            
            \State $\tilde{s}_i \coloneq \hat{\vartheta}_{i \bmod P}$ and $\hat{s}_i \coloneq \tilde{s}_i - n^{-1} \sum_{i = 1}^{n} \tilde{s}_{i}$ \eqref{eq:estimateSeasonality}
            \State \Return $\hat{\mathbf{S}} = (\hat{s}_1,\ldots,\hat{s_n})$ 
        \EndFunction
    \end{algorithmic}
\end{algorithm}

\begin{algorithm}[htbp]
    \caption{The anomaly estimator in Section~\ref{section:anomaly_detection}.} 
    \label{alg:estimate_anomaly} 
    \begin{algorithmic}[1]
            \Require Time series $\mathbf{Y} \in \mathbb{R}^n$ and standard deviation and penalties for cost function $C$.
            \Ensure Estimated anomalies $\hat{\mathcal{A}}$ and anomaly component $\hat{A} \in \mathbb{R}^n$.
        \Function{estimateAnomaly}{$\mathbf{Y}$}    
            \State $(\hat{\mathcal{A}}, \hat{\mu}) \gets \argmin_{(\tilde{\mathcal{A}}, \tilde{\mu})} C(\mathbf{Y}, \tilde{\mathcal{A}}, \tilde{\mu})$ \eqref{eq:capa_estimator}
            \State $\hat{A} \gets (\hat{a}_1,\ldots,\hat{a}_n)$ as in \eqref{eq:estimateAnomalyComponent}
            \State \Return $\hat{\mathcal{A}}$ and $\hat{A}$
        \EndFunction
    \end{algorithmic}
\end{algorithm}

\section{Illustration of the tuning parameter \texorpdfstring{$B$}{B}}\label{simu:B}

In Scenario 1 we consider a simple quadratic trend and a long anomaly. If we choose $B = 1$, then regardless of how large $J$ is or whether $Q$ is correctly specified, the resulting estimate will be poorly as inevitably the second sub-segment overlaps with the anomaly. If we choose $B = 3$ instead, then some of the samples miss the anomaly and provide a good estimate.

\begin{figure}[htbp]
    \centering
            \begin{subfigure}[b]{0.7\textwidth}
        \includegraphics[width=\textwidth]{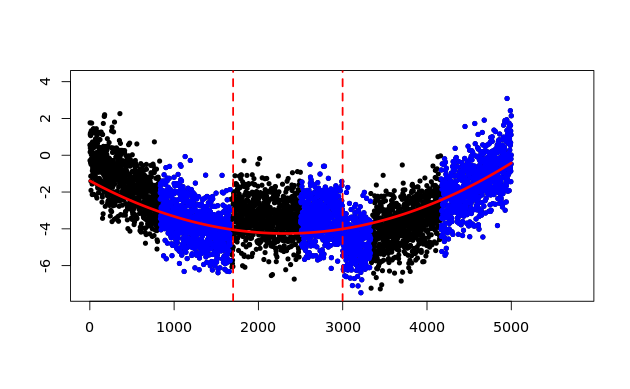}
        \caption{$B = 1$}
        \label{fig:smallB_sample1}
    \end{subfigure}
    \begin{subfigure}[b]{0.7\textwidth}
        \includegraphics[width=\textwidth]{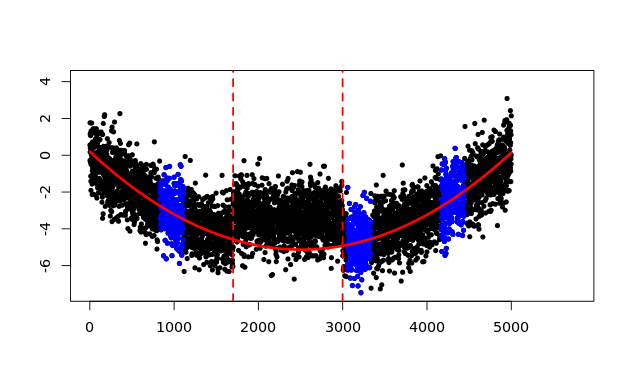}
        \caption{$B = 3$)}
        \label{fig:goodB_sample}
    \end{subfigure}
    \caption{A simple quadratic trend and a long anomaly. A too small $B$ means that the second sub-segment always overlaps with the anomaly, resulting in poor estimates of the trend.}
    \label{fig:goodB}
\end{figure}

In Scenario 2 we consider a more complicated cubic trend and a short anomaly. $B = 3$ shows once again good results. If we choose $B$ very large, e.g.~$B = 15$, then even when the sample $U$ is non-anomalous it does not lead to a good estimate, since the available number of observations in $U$ is too small and the estimate is extremely noisy.

\begin{figure}[htbp]
    \centering
    \begin{subfigure}[b]{0.7\textwidth}
        \includegraphics[width=\textwidth]{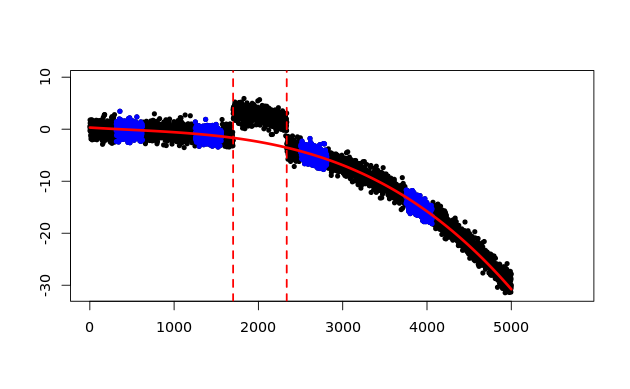}
        \caption{$B = 3$}
        \label{fig:goodB_sample_2}
    \end{subfigure}
        \begin{subfigure}[b]{0.7\textwidth}
        \includegraphics[width=\textwidth]{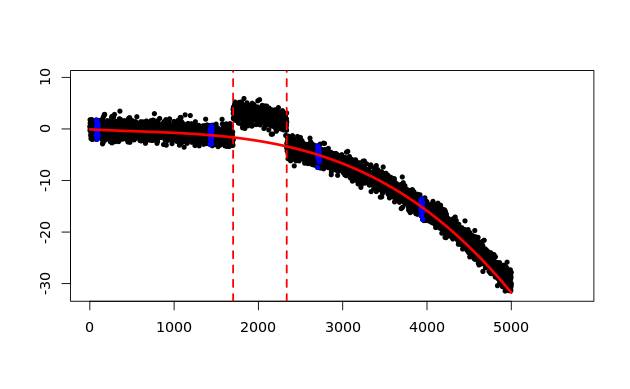}
        \caption{$B = 15$}
        \label{fig:smallB_sample1_2}
    \end{subfigure}
    \caption{A more complicated cubic trend and a short anomaly. A too large $B$ leads to a very small sample $U$ and hence to a bad, noisy estimate of the trend.}
    \label{fig:goodB_2}
\end{figure}

\section{The Iteratively Reweighted Least Squares}\label{IRLS}
For completeness, we detail the Iteratively Reweighted Least Squares (IRLS) algorithm that we use to compute the $M$-estimator.

\begin{enumerate}
    \item Start at $t = 0$ with an initial estimate $\hat{\beta}^{(0)}$. We use ordinary least squares for the tend estimator and the median for the seasonal estimate.

    \item Compute the remainder 
    \[
        e_i^{(t)} = y_i - x_i^\top \hat{\beta}^{(t)},
    \]
    and the corresponding weights
    \[
        w_i^{(t)} = \frac{\psi(e_i^{(t)})}{e_i^{(t)}},
    \]
    where $\psi$ denotes the influence function, i.e.~the derivative of the loss function $\rho$.

    \item Update the estimate $\hat{\beta}$ using weighted least squares regression
    \[
        \hat{\beta}^{(t+1)} = \left(X^\top W^{(t)} X\right)^{-1} X^\top W^{(t)} y,
    \]
    where $W^{(t)} = \text{diag}(w_1^{(t)}, w_2^{(t)}, \dots, w_n^{(t)})$.

    \item Increase $t$ by $1$ and repeat steps 2 and 3 until the estimated coefficients converge, i.e.
    \[
        \sum_j \left| \hat{\beta}^{(t+1)}_j - \hat{\beta}^{(t)}_j \right| < 10^{-5},
    \]
    or a maximum of $500$ iterations is reached.
\end{enumerate}

\section{Proofs}\label{sec:proofs}

We will show in Section~\ref{section:proofVariance} that $\CvarianceEstimateLower \sigma_0 \leq \hat{\sigma}_0 \leq \CvarianceEstimate \sigma_0$ for constants $0 < \CvarianceEstimateLower < 1$ and $\CvarianceEstimate > 1$ with probability converging to one. In Section~\ref{sec:proofDifference} we will study the effects of taking differences of lag $D$. In Sections~\ref{sec:theoryTrend},~\ref{sec:proofSeasonality},~and~\ref{sec:proofAnomaly} we will show that we estimate the trend, the seasonality, and the anomalies well given achievable assumptions. Finally, Theorem~\ref{theorem:main} will be proven in Section~\ref{sec:proofMainTheorem}.

\subsection{Additional notations}
Throughout the proof, we make use of the following notations. For any observation vector $Z = (z_1,\ldots,z_n)$ and index set $U \coloneq \{i_1,\ldots,i_m\} \subset \{1,\ldots,n\}$ we use the definitions $Z_{U} \coloneq (z_{i_1},\ldots,z_{i_m})$ and $\overline{Z}_U \coloneq \frac{1}{m} \sum_{i \in U} z_{i}$. For the regression matrix $X$ we denote by $X_U$ the matrix restricted to the rows in $U$, i.e.~$X_U = (x_{i, j})_{i \in U, j = 1,\ldots,Q + 1}$. For indices $1 \leq i \leq j \leq n$ we also use $i:j$ for the index set $\{i,\ldots,j\}$, e.g. we write  $Z_{i:j}$ instead of $Z_{\{i,\ldots,j\}}$. Furthermore, we denote by $\hat{\theta}_{i:j}$ the best polynomial fit of degree $Q$ to $Y_{i:j}$, i.e.
\begin{align*}
\hat{\beta}_{i:j} \coloneq & \argmin_{\beta \in \R^{Q + 1}} C_{i,j}(Y - X_{i:j} \beta),\\
\hat{\theta}_{i:j} \coloneq & X_{i:j} \hat{\beta}_{i:j} = X_{i:j} (X_{i:j}^\top X_{i:j})^{-1} X_{i:j}^\top Y_{i:j}.
\end{align*}

\subsection{Bound on the variance estimate}\label{section:proofVariance}
The following proposition bounds the variance estimate.
\begin{Proposition}\label{proposition:varianceEstimate}
We assume the model from Section~\ref{sec:model}. Let $\hat{\sigma}_0^2$ be the variance estimator as defined in \eqref{eq:sigma0_iqr}. Suppose Assumption~\ref{assumption:trendSeasonalBound} holds. Then, there exists constants $0 < \CvarianceEstimateLower \leq 1$ and $\CvarianceEstimate \geq 1$ such that
\begin{equation}\label{eq:goodVarianceEstimate}
\Pj\big(\CvarianceEstimateLower \sigma_0 \leq \hat{\sigma}_0 \leq \CvarianceEstimate \sigma_0\big) \to 1, \text{ as } n \to \infty.
\end{equation}
\end{Proposition}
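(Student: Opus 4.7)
The plan is to work with the decomposition
\begin{equation*}
\Delta y_i \coloneq y_{i+1} - y_i = N_i + B_i + D_i^A, \quad i = 1,\ldots,n-1,
\end{equation*}
where $N_i = \varepsilon_{i+1} - \varepsilon_i \sim \mathcal{N}(0, 2\sigma_0^2)$ is the noise difference, $B_i = (t_{i+1} - t_i) + (s_{i+1} - s_i)$ is the trend/seasonality increment, and $D_i^A = a_{i+1} - a_i$ is the anomaly increment. By Assumption~\ref{assumption:trendSeasonalBound} we immediately obtain $|B_i| \leq 2\CtrendSeasonalBound \sigma_0$ for every $i$, and $D_i^A$ is nonzero only at the at most $2K + 2m$ indices lying on the boundary of a collective or point anomaly. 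Assumptions~\ref{assumption:segmentLength}~and~\ref{assumption:anomalyLengthCombined} (together with an implicit $m = o(n)$) force this boundary set to be of size $o(n)$, so the anomaly component acts as a sparse contamination.

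The first main step is to establish that the empirical IQR of the pure noise sequence $(N_i)_{i=1}^{n-1}$ converges in probability to $\sqrt{2}\sigma_0 \cdot \operatorname{IQR}_\Phi$. Although the $N_i$ are $1$-dependent, the odd-indexed and even-indexed subsequences are each iid $\mathcal{N}(0, 2\sigma_0^2)$ of size $\approx (n-1)/2$. A standard quantile consistency argument (Glivenko-Cantelli or a Bahadur-type representation) applied to each subsequence, combined with the fact that the full empirical CDF interleaves the two, gives convergence of both the $25\%$ and $75\%$ empirical quantiles, hence convergence of the empirical IQR.

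The second step is to pass from $(N_i)$ to $(N_i + B_i)$. For any deterministic perturbations $(b_i)$ with $|b_i| \leq B$, the $k$-th order statistic satisfies $|(z+b)_{(k)} - z_{(k)}| \leq B$, and therefore $|\operatorname{IQR}(\{N_i + B_i\}) - \operatorname{IQR}(\{N_i\})| \leq 4 \CtrendSeasonalBound \sigma_0$. The upper bound of the proposition follows directly. For the lower bound we additionally note that convolving a Gaussian sample with a bounded deterministic shift can only increase the IQR in a limiting sense: the empirical CDF of $(N_i + B_i)$ is well approximated by the mixture $\frac{1}{n-1}\sum_i \Phi_{\sqrt{2}\sigma_0}(\cdot - B_i)$, and mixing Gaussians with the same variance at different locations gives a distribution whose $25$--$75$ quantile spread is at least $\sqrt{2} \sigma_0 \cdot \operatorname{IQR}_\Phi$ (a dispersion property of convolution by a mean-zero signed measure). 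This yields a lower bound of the form $\CvarianceEstimateLower \sigma_0$ for a suitable $\CvarianceEstimateLower > 0$ depending on $\CtrendSeasonalBound$.

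The third step is to absorb the sparse anomaly contamination $D_i^A$. Because at most $o(n)$ indices are contaminated, the empirical $\alpha$-quantile of $\{\Delta y_i\}$ lies between the $(\alpha - o(1))$- and $(\alpha + o(1))$-quantiles of $\{N_i + B_i\}$. Continuity of the limiting mixture quantile function (the limit has a smooth, strictly increasing CDF) gives that the contamination perturbs the IQR by at most $o_P(1)$. Combining the three steps, $\hat{\sigma}_0 = \operatorname{IQR}(\{\Delta y_i\}) / (\sqrt{2} \operatorname{IQR}_\Phi)$ lies in $[\CvarianceEstimateLower \sigma_0, \CvarianceEstimate \sigma_0]$ with probability tending to one, yielding \eqref{eq:goodVarianceEstimate}. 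The main technical obstacle is the lower bound in Step~2: the naive order-statistic argument gives only $\sqrt{2}\operatorname{IQR}_\Phi \sigma_0 - 4\CtrendSeasonalBound \sigma_0$, which is not guaranteed to be positive when $\CtrendSeasonalBound$ is large; the dispersion-monotonicity argument for Gaussian mixtures is what secures a strictly positive $\CvarianceEstimateLower$.
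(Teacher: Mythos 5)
Your proof is correct in substance but takes a genuinely different route from the paper, most visibly for the lower bound. The paper argues directly on quantile counts: it fixes a threshold $\tilde{C}_1\sigma_0$ with $\tilde{C}_1 = 2\CtrendSeasonalBound + \sqrt{2}\,z_{0.8}$, notes that $a_{i}-a_{i-1}\neq 0$ at only $O(K)$ boundary indices, and uses a Chebyshev-type binomial bound (Feller's inequality (3.5)) to show that with high probability fewer than a quarter of the differences exceed the threshold, hence $q_{0.75}(\Delta y)\le \tilde{C}_1\sigma_0$ (and symmetrically for $q_{0.25}$); this gives the upper bound with explicit constants and almost no machinery, and the lower bound is only asserted to ``follow similar arguments.'' You instead prove consistency of the noise IQR via odd/even splitting of the $1$-dependent differences, control the trend/seasonal increments by an order-statistic perturbation for the upper bound, and for the lower bound invoke the fact that a location mixture of equal-variance Gaussians has IQR at least that of a single Gaussian, finishing with an $o(1)$ quantile-level shift to absorb the $o(n)$ anomaly boundaries. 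Your diagnosis of the delicate point is accurate: a literal mirror of the paper's threshold argument would need $\Pj\big(\varepsilon_2-\varepsilon_1 > (\tilde{c}+2\CtrendSeasonalBound)\sigma_0\big) > 1/4$, which fails once $\CtrendSeasonalBound$ is moderately large, so the lower bound genuinely requires a different idea --- either your dispersion argument, or an elementary fix within the paper's framework (bound the number of differences falling in \emph{any} interval of length $2\tilde{c}\sigma_0$ using only that each $\varepsilon_{i+1}-\varepsilon_i + B_i$ has density at most $1/(2\sigma_0\sqrt{\pi})$, then apply the same binomial tail bound). Your route costs more technology (uniform Glivenko--Cantelli for a $1$-dependent, non-identically distributed array, plus quantile stability, which needs the mixture density bounded below near its central quantiles --- true here since $\vert B_i\vert \le 2\CtrendSeasonalBound\sigma_0$), but it buys a lower constant $\CvarianceEstimateLower$ arbitrarily close to $1$ and independent of $\CtrendSeasonalBound$, which the paper does not claim.

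Two points need tightening. First, the ``dispersion property'' you cite is correct but misattributed: what is needed is that $X+Y \geq_{\mathrm{disp}} X$ whenever $X$ has a log-concave density and $Y$ is independent of $X$ (Lewis--Thompson; see also Shaked and Shanthikumar's treatment of the dispersive order), applied with $X$ Gaussian and $Y$ the empirical probability measure of the shifts $B_i$; ``convolution by a mean-zero signed measure'' is neither the right hypothesis nor needed (mean zero plays no role), and this step should be stated as a lemma with that justification rather than asserted. Second, your Step~3 needs the number of anomaly-boundary indices to be $o(n)$ (and the theory section of the paper takes $O=\emptyset$, so only the $2K$ collective-anomaly boundaries matter); this smallness is implicit rather than explicit in Assumptions~\ref{assumption:segmentLength}--\ref{assumption:anomalyLengthCombined}, but the paper's own proof relies on exactly the same implicit bound ($2K$ small relative to $n$), so this is a shared, minor looseness rather than a defect of your argument.
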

\begin{proof}
Let $\Delta y$ be the first order difference of the observations as defined before \eqref{eq:sigma0_iqr}, let $\operatorname{IQR}$ be the interquartile range, let $\operatorname{IQR}_{\Phi}$ be the interquartile range of a standard Gaussian distribution, let $q_{0.75}$ and $q_{0.25}$ be the 75\% and 25\% quantile, respectively. Then, for a constant $\tilde{C}_1 >0$ to be chosen later, the definition of the interquartile range and a union bound yield
\begin{align*}
\Pj \left( \operatorname{IQR}\big( \Delta y \big) > 2 \tilde{C}_1 \sigma_0 \right)
\leq \Pj \left( q_{0.75}\big( \Delta y \big) > \tilde{C}_1 \sigma_0 \right) + \Pj \left( q_{0.25}\big( \Delta y \big) < - \tilde{C}_1 \sigma_0 \right).
\end{align*}
In the following we focus on bounding the first probability, but the bounding the second probability follows similar steps. It follows from the definition of a quantile for the first inequality, from Assumption~\ref{assumption:trendSeasonalBound} for the second inequality, and from the fact that there are at most $2K$ different $i$'s for which $a_i - a_{i - 1} \neq 0$, since this can only occur at the beginning or end of an anomaly, for the third inequality that
\begin{align*}
 \Pj \left( q_{0.75}\big( \Delta y \big) > \tilde{C}_1 \sigma_0 \right)
\leq & \Pj\left( \sum_{i = 2}^{n} \EINS_{y_i - y_{i - 1}  > \tilde{C}_1 \sigma_0} \geq \frac{1}{4} (n - 1) \right)\\
\leq & \Pj\left( \sum_{i = 2}^{n} \EINS_{\varepsilon_i - \varepsilon_{i - 1} + a_i - a_{i - 1} > \tilde{C}_1 \sigma_0 - 2 \CtrendSeasonalBound \sigma_0} \geq \frac{1}{4} (n - 1) \right)\\
\leq & \Pj\left( \sum_{i = 2}^{n} \EINS_{\varepsilon_i - \varepsilon_{i - 1} > \tilde{C}_1 \sigma_0 - 2 \CtrendSeasonalBound \sigma_0} \geq \frac{1}{4} (n - 1) - 2 K \right).
\end{align*}
We choose $\tilde{C}_1$ such that $\frac{\tilde{C}_1 - 2 \CtrendSeasonalBound}{\sqrt{2}}$ is the $80\%$ quantile of a standard Gaussian distribution. Then, the left hand side in the probability above is binomial distributed with size $n - 1$ and success probability 
\begin{equation*}
\Pj\left(\frac{\varepsilon_2 - \varepsilon_{1}}{\sqrt{2} \sigma_0} > \frac{\tilde{C}_1 - 2 \CtrendSeasonalBound}{\sqrt{2}} \right) \leq \frac{1}{5}.
\end{equation*}
We use the following bound from \citep[(3.5)]{feller1958introduction},
\begin{equation*}
\Pj(S_n \geq r) \leq \frac{r (1- p)}{(r - np)^2},
\end{equation*}
if $r \geq np$, where $S_n$ is Binomial distributed with size $n$ and success probability $p$. It follows that
\begin{align*}
\Pj \left( q_{0.75}\big( \Delta y \big) > \tilde{C}_1 \sigma_0 \right)
\leq \frac{n}{\big(\frac{1}{4} (n - 1) - 2 K - \frac{1}{5}(n - 1)\big)^2}
\to 0,
\end{align*}
as $n \to \infty$. This completes the proof with $\CvarianceEstimate = 2 \tilde{C}_1 \sqrt{2} \operatorname{IQR}_{\Phi}$, as the second probability can be bounded following similar arguments. Also the proof of $\CvarianceEstimateLower \sigma_0 \leq \hat{\sigma}_0$ follows similar arguments.
\end{proof}

\subsection{Differenced sequence}\label{sec:proofDifference}

We detail in Lemma~\ref{lemma:differencedSequence} the properties of the differenced sequence $Y_D$ and show that it satisfies the following assumptions, which are required for a good estimation of the trend.

\begin{Assumption}\label{assumption:segmentLengthTrend}
There exists a $\tilde{\delta} > 0$ such that
\begin{equation*}
e_k - b_k \geq \frac{\sigma_0^2}{\mu_k}\log(n)^{1 + \delta + \tilde{\delta}},\ \forall\ k = 1,\ldots,K,
\end{equation*}
and
\begin{equation*}
b_{k + 1} - e_k \geq \frac{\sigma_0^2}{\mu_k}\log(n)^{1 + \delta + \tilde{\delta}},\ \forall\ k = 0,\ldots,K,
\end{equation*}
\end{Assumption}

\begin{Assumption}\label{assumption:anomalyLengthTrend}
The total length of collective anomalies is bounded. More precisely, there exists a $\gamma < 1$ such that
\begin{equation*}
\sum_{k = 1}^{K} e_k - b_k < \gamma \frac{1}{2 (Q + 1)}.
\end{equation*}
\end{Assumption}

\begin{Lemma}\label{lemma:differencedSequence}
We assume the model from Section~\ref{sec:model}. Suppose Assumptions~\ref{assumption:segmentLength},~\ref{assumption:anomalyLengthCombined},~and~\ref{assumption:gridBandQ} hold. Then,
\begin{equation*}
Y_D = (y_{1,D},\ldots,y_{D, n - D}) = A_D + T_D + \varepsilon_D,
\end{equation*}
where
\begin{equation*}
T_D = (t_{D,1},\ldots,t_{D, n - D}) = X_D \beta_D,
\end{equation*}
with $\beta_D = (\beta_1,\ldots,\beta_Q)$ and $(X_D)_{i, q} = \left(\frac{i+D}{n}\right)^q - \left( \frac{i}{n} \right)^q $, is a polynomial of degree $Q - 1$. Furthermore,
$A_D = (a_{D,1},\ldots,a_{D,n - D})$ contains $2K$ anomalies that satisfy Assumptions~\ref{assumption:gridBandQ},~\ref{assumption:segmentLengthTrend},~and~\ref{assumption:anomalyLengthTrend}. Finally, 
\begin{equation*}
\varepsilon_D = (\varepsilon_{D, 1},\ldots,\varepsilon_{D, n - D}) = (\varepsilon_{D + 1} - \varepsilon_1,\ldots,\varepsilon_{n} - \varepsilon_{n - D})
\end{equation*}
is a multivariate Gaussian distributed with expectation $\E\left[ \varepsilon_{D, i} \right] = 0$ and covariance
\begin{equation*}
\Cov\left[ \varepsilon_{D, i}, \varepsilon_{D, j} \right] = \begin{cases}
2\sigma_0^2 & \text{if } i = j,\\
-\sigma_0^2 & \text{if } \vert i - j \vert = D,\\
0 & \text{otherwise.}
\end{cases}
\end{equation*} 
\end{Lemma}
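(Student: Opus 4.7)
The plan is to apply the lag-$D$ difference operator componentwise to the decomposition $Y = A + T + S + \varepsilon$ and verify each of the four claims in turn. The seasonal term cancels immediately because $D = CP$ is a multiple of the period and $s_i = \vartheta_{i \bmod P}$ is $P$-periodic, so $s_{i+D} = s_i$ for every $i$. For the trend, after cancelling the constant ($q=0$) term, each remaining column of $X_D$ can be written, via the binomial expansion of $(i+D)^q - i^q$, as a polynomial in $i$ with leading coefficient $qD/n^q$ and degree $q - 1$, so $T_D = X_D \beta_D$ is a polynomial of degree $Q - 1$, as claimed.

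The main work lies in the anomaly part, and this is really a piece of careful structural bookkeeping. The key observation I would use is that condition~\eqref{eq:conditionD} prevents the pair $(i, i+D)$ from sitting inside two different anomalous intervals, or even inside the same one at both endpoints. Concretely, if $i \in (b_k, e_k]$, the lower bound $D \geq 2 \max_j (e_j - b_j)$ gives $i + D > e_k$, while the upper bound $D \leq b_{k+1} - e_k - (e_k - b_k) - 1$ gives $i + D < b_{k+1}$, placing $i + D$ inside the non-anomalous gap $(e_k, b_{k+1})$; a symmetric argument treats $i + D \in (b_k, e_k]$. Hence each original anomaly $(b_k, e_k]$ contributes exactly two disjoint anomalies to $A_D$: $(b_k - D, e_k - D]$ with magnitude $+\mu_k$ and $(b_k, e_k]$ with magnitude $-\mu_k$, for a total of $2K$ anomalies whose lengths and absolute magnitudes are preserved.

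Verifying the required assumptions on $A_D$ is then direct. Since lengths and $|\mu_k|$ carry over unchanged, \eqref{eq:firstAssumptionSegmentLength} transfers verbatim into the first part of Assumption~\ref{assumption:segmentLengthTrend}. For the new separation clause, the within-$k$ gap between $(b_k - D, e_k - D]$ and $(b_k, e_k]$ equals $D - (e_k - b_k) \geq e_k - b_k$, and the between-$k$ gap between $(b_k, e_k]$ and $(b_{k+1} - D, e_{k+1} - D]$ equals $b_{k+1} - e_k - D \geq e_k - b_k + 1$; both are bounded below by \eqref{eq:firstAssumptionSegmentLength}. For Assumption~\ref{assumption:anomalyLengthTrend}, the differenced trend has degree $Q - 1$, so the target bound is $\gamma/(2Q)$, which follows by doubling the original total length and using $\max(4(Q+1), 8) \geq 4Q$ in Assumption~\ref{assumption:anomalyLengthCombined}. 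Assumption~\ref{assumption:gridBandQ} passes through because the trend estimator will be invoked on $Y_D$ with $\mathbf{Q}$ replaced by $\mathbf{Q} - 1$.

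The noise claim is immediate: $\varepsilon_D$ is a linear image of the i.i.d.\ Gaussian vector $\varepsilon$, hence jointly Gaussian with zero mean, and the index pairs $\{i, i+D\}$ and $\{j, j+D\}$ overlap only when $i = j$ (contributing variance $2\sigma_0^2$) or when $|i - j| = D$ (contributing covariance $-\sigma_0^2$, from the single shared noise term $\varepsilon_{\max(i,j)}$ appearing with opposite signs), which reproduces the stated covariance. The only genuinely delicate step in the whole argument is the anomaly ordering, where \eqref{eq:conditionD} must be invoked twice---once through the lower bound and once through the upper bound---to obtain exactly the separation required by Assumption~\ref{assumption:segmentLengthTrend}, which is strictly stronger than Assumption~\ref{assumption:segmentLength} and hence not automatic.
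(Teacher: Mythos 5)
Your proposal is correct and follows essentially the same route as the paper's proof: componentwise differencing with seasonal cancellation from $D = CP$, a difference-of-powers/binomial expansion showing $T_D$ has degree $Q-1$ with $\beta_0$ eliminated, the case analysis (using both bounds in \eqref{eq:conditionD}) producing $2K$ well-separated anomalies of unchanged length and magnitude whose separation and total length are checked against Assumptions~\ref{assumption:segmentLengthTrend} and~\ref{assumption:anomalyLengthTrend}, and the direct Gaussian mean/covariance computation for $\varepsilon_D$. The only differences are cosmetic (binomial expansion versus the paper's iterated difference-of-powers identity, and slightly different but equivalent bookkeeping of the between-anomaly gaps).
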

\begin{proof}[Proof of Lemma~\ref{lemma:differencedSequence}]
We obtain
\begin{equation*}
y_{D, i} = y_{i + D} - y_i = (a_{i + D} - a_{i}) + (t_{i + D} - t_{i}) + (s_{i + D} - s_{i}) + (\varepsilon_{i + D} - \varepsilon_{i}).
\end{equation*}

In the following we look at each component separately. We start with the seasonal component. Since $D = c P$, we have that
\begin{equation*}
s_{i + D} - s_i = s_i - s_i = 0.
\end{equation*}
This shows that $Y_D$ has no seasonal component.

Next, we look at the trend component. We obtain
\begin{align*}
t_{D, i}
\coloneq t_{i + D} - t_i
= & \sum_{q = 0}^{Q} \beta_q X_{i + D, q} - \sum_{q = 0}^{Q} \beta_q X_{i, q}\\
= & \sum_{q = 0}^{Q} \beta_q \big(X_{i + D, q} - X_{i, q}\big)
\eqcolon \sum_{q = 0}^{Q} \beta_q \big((X_D)_{i, q}\big).
\end{align*}
It remains to show that $\beta_0$ disappears and that it is a polynomial of degree $Q - 1$. The difference of powers identity yields
\begin{align*}
\sum_{q = 0}^{Q} \beta_q \big(X_{i + D, q} - X_{i, q}\big)
= & \sum_{q = 0}^{Q} \beta_q \left[\left(\frac{i+D}{n}\right)^q - \left( \frac{i}{n} \right)^q  \right]\\
= & \sum_{q = 1}^{Q} \beta_q \left[\left(\frac{i+D}{n}\right)^q - \left( \frac{i}{n} \right)^q  \right]\\
= & \sum_{q = 1}^{Q} \beta_q \left[ \frac{D}{n} \sum_{r = 0}^{q-1} \left( \frac{i}{n} \right)^r \left( \frac{i+D}{n} \right)^{q-1-r} \right].
\end{align*}
We conclude that $T_D$ is a polynomial of degree $Q - 1$, with regression matrix $X_D$, and parameter vector $\beta_D = (\beta_1,\ldots,\beta_Q)$.

Next we consider the anomaly component. We focus for now on the contribution of the $k$-th component. The final anomaly term will be the sum over all components. Assumption~\ref{assumption:segmentLength} yields $D > e_k - b_k$ and hence $e_k - D < b_k$. We distinguish a few cases.
\begin{equation*}
a_{D, i} \coloneq a_{i+D} - a_i = 
\begin{cases}
0 - 0 = 0, & \text{if } i \leq b_k - D, \\
\theta_k - 0 = \theta_k, & \text{if } b_k - D < i \leq e_k - D, \\
0 - 0 = 0, & \text{if } e_k - D < i \leq b_k, \\
0 - \theta_k = -\theta_k, & \text{if } b_k < i \leq e_k, \\
0 - 0 = 0, & \text{if } i > e_k.
\end{cases}
\end{equation*}
We obtain two anomalies of magnitude $\theta_k$ and of length $e_k - b_k$. Hence, Assumption~\ref{assumption:segmentLengthTrend} is satisfied if all anomalies are well separated. It follows from Assumption~\ref{assumption:segmentLength} that
\begin{equation*}
b_k - (e_k - D) = D - (e_k - b_k) \geq e_k - b_k,
\end{equation*}
and
\begin{equation*}
(b_{k + 1} - D) - (e_k + 1) = b_{k + 1} - e_k - D - 1 \geq \min\{e_k - b_k, e_{k + 1} - b_{k + 1}\},
\end{equation*}
This shows that the anomaly component contains $2K$ anomalies that satisfy Assumption~\ref{assumption:segmentLengthTrend}. Assumption~\ref{assumption:anomalyLengthTrend} follows from Assumption~\ref{assumption:anomalyLengthCombined}, since the total length of anomalies is $2\sum_{k = 1}^K e_k - b_k$. Assumption~\ref{assumption:gridBandQ} was directly formulated for the differenced sequence.

Finally, we consider the noise component. We have that $\varepsilon_D$ is multivariate Gaussian distributed as the difference of two multivariate Gaussian distributions. We further have that
\begin{equation*}
\E\left[ \varepsilon_{D, i} \right] = \E\left[ \varepsilon_{i + D} - \varepsilon_i \right] = \E\left[ \varepsilon_{i + D} \right] - \E\left[ \varepsilon_i \right] = 0 - 0 = 0
\end{equation*}
and
\begin{align*}
& \Cov\left[ \varepsilon_{D, i}, \varepsilon_{D, j} \right]\\
= & \Cov\left[ \varepsilon_{i + D}, \varepsilon_{j + D} \right] + \Cov\left[ \varepsilon_{i}, \varepsilon_{j} \right] - \Cov\left[ \varepsilon_{i + D}, \varepsilon_{j} \right] - \Cov\left[ \varepsilon_{i}, \varepsilon_{j + D} \right]\\
= & \begin{cases}
2\sigma_0^2 & \text{ if } i = j,\\
-\sigma_0^2 & \text{ if } \vert i - j \vert = D,\\
0 & \text{otherwise.}
\end{cases}
\end{align*}

\end{proof}

\subsection{Trend estimation}\label{sec:theoryTrend}

In this section we show that we can estimate the trend well. We first state the setting and then the proposition.

\begin{Setting}\label{setting:trend}
We assume that $Y = A + T + W + \varepsilon$, with trend $T = X \beta$ and deterministic remainder term $W = (w_1,\ldots,w_{\mn})$, where either $\mn = n$ or $\mn = n - D$. The remainder $W = (w_1,\ldots,w_{\mn})$ is a periodic vector with period length $P$ for which there exist a constant $\CremainderSeasonal > 0$ such that 
\begin{equation}\label{eq:remainderSeasonalyT}
\sum_{i = 1}^{\mn} w_i^2 \leq \CremainderSeasonal \sigma_0^2 \log(n).
\end{equation}
Let $\CvarianceEstimateLower \sigma_0 \leq \hat{\sigma}_0 \leq \CvarianceEstimate \sigma_0$ be deterministic. Suppose that Assumptions~\ref{assumption:gridBandQ},~\ref{assumption:segmentLengthTrend},~and~\ref{assumption:anomalyLengthTrend}  hold. We further assume that there are no point anomalies, i.e.~$O = \emptyset$. Finally, we assume that $Y, A, T, \varepsilon, X, \beta$ are either
\begin{enumerate}[(a)]
\item \label{setting:Trenda}
given as defined in Section~\ref{sec:model},
\item \label{setting:Trendb}
or replaced by the differenced versions $Y_D, T_D, A_D, \varepsilon_D, X_D, \beta_D$ as defined in Lemma~\ref{lemma:differencedSequence}.
\end{enumerate}
We have that the trend is a polynomial of degree either $\rQ = Q$ or $\rQ = Q - 1$.
\end{Setting}

\begin{Proposition}\label{proposition:goodFinalTrendEstimate}
Given Setting~\ref{setting:trend}\eqref{setting:Trenda} there exists a constant $\CfinalTrendEstimate > 0$ such that
\begin{equation*}
\Pj\big( (\hat{T} - T)^\top (\hat{T} - T) \leq \CfinalTrendEstimate \sigma_0^2 \log(n) \big) \to 1,
\end{equation*}
as $n \to \infty$.
\end{Proposition}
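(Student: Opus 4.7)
The plan is to combine three ingredients: first, the sampling guarantee that with high probability at least one $U^\star \in \mathcal{U}$ is non-anomalous; second, that on such a $U^\star$ the M-estimator $\hat{T}_{U^\star}$ approximates $T$ well; and third, a transfer argument that passes from this good candidate to the $\hat{T}$ actually picked by the penalised-cost selector in \eqref{eq:trend_estimator}.

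First I would apply Lemma~\ref{lemma:probGoodSample} to obtain an event of probability tending to one on which the collection $\mathcal{U}$ contains an index set $U^\star$ with $U^\star \cap \mathcal{A} = \emptyset$. In Setting~\ref{setting:trend}\eqref{setting:Trenda} this uses Assumption~\ref{assumption:anomalyLengthTrend} to guarantee that enough non-anomalous sub-segments exist among the $V = 2B(Q+1)$ candidates, so that repeated sampling with the Bernoulli parity selection finds a clean configuration with probability growing in $J$. Conditional on this event, Lemma~\ref{lemma:trendEstimation} delivers a bound of the form $\|\hat{T}_{U^\star} - T\|_2^2 \leq C \sigma_0^2 \log n$ with high probability. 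The driver of this bound is Lemma~\ref{lemma:determinant}, which controls the conditioning of $X_{U^\star}^\top X_{U^\star}$ because the sampled sub-segments are well separated; the rest is a standard Gaussian-in-inverse-design calculation combined with the remainder bound in \eqref{eq:remainderSeasonalyT}.

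Next I would bound the penalised cost attained by this good candidate. Evaluating the inner minimum in \eqref{eq:trend_estimator} at the truthful configuration $(\tilde{\mathcal{A}},\tilde{\mu})=(\mathcal{A},\mu)$ produces a residual $Y - \hat{T}_{U^\star} - A$ equal to $(T-\hat{T}_{U^\star}) + W + \varepsilon$, whose squared norm on the non-anomalous part is $O(\sigma_0^2 \log n)$ after combining the previous trend bound, \eqref{eq:remainderSeasonalyT}, concentration of $\|\varepsilon\|_2^2$ near $n\sigma_0^2$, and Cauchy--Schwarz on the cross terms. Adding the SIC-type penalty $K\lambda_{\operatorname{coll}} \hat{\sigma}_0^2 \log n + \rQ\, \hat{\sigma}_0 \log n$ preserves order $O(\sigma_0^2 \log n)$, using Proposition~\ref{proposition:varianceEstimate} to bound $\hat{\sigma}_0$ from above and Assumption~\ref{assumption:gridQ} to ensure $\rQ$ stays bounded. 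By the optimality in \eqref{eq:trend_estimator}, the selected $\hat{T}$ inherits this cost bound.

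The main obstacle is the last step: inverting the cost bound into an $L_2$ control of $\hat{T} - T$. Writing $(\tilde{\mathcal{A}}^\star, \tilde{\mu}^\star)$ for the inner minimiser attached to $\hat{T}$, the cost bound forces $\tilde{K}^\star = O(1)$, since otherwise the penalty $\tilde{K}^\star \lambda_{\operatorname{coll}} \hat{\sigma}_0^2 \log n$ alone would exceed the budget. Combined with Assumption~\ref{assumption:anomalyLengthCombined}, this ensures that the complement of $\mathcal{A} \cup \tilde{\mathcal{A}}^\star$ still contains a union of long sub-intervals whose total length is at least $c\, n$ for some $c>0$, and on this clean set $Y_i - \hat{T}_i = (T_i - \hat{T}_i) + W_i + \varepsilon_i$. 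The cost bound therefore yields $\sum_{i \in \mathrm{clean}} (T_i - \hat{T}_i)^2 = O(\sigma_0^2 \log n)$ after subtracting the Gaussian and remainder contributions. A Remez/Markov-type polynomial inequality then upgrades this local bound to the global bound $\|\hat{T} - T\|_2^2 \leq \CfinalTrendEstimate \sigma_0^2 \log n$, since two polynomials of degree at most $\max \mathbf{Q}$ that are close on a set of positive density in $[0,1]$ are close everywhere up to a constant factor. Making this local-to-global polynomial step quantitative, while simultaneously handling all $\tilde{Q} \in \mathbf{Q}$ that the selector might pick and ensuring that the degree penalty $\tilde{Q}\,\hat{\sigma}_0 \log n$ does not accidentally reward over-fitting trend shapes, is where I expect the genuinely delicate work to lie.
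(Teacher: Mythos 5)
Your first two ingredients coincide with the paper's (Lemma~\ref{lemma:probGoodSample} gives a non-anomalous $U^\star$, Lemma~\ref{lemma:trendEstimation} gives $\|\hat{T}_{U^\star}-T\|_2^2 = O(\sigma_0^2\log n)$, and the cost at the truthful configuration bounds the selected cost), but the inversion of the cost bound into an $L_2$ bound on $\hat{T}-T$ — which you yourself flag as the delicate part — has genuine gaps. First, your claim that the clean set $\{1,\dots,n\}\setminus(\mathcal{A}\cup\tilde{\mathcal{A}}^\star)$ has length at least $c\,n$ does not follow from Assumption~\ref{assumption:anomalyLengthCombined}: that assumption bounds only the \emph{true} anomalies, while $\tilde{\mathcal{A}}^\star$ with $\tilde{K}^\star=O(1)$ may consist of a few very long estimated segments (a single estimated anomaly absorbing a constant offset of a badly wrong trend can cover a fraction of the data approaching one at the price of a single penalty), in which case the Remez constant $C(Q,c)$ blows up. Controlling the length and location of the estimated anomalies is precisely what the paper's Proposition~\ref{proposition:withinSetB} ($\hat{\mathcal{A}}\in\mathcal{B}$) provides, and you have no substitute for it. Second, the reason you give for $\tilde{K}^\star=O(1)$ (``the penalty alone would exceed the budget'') is wrong as stated: the cost budget is of order $n\sigma_0^2$, not $\log n$; bounding $\tilde{K}^\star$ requires comparing the per-segment penalty against the maximal per-segment reduction in residual sum of squares, which is the bookkeeping of Lemmas~\ref{lemma:costReductionEst}, \ref{lemma:costExtraCp}, \ref{lemma:T1k}, \ref{lemma:T2k}. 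Third, unless you show $\tilde{K}^\star=K$ (so that the collective-anomaly penalties cancel on the two sides of the comparison), the budget retains a term of order $\lambda_{\operatorname{coll}}\hat{\sigma}_0^2\log(n)^{1+\delta}$ — e.g.\ when the selected configuration misses a true anomaly and saves its penalty — so your argument at best yields $O(\sigma_0^2\log(n)^{1+\delta})$, weaker than the stated $\CfinalTrendEstimate\sigma_0^2\log(n)$. Handling the missed-anomaly case in turn requires showing that a degree-$Q$ polynomial trend cannot absorb a step of size $\mu_k$, which is the content of Lemma~\ref{lemma:meanMissCp}.

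The paper's route is structurally different at exactly this point: it first establishes $\hat{\mathcal{A}}\in\mathcal{B}$ (Proposition~\ref{proposition:withinSetB}, built on Lemma~\ref{lemma:differenceCosts}), which simultaneously caps the number, length and location error of the estimated anomalies and makes the penalties cancel, and then argues by contradiction: if $\|\hat{T}-T\|_2^2=\gamma_n\sigma_0^2\log n$ with $\gamma_n\to\infty$, the cost comparison forces $\sum_i(t_i-\hat{t}_i+a_i-\hat{a}_i)^2=O(\sigma_0^2\log n)$, hence the anomaly-component error must nearly cancel the trend error; but $A-\hat{A}$ is piecewise constant with $O(K)$ pieces while $T-\hat{T}$ is a degree-$Q$ polynomial, and Lemma~\ref{lemma:meanMissCp} shows such cancellation is impossible near an anomaly boundary. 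Your Remez/Markov idea is a reasonable local-to-global device for polynomials, but it only replaces the last, easiest step; the essential work — ruling out that the anomaly component soaks up a diverging trend error, and that a wrong trend soaks up a true anomaly — is not addressed by it, and without that the proof does not close.
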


It will take multiple lemmas to prove this proposition. In Section~\ref{sec:proofNoAnomalies} we show that with large probability at least one of samples $U_{B, Q, j}$ is non-anomalous. Combined with the results in Section~\ref{sec:proofGoodTrend} this ensures that at least one of our trend estimates is good. In Section~\ref{sec:proofBoundNoise} we bound noise terms and then in Section~\ref{sec:proofBoundCostDifference} we show that a good trend estimate leads to lower CAPA cost then a bad trend estimate. This allows us to show that the selected trend estimate has lower error. As a side product we also show that $\hat{\mathcal{A}} \in \mathcal{B}$.

\subsubsection{Anomalous free subsamples}\label{sec:proofNoAnomalies}

To estimate the trend well, we require that at least one of the samples $U_{B, Q, j}$ is non-anomalous, c.f.~Definition~\ref{def:anomalous}. Given Setting~\ref{setting:trend} we show in Lemma~\ref{lemma:intersectionAnomaly} that at most $\frac{V}{2 (Q + 1)}$ of the sub-segments $\tilde{u}_1,\ldots,\tilde{u}_V$ are anomalous. Building on this, we show in Lemma~\ref{lemma:probGoodSample} that there exists a non-anomalous sample $U \in \mathcal{U}$ \eqref{eq:final_sample} with probability at least $1 - \left(\frac{1}{2}\right)^J$. We remark that the sub-segments $\tilde{u}_1,\ldots,\tilde{u}_V$ and the sampling procedure in Section~\ref{section:sampling} in general explicitly depends on $B$ and $Q$. For brevity we have omitted it in the definitions of the segments and sub-segments.

\begin{Lemma}\label{lemma:intersectionAnomaly}
Given Setting~\ref{setting:trend}, if $n$ is large enough, then there exists $B \in \mathbf{B}$ and $Q \in \mathbf{Q}$ such that at most $\frac{V}{2 (Q + 1)}$ of the sub-segments $\tilde{u}_1,\ldots,\tilde{u}_V$ \eqref{eq:segments_sample} are anomalous.
\end{Lemma}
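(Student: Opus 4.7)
The plan is a simple length-counting argument, exploiting that $V = 2B(Q+1)$ so the target bound $V/(2(Q+1))$ equals $B$. The task therefore reduces to showing that, for a suitable $B \in \mathbf{B}$ (and with $Q$ the true polynomial degree, which lies in $\mathbf{Q}$ by part~(ii) of Assumption~\ref{assumption:gridBandQ}), at most $B$ of the $V$ sub-segments $\tilde{u}_1,\ldots,\tilde{u}_V$ from \eqref{eq:segments_sample} intersect the anomalous set $\mathcal{A} = \bigcup_{k=1}^{K}(b_k,e_k]$.

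First, I would note that each $\tilde{u}_j$ has length in $\{\lfloor n/V\rfloor, \lceil n/V\rceil\}$, both of order $n/V$ once $n$ is large enough. Consequently a single anomalous interval of length $L_k \coloneq e_k-b_k$ meets at most $\lceil L_k/\lfloor n/V\rfloor\rceil + 1 \le L_k V/n + 2$ consecutive sub-segments, the additive~$2$ absorbing the partial overlaps at the two endpoints. Summing this bound over the $K$ collective anomalies bounds the total number of anomalous sub-segments by $(V/n)\sum_{k=1}^{K} L_k + 2K$.

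Next, I would split this bound into two pieces and match each against the remaining hypotheses. Reading the right-hand side of Assumption~\ref{assumption:anomalyLengthTrend} as $\gamma n/(2(Q+1))$ (the only dimensionally consistent interpretation), we have $\sum_{k=1}^{K} L_k \le \gamma n/(2(Q+1))$ for some $\gamma<1$, hence the first piece is bounded by $\gamma V/(2(Q+1)) = \gamma B$. For the second piece, part~(i) of Assumption~\ref{assumption:gridBandQ} guarantees a $B \in \mathbf{B}$ satisfying \eqref{eq:conditionOnB}, \eqref{eq:secondConditionB} and \eqref{eq:thirdConditionB}; I expect one of these conditions to impose $B \ge 2K/(1-\gamma)$, giving $2K \le (1-\gamma)B$. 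Summing the two pieces produces at most $\gamma B + (1-\gamma) B = B = V/(2(Q+1))$ anomalous sub-segments, as desired.

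The main obstacle is pinning down which of the as-yet-unstated conditions \eqref{eq:conditionOnB}, \eqref{eq:secondConditionB}, \eqref{eq:thirdConditionB} provides the relevant lower bound $B \gtrsim K/(1-\gamma)$; the calculation above essentially dictates its form. A minor technical point is the uniform control of the floor/ceiling discrepancies in the sub-segment lengths, which is routine once $n$ is large compared with $V = 2B(Q+1)$, consistent with the ``if $n$ is large enough'' clause in the lemma.
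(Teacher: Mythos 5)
Your proposal follows essentially the same route as the paper's proof: bound the number of sub-segments fully contained in collective anomalies by the total anomaly length divided by the sub-segment length, bound the boundary sub-segments by $2K$, and then absorb the two pieces using the (correctly $n$-scaled) Assumption~\ref{assumption:anomalyLengthTrend} and the condition on $B$ — the paper's \eqref{eq:conditionOnB}, which has exactly the form your calculation anticipates. The only imprecision is your intermediate bound $\lceil L_k/\lfloor n/V\rfloor\rceil + 1 \le L_k V/n + 2$, which ignores the floor; the paper repairs precisely this via $1/\lfloor n/V\rfloor \le V/(\gamma n)$ for $n \ge V/(1-\gamma)$, i.e.\ the routine large-$n$ bookkeeping you already flagged.
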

\begin{proof}
Assumption~\ref{assumption:gridBandQ} ensures that the correct degree $Q \in \mathbf{Q}$ and that there exists a $B \in \mathbf{B}$ such that 
\begin{equation}\label{eq:conditionOnB}
V = 2(Q+1)B \geq \frac{2 K}{\frac{1}{2 (Q + 1)} - \frac{\sum_{k = 1}^{K} e_k - b_k}{\gamma n }},
\end{equation}
for a constant $0 < \gamma < 1$. Each sub-segment $\tilde{u}_k$ contains at least $\lfloor n / V \rfloor$ many points. An anomalous sub-segment is either fully part of a collective anomaly or at the boundary of at least one collective anomaly. At most
\begin{equation*}
\frac{\sum_{k = 1}^K e_k - b_k}{\lfloor n / V \rfloor} \eqcolon \frac{L_a}{\lfloor n / V \rfloor}
\end{equation*}
many sub-segments are fully part of a collective anomaly. Furthermore, at most $2 K$ many sub-segments are at the boundary of a collective anomaly. So, in total we have at most 
\begin{equation*}
\frac{L_a}{\lfloor n / V \rfloor} + 2K
\end{equation*}
anomalous sub-segments. We will bound this number in the following. Let $n$ be large enough such that $n \geq \frac{V}{1 - \gamma}$, and hence $n - V \geq \gamma n$. Then,
\begin{equation}\label{eq:bound_ndp}
\frac{1}{\lfloor \frac{n}{V} \rfloor} \leq \frac{1}{\frac{n}{V} - 1} = \frac{1}{n - V} V \leq \frac{V}{\gamma n}.
\end{equation}
Finally, \eqref{eq:conditionOnB}~and~\eqref{eq:bound_ndp} yield that the total number of anomalous sub-segments is upper bounded by
\begin{align*}
\frac{L_a}{\lfloor n / V \rfloor} + 2K
\leq  \frac{L_a}{\gamma n} V + \left( \frac{1}{2 (Q + 1)} - \frac{L_a}{\gamma n } \right) V 
=  \frac{V}{2 (Q + 1)}.
\end{align*}
\end{proof}

\begin{Lemma}\label{lemma:probGoodSample}
Given Setting~\ref{setting:trend}, if $n$ is large enough, then there exists $B \in \mathbf{B}$ and $Q \in \mathbf{Q}$ such that
\begin{equation*}
\Pj\Big( U_{B, Q, j} \cap \mathcal{A} = \emptyset \Big) \geq \frac{1}{2} \ \forall\ j = 1,\ldots,J.
\end{equation*}
Consequently,
\begin{equation*}
\Pj\Big(\exists\ U \in \mathcal{U}\, :\, U \cap \mathcal{A} = \emptyset \Big) \geq 1 - \left(\frac{1}{2}\right)^J.
\end{equation*}
\end{Lemma}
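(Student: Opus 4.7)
}

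The plan is to first fix $B\in\mathbf{B}$ and $Q\in\mathbf{Q}$ as given by Lemma~\ref{lemma:intersectionAnomaly}, so that the total number of anomalous sub-segments among $\tilde{u}_1,\ldots,\tilde{u}_V$ is at most $V/(2(Q+1)) = B$. Denote by $a_j \in \{0,1,\ldots,B\}$ the number of anomalous sub-segments contained in the $j$-th segment $u_j$, for $j=1,\ldots,2(Q+1)$. The key constraint from Lemma~\ref{lemma:intersectionAnomaly} becomes
\begin{equation*}
\sum_{j = 1}^{2(Q+1)} a_j \leq B.
\end{equation*}

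Next, I would condition on the Bernoulli variable $l$ from the sampling definition \eqref{eq:final_sample}. For $l=0$ the sample $U$ contains one sub-segment drawn uniformly and independently from each of the odd-indexed segments $u_1,u_3,\ldots,u_{2Q+1}$; for $l=1$ it contains one uniform draw from each of $u_2,u_4,\ldots,u_{2Q+2}$. Since the $m_k$'s are independent, the conditional probability that $U$ is non-anomalous factorises:
\begin{equation*}
\Pj\bigl(U\cap\mathcal{A}=\emptyset\mid l=0\bigr) \;=\; \prod_{k=0}^{Q}\Bigl(1-\frac{a_{2k+1}}{B}\Bigr), \qquad \Pj\bigl(U\cap\mathcal{A}=\emptyset\mid l=1\bigr) \;=\; \prod_{k=0}^{Q}\Bigl(1-\frac{a_{2k+2}}{B}\Bigr).
\end{equation*}

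I would then apply the Weierstrass product inequality $\prod_k(1-x_k)\ge 1-\sum_k x_k$, valid for $x_k\in[0,1]$, to each factor. Averaging over $l$ and using the constraint $\sum_j a_j \leq B$ gives
\begin{equation*}
\Pj\bigl(U\cap\mathcal{A}=\emptyset\bigr) \;\ge\; \frac{1}{2}\Bigl(1-\tfrac{1}{B}\sum_{k=0}^{Q}a_{2k+1}\Bigr) + \frac{1}{2}\Bigl(1-\tfrac{1}{B}\sum_{k=0}^{Q}a_{2k+2}\Bigr) \;\ge\; 1 - \frac{1}{2B}\sum_{j=1}^{2(Q+1)} a_j \;\ge\; \tfrac{1}{2},
\end{equation*}
establishing the first assertion. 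The second assertion then follows immediately because $U_{B,Q,1},\ldots,U_{B,Q,J}$ are generated by independent copies of the sampling procedure, so the probability that all $J$ of them are anomalous is at most $(1/2)^J$, and including other $U \in \mathcal{U}$ only increases the probability of finding a non-anomalous sample.

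There is no serious obstacle here; the only subtlety is identifying the right elementary inequality. The Weierstrass bound is exactly matched by the anomaly budget $\sum_j a_j \leq B$, which is why the constant $1/2$ comes out sharp. A weaker bound (e.g.\ via AM--GM on the two conditional probabilities) would not suffice, so the decomposition by $l$ combined with the linearisation $\prod(1-x_k)\geq 1-\sum x_k$ is the decisive step.
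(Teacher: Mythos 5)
Your proposal is correct and follows essentially the same route as the paper: condition on $l$, factorise over the independent $m_k$'s, apply $\prod_k(1-x_k)\ge 1-\sum_k x_k$, and close with the budget bound from Lemma~\ref{lemma:intersectionAnomaly}, then use independence of the $J$ repetitions for the second claim. The only cosmetic difference is that the paper first replaces $N_k/B$ by $N_k/N$ (using $N\le B$) before linearising, whereas you keep the denominators as $B$ throughout; the two computations are equivalent.
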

\begin{proof}
Assumption~\ref{assumption:gridBandQ} ensures that the correct degree $Q \in \mathbf{Q}$ and that there exists a $B \in \mathbf{B}$ such that \eqref{eq:conditionOnB} is satisfied. Let $N$ denote the total number of anomalous sub-segments. Let $N_k$ be the number of anomalous sub-segments within segment $u_k$. Then, Lemma~\ref{lemma:intersectionAnomaly} yields
\begin{align*}
\sum_{k = 1}^{2(Q + 1)} N_k = N \leq \frac{V}{2(Q + 1)}.
\end{align*}
By definition, $\frac{V}{2(Q + 1)} = B$ is also the number of sub-segments in each segment. Thus, the segment $u_k$ contains $B \geq N$ sub-segments of which $N_k$ are anomalous. We have that $l \sim \operatorname{Bern}\big(\frac{1}{2}\big)$ is a Bernoulli-distributed random variable and $m_k \sim \text{Uniform}\big(\{1, \ldots, B\}\big)$. All random variables are independent of each other. We will also use that $\prod_{i = 1}^{m} (1 - a_i) \geq 1 - \sum_{i = 1}^{m} a_i$ for any $a_1,\ldots,a_m \in [0,1]$. We obtain
\begin{align*}
& \Pj\Big( U_{B, Q, j} \cap \mathcal{A} = \emptyset \Big)\\
= & \Pj\Big( \bigcup_{k = 0}^{Q} \tilde{u}_{(2k + l) B + m_k} \cap \mathcal{A} = \emptyset \Big)\\
= & \Pj\Big( \tilde{u}_{(2k + l) B + m_k} \cap \mathcal{A} = \emptyset \ \forall\ k = 0,\ldots,Q \Big)\\
= & \Pj(l = 0)\prod_{k = 0}^{Q} \Pj\Big( \tilde{u}_{(2k) B + m_k} \cap \mathcal{A} = \emptyset\Big) + \Pj(l = 1)\prod_{k = 0}^{Q} \Pj\Big( \tilde{u}_{(2k + 1) B + m_k} \cap \mathcal{A} = \emptyset\Big)\\
= & \frac{1}{2}\prod_{k = 0}^{Q} \bigg(\sum_{i = 1}^{B} \Pj(m_k = i) \EINS_{\tilde{u}_{(2k) B + i} \cap \mathcal{A} = \emptyset}\bigg) + \frac{1}{2}\prod_{k = 0}^{Q}\bigg(\sum_{i = 1}^{B} \Pj(m_k = i) \EINS_{\tilde{u}_{(2k+1) B + i} \cap \mathcal{A} = \emptyset}\bigg)\\
= & \frac{1}{2}\prod_{k = 1}^{Q+1} \bigg(\big(B - N_{2k-1}\big) \frac{1}{B} \bigg) + \frac{1}{2}\prod_{k = 1}^{Q + 1} \bigg(\big(B - N_{2k}\big) \frac{1}{B} \bigg)\\
\geq &\frac{1}{2} \prod_{k = 1}^{Q + 1} \bigg(1 - \frac{N_{2k-1}}{N}\bigg) + \frac{1}{2} \prod_{k = 1}^{Q + 1} \bigg(1 - \frac{N_{2k}}{N}\bigg)\\
\geq & \frac{1}{2}\bigg( 1 -  \sum_{k = 1}^{Q + 1} \frac{N_{2k-1}}{N}\bigg) + \frac{1}{2}\bigg( 1 -  \sum_{k = 1}^{Q + 1} \frac{N_{2k}}{N}\bigg)\\
=& 1 - \frac{1}{2}\sum_{k = 1}^{2 (Q + 1)} \frac{N_{k}}{N} = \frac{1}{2}.
\end{align*}
This proves the first statement. For the second statement note that the samples $U \in \mathcal{U}$ are independent of each other. Therefore,
\begin{align*}
& \Pj\Big(\exists\ U \in \mathcal{U}\, :\, U \cap \mathcal{A} = \emptyset \Big)\\
\geq & 1 - \prod_{j = 1}^J \Pj\left( U_{B, Q, j} \cap \mathcal{A} \neq \emptyset\right)\\
\geq & 1 - \prod_{j = 1}^J \frac{1}{2} = 1 - \left(\frac{1}{2}\right)^J.
\end{align*}
This completes the proof.
\end{proof}

\subsubsection{Trend estimation using a single sample}\label{sec:proofGoodTrend}

In this subsection we show that we can estimate the trend well if we have a sample $U = U_{B,Q,j} \subset \{1,\ldots,n\}$ from the sampling procedure in Section~\ref{section:sampling} for which $U \cap \mathcal{A} = \emptyset$ holds. We estimate the trend by 
\begin{equation}\label{eq:trendLS}
\hat{T}_U = X \hat{\beta}_U = X (X_{U}^\top X_{U})^{-1} X_{U}^\top Y_{U}.
\end{equation}
We then show that the error $\hat{T}_U - T$ is small, i.e.~there exists a constant $\CerrorTrend > 0$ such that
\begin{equation}\label{eq:outlineGoodTrend}
(\hat{T}_U - T)^\top (\hat{T}_U - T) \leq \CerrorTrend \sigma_0^2 \log(n)
\end{equation}
holds with probability going to one. To this end, we lower bound in Lemma~\ref{lemma:determinant} the determinant of $X_U^\top X_U$. We use this to bound in Lemma~\ref{lemma:trace} the trace of a matrix related to the covariance matrix, before we finally show \eqref{eq:outlineGoodTrend} in Lemma~\ref{lemma:trendEstimation}.

\begin{Lemma}\label{lemma:determinant}
Let $U \coloneq U_{B,Q,j} \subset \{1,\ldots,n\}$ be a sample from the sampling procedure in Section~\ref{section:sampling}. Suppose that $X$ is given by Setting~\ref{setting:trend}. Then there exists a constant $\Cdeterminant > 0$ such that
\begin{equation*}
\operatorname{det}\left( X_U^\top X_U \right) \geq
\begin{cases}
\Cdeterminant n^{Q + 1} & \text{ if X is given by Setting~\ref{setting:trend}\eqref{setting:Trenda},}\\
\Cdeterminant \left(\frac{D^2}{n}\right)^Q & \text{ if X is given by Setting~\ref{setting:trend}\eqref{setting:Trendb}.}
\end{cases}
\end{equation*}
\end{Lemma}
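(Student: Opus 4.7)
The plan is to reduce both cases to lower bounds on Vandermonde determinants, exploiting the crucial structural property of the sampling procedure: the $\rQ+1$ selected sub-segments come from every other segment among $2(\rQ+1)$ equally-sized segments, so any two of them are separated by at least one full intermediate segment. In particular, regardless of the random shift $l \in \{0,1\}$ and the within-segment choices $m_k$, the normalized centers $i_k/n$ for $i_k \in \tilde{u}_{(2k+l)B+m_k}$ are pairwise separated by at least a constant multiple of $1/(\rQ+1)$, a quantity independent of $n$.

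For Setting~\ref{setting:trend}\eqref{setting:Trenda}, the rows of $X_U$ are $(1, i/n, \ldots, (i/n)^Q)$. By the Cauchy--Binet formula,
\begin{equation*}
\det(X_U^\top X_U) = \sum_{S} \det(X_{U,S})^2,
\end{equation*}
where the sum is over all $(Q+1)$-element subsets $S$ of $U$. I would restrict the sum to subsets $S = \{i_0, \ldots, i_Q\}$ that pick one index from each of the $Q+1$ sampled sub-segments. Each such minor is a standard Vandermonde determinant $\prod_{k<k'}(i_{k'}/n - i_k/n)$, whose pairwise factors are bounded below by a positive constant depending only on $Q$ by the separation argument above. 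Hence $\det(X_{U,S})^2 \geq c_Q > 0$, and the number of admissible subsets is at least $(\lfloor n/V \rfloor)^{Q+1}$ with $V = 2(Q+1)B$ fixed, which yields the desired lower bound of order $n^{Q+1}$.

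For Setting~\ref{setting:trend}\eqref{setting:Trendb}, the rows of $X_D$ are not directly Vandermonde, but the binomial identity
\begin{equation*}
(X_D)_{i,q} = \frac{(i+D)^q - i^q}{n^q} = \sum_{r=1}^q \binom{q}{r}\frac{D^r}{n^r}\left(\frac{i}{n}\right)^{q-r}
\end{equation*}
shows that column $q$ of $X_D$ is a linear combination of $(i/n)^0, \ldots, (i/n)^{q-1}$ with leading coefficient $q D/n$ on $(i/n)^{q-1}$. This yields a factorization $X_D = \tilde{X} L$, where the columns of $\tilde{X}$ are $(i/n)^{s-1}$ for $s = 1, \ldots, Q$, and $L \in \mathbb{R}^{Q \times Q}$ is upper-triangular with diagonal entries $q D/n$. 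Hence $\det(L) = Q!\,(D/n)^Q$ and
\begin{equation*}
\det(X_{D,U}^\top X_{D,U}) = \det(L)^2 \det(\tilde{X}_U^\top \tilde{X}_U).
\end{equation*}
The matrix $\tilde{X}_U$ now has the Vandermonde structure of the previous case with $Q$ columns and $Q$ sampled sub-segments, so the same Cauchy--Binet argument gives $\det(\tilde{X}_U^\top \tilde{X}_U) \geq c n^Q$. Combining yields $\det(X_{D,U}^\top X_{D,U}) \geq c'(D/n)^{2Q} n^Q = c'(D^2/n)^Q$.

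The main obstacle is establishing the uniform, $n$-independent positive lower bound on each Vandermonde minor, which requires care in tracking that every admissible pair of sub-segments is separated by at least one full intermediate segment of width $\lfloor n/V \rfloor$, and hence their normalized separations admit a lower bound depending only on $Q$ (and the enumeration position $|k'-k|$). Everything else --- the Cauchy--Binet reduction, the counting of admissible index tuples in each sub-segment, and the triangular factorization $X_D = \tilde{X} L$ --- is a routine extraction of algebraic structure.
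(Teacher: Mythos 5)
Your proof is correct, but it follows a genuinely different route from the paper. The paper proves the bound by Gauss reduction of $X_U^\top X_U$: it tracks the diagonal pivots $M_{s,s}^{(s)}$ through a recursive identity (Lemma~\ref{lemma:simplifyM}), controls the perturbation within each sampled sub-segment (Lemma~\ref{lemma:differencemterms}), and evaluates the recursion at a special tuple of block endpoints (Lemma~\ref{lemma:calculatemterms}) to get $M_{s,s}^{(s)} \asymp L$ (resp.\ $L(D/n)^2$), multiplying the pivots at the end; this is where the largeness conditions \eqref{eq:secondConditionB} and \eqref{eq:thirdConditionB} on $B$ enter Assumption~\ref{assumption:gridBandQ}. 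You instead use Cauchy--Binet, drop all but the ``one node per selected sub-segment'' minors, and exploit that any two selected sub-segments are separated by a full intervening segment of length $\approx n/(2(\rQ+1))$, so each retained minor is a Vandermonde determinant bounded below by a constant depending only on $Q$, while the number of retained minors is of order $(n/V)^{\rQ+1}$; for the differenced design you peel off the factor $(D/n)^{2Q}$ via the upper-triangular change of basis $X_D=\tilde X L$ with $\det L = Q!\,(D/n)^Q$, which is exactly the right normalization (your implicit use of $Q$ sampled sub-segments in case (b), i.e.\ degree $\rQ=Q-1$, matches the paper's setting). Your argument is shorter and more transparent, needs no condition that $B$ be large (only that $\mathbf{B}$, $\mathbf{Q}$ are finite so the constant $\Cdeterminant$ can absorb them), and cleanly separates the combinatorial counting from the node-separation geometry; the paper's elimination scheme is heavier but delivers explicit two-sided pivot magnitudes and makes visible how the determinant degrades as $B$ grows or shrinks, which is what motivates the tuning conditions stated in Assumption~\ref{assumption:gridBandQ}. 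The only point to make explicit when writing yours up is the uniform node-separation bound (adjacent selected sub-segments differ by at least $B+1$ sub-segment indices, hence by at least one full segment), which you have correctly identified as the crux.
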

We require multiple definitions and technical lemmas to show Lemma~\ref{lemma:determinant}. Note that adding a multiple of one row to another row does not change the determinant. We apply Gauss reduction to $X_U^\top X_U$ to obtain an upper right diagonal matrix. The matrix after $s \in \{1,\ldots,\rQ\}$ steps is defined recursively by
\begin{equation*}
M_{i,j}^{(s + 1)} \coloneq 
\begin{cases}
\frac{M_{i,j}^{(s)} M_{s,s}^{(s)} - M_{i,s}^{(s)}M_{s,j}^{(s)}}{M_{s, s}^{(s)}}, & \text{if } i,j \in \{s + 1,\ldots, \rQ + 1\},\\
0 & \text{if } i \in \{s + 1,\ldots, \rQ + 1\} \text{ and } j \leq s,\\
M_{i,j}^{(s)}, & \text{if } i \leq s,
\end{cases}
\end{equation*}
and
\begin{equation*}
M_{i,j}^{(1)} \coloneq \left[ X_U^\top X_U \right]_{i,j}, \quad i,j = 1,\ldots, \rQ + 1.
\end{equation*}
The next lemma simplifies this iterative process.
\begin{Lemma}\label{lemma:simplifyM}
Under the assumptions of Lemma~\ref{lemma:determinant} we have that
\begin{equation}\label{eq:relationMandm}
M_{i,j}^{(s)} = \frac{\sum_{u \in U^{2^{s - 1}}} m_{i}^{(s)}(u)\, m_{j}^{(s)}(u)}{\prod_{t = 1}^{s - 1} \left(2 M_{t, t}^{(t)}\right)^{2^{s - t - 1}}},
\end{equation}
where 
\begin{equation*}
m_{i}^{(1)}(u) \coloneq 
\begin{cases}
\left( \frac{u}{n} \right)^{i - 1} & \text{ if X is given by Setting~\ref{setting:trend}\eqref{setting:Trenda},}\\
\left( \frac{u + D}{n} \right)^{i} - \left( \frac{u}{n} \right)^{i} & \text{ if X is given by Setting~\ref{setting:trend}\eqref{setting:Trendb},}
\end{cases}
\end{equation*}
for any $u \in U$, and
\begin{equation}\label{eq:recursionm}
m_{i}^{(s + 1)}\big((u_0, u_1)\big) \coloneq  m_{i}^{(s)}(u_0)\, m_{s}^{(s)}(u_1) - m_{s}^{(s)}(u_0)\, m_{i}^{(s)}(u_1)
\end{equation}
for any $(u_0, u_1) \in U^{2^{s - 1}} \times U^{2^{s - 1}}$.
\end{Lemma}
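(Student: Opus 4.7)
I would prove the identity \eqref{eq:relationMandm} by induction on $s$.

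For the base case $s=1$, the empty product in the denominator equals one and $U^{2^0}=U$, so the claim reduces to $M^{(1)}_{i,j} = \sum_{u \in U} m^{(1)}_i(u)\, m^{(1)}_j(u)$. This is immediate from $M^{(1)}_{i,j} = [X_U^\top X_U]_{i,j}$ since in each of the two cases of Setting~\ref{setting:trend} the row entries of $X_U$ coincide exactly with $m^{(1)}_\cdot(u)$: $(u/n)^{i-1}$ in case \eqref{setting:Trenda}, and $((u+D)/n)^i - (u/n)^i$ in case \eqref{setting:Trendb}.

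For the inductive step, suppose \eqref{eq:relationMandm} holds at level $s$ with common denominator $\mathcal{D}_s \coloneq \prod_{t=1}^{s-1}(2 M^{(t)}_{t,t})^{2^{s-t-1}}$. Substituting the induction hypothesis into the Gauss reduction formula writes the numerator $M^{(s)}_{i,j} M^{(s)}_{s,s} - M^{(s)}_{i,s} M^{(s)}_{s,j}$ as $\mathcal{D}_s^{-2}$ times a double sum over $(u_0, u_1) \in U^{2^{s-1}} \times U^{2^{s-1}} \cong U^{2^s}$ of the bracket
\begin{equation*}
m^{(s)}_i(u_0) m^{(s)}_j(u_0)\, m^{(s)}_s(u_1)^2 - m^{(s)}_i(u_0) m^{(s)}_s(u_0)\, m^{(s)}_s(u_1) m^{(s)}_j(u_1).
\end{equation*}
The crux is a symmetrization argument: relabelling $u_0 \leftrightarrow u_1$ shows that summing this bracket over the double sum equals $\tfrac{1}{2}$ of the perfect product
\begin{equation*}
\big[m^{(s)}_i(u_0) m^{(s)}_s(u_1) - m^{(s)}_s(u_0) m^{(s)}_i(u_1)\big]\big[m^{(s)}_j(u_0) m^{(s)}_s(u_1) - m^{(s)}_s(u_0) m^{(s)}_j(u_1)\big],
\end{equation*}
which by the recursion \eqref{eq:recursionm} equals $m^{(s+1)}_i(u_0,u_1)\, m^{(s+1)}_j(u_0,u_1)$. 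The extra factor $\tfrac{1}{2}$ combines with the remaining $M^{(s)}_{s,s}$ in the denominator to yield $(2 M^{(s)}_{s,s})^{2^{s-s}}$, while $\mathcal{D}_s^2 = \prod_{t=1}^{s-1}(2 M^{(t)}_{t,t})^{2^{s-t}}$, and together these form $\mathcal{D}_{s+1}$, closing the induction.

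The main obstacle is the symmetrization/factorisation step in the middle: recognising that the two-term bracket is precisely half of a perfect product once one symmetrizes, and that this product is exactly what definition \eqref{eq:recursionm} of $m^{(s+1)}_i m^{(s+1)}_j$ produces. The recursion \eqref{eq:recursionm} seems to have been set up with this collapse in mind, so once the symmetrization is written out the rest is bookkeeping on the exponents $2^{s-t}$ that appear when $\mathcal{D}_s$ is squared and on the natural identification $U^{2^{s-1}} \times U^{2^{s-1}} \equiv U^{2^s}$ via tuple concatenation.
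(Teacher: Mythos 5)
Your proposal is correct and follows essentially the same route as the paper: induction on $s$, base case from $M^{(1)}_{i,j}=[X_U^\top X_U]_{i,j}$, and an inductive step that plugs the hypothesis into the Gauss-reduction formula and symmetrizes the double sum over $(u_0,u_1)\in U^{2^{s-1}}\times U^{2^{s-1}}$ so that it collapses into $\sum m^{(s+1)}_i m^{(s+1)}_j$, with the factor $2M^{(s)}_{s,s}$ and $\mathcal{D}_s^2$ combining into the level-$(s+1)$ denominator. The only cosmetic difference is that the paper clears denominators by multiplying through by $\prod_{t=1}^{s}(2M^{(t)}_{t,t})^{2^{s-t}}$ rather than carrying the fractions, and your bookkeeping $\mathcal{D}_{s+1}=2M^{(s)}_{s,s}\,\mathcal{D}_s^2$ checks out.
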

\begin{proof}
We use induction. We exemplify the induction for Setting~\ref{setting:trend}\eqref{setting:Trenda}, but Setting~\ref{setting:trend}\eqref{setting:Trendb} follows the same steps. The induction beginning for $s = 1$ is true by definition,
\begin{equation*}
M_{i,j}^{(1)} = \sum_{u \in U} \left( \frac{u}{n} \right)^{i + j - 2} = \sum_{u \in U} \left( \frac{u}{n} \right)^{i - 1} \left( \frac{u}{n} \right)^{j - 1} = \sum_{u \in U} m_{i}^{(1)}(u) m_{j}^{(1)}(u).
\end{equation*}
Let the statement be true for $s$. Then,
\begin{align*}
& \left(\prod_{t = 1}^{s} \left(2 M_{t, t}^{(t)}\right)^{2^{s - t}}\right) M_{i,j}^{(s + 1)}\\
= & \left(\prod_{t = 1}^{s} \left(2 M_{t, t}^{(t)}\right)^{2^{s - t}}\right) \left( M_{i,j}^{(s)} M_{s,s}^{(s)} - M_{i,s}^{(s)}M_{s,j}^{(s)} \right) \left( M_{s,s}^{(s)} \right)^{-1}\\
= & 2\Bigg( \sum_{u \in U^{2^{s - 1}}} m_{i}^{(s)}(u)m_{j}^{(s)}(u) \Bigg)\Bigg( \sum_{u \in U^{2^{s - 1}}} m_{s}^{(s)}(u)m_{s}^{(s)}(u) \Bigg)\\
& - 2\Bigg( \sum_{u \in U^{2^{s - 1}}} m_{i}^{(s)}(u)m_{s}^{(s)}(u) \Bigg) \Bigg( \sum_{u \in U^{2^{s - 1}}} m_{s}^{(s)}(u)m_{j}^{(s)}(u) \Bigg)\\
= & 2\sum_{(u_0, u_1) \in U^{2^{s - 1}} \times U^{2^{s - 1}}} m_{i}^{(s)}(u_0)m_{j}^{(s)}(u_0) m_{s}^{(s)}(u_1)m_{s}^{(s)}(u_1)\\
& - 2\sum_{(u_0, u_1) \in U^{2^{s - 1}} \times U^{2^{s - 1}}} m_{i}^{(s)}(u_0)m_{s}^{(s)}(u_0) m_{s}^{(s)}(u_1)m_{j}^{(s)}(u_1)\\
= & \sum_{(u_0, u_1) \in U^{2^{s - 1}} \times U^{2^{s - 1}}} \left( m_{i}^{(s)}(u_0)m_{s}^{(s)}(u_1) - m_{s}^{(s)}(u_0)m_{i}^{(s)}(u_1) \right) \cdot \\
& \hspace*{90pt} \left( m_{j}^{(s)}(u_0)m_{s}^{(s)}(u_1) - m_{s}^{(s)}(u_0)m_{j}^{(s)}(u_1) \right)\\
= & \sum_{(u_0, u_1) \in U^{2^{s - 1}} \times U^{2^{s - 1}}} m_{i}^{(s + 1)}\big((u_0, u_1)\big)m_{j}^{(s + 1)}\big((u_0, u_1)\big).
\end{align*}
For the second last equality note that summands with $u_0 = u_1$ are $0$. This proves the statement.
\end{proof}
By construction of $U$ \eqref{eq:final_sample} there exists $w_1 < \cdots < w_{\rQ + 1} \in \{1,\ldots,\mn\}$ such that $U = \bigcup_{j = 1}^{\rQ + 1} \{w_j + 1,\ldots,w_j + \tilde{L}_j\}$ with $L - 2 \leq \tilde{L}_j \leq L$ for all $j$ and $L$, where there exists a constant $\ClengthL > 0$ such that $L = \ClengthL n$. Let $W = \{w_1,\ldots,w_{\rQ + 1}\}$. The following lemma bounds the difference between using $U$ and $W$.
\begin{Lemma}\label{lemma:differencemterms}
Under the assumptions of Lemma~\ref{lemma:determinant}, for any $i,s \in \mathbb{N}$, $w \in W^{2^{s - 1}}$, $l \in \{1,\ldots,L\}^{2^{s - 1}}$  there exists a constant $C_{i,s} > 0$ only depending on $s$ and $i$ such that
\begin{equation*}
\big \vert m_i^{(s)}(w + l) - m_i^{(s)}(w) \big\vert \leq 
\begin{cases}
C_{i,s} \frac{L}{n} & \text{ if X is given by Setting~\ref{setting:trend}\eqref{setting:Trenda},}\\
C_{i,s} \left(\frac{D}{n}\right)^{2^{s - 1}}\frac{L}{n} & \text{ if X is given by Setting~\ref{setting:trend}\eqref{setting:Trendb}.}
\end{cases}
\end{equation*}
\end{Lemma}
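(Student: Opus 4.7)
The plan is to prove the bound by induction on $s$, exploiting the clean multiplicative structure of the recursion \eqref{eq:recursionm}.

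Step 1 (Base case $s=1$). Here $w \in W$ and $l \in \{1,\ldots,L\}$ are scalars. In Setting~\ref{setting:trend}\eqref{setting:Trenda}, $m_i^{(1)}(w+l) - m_i^{(1)}(w) = ((w+l)/n)^{i-1} - (w/n)^{i-1}$ and the mean value theorem applied to $x \mapsto (x/n)^{i-1}$ on $[0,n]$ gives a bound of $(i-1)(L/n)$. In Setting~\ref{setting:trend}\eqref{setting:Trendb}, $m_i^{(1)}(u) = ((u+D)/n)^i - (u/n)^i$ is a polynomial in $u$ whose derivative is $(i/n)\bigl[((u+D)/n)^{i-1} - (u/n)^{i-1}\bigr]$. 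A second mean value argument bounds this derivative by $i(i-1)(D/n)/n$ on $[0,n-D]$, giving $|m_i^{(1)}(w+l) - m_i^{(1)}(w)| \leq i(i-1)(D/n)(L/n)$, matching the claimed bound at $s=1$.

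Step 2 (Auxiliary pointwise bound). The inductive step will also need uniform control on $|m_i^{(s)}|$ itself. A parallel induction on $s$ yields constants $C'_{i,s} > 0$ with $|m_i^{(s)}(u)| \leq C'_{i,s}$ in Setting (a) and $|m_i^{(s)}(u)| \leq C'_{i,s} (D/n)^{2^{s-1}}$ in Setting (b). The base $s=1$ is immediate from $(u/n)^{i-1} \leq 1$ and from $|a^i - b^i| \leq i |a-b|$ on $[0,1]$. The inductive step is produced directly by the recursion \eqref{eq:recursionm} and the triangle inequality, which doubles the exponent of $D/n$ from $2^{s-1}$ to $2^s$.

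Step 3 (Inductive step for the difference). For $w = (w_0,w_1) \in W^{2^s}$ and $l = (l_0,l_1) \in \{1,\ldots,L\}^{2^s}$, apply the identity $AB - ab = (A-a)B + a(B-b)$ twice to
\begin{align*}
& m_i^{(s+1)}(w+l) - m_i^{(s+1)}(w) \\
= {} & \bigl[m_i^{(s)}(w_0{+}l_0) - m_i^{(s)}(w_0)\bigr] m_s^{(s)}(w_1{+}l_1) + m_i^{(s)}(w_0) \bigl[m_s^{(s)}(w_1{+}l_1) - m_s^{(s)}(w_1)\bigr] \\
& - \bigl[m_s^{(s)}(w_0{+}l_0) - m_s^{(s)}(w_0)\bigr] m_i^{(s)}(w_1{+}l_1) - m_s^{(s)}(w_0) \bigl[m_i^{(s)}(w_1{+}l_1) - m_i^{(s)}(w_1)\bigr].
\end{align*}
Each bracketed difference is bounded by the induction hypothesis at level $s$, and the other factor in each product is bounded by Step~2. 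In Setting (a) the product of the two bounds gives $C (L/n) \cdot C' = O(L/n)$, and in Setting (b) it gives $C (D/n)^{2^{s-1}} (L/n) \cdot C' (D/n)^{2^{s-1}} = O\bigl((D/n)^{2^s} (L/n)\bigr)$, matching $(D/n)^{2^{(s+1)-1}} L/n$. Choosing $C_{i,s+1}$ as an explicit function of $C_{i,s}, C_{s,s}, C'_{i,s}, C'_{s,s}$ closes the induction.

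The main obstacle is purely bookkeeping: keeping the exponents of $D/n$ straight across the two coupled inductions (Step~2 and Step~3), and checking that the doubling $2^{s-1} \to 2^s$ produced by the bilinear recursion matches the claimed rate in Setting (b). Once this is verified, both settings fall out of the same argument since Setting (a) corresponds to the degenerate case where all such factors are $O(1)$.
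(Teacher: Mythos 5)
Your proposal is correct and follows essentially the same route as the paper: a joint induction on $s$ consisting of a pointwise bound $|m_i^{(s)}|\leq C'_{i,s}$ (resp.\ $C'_{i,s}(D/n)^{2^{s-1}}$, the paper's \eqref{eq:boundmisD}) together with the difference bound, with the bilinear recursion \eqref{eq:recursionm} driving the doubling of the $D/n$ exponent. Your use of the mean value theorem in the base case and of the decomposition $AB-ab=(A-a)B+a(B-b)$ in the inductive step is only a cosmetic variant of the paper's difference-of-powers identity and interval-arithmetic bookkeeping.
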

\begin{proof}
Let $X$ be given by Setting~\ref{setting:trend}\eqref{setting:Trenda}. We use induction to show the statement. For any $w \in W$ and $l \in \{1,\ldots,L\}$, the difference of powers identity yield
\begin{align*}
m_i^{(1)}(w + l) - m_i^{(1)}(w)
= & \left(\frac{w + l}{n}\right)^{i - 1} - \left(\frac{w}{n}\right)^{i - 1}\\
= & \frac{l}{n} \sum_{j = 0}^{i - 2} \left(\frac{w + l}{n}\right)^{i - 2 - j} \left(\frac{w}{n}\right)^{j}
\in \left(0, (i - 1) \frac{L}{n}\right].
\end{align*}
Hence, the statements holds for $s= 1$ with $C_{i, 1} = i - 1$.

Suppose the statement is true for $s$. Note that it follows from \eqref{eq:recursionm} and $m_i^{(1)}(u) = \left(\frac{u}{n}\right)^{i - 1} \leq 1$ that here exists a constant $\tilde{C}_1 > 0$ such that $\big \vert m_s^{(s)}(u) \big\vert  \leq \tilde{C}_1$ for all $u \in U^{2^{s - 1}}$. Thus, for any $w, \tilde{w} \in W^{2^{s - 1}}$ and $l,\tilde{l}\in \{1,\ldots,L\}^{2^{s - 1}}$,
\begin{align*}
& m_i^{(s + 1)}\big((w + l, \tilde{w} + \tilde{l})\big) - m_i^{(s + 1)}\big((w, \tilde{w})\big)\\
= & m_i^{(s)}(w + l) m_s^{(s)}(\tilde{w} + \tilde{l}) - m_i^{(s)}(w) m_s^{(s)}(\tilde{w})\\
& - \big( m_s^{(s)}(w + l) m_i^{(s)}(\tilde{w} + \tilde{l}) - m_s^{(s)}(w) m_i^{(s)}(\tilde{w}) \big)\\
\in & \Big(m_i^{(s)}(w) \pm C_{i, s} \frac{L}{n}\Big) \Big(m_s^{(s)}(\tilde{w}) \pm C_{s, s}\frac{L}{n}\Big) - m_i^{(s)}(w) m_s^{(s)}(\tilde{w})\\
& \ - \bigg(  \Big(m_s^{(s)}(w) \pm C_{s, s} \frac{L}{n}\Big) \Big(m_i^{(s)}(\tilde{w}) \pm C_{i, s}\frac{L}{n}\Big) - m_s^{(s)}(w) m_i^{(s)}(\tilde{w}) \bigg)\\
\in & \left[ - 2\big(\tilde{C}_1 C_{i, s} + \tilde{C}_1  C_{s, s} + C_{i, s}C_{s, s}\big) \frac{L}{n}, 2\big(\tilde{C}_1 C_{i, s} + \tilde{C}_1  C_{s, s} + C_{i, s}C_{s, s}\big) \frac{L}{n} \right].
\end{align*}
This completes the induction with $C_{i, s + 1} = 2\big(\tilde{C}_1 C_{i, s} + \tilde{C}_1  C_{s, s} + C_{i, s}C_{s, s}\big)$.

Let now $X$ be given by Setting~\ref{setting:trend}\eqref{setting:Trendb}. The proof follows the same steps with two modifications. The induction beginning follows from applying difference of powers identity twice. We obtain
\begin{align*}
& m_i^{(1)}(w + l) - m_i^{(1)}(w)\\
= & \left(\frac{w + D + l}{n}\right)^{i} - \left(\frac{w + l}{n}\right)^{i} - \left(\left(\frac{w + D}{n}\right)^{i} - \left(\frac{w}{n}\right)^{i}\right)\\
= & \frac{D}{n} \sum_{j = 0}^{i - 1} \left( \left(\frac{w + D + l}{n}\right)^{i - j - 1} \left(\frac{w + l}{n}\right)^{j} - \left(\frac{w + D}{n}\right)^{i - j - 1} \left(\frac{w}{n}\right)^{j} \right)\\
= & \frac{D}{n} \sum_{j = 0}^{i - 2} \left( \left(\frac{w + D + l}{n}\right)^{i - j - 1} \left(\frac{w + l}{n}\right)^{j} -  \left(\frac{w + D}{n}\right)^{i - j - 1} \left(\frac{w + l}{n}\right)^{j}\right)\\
& + \frac{D}{n} \sum_{j = 1}^{i - 1} \left( \left(\frac{w + D}{n}\right)^{i - j - 1} \left(\frac{w + l}{n}\right)^{j} - \left(\frac{w + D}{n}\right)^{i - j - 1} \left(\frac{w}{n}\right)^{j}\right) \\
= & \frac{D}{n} \frac{l}{n} \sum_{j = 0}^{i - 2} \sum_{k = 0}^{i - j - 2} \left(\frac{w + D + l}{n}\right)^{i - j -2 - k} \left(\frac{w + D}{n}\right)^{k} \left(\frac{w + l}{n}\right)^{j}\\
& +  \frac{D}{n} \frac{l}{n} \sum_{j = 1}^{i - 1} \sum_{k = 0}^{j - 1} \left(\frac{w + l}{n}\right)^{j - 1 - k} \left(\frac{w}{n}\right)^{k} \left(\frac{w + D}{n}\right)^{i - j - 1}\\
\in & \left(0, (i - 1) i \frac{D}{n} \frac{L}{n}\right].
\end{align*}
So, $C_{i, s} = (i - 1) i$. The first three lines also show that
\begin{align*}
m_i^{(1)}(w) \leq \frac{D}{n} i.
\end{align*}
Consequently, there exists a constant $\tilde{C}_2 > 0$ such that 
\begin{equation}\label{eq:boundmisD}
m_i^{(s)}(w) \leq \tilde{C}_2 \left(\frac{D}{n}\right)^{2^{s - 1}}.
\end{equation}
Hence, the statement follows from the same induction.

\end{proof}
We now lower bound $m_{i}^{(s)}$ for a special sequence of $w$'s. We consider $v = (v_1,\ldots,v_{2^{s - 1}}) \in W^{2^{s - 1}}$ of the form
\begin{equation}\label{eq:definitionv}
v_i = w_j \text{ if } i \operatorname{mod} 2^{j - 1} = 0, \text{ but } i \operatorname{mod} 2^{j} \neq 0.
\end{equation}
\begin{Lemma}\label{lemma:calculatemterms}
Under the assumptions of Lemma~\ref{lemma:determinant}, we have that
\begin{equation*}
m_{i}^{(s)}(v) \geq 
\begin{cases}
\left( B\frac{L}{n}\right)^{2^{s - 1}} & \text{ if X is given by Setting~\ref{setting:trend}\eqref{setting:Trenda},}\\
\left( \frac{D}{n} B\frac{L}{n} \right)^{2^{s - 1}} & \text{ if X is given by Setting~\ref{setting:trend}\eqref{setting:Trendb}.}
\end{cases}
\end{equation*}
\end{Lemma}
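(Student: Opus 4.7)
I would prove this by induction on $s$, exploiting the antisymmetric recursion \eqref{eq:recursionm} together with the fractal repetition pattern built into the definition of $v$. The underlying intuition is that $m_s^{(s)}(v^{(s)})$ is (up to sign and a prefactor) a Vandermonde-type polynomial in the nodes $w_1/n, \dots, w_s/n$, in analogy with the role of the leading pivot $M_{s,s}^{(s)}$ in Gaussian elimination.

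For the base case $s = 1$, direct substitution gives $m_1^{(1)}(w_1) = 1$ in Setting~\ref{setting:trend}\eqref{setting:Trenda} and $m_1^{(1)}(w_1) = D/n$ in Setting~\ref{setting:trend}\eqref{setting:Trendb}; both dominate the required bounds, since $BL/n \leq 1$.

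For the inductive step, observe that by definition $v^{(s+1)} \in W^{2^s}$ decomposes as $(v^{(s)}, \tilde v^{(s)})$ with first half equal to $v^{(s)}$ and second half $\tilde v^{(s)}$ agreeing with $v^{(s)}$ in every coordinate except the last, where $w_s$ is replaced by $w_{s+1}$. Applying \eqref{eq:recursionm} with $i = s + 1$,
\begin{equation*}
m_{s+1}^{(s+1)}(v^{(s+1)}) = m_{s+1}^{(s)}(v^{(s)})\, m_s^{(s)}(\tilde v^{(s)}) - m_s^{(s)}(v^{(s)})\, m_{s+1}^{(s)}(\tilde v^{(s)}).
\end{equation*}
Because the first half of $v^{(s+1)}$ equals $v^{(s)}$ exactly and the second half differs from it only in a single coordinate, unfolding the recursion one further level and collecting terms shows that the single-coordinate perturbation $w_s \mapsto w_{s+1}$ factors out as $(w_{s+1} - w_s)/n$ times a determinant-like expression to which the inductive hypothesis applies. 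The sampling construction in Section~\ref{section:sampling} forces consecutive selected sub-segments to lie in segments separated by at least one unchosen segment, which gives $w_{s+1} - w_s \geq BL$ and hence a factor of at least $BL/n$. Combined with the inductive lower bound applied to each of the two halves, this yields the required $(BL/n)^{2^{s}}$ bound in Setting~\ref{setting:trend}\eqref{setting:Trenda}. In Setting~\ref{setting:trend}\eqref{setting:Trendb} the argument is structurally identical, except that each application of $m^{(1)}$ contributes an additional $D/n$ factor (cf.\ the base-level computation in the proof of Lemma~\ref{lemma:differencemterms}), accounting for the extra $(D/n)^{2^{s}}$ in the bound at level $s+1$.

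The main obstacle is the combinatorial bookkeeping in the inductive step: the fractal repetition pattern of $v^{(s)}$ (each $w_j$ appears $2^{s-1-j}$ times for $j < s$, while $w_s$ appears only once, at the end) together with the antisymmetric two-fold structure of \eqref{eq:recursionm} generates numerous cross-terms that must be shown to combine into the claimed Vandermonde-type product, rather than collapsing into an accidentally cancelling sum from which only a weaker bound could be extracted. Making these cancellations explicit, and verifying that the telescoping produces exactly one factor of $(w_{s+1}-w_s)/n$ at each level, is the principal technical task.
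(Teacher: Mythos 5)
Your high-level plan coincides with the paper's: induct on $s$, use the recursion \eqref{eq:recursionm} together with the fact that the second half of $v$ repeats the first half with $w_s$ replaced by $w_{s+1}$, extract a factor $(w_{s+1}-w_s)/n$ per level, and use the separation of the sampled sub-segments. But the proposal has a genuine gap, and it is exactly the part you defer as ``the principal technical task''. The recursion gives $m_{s+1}^{(s+1)}(v) = m_{s+1}^{(s)}(v^{(s)})\,m_s^{(s)}(\tilde v^{(s)}) - m_s^{(s)}(v^{(s)})\,m_{s+1}^{(s)}(\tilde v^{(s)})$, a \emph{difference of products} involving the off-diagonal quantities $m_{s+1}^{(s)}$, about which your induction hypothesis (a lower bound on the diagonal terms) says nothing; lower bounds on individual factors can never control such a difference, because the two products are nearly equal and the whole point is the cancellation. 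The paper closes the induction by proving a much stronger statement: an exact factorization $m_i^{(s)}(v) = P_s(w)\,S_{i,s}(w)$ for \emph{all} $i$, with $P_s$ an explicit Vandermonde-type product and $S_{i,s}$ an explicit nonnegative nested sum satisfying $S_{s,s}\equiv 1$, so that the common factor $P_s(w)P_s(\tilde w)$ pulls out and the residual difference telescopes, via the difference-of-powers identity, to $S_{i,s}(w)-S_{i,s}(\tilde w)=\frac{w_s-w_{s+1}}{n}S_{i,s+1}(\breve w)$. Without stating and proving some such strengthened hypothesis (or an equivalent determinantal identity), the inductive step cannot be completed; asserting that the cross-terms ``must combine into the claimed Vandermonde-type product'' is the lemma itself, not a proof of it.

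Two further points. First, your exponent bookkeeping does not close: bounding each half by $(BL/n)^{2^{s-1}}$ and multiplying by the extracted factor $BL/n$ yields $(BL/n)^{2^{s}+1}$, which is \emph{weaker} than the required $(BL/n)^{2^{s}}$ because $BL/n\le 1$; the induction only balances if one tracks the exact number of Vandermonde factors (the product $P_s$ has $2^{s-1}-1$ factors, and $P_{s+1}=P_s(w)P_s(\tilde w)\frac{w_s-w_{s+1}}{n}$), which again requires the explicit factorization rather than a bare lower bound. Second, for Setting~\ref{setting:trend}\eqref{setting:Trendb} the claim that the argument is ``structurally identical except for an extra $D/n$ per application of $m^{(1)}$'' substantially understates the work: the paper needs a different, considerably more elaborate $S_{i,s}$ (a sum of $s$ nested multi-sums), a separate base case via a double application of the difference-of-powers identity, and an induction step in which the last block is split by adding and subtracting a mixed product. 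So while your route is the paper's route in outline, the essential algebraic identity carrying the proof is missing.
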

\begin{proof}
Let $X$ be given by Setting~\ref{setting:trend}\eqref{setting:Trenda}. We show by induction that $m_{i}^{(s)}(v) = P_s(w) S_{i,s}(w)$, where
\begin{equation*}
\begin{split}
P_s(w) \coloneq & \prod_{\substack{a, b \in \{1,\ldots,s - 1\},\\ a < b}} \left(\frac{w_a - w_b}{n}\right)^{2^{s - b - 1}} \prod_{a \in \{1,\ldots,s - 1\}} \left(\frac{w_a - w_s}{n}\right),\\
S_{i,s}(w) \coloneq & \sum_{l_0 = i - 1}^{i - 1}\sum_{l_1 = s - 2}^{l_0 - 1} \left(\frac{w_1}{n}\right)^{l_0 - l_1 - 1} \sum_{l_2 = s - 3}^{l_1 - 1} \left(\frac{w_2}{n}\right)^{l_1 - l_2 - 1} \cdots \sum_{l_{s - 1} = 0}^{l_{s - 2} - 1}  \left(\frac{w_{s - 1}}{n}\right)^{l_{s - 2} - l_{s - 1} - 1} \left(\frac{w_s}{n}\right)^{l_{s - 1}}.
\end{split}
\end{equation*}
The induction beginning is satisfied for $s = 1$ by
\begin{equation*}
m_{i}^{(1)}(w_1) = \left(\frac{w_1}{n}\right)^{i - 1} = \sum_{l_0 = i - 1}^{i - 1} \left(\frac{w_1}{n}\right)^{l_0} = P_1(w) S_{i,1}(w).
\end{equation*}
Suppose the statement is true for $s$. Let $w = (w_1,\ldots,w_s)$, $\tilde{w} = (w_1,\ldots,w_{s - 1}, w_{s + 1})$, and $\breve{w} = (w_1,\ldots,w_{s + 1})$. It follows from \eqref{eq:recursionm}, the induction assumption, $S_{s,s}(w) = 1$, and the fact that $v_{2^{s - 1} + 1:2^{s}}$ is equal to $v_{1:2^{s - 1}}$ except that the last entry is $w_{s + 1}$ instead of $w_s$ that
\begin{align*}
& m_{i}^{(s + 1)}\big((v_{1:2^{s - 1}}, v_{2^{s - 1} + 1:2^{s}})\big)\\
= &  m_{i}^{(s)}(v_{1:2^{s - 1}}) m_{s}^{(s)}(v_{2^{s - 1} + 1:2^{s}}) - m_{s}^{(s)}(v_{1:2^{s - 1}}) m_{i}^{(s)}(v_{2^{s - 1} + 1:2^{s}})\\
= & P_s(w) S_{i,s}(w) P_s(\tilde{w}) - P_s(w) P_s(\tilde{w}) S_{i,s}(\tilde{w})\\
= & P_s(w) P_s(\tilde{w}) \big(S_{i,s}(w)  - S_{i,s}(\tilde{w})\big).
\end{align*}
By construction we have that
\begin{align*}
P_s(w) P_s(\tilde{w}) = P_{s + 1}(\breve{w} ) \frac{n}{w_s - w_{s + 1}}.
\end{align*}
Finally, the difference of power identity yields
\begin{align*}
& S_{i,s}(w) - S_{i,s}(\tilde{w})\\
= & \sum_{l_0 = i - 1}^{i - 1}\sum_{l_1 = s - 2}^{l_0 - 1} \left(\frac{w_1}{n}\right)^{l_0 - l_1 - 1} \cdots \sum_{l_{s - 1} = 0}^{l_{s - 2} - 1}  \left(\frac{w_{s - 1}}{n}\right)^{l_{s - 2} - l_{s - 1} - 1} \left(\left(\frac{w_s}{n}\right)^{l_{s - 1}} - \left(\frac{w_{s + 1}}{n}\right)^{l_{s - 1}}\right) \\
= & \frac{w_s - w_{s + 1}}{n} \sum_{l_0 = i - 1}^{i - 1}\sum_{l_1 = (s + 1) - 2}^{l_0 - 1} \left(\frac{w_1}{n}\right)^{l_0 - l_1 - 1}  \cdots \sum_{l_{s - 1} = 1}^{l_{s - 2} - 1}  \left(\frac{w_{s - 1}}{n}\right)^{l_{s - 2} - l_{s - 1} - 1}  \sum_{l_s = 0}^{l_{s - 1} - 1} \left(\frac{w_s}{n}\right)^{l_{s - 1} - l_s - 1} \left(\frac{w_{s + 1}}{n}\right)^{l_s}\\
= & \frac{w_s - w_{s + 1}}{n} S_{i,s + 1}(\breve{w}).
\end{align*}
This completes the induction. Thus,
\begin{equation*}
m_{s}^{(s)}(v) = P_s(w) \geq \left( B\frac{L}{n}\right)^{2^{s - 1}}.
\end{equation*}
Let $X$ now be given by Setting~\ref{setting:trend}\eqref{setting:Trendb}. We follow the same steps, but with
\begin{align*}
P_s(w) \coloneq & \left(\frac{D}{n}\right)^{2^{s - 1}}\prod_{\substack{a, b \in \{1,\ldots,s - 1\},\\ a < b}} \left(\frac{w_a - w_b}{n}\right)^{2^{s - b - 1}} \prod_{a \in \{1,\ldots,s - 1\}} \left(\frac{w_a - w_s}{n}\right)
\end{align*}
and
\begin{align*}
& S_{i,s}(w)\coloneq \\
 & \sum_{l_0 = s - 1}^{i - 1} \left(\frac{w_1 + D}{n}\right)^{i - l_0 - 1} \sum_{l_1 = s - 2}^{l_0 - 1}  \left(\frac{w_1}{n}\right)^{l_0 - l_1 - 1} \sum_{l_2 = s - 3}^{l_1 - 1} \left(\frac{w_2}{n}\right)^{l_1 - l_2 - 1} \cdots \sum_{l_{s - 1} = 0}^{l_{s - 2} - 1} \left(\frac{w_{s - 1}}{n}\right)^{l_{s - 2} - l_{s - 1} - 1} \left(\frac{w_s}{n}\right)^{l_{s - 1}}\\
 + &  \sum_{l_0 = s - 2}^{i - 2}\sum_{l_1 = 0}^{i - 1 - l_0 - 1} \left(\frac{w_1 + D}{n}\right)^{i - 1 - l_0 - 1 - l_1} \left(\frac{w_2 + D}{n}\right)^{l_1} \sum_{l_2 = s - 3}^{l_0 - 1}  \left(\frac{w_2}{n}\right)^{l_0 - l_2 - 1} \cdots \sum_{l_{s - 1} = 0}^{l_{s - 2} - 1}  \left(\frac{w_{s - 1}}{n}\right)^{l_{s - 2} - l_{s - 1} - 1} \left(\frac{w_s}{n}\right)^{l_{s - 1}}\\
 + & \ \cdots\\
 + &  \sum_{l_0 = 1}^{i - (s - 1)}\sum_{l_1 = s - 3}^{i - 1 - l_0 - 1} \left(\frac{w_1 + D}{n}\right)^{i - 1 - l_0 - 1 - l_1} \sum_{l_2 = s - 4}^{l_1 - 1} \left(\frac{w_2 + D}{n}\right)^{l_1 - l_2 - 1} \cdots  \left(\frac{w_{s - 1} + D}{n}\right)^{l_{s - 2}} \sum_{l_{s - 1} = 0}^{l_{0} - 1} \left(\frac{w_{s - 1}}{n}\right)^{l_{0} - l_{s - 1} - 1} \left(\frac{w_s}{n}\right)^{l_{s - 1}}\\
 + & \sum_{l_0 = 0}^{i - s}\sum_{l_1 = s - 2}^{i - 1 - l_0 - 1}\left(\frac{w_1 + D}{n}\right)^{i - 1 - l_0 - 1 - l_1} \sum_{l_2 = s - 3}^{l_1 - 1} \left(\frac{w_2 + D}{n}\right)^{l_1 - l_2 - 1} \cdots \left(\frac{w_{s - 1} + D}{n}\right)^{l_{s - 2} - l_{s - 1} - 1} \sum_{l_{s - 1} = 0}^{l_{s - 2} - 1} \left(\frac{w_s + D}{n}\right)^{l_{s - 1}} \left(\frac{w_s}{n}\right)^{l_{0}},
\end{align*}
where $s$ rows appear. The induction beginning for $s = 1$ follows from the difference of power identity
\begin{equation*}
m_{i}^{(1)}((w_1)) = \left(\frac{w_1 + D}{n}\right)^{i} - \left(\frac{w_1}{n}\right)^{i}
= \frac{D}{n}\sum_{l_0 = 0}^{i - 1} \left(\frac{w_1}{n}\right)^{i - l_0 - 1} \left(\frac{w_1 + D}{n}\right)^{l_0} = P_1(w) S_{i,1}(w).
\end{equation*}
For the induction step we focus on the last sum as all other terms follow the same pattern as before. Again, by difference of power identity,
\begin{align*}
& \left(\frac{w_s + D}{n}\right)^{l_{s - 1}} \left(\frac{w_s}{n}\right)^{l_{0}} -  \left(\frac{w_{s + 1} + D}{n}\right)^{l_{s - 1}} \left(\frac{w_{s + 1}}{n}\right)^{l_{0}}\\
 = & \left(\frac{w_s + D}{n}\right)^{l_{s - 1}} \left(\frac{w_s}{n}\right)^{l_{0}} - \left(\frac{w_{s + 1} + D}{n}\right)^{l_{s - 1}} \left(\frac{w_{s}}{n}\right)^{l_{0}} + \left(\frac{w_{s + 1} + D}{n}\right)^{l_{s - 1}} \left(\frac{w_{s}}{n}\right)^{l_{0}}  -  \left(\frac{w_{s + 1} + D}{n}\right)^{l_{s - 1}} \left(\frac{w_{s + 1}}{n}\right)^{l_{0}}\\
= & \frac{w_{s + 1} - w_{s}}{n} \left(\sum_{l_s = 0}^{l_{s - 1} - 1} \left(\frac{w_{s}}{n}\right)^{l_{0}} \left(\frac{w_{s} + D}{n}\right)^{l_{s - 1} - l_s - 1} \left(\frac{w_{s + 1} + D}{n}\right)^{l_{s}} + \sum_{l_s = 0}^{l_0 - 1} \left(\frac{w_{s + 1} + D}{n}\right)^{l_{s - 1}} \left(\frac{w_s}{n}\right)^{l_{0} - l_s - 1}  \left(\frac{w_{s + 1}}{n}\right)^{l_{s}}\right).
\end{align*}
This completes the induction. Thus,
\begin{equation*}
m_{s}^{(s)}(v) = P_s(w) s \geq s \left( \frac{D}{n} B\frac{L}{n} \right)^{2^{s - 1}} \geq \left( \frac{D}{n} B\frac{L}{n} \right)^{2^{s - 1}} .
\end{equation*}

\end{proof}
We are now able to proof the main Lemma.
\begin{proof}[Proof of Lemma~\ref{lemma:determinant}]
Let $X$ be given by Setting~\ref{setting:trend}\eqref{setting:Trenda}. We show by induction that there exists constants $\tilde{C}_{Q, s} > 0$ only depending on $Q$ and $s$ such that
\begin{equation}\label{eq:boundM}
M_{s,s}^{(s)} = \tilde{C}_{Q, s} L.
\end{equation}
The induction begin is satisfied, since
\begin{align*}
m_1^{(1)}(u) = \left(\frac{u}{n}\right)^{0} = 1 \text{ and } M_{1, 1}^{(1)} = \sum_{u \in U} m_1^{(1)}(u)^2 \in \big[ (Q + 1) (L - 2) \cdot 1, (Q + 1) L \cdot 1\big).
\end{align*}
Suppose \eqref{eq:boundM} holds for $s - 1$. Then from Lemma~\ref{lemma:simplifyM} and the induction assumption it follows that
\begin{align*}
M_{s,s}^{(s)}
=  \frac{\sum_{u \in U^{2^{s - 1}}} m_{s}^{(s)}(u)^2}{\prod_{t = 1}^{s - 1} \left(2 M_{t, t}^{(t)}\right)^{2^{s - t - 1}}}
=  \frac{\sum_{u \in U^{2^{s - 1}}} m_{s}^{(s)}(u)^2}{L^{2^{s - 1} - 1} \prod_{t = 1}^{s - 1} \left(2 \tilde{C}_{Q, t}\right)^{2^{s - t - 1}}}.
\end{align*}
It remains to show that there exists a constant $\tilde{C}_1 > 0$ such that 
\begin{equation*}
\sum_{u \in U^{2^{s - 1}}} m_{s}^{(s)}(u)^2 = \tilde{C}_1 L^{2^{s-1}}.
\end{equation*}
In the following we first upper, then lower bound this expression.

It follows from \eqref{eq:recursionm} and $m_i^{(1)}(u) = \left(\frac{u}{n}\right)^{i - 1} \leq 1$ that there exists a constant $\tilde{C}_2 > 0$ such that $m_s^{(s)}(u) \leq \tilde{C}_2$. Consequently,
\begin{align*}
\sum_{u \in U^{2^{s - 1}}} m_{s}^{(s)}(u)^2 \leq (Q+1)^{2^{s - 1}} L^{2^{s - 1}} \tilde{C}_2^2 \text{ and } \tilde{C}_{1} \leq (Q+1)^{2^{s - 1}} \tilde{C}_2^2.
\end{align*}
We now obtain a lower bound as well. Let $v$ be as defined in \eqref{eq:definitionv}. Then, it follows from Lemmas~\ref{lemma:differencemterms}~and~\ref{lemma:calculatemterms} that
\begin{align*}
\sum_{u \in U^{2^{s - 1}}} m_{s}^{(s)}(u)^2
\geq & \sum_{l \in \{1,\ldots,L - 2\}^{2^{s - 1}}} \bigg( m_{s}^{(s)}(v) - \left(m_{s}^{(s)}(v + l) - m_{s}^{(s)}(v)\right)  \bigg)^2\\
\geq & (L - 2)^{2^{s - 1}} \left( \left(B \frac{L}{n}\right)^{2^{s - 1}} - C_{s, s} \frac{L}{n} \right)^2.
\end{align*}
We have that the constant $\tilde{C}_1$ exists if $B$ is chosen large enough such that
\begin{equation}\label{eq:secondConditionB}
\frac{(L - 2)^{2^{s - 1}}}{L^{2^{s - 1}}}\left( \left(B \frac{L}{n}\right)^{2^{s - 1}} - C_{s, s} \frac{L}{n} \right)^2 \geq \tilde{C}_1.
\end{equation}
This completes the induction to show \eqref{eq:boundM}. The proof of the statement follows, since $L \geq \ClengthL n$ and hence
\begin{equation*}
\operatorname{det}\left( X_U^\top X_U \right)
= \prod_{s = 1}^{Q + 1} M_{s,s}^{(s)}
= \Cdeterminant n^{Q + 1},
\end{equation*}
with $\Cdeterminant = \ClengthL^{Q + 1} \prod_{s = 1}^{Q + 1} \tilde{C}_{Q, s}$.

Let $X$ now be given by Setting~\ref{setting:trend}\eqref{setting:Trendb}. The proof follows the same steps. We show by induction that there exists constants $\tilde{C}_{Q, s} > 0$ only depending on $Q$ and $s$ such that
\begin{equation}\label{eq:boundMD}
M_{s,s}^{(s)} = \tilde{C}_{Q, s} L \left(\frac{D}{n}\right)^2.
\end{equation}
The induction begin is satisfied with $\tilde{C}_{Q, 1} = Q$, since
\begin{align*}
& m_1^{(1)}(u) = \left(\frac{u + D}{n}\right) -  \left(\frac{u}{n}\right) = \frac{D}{n},\\
&  M_{1, 1}^{(1)} = \sum_{u \in U} m_1^{(1)}(u)^2 \in \left[ Q (L - 2) \left( \frac{D}{n} \right)^2, Q L \left( \frac{D}{n} \right)^2\right].
\end{align*}
Suppose \eqref{eq:boundMD} holds for $s - 1$. Then from Lemma~\ref{lemma:simplifyM} and the induction assumption it follows that
\begin{align*}
M_{s,s}^{(s)}
=  \frac{\sum_{u \in U^{2^{s - 1}}} m_{s}^{(s)}(u)^2}{\prod_{t = 1}^{s - 1} \left(2 M_{t, t}^{(t)}\right)^{2^{s - t - 1}}}
=  \frac{\sum_{u \in U^{2^{s - 1}}} m_{s}^{(s)}(u)^2}{L^{2^{s - 1} - 1} \left(\frac{D}{n}\right)^{2^s - 2} \prod_{t = 1}^{s - 1} \left(2 \tilde{C}_{Q, t}\right)^{2^{s - t - 1}} }.
\end{align*}
It remains to show that there exists a constant $\tilde{C}_3 > 0$ such that 
\begin{equation*}
\sum_{u \in U^{2^{s - 1}}} m_{s}^{(s)}(u)^2 = \tilde{C}_3 L^{2^{s - 1}} \left(\frac{D}{n}\right)^{2^{s}}.
\end{equation*}
In the following we first upper, then lower bound this expression.

We had shown in \eqref{eq:boundmisD} that there exists a constant $\tilde{C}_4 > 0$ such that 
\begin{equation*}
m_i^{(s)}(w) \leq \tilde{C}_4 \left(\frac{D}{n}\right)^{2^{s - 1}}.
\end{equation*}
Consequently,
\begin{align*}
\sum_{u \in U^{2^{s - 1}}} m_{s}^{(s)}(u)^2 \leq Q^{2^{s - 1}} L^{2^{s - 1}} \tilde{C}_4^2 \left(\frac{D}{n}\right)^{2^s}  \text{ and } \tilde{C}_{3} \leq Q^{2^{s - 1}} \tilde{C}_4^2.
\end{align*}
We now obtain a lower bound as well. Let $v$ be as defined in \eqref{eq:definitionv}. Then, it follows from Lemmas~\ref{lemma:differencemterms}~and~\ref{lemma:calculatemterms} that
\begin{align*}
\sum_{u \in U^{2^{s - 1}}} m_{s}^{(s)}(u)^2
\geq & \sum_{l \in \{1,\ldots,L - 2\}^{2^{s - 1}}} \bigg( m_{s}^{(s)}(v) - \left(m_{s}^{(s)}(v + l) - m_{s}^{(s)}(v)\right)  \bigg)^2\\
\geq & (L - 2)^{2^{s - 1}} \left( \left( \frac{D}{n} B\frac{L}{n}\right)^{2^{s - 1}} - C_{s, s} \left(\frac{D}{n}\right)^{2^{s - 1}}\frac{L}{n} \right)^2.
\end{align*}
We have that the constant $\tilde{C}_3$ exists if $B$ is chosen large enough such that
\begin{equation}\label{eq:thirdConditionB}
\frac{(L - 2)^{2^{s - 1}}}{L^{2^{s - 1}}} \left( \left( \frac{D}{n} B\frac{L}{n}\right)^{2^{s - 1}} - C_{s, s} \left(\frac{D}{n}\right)^{2^{s - 1}}\frac{L}{n} \right)^2 \geq \tilde{C}_3 \left(\frac{D}{n}\right)^{2^{s - 1}}.
\end{equation}
This completes the induction to show \eqref{eq:boundMD}. The proof follows, since $L \geq \ClengthL n$ and hence
\begin{equation*}
\operatorname{det}\left( X_U^\top X_U \right)
= \prod_{s = 1}^{Q} M_{s,s}^{(s)}
= \Cdeterminant \left(\frac{D^2}{n}\right)^{Q},
\end{equation*}
with $\Cdeterminant = \ClengthL^{Q} \prod_{s = 1}^{Q} \tilde{C}_{Q, s}$.
\end{proof}

\begin{Lemma}\label{lemma:trace}
Suppose that the assumptions of Lemma~\ref{lemma:determinant} hold. Then there exists a constant $\Ctrace > 0$ such that
\begin{equation*}
\operatorname{tr}\left[ X_{U} (X_{U}^\top X_{U})^{-1} (X^\top X) (X_{U}^\top X_{U})^{-1} X_{U}^\top \right] \leq \Ctrace.
\end{equation*}
\end{Lemma}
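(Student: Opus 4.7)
The plan is to simplify the expression using the cyclic property of the trace, reducing it to a sum of leverage-like scores with respect to the design $X_U$, and then to control these scores via a lower bound on $\lambda_{\min}(X_U^\top X_U)$ obtained by combining the determinant lower bound from Lemma~\ref{lemma:determinant} with an easy trace upper bound.

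First I would note that by Lemma~\ref{lemma:determinant} the matrix $X_U^\top X_U$ is invertible, so the cyclic property of the trace gives
\begin{equation*}
\operatorname{tr}\bigl[ X_{U} (X_{U}^\top X_{U})^{-1} (X^\top X) (X_{U}^\top X_{U})^{-1} X_{U}^\top \bigr]
= \operatorname{tr}\bigl[ (X_{U}^\top X_{U})^{-1} (X^\top X) \bigr].
\end{equation*}
Writing $X^\top X = \sum_{i=1}^{\mn} x_i x_i^\top$, where $x_i^\top$ denotes the $i$-th row of $X$, this equals $\sum_{i=1}^{\mn} x_i^\top (X_U^\top X_U)^{-1} x_i$, which is bounded above by $\mn \, \lambda_{\min}(X_U^\top X_U)^{-1} \max_i \|x_i\|_2^2$.

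Second, I would lower bound $\lambda_{\min}(X_U^\top X_U)$. In Setting~\ref{setting:trend}\eqref{setting:Trenda} the entries $(X_U^\top X_U)_{jk} = \sum_{i \in U} (i/n)^{j+k-2}$ are bounded by $|U|\le \mn$, giving $\lambda_{\max}(X_U^\top X_U) \le \operatorname{tr}(X_U^\top X_U) \le (Q+1)\mn$. Combined with $\det(X_U^\top X_U) \geq \Cdeterminant \mn^{Q+1}$ and the factorisation $\det = \prod_j \lambda_j$, this yields
\begin{equation*}
\lambda_{\min}(X_U^\top X_U) \geq \frac{\det(X_U^\top X_U)}{\lambda_{\max}^{Q}} \geq \frac{\Cdeterminant \mn^{Q+1}}{((Q+1)\mn)^Q} = \frac{\Cdeterminant \mn}{(Q+1)^Q}.
\end{equation*}
Since $\|x_i\|_2^2 \leq Q+1$, the previous display gives the final bound $(Q+1)^{Q+1}/\Cdeterminant$, independent of $n$.

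Third, Setting~\ref{setting:trend}\eqref{setting:Trendb} follows the same scheme. The difference-of-powers identity, already used in the proof of Lemma~\ref{lemma:differencemterms}, shows that the rows of $X_D$ satisfy $\|x_i\|_2^2 \lesssim Q (D/n)^2$, and hence $\lambda_{\max}(X_D^{U\top} X_D^{U}) \lesssim \mn (D/n)^2$; combined with $\det \geq \Cdeterminant (D^2/n)^Q$ this gives $\lambda_{\min} \gtrsim D^2/n$, and all powers of $D/n$ cancel in the final product, again producing a purely $Q$-dependent constant.

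The main obstacle is purely bookkeeping: in Setting~\ref{setting:trend}\eqref{setting:Trendb} one has to check that the powers of $D/n$ appearing in $\|x_i\|_2^2$, in $\lambda_{\max}$, and in the determinant bound match up so that the final expression is independent of $D$ and $n$. Nothing beyond the estimates already established in Lemmas~\ref{lemma:determinant} and~\ref{lemma:differencemterms} is needed.
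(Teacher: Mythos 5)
Your proposal is correct, and it reaches the paper's conclusion by a slightly different mechanism in the key middle step. Both you and the paper start identically: the cyclic trace identity reduces the quantity to $\operatorname{tr}\bigl[(X_U^\top X_U)^{-1}(X^\top X)\bigr]$, and both arguments ultimately rest on the determinant lower bound of Lemma~\ref{lemma:determinant} together with the elementary entry bounds $0\le X_{i,j}\le 1$ (resp.\ $\le Q\,D/n$ in the differenced case). The difference is how the determinant bound is converted into control of $(X_U^\top X_U)^{-1}$: the paper works entrywise, writing the inverse as the cofactor (adjugate) matrix divided by the determinant, bounding each cofactor by $\CcofactorXu n^{Q}$ (resp.\ $\CcofactorXu (D^2/n)^{Q-1}$) and hence each entry of the inverse by a constant times $n^{-1}$ (resp.\ $(D^2/n)^{-1}$), after which the trace bound follows from entrywise bounds on the product. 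You instead pass to the spectrum, using $\lambda_{\min}\ge \det/\lambda_{\max}^{k-1}$ (with $k=Q+1$, resp.\ $k=Q$, the matrix dimension) and $\lambda_{\max}\le\operatorname{tr}(X_U^\top X_U)\lesssim \mn$ (resp.\ $\lesssim \mn (D/n)^2$), then bound the trace as a sum of leverage-type quadratic forms $\sum_i x_i^\top (X_U^\top X_U)^{-1}x_i\le \mn\,\lambda_{\min}^{-1}\max_i\|x_i\|_2^2$. Your bookkeeping is sound in both settings: in Setting~\ref{setting:trend}\eqref{setting:Trenda} the factors of $n$ cancel to give $(Q+1)^{Q+1}/\Cdeterminant$, and in Setting~\ref{setting:trend}\eqref{setting:Trendb} the powers of $D/n$ cancel exactly as you claim (note the exponent in $\lambda_{\min}\ge\det/\lambda_{\max}^{k-1}$ is $Q-1$ there, since the differenced design has $Q$ columns). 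The spectral route is marginally cleaner in that it avoids introducing the cofactor constant $\CcofactorXu$; the paper's entrywise route yields explicit bounds on the entries of $(X_U^\top X_U)^{-1}(X^\top X)$, which it states but does not need elsewhere, so nothing is lost by your approach.
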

\begin{proof}[Proof of Lemma~\ref{lemma:trace}]
For any matrix $A$ and any symmetric, positive definite matrix $B$, with dimensions such that the products exist, we have that $\operatorname{tr}(A B A^\top) = \operatorname{tr}(B A^\top A)$. We obtain
\begin{align*}
\operatorname{tr}\left[ X_{U} (X_{U}^\top X_{U})^{-1} (X^\top X) (X_{U}^\top X_{U})^{-1} X_{U}^\top \right]
= \operatorname{tr}\left[ (X_{U}^\top X_{U})^{-1} (X^\top X) \right].
\end{align*}

Let $X$ now be given by Setting~\ref{setting:trend}\eqref{setting:Trenda}. Then, we have that $0 \leq X_{i,j} \leq 1$ for all $i = 1,\ldots,n;\ j = 1,\ldots,Q + 1$. Consequently, $(X^\top X)_{i,j} \leq n$ for all $i= 1,\ldots,Q + 1;\ j = 1,\ldots,Q + 1$.

Next, we focus on the entries in the inverse matrix $(X_U^\top X_U)^{-1}$. Since $X_U$ is a sub-matrix of $X$, we have once again that $(X_U)_{i,j} \leq 1$ and $(X_U^\top X_U)_{i,j} \leq n$. The inverse $(X_U^\top X_U)^{-1}$ is given by the cofactor matrix of $X_U^\top X_U$ divided by the determinant of $X_U^\top X_U$. Each entry in the cofactor matrix is given by the determinant of a $Q\times Q$ sub-matrix of $X_U^\top X_U$. Thus, each entry is bounded by $\CcofactorXu n^{Q}$, where $\CcofactorXu > 0$ is a constant only depending on $Q$. Together with Lemma~\ref{lemma:determinant}, which says $\operatorname{det}\left( X_U^\top X_U \right) \geq \Cdeterminant n^{Q + 1}$, we obtain
\begin{equation*}
\left[(X_{U}^\top X_{U})^{-1}\right]_{i,j} \leq \frac{\CcofactorXu n^{Q}}{\Cdeterminant n^{Q + 1}} = \frac{\CcofactorXu}{\Cdeterminant} n^{-1} \text{ for all } i,j \in \{1,\ldots,Q+1\}.
\end{equation*}
Consequently,
\begin{equation*}
\left[(X_{U}^\top X_{U})^{-1} (X^\top X)\right]_{i,j} \leq (Q+1) \frac{\CcofactorXu}{\Cdeterminant} n^{-1} n = (Q+1) \frac{\CcofactorXu}{\Cdeterminant},
\end{equation*}
for all $i,j \in \{1,\ldots,Q+1\}$. Thus,
\begin{equation*}
\operatorname{tr}\left[ (X_{U}^\top X_{U})^{-1} (X^\top X) \right] \leq (Q+1)^2 \frac{\CcofactorXu}{\Cdeterminant} \eqcolon \Ctrace.
\end{equation*}
Note that $\Ctrace > 0$ only depends on $Q$. This completes the proof if $X$ is given by Setting~\ref{setting:trend}\eqref{setting:Trenda}.

Let $X$ now be given by Setting~\ref{setting:trend}\eqref{setting:Trendb}. Then, we have that $0 \leq X_{i,j} \leq Q \frac{D}{n}$ for all $i = 1,\ldots,n - D;\ j = 1,\ldots,Q$. Consequently, there exists a constant $\tilde{C}_1> 0$ such that $(X^\top X)_{i,j} \leq \tilde{C}_1 \frac{D^2}{n}$ for all $i= 1,\ldots,Q;\ j = 1,\ldots,Q$.

Next, we focus on the entries in the inverse matrix $(X_U^\top X_U)^{-1}$. Since $X_U$ is a sub-matrix of $X$, we have once again that $(X_U)_{i,j} \leq Q \frac{D}{n}$ and $(X_U^\top X_U)_{i,j} \leq \tilde{C}_1 \frac{D^2}{n}$. The inverse $(X_U^\top X_U)^{-1}$ is given by the cofactor matrix of $X_U^\top X_U$ divided by the determinant of $X_U^\top X_U$. Each entry in the cofactor matrix is given by the determinant of a $(Q - 1)\times (Q - 1)$ sub-matrix of $X_U^\top X_U$. Thus, each entry is bounded by $\CcofactorXu \left(\frac{D^2}{n}\right)^{Q - 1}$, where $\CcofactorXu > 0$ is a constant only depending on $Q$. Together with Lemma~\ref{lemma:determinant}, which says $\operatorname{det}\left( X_U^\top X_U \right) \geq \Cdeterminant \left(\frac{D^2}{n}\right)^{Q}$, we obtain
\begin{equation*}
\left[(X_{U}^\top X_{U})^{-1}\right]_{i,j} \leq \frac{\CcofactorXu \left(\frac{D^2}{n}\right)^{Q - 1}}{\Cdeterminant \left(\frac{D^2}{n}\right)^{Q}} = \frac{\CcofactorXu}{\Cdeterminant} \left(\frac{D^2}{n}\right)^{-1} \text{ for all } i,j \in \{1,\ldots,Q\}.
\end{equation*}
Consequently,
\begin{equation*}
\left[(X_{U}^\top X_{U})^{-1} (X^\top X)\right]_{i,j} \leq Q \frac{\CcofactorXu}{\Cdeterminant} \left(\frac{D^2}{n}\right)^{-1} \left(\frac{D^2}{n}\right) = Q \frac{\CcofactorXu}{\Cdeterminant},
\end{equation*}
for all $i,j \in \{1,\ldots,Q\}$. Thus,
\begin{equation*}
\operatorname{tr}\left[ (X_{U}^\top X_{U})^{-1} (X^\top X) \right] \leq Q^2 \frac{\CcofactorXu}{\Cdeterminant} \eqcolon \Ctrace.
\end{equation*}
Note that $\Ctrace > 0$ only depends on $Q$. This completes the proof.
\end{proof}

We now bound the error $\hat{T}_U - T$ of the trend estimate given by $T_U$ in \eqref{eq:trendLS}.
\begin{Lemma}\label{lemma:trendEstimation}
Let $U \coloneq U_{B,Q,j} \subset \{1,\ldots,\mn\}$ be a sample from the sampling procedure in Section~\ref{section:sampling} and suppose that $U \cap \mathcal{A} = \emptyset$. Suppose that our data follows Setting~\ref{setting:trend}. Then there exists a constant $\CerrorTrend > 0$ such that
\begin{equation}\label{eq:statementTrendEstimation}
\lim_{n \to \infty} \Pj\left( (\hat{T}_U - T)^\top (\hat{T}_U - T) > \CerrorTrend \sigma_0^2 \log(n) \right) \to 0.
\end{equation}
\end{Lemma}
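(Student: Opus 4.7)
\textbf{Proof proposal for Lemma~\ref{lemma:trendEstimation}.}
The plan is to exploit that $U \cap \mathcal{A} = \emptyset$ so the anomaly component vanishes on $U$, reducing the analysis to a standard (weighted) least-squares error that splits cleanly into a deterministic remainder contribution and a Gaussian noise contribution. On $U$, Setting~\ref{setting:trend} gives $Y_U = X_U \beta + W_U + \varepsilon_U$, hence
\begin{equation*}
\hat{\beta}_U - \beta = (X_U^\top X_U)^{-1} X_U^\top (W_U + \varepsilon_U),
\end{equation*}
and therefore
\begin{equation*}
(\hat{T}_U - T)^\top (\hat{T}_U - T) = (W_U + \varepsilon_U)^\top M (W_U + \varepsilon_U),
\qquad M \coloneq X_U (X_U^\top X_U)^{-1} X^\top X (X_U^\top X_U)^{-1} X_U^\top.
\end{equation*}
By $(a+b)^\top M (a+b) \leq 2 a^\top M a + 2 b^\top M b$ this reduces the proof to bounding the remainder term $W_U^\top M W_U$ and the noise term $\varepsilon_U^\top M \varepsilon_U$ separately.

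For the remainder term I would use that $M$ is positive semidefinite with $\operatorname{tr}(M) \leq \Ctrace$ by Lemma~\ref{lemma:trace}, hence its operator norm is also at most $\Ctrace$. Combined with the bound $\|W\|_2^2 \leq \CremainderSeasonal \sigma_0^2 \log(n)$ from \eqref{eq:remainderSeasonalyT} this gives
\begin{equation*}
W_U^\top M W_U \leq \Ctrace \|W_U\|_2^2 \leq \Ctrace \CremainderSeasonal \sigma_0^2 \log(n),
\end{equation*}
deterministically. For the noise term I would factor $M = X_U A X_U^\top$ with $A = (X_U^\top X_U)^{-1} X^\top X (X_U^\top X_U)^{-1}$ and observe that
\begin{equation*}
\varepsilon_U^\top M \varepsilon_U = \xi^\top A \xi,\qquad \xi \coloneq X_U^\top \varepsilon_U \in \mathbb{R}^{\rQ+1}.
\end{equation*}
Since $\xi$ is a Gaussian vector of fixed dimension $\rQ+1$, this is a quadratic form in at most $\rQ+1$ Gaussians. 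Its expectation equals $\operatorname{tr}(M \operatorname{Cov}(\varepsilon_U))$, which is bounded by $\|\operatorname{Cov}(\varepsilon_U)\|_{\text{op}}\,\operatorname{tr}(M) \leq 4 \sigma_0^2 \Ctrace$ in both Setting~\ref{setting:trend}\eqref{setting:Trenda} (where the covariance is $\sigma_0^2 I$) and Setting~\ref{setting:trend}\eqref{setting:Trendb} (where Lemma~\ref{lemma:differencedSequence} gives a banded covariance of operator norm at most $4\sigma_0^2$). A Laurent--Massart style chi-squared tail bound then yields, for a suitable constant $c > 0$,
\begin{equation*}
\Pj\bigl(\varepsilon_U^\top M \varepsilon_U > c\,\sigma_0^2 \log(n)\bigr) \leq n^{-c'},
\end{equation*}
which goes to zero as $n \to \infty$.

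Combining the two bounds yields \eqref{eq:statementTrendEstimation} with $\CerrorTrend \coloneq 2\Ctrace \CremainderSeasonal + 2c$. The main technical obstacle is the noise term in the differenced Setting~\ref{setting:trend}\eqref{setting:Trendb}: the entries of $\varepsilon_U$ are no longer independent, so some care is needed when invoking the chi-squared tail bound; the trick of reducing to the $(\rQ+1)$-dimensional vector $X_U^\top \varepsilon_U$ sidesteps this and keeps the required bound of the form $\operatorname{tr}(A \operatorname{Cov}(\xi))$, which is controlled uniformly by Lemmas~\ref{lemma:determinant} and~\ref{lemma:trace} together with the operator-norm bound on the differenced noise covariance.
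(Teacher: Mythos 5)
Your proposal matches the paper's proof essentially step for step: the same identity $\hat{T}_U - T = X(X_U^\top X_U)^{-1}X_U^\top(W_U+\varepsilon_U)$, the same split into a remainder and a noise quadratic form in $M = X_U (X_U^\top X_U)^{-1}(X^\top X)(X_U^\top X_U)^{-1}X_U^\top$, the same use of Lemma~\ref{lemma:trace} together with \eqref{eq:remainderSeasonalyT} for the deterministic term, and the same trace/covariance computation bounding $\E\big[\varepsilon_U^\top M \varepsilon_U\big]$ by a constant multiple of $\sigma_0^2$ (handling the banded covariance of the differenced noise exactly as needed). The only divergence is the final step: the paper simply applies Markov's inequality, which already suffices because the threshold $\sigma_0^2\log(n)$ grows while the expectation stays bounded, whereas your Laurent--Massart argument is also valid (after diagonalising the correlated Gaussian quadratic form $\xi^\top A \xi$ into a weighted sum of independent $\chi^2_1$ variables whose weights sum to the bounded expectation) and gives a polynomially small probability at the cost of a sharper tool than the statement requires.
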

\begin{proof}[Proof of Lemma~\ref{lemma:trendEstimation}]

The least squares regression estimator is given by $\hat{T}_U = X (X_{U}^\top X_{U})^{-1} X_{U}^\top Y_{U}$. 

Since $U \cap \mathcal{A} = \emptyset$, we have that $Y_{U} = X_{U} \beta + W + \varepsilon_{U}$. Hence, $\hat{T}_U - T = X (X_{U}^\top X_{U})^{-1} X_{U}^\top (W + \varepsilon_{U})$. Therefore, we have that
\begin{align*}
(\hat{T}_U - T)^\top (\hat{T}_U - T)
 = & (W + \varepsilon_{U})^\top X_{U} (X_{U}^\top X_{U})^{-1} (X^\top X) (X_{U}^\top X_{U})^{-1} X_{U}^\top (W + \varepsilon_{U}) \\
\leq & 2 W^\top X_{U} (X_{U}^\top X_{U})^{-1} (X^\top X) (X_{U}^\top X_{U})^{-1} X_{U}^\top W\\
& + 2 \varepsilon_{U}^\top X_{U} (X_{U}^\top X_{U})^{-1} (X^\top X) (X_{U}^\top X_{U})^{-1} X_{U}^\top \varepsilon_{U}.
\end{align*}

For the first term, we obtain
\begin{align*}
& W^\top X_{U} (X_{U}^\top X_{U})^{-1} (X^\top X) (X_{U}^\top X_{U})^{-1} X_{U}^\top W\\
\leq & W^\top W \operatorname{tr}\left[ X_{U} (X_{U}^\top X_{U})^{-1} (X^\top X) (X_{U}^\top X_{U})^{-1} X_{U}^\top \right]\\
\leq & \CremainderSeasonal \sigma_0^2 \log(n) \Ctrace,
\end{align*}
where the final inequality follows from \eqref{eq:remainderSeasonalyT} and Lemma~\ref{lemma:trace}.

We now focus on the second term. For any symmetric positive definite matrix $A$ and any random variable $X$ with $\E[X] = 0$, it follows from \citet[(B.1)]{Kendrick02stochastic} that $\E\left[ X^\top A X \right] =  \operatorname{tr}\big(A \Cov[X]\big)$. Therefore,
\begin{align*}
 & \E\left[ \varepsilon_{U}^\top X_{U} (X_{U}^\top X_{U})^{-1} (X^\top X) (X_{U}^\top X_{U})^{-1} X_{U}^\top \varepsilon_{U} \right]\\
= & \operatorname{tr}\left[ X_{U} (X_{U}^\top X_{U})^{-1} (X^\top X) (X_{U}^\top X_{U})^{-1} X_{U}^\top \operatorname{Cov}\left[ \varepsilon_{U}\right] \right]\\
\leq &  \rQ\operatorname{colSum}\left( \vert \operatorname{Cov}\left[ \varepsilon_{U}\right] \vert \right) \operatorname{tr}\left[ X_{U} (X_{U}^\top X_{U})^{-1} (X^\top X) (X_{U}^\top X_{U})^{-1} X_{U}^\top \right]\\
\leq & 4\rQ\sigma_0^2 \Ctrace,
\end{align*}
where $\operatorname{colSum}\left( \vert \operatorname{Cov}\left[ \varepsilon_{U}\right] \vert \right)$ is the column sum of the element-wise absolute values of the covariance matrix. The final inequality follows from Lemmas~\ref{lemma:differencedSequence}~and~\ref{lemma:trace}.

Next, since $\varepsilon_{U}^\top X_{U} (X_{U}^\top X_{U})^{-1} (X^\top X) (X_{U}^\top X_{U})^{-1} X_{U}^\top \varepsilon_{U} \geq 0$, Markov's inequality gives us
\begin{align*}
& \Pj\left(\varepsilon_{U}^\top X_{U} (X_{U}^\top X_{U})^{-1} (X^\top X) (X_{U}^\top X_{U})^{-1} X_{U}^\top \varepsilon_{U} > \sigma_0^2 \log(n)\right)\\
 \leq & \frac{\E\left[\varepsilon_{U}^\top X_{U} (X_{U}^\top X_{U})^{-1} (X^\top X) (X_{U}^\top X_{U})^{-1} X_{U}^\top \varepsilon_{U} \right]}{\sigma_0^2 \log(n)} \to 0.
\end{align*}
It follows that \eqref{eq:statementTrendEstimation} holds with $\CerrorTrend \coloneq 2 \CremainderSeasonal \Ctrace + 2$.
\end{proof}

\subsubsection{Uniform Bounds on the Noise}\label{sec:proofBoundNoise}
In this section we bound relevant noise terms.
\begin{Lemma}\label{lemma:noiseLin}
Suppose that our data follows Setting~\ref{setting:trend}. Then, there exists a constant $\CnoiseLin < \infty$ such that
\begin{equation*}
\Pj\left( \max_{1 \leq i \leq j \leq \mn} \varepsilon_{i:j}^\top H_{i:j} \varepsilon_{i:j}
\leq \CnoiseLin \sigma_0^2 \log(n) \right) \to 1,
\end{equation*}
as $n\to \infty$.
\end{Lemma}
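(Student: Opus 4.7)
The plan is to exploit the fact that $H_{i:j} = X_{i:j}(X_{i:j}^\top X_{i:j})^{-1} X_{i:j}^\top$ is an orthogonal projection of rank at most $\rQ + 1$, a constant independent of $n$, and then to combine Gaussian concentration for quadratic forms with a union bound over the at most $\binom{\mn}{2} = O(n^2)$ choices of $(i,j)$.

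First, I would bound the operator norm of the covariance matrix $\Sigma_{i:j}$ of $\varepsilon_{i:j}$ uniformly in $i,j$. In Setting~\ref{setting:trend}\eqref{setting:Trenda} we simply have $\Sigma_{i:j} = \sigma_0^2 I$. In Setting~\ref{setting:trend}\eqref{setting:Trendb}, Lemma~\ref{lemma:differencedSequence} shows that $\Sigma_{i:j}$ is a banded matrix with diagonal $2\sigma_0^2$ and at most two off-diagonal entries per row, each equal to $-\sigma_0^2$. By Gershgorin, $\|\Sigma_{i:j}\|_{\operatorname{op}} \leq 4\sigma_0^2$ in both cases.

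Second, I would write $\varepsilon_{i:j}^\top H_{i:j} \varepsilon_{i:j} = \|Z\|_2^2$ for a centered Gaussian vector $Z$ of dimension at most $\rQ + 1$ obtained by projecting $\varepsilon_{i:j}$ onto an orthonormal basis of the range of $H_{i:j}$. The covariance of $Z$ has operator norm bounded by $\|\Sigma_{i:j}\|_{\operatorname{op}} \leq 4\sigma_0^2$, so by a standard Laurent--Massart-type tail bound (or Hanson--Wright) there exist absolute constants $c_1, c_2 > 0$ such that, uniformly in $(i,j)$,
\begin{equation*}
\Pj\bigl( \varepsilon_{i:j}^\top H_{i:j} \varepsilon_{i:j} \geq c_1 \sigma_0^2 (\rQ + 1 + t) \bigr) \leq 2 e^{-c_2 t}, \quad t \geq 0.
\end{equation*}

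Third, I would take $t = (3/c_2) \log n$ so that the right-hand side is at most $2 n^{-3}$, and then union bound over the at most $\mn^2 \leq n^2$ pairs $(i,j)$. Choosing $\CnoiseLin$ large enough to absorb the constants $c_1, \rQ + 1$, and $3/c_2$, the resulting failure probability is bounded by $2 n^{-1} \to 0$, which yields the claim. The main (and only) substantive obstacle is verifying the operator-norm bound in the differenced setting, which is immediate from the explicit banded covariance in Lemma~\ref{lemma:differencedSequence}; the remainder of the argument is standard Gaussian concentration combined with a polynomially-large union bound exploiting the constant rank of $H_{i:j}$.
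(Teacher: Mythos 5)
Your proposal is correct, and its skeleton is the same as the paper's: exploit that $H_{i:j}$ is an orthogonal projection of constant rank (at most $\rQ+1$), apply a Laurent--Massart-type tail bound for the resulting low-dimensional Gaussian quadratic form with $t$ of order $\log n$, and union bound over the $O(n^2)$ pairs $(i,j)$. The only genuine divergence is how the dependent noise of Setting~\ref{setting:trend}(b) is handled. The paper avoids dealing with the correlated covariance altogether: it writes $(\varepsilon_D)_{i:j} = \varepsilon_{i+D:j+D} - \varepsilon_{i:j}$ and uses
\begin{equation*}
(\varepsilon_D)_{i:j}^\top H_{i:j} (\varepsilon_D)_{i:j} \leq 2\,\varepsilon_{i+D:j+D}^\top H_{i:j}\, \varepsilon_{i+D:j+D} + 2\,\varepsilon_{i:j}^\top H_{i:j}\, \varepsilon_{i:j},
\end{equation*}
so that each term is an exact $\sigma_0^2\chi^2$-variable and the i.i.d.\ argument applies verbatim (at the cost of a factor $4$). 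You instead keep the differenced noise intact, bound $\Vert\Cov[(\varepsilon_D)_{i:j}]\Vert_{\operatorname{op}} \leq 4\sigma_0^2$ via Gershgorin using the banded structure from Lemma~\ref{lemma:differencedSequence}, and invoke concentration for Gaussian quadratic forms with general covariance (weighted chi-squared / Hanson--Wright). Both routes are valid and give the same order of constants; yours treats the two settings uniformly and would extend to other short-range dependent noise, while the paper's reduction keeps the argument at the level of exact chi-squared distributions and only needs the plain Laurent--Massart inequality.
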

\begin{proof}
We first consider the Setting~\ref{setting:trend}\eqref{setting:Trenda}. Note that $H_{i:j}$ is a projection matrix on a $Q + 1$ dimensional column space. Hence, $H_{i:j}$ is symmetric, idempotent, and has rank $\min\{Q + 1, j - i + 1\}$. Consequently, \citep[Theorem~5.1.1]{mathai92quadratic} implies that
\begin{equation*}
\frac{\varepsilon_{i:j}^\top H_{i:j} \varepsilon_{i:j}}{\sigma_0^2}
\end{equation*}
is chi-squared distributed with $\min\{Q + 1, j - i + 1\}$ degrees of freedom. Thus, \citep[Lemma~1]{laurent2000adaptive} yields
\begin{equation*}
\Pj \left( \frac{\varepsilon_{i:j}^\top H_{i:j} \varepsilon_{i:j}}{\sigma_0^2}  > \min\{Q + 1, j - i + 1\} + 2 \sqrt{\min\{Q + 1, j - i + 1\} t} + 2t \right) \leq \exp\left(-t\right),
\end{equation*}
for any $t > 0$. Let $t = \CnoiseLin \sigma_0^2 \log(n)$ for a constant $\CnoiseLin> 0$. If $n$ is large enough, then
\begin{align*}
\min\{Q + 1, j - i + 1\} + 2 \sqrt{\min\{Q + 1, j - i + 1\} t} + 2t
\leq \left(\sqrt{Q + 1} + \sqrt{2 t}\right)^2
\leq 3t
\end{align*}
Thus, if $n$ is large enough, we obtain
\begin{equation*}
\Pj \left( \varepsilon_{i:j}^\top H_{i:j} \varepsilon_{i:j}  > \CnoiseLin \sigma_0^2 \log(n)\right) \leq \exp\left(- \frac{\CnoiseLin}{3} \log(n) \right) = n^{-\frac{\CnoiseLin}{3}}.
\end{equation*}
If $\CnoiseLin > 6$, combining this bound with a union bound gives us
\begin{align*}
& 1 - \Pj\left( \max_{1 \leq i \leq j \leq n} \varepsilon_{i:j}^\top H_{i:j} \varepsilon_{i:j}
\leq \CnoiseLin \sigma_0^2 \log(n) \right)\\
& \leq \sum_{1 \leq i \leq j \leq n} \Pj \left(\varepsilon_{i:j}^\top H_{i:j} \varepsilon_{i:j}  > \CnoiseLin \sigma_0^2 \log(n) \right)\\
& \leq n^2 n^{-\frac{\CnoiseLin}{3}} \to 0,
\end{align*}
as $n\to \infty$.

We now consider the Setting~\ref{setting:trend}\eqref{setting:Trendb}. Lemma~\ref{lemma:differencedSequence} implies that $\varepsilon_D = \varepsilon_{D + 1:n} - \varepsilon_{1:{n-D}}$. Consequently, $(\varepsilon_D)_{i:j} = \varepsilon_{i+D:j+D} - \varepsilon_{i:j}$. Therefore, 
\begin{align*}
& (\varepsilon_D)_{i:j}^\top H_{i:j} (\varepsilon_D)_{i:j}\\
= & \varepsilon_{i+D:j+D}^\top H_{i:j} \varepsilon_{i+D:j+D} + \varepsilon_{i:j}^\top H_{i:j} \varepsilon_{i:j} - 2 \varepsilon_{i+D:j+D}^\top H_{i:j} \varepsilon_{i:j}\\
\leq & \varepsilon_{i+D:j+D}^\top H_{i:j} \varepsilon_{i+D:j+D} + \varepsilon_{i:j}^\top H_{i:j} \varepsilon_{i:j} + 2 \sqrt{\varepsilon_{i+D:j+D}^\top H_{i:j} \varepsilon_{i+D:j+D}} \sqrt{\varepsilon_{i:j}^\top H_{i:j} \varepsilon_{i:j}}\\
\leq & 2\varepsilon_{i+D:j+D}^\top H_{i:j} \varepsilon_{i+D:j+D} + 2\varepsilon_{i:j}^\top H_{i:j}. \varepsilon_{i:j}.
\end{align*}
Following the same steps as in Setting~\ref{setting:trend}\eqref{setting:Trenda} completes the proof.
\end{proof}

Lemma~\ref{lemma:mixedtermNoisePolynomial} bounds the mixed term between noise and a polynomial fit.
\begin{Lemma}\label{lemma:mixedtermNoisePolynomial}
Let $\tilde{T} = \big(\tilde{t}_1,\ldots,\tilde{t}_{\mn}\big)$ be a polynomial of degree $\rQ$, then for any $1 \leq i \leq j \leq \mn$, we have that
\begin{equation*}
\left\vert\sum_{l = i}^{j} \varepsilon_l \tilde{t}_l \right\vert \leq \sqrt{\varepsilon_{i:j}^\top H_{i:j} \varepsilon_{i:j}} \sqrt{\sum_{l = i}^{j} \tilde{t}_l^2}.
\end{equation*}
\end{Lemma}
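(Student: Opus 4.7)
The plan is to exploit the geometric interpretation of $H_{i:j}$ as the orthogonal projection onto the column space of the polynomial regression matrix of degree $\rQ$ restricted to rows $i,\ldots,j$. The central observation is that $\tilde{T}_{i:j}$ itself lies in this column space. In Setting~\ref{setting:trend}\eqref{setting:Trenda} this is immediate since the columns of $X$ are the monomials $(l/n)^q$ for $q = 0,\ldots,Q$, which form a basis for all polynomials of degree $\le Q$. In Setting~\ref{setting:trend}\eqref{setting:Trendb} it requires one sentence: the columns of $X_D$ are the lag-$D$ differences $(l+D)^q/n^q - (l/n)^q$ for $q = 1,\ldots,Q$, and since the difference of powers identity shows that the $q$-th column is a polynomial of exact degree $q - 1$ in $l$, these $Q$ columns span all polynomials of degree $\le Q - 1 = \rQ$. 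In both settings we therefore conclude $H_{i:j}\tilde{T}_{i:j} = \tilde{T}_{i:j}$.

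With this identity in hand, the remainder is a one-line computation using the symmetry and idempotence of $H_{i:j}$ together with the Cauchy--Schwarz inequality:
\[
\left|\sum_{l=i}^{j}\varepsilon_l \tilde{t}_l\right|
= \left|\varepsilon_{i:j}^\top \tilde{T}_{i:j}\right|
= \left|\varepsilon_{i:j}^\top H_{i:j}\tilde{T}_{i:j}\right|
= \left|(H_{i:j}\varepsilon_{i:j})^\top \tilde{T}_{i:j}\right|
\le \|H_{i:j}\varepsilon_{i:j}\|_2\,\|\tilde{T}_{i:j}\|_2.
\]
Then $\|H_{i:j}\varepsilon_{i:j}\|_2^2 = \varepsilon_{i:j}^\top H_{i:j}^\top H_{i:j}\varepsilon_{i:j} = \varepsilon_{i:j}^\top H_{i:j}\varepsilon_{i:j}$ by symmetry and idempotence, which produces the stated bound. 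There is no genuine obstacle in this proof; the only step that requires care, rather than being automatic, is verifying the range condition $\tilde{T}_{i:j} \in \operatorname{col}(X_{i:j})$ in Setting~\ref{setting:trend}\eqref{setting:Trendb}, and this is handled by the difference-of-powers observation above.
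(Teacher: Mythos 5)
Your proof is correct, and it reaches the bound by a more direct route than the paper. The paper writes $\tilde{T} = X\tilde{\beta}$, rescales so that $\tilde{\beta}^\top X_{i:j}^\top X_{i:j}\tilde{\beta} = 1$, and then solves the constrained maximisation $\max\{\varepsilon_{i:j}^\top X_{i:j}\beta : \beta^\top X_{i:j}^\top X_{i:j}\beta = 1\}$ by Lagrange multipliers, finding that the maximum equals $\sqrt{\varepsilon_{i:j}^\top H_{i:j}\varepsilon_{i:j}}$; it then repeats the argument with a sign flip to get the absolute value. You instead observe that $\tilde{T}_{i:j}$ lies in the column space of $X_{i:j}$, so $H_{i:j}\tilde{T}_{i:j} = \tilde{T}_{i:j}$, and conclude in one line via symmetry, idempotence and Cauchy--Schwarz. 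The underlying fact is the same in both arguments (the restricted polynomial is in the range of the projection), but your version buys two things: it handles both signs at once, and it does not need $X_{i:j}^\top X_{i:j}$ to be invertible --- the paper's explicit Lagrange solution writes $(X_{i:j}^\top X_{i:j})^{-1}$, which implicitly requires $j - i + 1 \geq \rQ + 1$, whereas your projection argument remains valid on short segments where $H_{i:j}$ is simply the identity. You also correctly identify the only step needing care, namely that in Setting~\ref{setting:trend}\eqref{setting:Trendb} the columns of $X_D$ have exact degrees $0,\ldots,Q-1$ in $l$ and hence span all polynomials of degree $\rQ = Q - 1$, so the range condition holds there as well; the paper glosses over this by treating $\tilde{T} = X\tilde{\beta}$ as given.
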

\begin{proof}
Since $\tilde{T}$ is a polynomial of degree $\rQ$, there exists a $\tilde{\beta} \in \R^{\rQ + 1}$ such that $\tilde{T} = X \tilde{\beta}$. Let $C^2 \coloneq \tilde{\beta}^\top X_{i:j}^\top X_{i:j} \tilde{\beta}$ and $\breve{\beta} = \frac{1}{C} \tilde{\beta}$. Then,
\begin{align*}
\sum_{l = i}^{j} \varepsilon_l \tilde{t}_l 
= \varepsilon_{i:j}^\top X_{i:j} \tilde{\beta} 
=  C \varepsilon_{i:j}^\top X_{i:j} \breve{\beta} 
\leq C \max_{\beta \in \R^{\rQ + 1}:\ \beta^\top X_{i:j}^\top X_{i:j} \beta = 1} \varepsilon_{i:j}^\top X_{i:j}\beta.
\end{align*}
It remains to solve the optimisation problem on the right. We use Lagrangian multipliers. We have to maximise
\begin{equation*}
\varepsilon_{i:j}^\top X_{i:j}\beta + \lambda \left( \beta^\top X_{i:j}^\top X_{i:j} \beta - 1 \right).
\end{equation*}
We obtain
\begin{align*}
& \beta = \frac{1}{\lambda} \left(X_{i:j}^\top X_{i:j}\right)^{-1} X_{i:j}^\top \varepsilon_{i:j},\\
& \beta^\top X_{i:j}^\top X_{i:j} \beta = 1.
\end{align*}
Hence, $\lambda = \sqrt{\varepsilon_{i:j}^\top H_{i:j} \varepsilon_{i:j}}$. Consequently,
\begin{align*}
\sum_{l = i}^{j} \varepsilon_l \tilde{t}_l 
\leq & C \max_{\beta \in \R^{\rQ + 1}:\ \beta^\top X_{i:j}^\top X_{i:j} \beta = 1} \varepsilon_{i:j}^\top X_{i:j}\beta\\ 
= & C \frac{\varepsilon_{i:j}^\top H_{i:j} \varepsilon_{i:j}}{\sqrt{\varepsilon_{i:j}^\top H_{i:j} \varepsilon_{i:j}}}
 = \sqrt{\varepsilon_{i:j}^\top H_{i:j} \varepsilon_{i:j}} \sqrt{\sum_{l = i}^{j} \tilde{t}_l^2}.
\end{align*}
Similarly, we have that
\begin{align*}
\sum_{l = i}^{j} \varepsilon_l \tilde{t}_l 
\geq  -\max_{\beta \in \R^{\rQ + 1}:\ \beta^\top X_{i:j}^\top X_{i:j} \beta = 1} \varepsilon_{i:j}^\top X_{i:j}\beta
 = -\sqrt{\varepsilon_{i:j}^\top H_{i:j} \varepsilon_{i:j}} \sqrt{\sum_{l = i}^{j} \tilde{t}_l^2}.
\end{align*}

This completes the proof.
\end{proof}

\subsubsection{Bounding the cost difference}\label{sec:proofBoundCostDifference}

To show that we select the right trend estimate using the cost of CAPA, we contrast in the following the costs of a good trend estimate with the one of a bad trend estimate. We require multiple lemmas to bound these costs.

We start by rewriting the anomaly model as a changepoint model. Let $\tau_{2k} \coloneq b_k$ and $\tau_{2k + 1} \coloneq e_k$, for $k = 0,\ldots,K$. Furthermore, we define the function values $\theta_{2k - 1} \coloneq \mu_k$, for $k = 1,\ldots,K$, and $\theta_{2k} \coloneq 0$. We then define the jump sizes $\Delta_{2k - 1} \coloneq \vert \mu_{k} \vert$ and $\Delta_{2k} \coloneq \vert \mu_{k} \vert$ for $k = 1,\ldots,K$. We also use the notation $K_\tau \coloneq 2K$.

In several of the following lemmas we make use of the following inequalities. Let $R = (r_1,\ldots,r_{\mn})$ be a polynomial of degree $\rQ$ for which 
\begin{equation}\label{eq:boundRemainderTrendT}
\sum_{l = 1}^{\mn} r_l^2 \leq \Cr \sigma_0^2 \log(n)
\end{equation}
holds. Suppose that 
\begin{equation}\label{eq:boundNoise}
\max_{1 \leq i \leq j \leq \mn} \varepsilon_{i:j}^\top H_{i:j} \varepsilon_{i:j}
\leq \CnoiseLin \sigma_0^2 \log(n).
\end{equation}

Lemma~\ref{lemma:costReductionEst} bounds the reduction in costs by having an estimated polynomial trend instead of the true parameter.
\begin{Lemma}\label{lemma:costReductionEst}
Suppose that our data follows Setting~\ref{setting:trend}. Suppose that \eqref{eq:boundRemainderTrendT}~and~\eqref{eq:boundNoise} hold. Then, there exists a constant $\CcostEst > 0$ such that for any $k \in \{0,\ldots,\Ktau\}$ and any $i,j$ such that $\tau_{k} < i \leq j \leq \tau_{k + 1}$, 
\begin{equation*}
C_{i, j}(\varepsilon + R + W) - C_{i, j}(Y - \hat{\theta}_{i:j}) \leq \CcostEst \sigma_0^2 \log(n).
\end{equation*} 
\end{Lemma}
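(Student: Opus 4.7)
The plan is to show that inside a non-anomalous segment, the "polynomial fit" residual $Y - \hat{\theta}_{i:j}$ and the "noise plus small remainder" $\varepsilon + R + W$ differ only by terms involving the noise projected onto the $(\rQ+1)$-dimensional polynomial space plus harmless cross terms.

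First I would exploit the non-anomalous structure. Because $\tau_k < i \leq j \leq \tau_{k+1}$, the anomaly is constant on this window, $A_l = \theta_k$ for all $l \in \{i,\ldots,j\}$, and from Setting~\ref{setting:trend} we have
\begin{equation*}
Y_{i:j} = \theta_k \mathbf{1} + T_{i:j} + W_{i:j} + \varepsilon_{i:j}.
\end{equation*}
Since $\theta_k \mathbf{1} + T_{i:j}$ is a polynomial of degree at most $\rQ$, it lies in the column space of $X_{i:j}$. Hence the hat matrix $H_{i:j} = X_{i:j}(X_{i:j}^\top X_{i:j})^{-1} X_{i:j}^\top$ satisfies $H_{i:j}(\theta_k \mathbf{1} + T_{i:j}) = \theta_k \mathbf{1} + T_{i:j}$, so that
\begin{equation*}
(Y - \hat{\theta}_{i:j})_{i:j} = (I - H_{i:j})(W_{i:j} + \varepsilon_{i:j}),
\end{equation*}
and therefore $C_{i,j}(Y - \hat{\theta}_{i:j}) = (W + \varepsilon)_{i:j}^\top (I - H_{i:j}) (W + \varepsilon)_{i:j}$.

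Next I would expand the difference. Writing $Z = (W+\varepsilon)_{i:j}$ and using $(I - H_{i:j}) = I - H_{i:j}$,
\begin{equation*}
C_{i,j}(\varepsilon + R + W) - C_{i,j}(Y - \hat{\theta}_{i:j}) = Z^\top H_{i:j} Z + 2 R_{i:j}^\top Z + R_{i:j}^\top R_{i:j}.
\end{equation*}
Using $(a+b)^\top H_{i:j}(a+b) \leq 2 a^\top H_{i:j} a + 2 b^\top H_{i:j} b$, the first term splits as $Z^\top H_{i:j} Z \leq 2 W_{i:j}^\top H_{i:j} W_{i:j} + 2 \varepsilon_{i:j}^\top H_{i:j} \varepsilon_{i:j}$. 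Since $H_{i:j}$ is an orthogonal projection, $W_{i:j}^\top H_{i:j} W_{i:j} \leq W_{i:j}^\top W_{i:j} \leq W^\top W \leq \CremainderSeasonal \sigma_0^2 \log(n)$ by \eqref{eq:remainderSeasonalyT}, and $\varepsilon_{i:j}^\top H_{i:j} \varepsilon_{i:j} \leq \CnoiseLin \sigma_0^2 \log(n)$ by the uniform bound \eqref{eq:boundNoise}. The term $R_{i:j}^\top R_{i:j} \leq \Cr \sigma_0^2 \log(n)$ follows from \eqref{eq:boundRemainderTrendT}. For the cross term I would split $R_{i:j}^\top Z = R_{i:j}^\top W_{i:j} + R_{i:j}^\top \varepsilon_{i:j}$; Cauchy--Schwarz handles the first piece via the $W^\top W$ and $R^\top R$ bounds, while the second piece is controlled by Lemma~\ref{lemma:mixedtermNoisePolynomial}, which gives $|R_{i:j}^\top \varepsilon_{i:j}| \leq \sqrt{\varepsilon_{i:j}^\top H_{i:j}\varepsilon_{i:j}}\sqrt{R_{i:j}^\top R_{i:j}} \leq \sqrt{\CnoiseLin \Cr}\, \sigma_0^2 \log(n)$, because $R$ is a polynomial of degree $\rQ$.

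Collecting everything gives a constant $\CcostEst > 0$ (a fixed affine combination of $\CremainderSeasonal$, $\CnoiseLin$, $\Cr$, and their square roots) such that the difference is at most $\CcostEst \sigma_0^2 \log(n)$, uniformly in the choice of $i,j,k$. The only genuinely delicate points are (i) the degree bookkeeping, namely making sure the fit performed by $H_{i:j}$ is rich enough to absorb both the baseline $\theta_k\mathbf{1}$ and the polynomial trend of degree $\rQ$ in both sub-cases of Setting~\ref{setting:trend}, and (ii) verifying that the mixed $R$-versus-$\varepsilon$ term can indeed be routed through Lemma~\ref{lemma:mixedtermNoisePolynomial} rather than a crude Cauchy--Schwarz bound, since a naive bound would give $\sqrt{R^\top R}\sqrt{\varepsilon^\top \varepsilon}$, which is far too large. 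These are the two steps I would check most carefully; the rest is straightforward expansion.
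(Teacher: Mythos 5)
Your proposal is correct and follows essentially the same route as the paper's proof: both use that $\theta_k\mathbf{1}+T_{i:j}$ lies in the column space of $X_{i:j}$ so that $Y_{i:j}-\hat{\theta}_{i:j}=(I-H_{i:j})(W_{i:j}+\varepsilon_{i:j})$, expand the cost difference into the projected-noise term plus $R$-cross terms, and control the $\varepsilon$-versus-$R$ term via Lemma~\ref{lemma:mixedtermNoisePolynomial} and the remaining terms via \eqref{eq:boundRemainderTrendT}, \eqref{eq:boundNoise}, and \eqref{eq:remainderSeasonalyT}. The only difference is cosmetic: you bound $Z^\top H_{i:j}Z$ by $2W^\top H W+2\varepsilon^\top H\varepsilon$ instead of expanding its cross term with Cauchy--Schwarz as the paper does, which merely changes the value of the constant $\CcostEst$.
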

\begin{proof}
Note that $y_l = \theta_k + t_l + w_l + \varepsilon_l,\ l = i, \ldots,j$. We further remark that $\theta_k + T$ is a polynomial of degree $\rQ$, so $\hat{\theta}_{i:j} = \theta_k + T_{i:j} + H_{i:j} W_{i:j} + H_{i:j} \varepsilon_{i:j}$. For the last equality before the inequalities note that $H_{i:j}$ is a projection matrix, i.e.~$H_{i:j}^\top=H_{i:j}$, $H_{i:j}^2 = H_{i:j}$. We obtain the first inequality from Lemma~\ref{lemma:mixedtermNoisePolynomial}, the Cauchy-Schwarz inequality, and $x^\top H_{i:j} x \leq x^\top x$ for any vector $x$ since the largest eigenvalue of $H_{i:j}$ is $1$ as $H_{i:j}$ is a projection matrix. The second inequality follows from $\sum_{l = i}^j r_l^2 \leq \sum_{l = 1}^{\mn} r_l^2$ and $\sum_{l = i}^j w_l^2 \leq \sum_{l = 1}^{\mn} w_l^2$. The final inequality is a direct consequence of \eqref{eq:boundRemainderTrendT},~\eqref{eq:boundNoise},~
and~\eqref{eq:remainderSeasonalyT}. In total, we obtain
\begin{align*}
& C_{i, j}(\varepsilon + R + W) - C_{i, j}(Y - \hat{\theta}_{i:j})\\
= & \sum_{l = i}^{j} \Big(\varepsilon_l + r_l + w_l)^2 - \sum_{l = i}^{j} (\varepsilon_l + w_l - \big(H_{i:j} \varepsilon_{i:j}\big)_l - \big(H_{i:j} W_{i:j}\big)_l\Big)^2\\
= & 2 \sum_{l = i}^{j} \varepsilon_l r_l + 2 \sum_{l = i}^{j} w_l r_l + \sum_{l = i}^j r_l^2 + \varepsilon_{i:j}^\top H_{i:j} \varepsilon_{i:j} + 2 \varepsilon_{i:j}^\top H_{i:j} W_{i:j} + W_{i:j}^\top H_{i:j} W_{i:j}\\
\leq & 2 \sqrt{\varepsilon_{i:j}^\top H_{i:j} \varepsilon_{i:j}}\sqrt{\sum_{l = i}^j r_l^2} + \sqrt{\sum_{l = i}^j r_l^2}\sqrt{\sum_{l = i}^{j} w_l^2} + \sum_{l = i}^j r_l^2\\
& + \varepsilon_{i:j}^\top H_{i:j} \varepsilon_{i:j} + 2 \sqrt{\varepsilon_{i:j}^\top H_{i:j} \varepsilon_{i:j}} \sqrt{\sum_{l = i}^{j} w_l^2} + \sum_{l = i}^{j} w_l^2\\
\leq & 2 \sqrt{\varepsilon_{i:j}^\top H_{i:j} \varepsilon_{i:j}}\sqrt{\sum_{l = 1}^{\mn} r_l^2} + \sqrt{\sum_{l = 1}^{\mn} r_l^2}\sqrt{\sum_{l = 1}^{\mn} w_l^2} + \sum_{l = 1}^{\mn} r_l^2\\
& + \varepsilon_{i:j}^\top H_{i:j} \varepsilon_{i:j} + 2 \sqrt{\varepsilon_{i:j}^\top H_{i:j} \varepsilon_{i:j}} \sqrt{\sum_{l = 1}^{\mn} w_l^2} + \sum_{l = 1}^{\mn} w_l^2\\
\leq & \big(2\sqrt{\CnoiseLin \Cr} + \sqrt{\Cr \CremainderSeasonal} + \Cr + \CnoiseLin + 2\sqrt{ \CnoiseLin \CremainderSeasonal} + \CremainderSeasonal \big) \sigma_0^2 \log(n) = \CcostEst \sigma_0^2 \log(n),
\end{align*}
where $\CcostEst \coloneq 2\sqrt{\CnoiseLin \Cr} + \sqrt{\Cr \CremainderSeasonal} + \Cr + \CnoiseLin + 2\sqrt{ \CnoiseLin \CremainderSeasonal} + \CremainderSeasonal$.
\end{proof}

Lemma~\ref{lemma:costExtraCp} bounds the increase in costs from having an extra estimated changepoint.
\begin{Lemma}\label{lemma:costExtraCp}
Suppose that our data follows Setting~\ref{setting:trend}. Suppose that \eqref{eq:boundNoise} hold. If $n$ is large enough, then for any $k \in \{0,\ldots,K\}$ and any $a,b,c$ such that $\tau_{k} < a \leq b \leq c \leq \tau_{k + 1}$, then there exits a constant $\CcostExtraCp > 0$ such that
\begin{equation*}
\begin{split}
& C_{a + 1,b}(Y - \hat{\theta}_{a+1:b}) + C_{b+1,c}(Y - \hat{\theta}_{b+1:c}) + \frac{1}{2}\lambda_{\operatorname{coll}}\, \hat{\sigma}_0^2\log(n)^{1 + \delta}\\
& - C_{a + 1,c}(Y - \hat{\theta}_{a+1:c}) \geq \CcostExtraCp\,  \sigma_0^2 \log(n)^{1 + \delta}.
\end{split}
\end{equation*}
\end{Lemma}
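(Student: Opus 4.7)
The plan is to rewrite each polynomial-fit cost as a projection residual. Since $(a,c]$ sits inside a single segment $(\tau_k, \tau_{k+1}]$ on which the anomaly equals the constant $\theta_k$, for every sub-interval $\{i,\ldots,j\} \subseteq (a,c]$ the observations satisfy $Y_{i:j} = \theta_k \mathbf{1} + T_{i:j} + W_{i:j} + \varepsilon_{i:j}$. Because $\theta_k \mathbf{1} + T_{i:j}$ lies in the column space of $X_{i:j}$, the least-squares fit satisfies $\hat\theta_{i:j} = \theta_k \mathbf{1} + T_{i:j} + H_{i:j}(W_{i:j} + \varepsilon_{i:j})$ with $H_{i:j} \coloneq X_{i:j}(X_{i:j}^\top X_{i:j})^{-1} X_{i:j}^\top$. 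Consequently,
\begin{equation*}
C_{i,j}(Y - \hat\theta_{i:j}) = \|W_{i:j} + \varepsilon_{i:j}\|^2 - \|H_{i:j}(W_{i:j} + \varepsilon_{i:j})\|^2.
\end{equation*}

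Applying this identity to the three cost terms and using that the squared norm over $(a,c]$ splits as the sum of the squared norms over $(a,b]$ and $(b,c]$, the unprojected contributions cancel, so the cost deficit telescopes to
\begin{equation*}
\begin{split}
& C_{a+1,c}(Y - \hat\theta_{a+1:c}) - C_{a+1,b}(Y - \hat\theta_{a+1:b}) - C_{b+1,c}(Y - \hat\theta_{b+1:c})\\
= \ & \|H_{a+1:b}(W+\varepsilon)_{a+1:b}\|^2 + \|H_{b+1:c}(W+\varepsilon)_{b+1:c}\|^2 - \|H_{a+1:c}(W+\varepsilon)_{a+1:c}\|^2.
\end{split}
\end{equation*}
The range of $H_{a+1:c}$, viewed as polynomials of degree $\leq \rQ$ restricted to the concatenated interval, is contained in the range of the block-diagonal operator $\operatorname{diag}(H_{a+1:b}, H_{b+1:c})$, which allows a separate polynomial on each piece; hence the third (subtracted) norm is bounded above by the sum of the first two. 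The deficit is therefore non-negative and upper-bounded by those two projection norms.

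To bound those, I would apply $\|u+v\|^2 \leq 2\|u\|^2 + 2\|v\|^2$ to each, together with the contraction $\|H_{i:j} W_{i:j}\|^2 \leq \|W_{i:j}\|^2 \leq \|W\|^2 \leq \CremainderSeasonal \sigma_0^2 \log(n)$ from \eqref{eq:remainderSeasonalyT} and the uniform quadratic-noise bound $\varepsilon_{i:j}^\top H_{i:j}\varepsilon_{i:j} \leq \CnoiseLin \sigma_0^2 \log(n)$ from \eqref{eq:boundNoise}. This yields a deficit of order $\sigma_0^2 \log(n)$, uniformly in $a,b,c$. Using $\hat\sigma_0 \geq \CvarianceEstimateLower \sigma_0$ from Setting~\ref{setting:trend}, the penalty $\tfrac12\lambda_{\operatorname{coll}} \hat\sigma_0^2 \log(n)^{1+\delta}$ is at least $\tfrac12\lambda_{\operatorname{coll}} \CvarianceEstimateLower^2 \sigma_0^2 \log(n)^{1+\delta}$; since $\log(n)^\delta \to \infty$, for all $n$ large enough the penalty dominates the deficit by at least half its own magnitude, proving the lemma with $\CcostExtraCp \coloneq \tfrac14\lambda_{\operatorname{coll}} \CvarianceEstimateLower^2$. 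I do not foresee a substantive obstacle; the only delicate point is the range-inclusion argument used to bound the third projection norm, which becomes immediate once the block-diagonal structure of the two-piece projection is made explicit.
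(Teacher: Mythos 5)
Your proposal is correct and follows essentially the same route as the paper's proof: both exploit that on a single non-anomalous segment the fitted value is the true polynomial plus the projection of $W+\varepsilon$, bound the resulting cost fluctuation by $O(\sigma_0^2\log n)$ via \eqref{eq:boundNoise} and \eqref{eq:remainderSeasonalyT}, and let the $\tfrac12\lambda_{\operatorname{coll}}\hat\sigma_0^2\log(n)^{1+\delta}$ penalty (with $\hat\sigma_0 \geq \CvarianceEstimateLower\sigma_0$) dominate for large $n$. Your exact Pythagorean cancellation and the nested-projection remark are just a slightly tidier bookkeeping of what the paper does with explicit cross terms.
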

\begin{proof}
Similarly to the proof of Lemma~\ref{lemma:costReductionEst}, we have that
\begin{align*}
& C_{a + 1,b}(Y - \hat{\theta}_{a+1:b}) + C_{b+1,c}(Y - \hat{\theta}_{b+1:c}) + \frac{1}{2}\lambda_{\operatorname{coll}}\, \hat{\sigma}_0^2\log(n)^{1 + \delta} - C_{a + 1,c}(Y - \hat{\theta}_{a+1:c})\\
= & \sum_{l = a + 1}^{b} \big(y_l - (H_{a+1:b} y_{a+1:b})_l\big)^2 + \sum_{l = b + 1}^{c} \big(y_l - (H_{b+1:c} y_{b+1:c})_l\big)^2 + \frac{1}{2}\lambda_{\operatorname{coll}}\, \hat{\sigma}_0^2\log(n)^{1 + \delta}\\
& - \sum_{l = a + 1}^{c} \big(y_l - (H_{a+1:c} y_{a+1:c})_l\big)^2\\
= & \sum_{l = a + 1}^{b} \big(\varepsilon_l + w_l - (H_{a+1:b} \varepsilon_{a+1:b})_l - (H_{a+1:b} W_{a+1:b})_l\big)^2\\
& + \sum_{l = b + 1}^{c} \big(\varepsilon_l + w_l - (H_{b+1:c} \varepsilon_{b+1:c})_l - (H_{b+1:c} W_{b+1:c})_l\big)^2 + \frac{1}{2}\lambda_{\operatorname{coll}}\, \hat{\sigma}_0^2\log(n)^{1 + \delta}\\
& - \sum_{l = a + 1}^{c} \big(\varepsilon_l + w_l - (H_{a+1:c} \varepsilon_{a+1:c})_l - (H_{a+1:c} W_{a+1:c})_l\big)^2\\
\geq & \frac{1}{2} \lambda_{\operatorname{coll}}\, \CvarianceEstimateLower  \, \sigma_0^2\log(n)^{1 + \delta} - 3 \big( \CnoiseLin + 2\sqrt{ \CnoiseLin \CremainderSeasonal} + \CremainderSeasonal \big) \sigma_0^2 \log(n)\\
\geq & \CcostExtraCp \sigma_0^2\log(n)^{1 + \delta},
\end{align*}
where the final inequality holds for large enough $n$. 
\end{proof}

Lemma~\ref{lemma:meanMissCp} lower bounds the quadratic difference between a step function and any polynomial of degree $Q$.
\begin{Lemma}\label{lemma:meanMissCp}
For any $Q \in \mathbb{N}$ there exists a constant $\Cmiss > 0$ only depending on $Q$ such that for any polynomial $f$ of degree $Q$, any step function with changepoint $\tau$ and function values $\theta_0$ and $\theta_1$, and any distance $\kappa_0 \geq \max\{\kappa_1, 4^{Q + 1}\}$, we have that
\begin{equation*}
\sum_{l = \tau - \kappa_0 + 1}^{\tau} \left( f\left(\frac{l}{n}\right) - \theta_{0} \right)^2 + \sum_{l = \tau + 1}^{\tau + \kappa_1} \left( f\left(\frac{l}{n}\right) - \theta_{1}\right)^2 \geq \Cmiss \kappa_1 (\theta_1 - \theta_0)^2.
\end{equation*}
The statement also holds with the roles of $\kappa_0$ and $\kappa_1$ reversed.
\end{Lemma}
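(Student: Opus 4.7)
The plan is to rewrite the claim in terms of a single polynomial and reduce it to a discrete polynomial extrapolation inequality on the space of polynomials of degree at most $Q$. Let $d \coloneq \theta_1 - \theta_0$, $I_0 \coloneq \{\tau - \kappa_0 + 1, \ldots, \tau\}$, $I_1 \coloneq \{\tau + 1, \ldots, \tau + \kappa_1\}$, and set $p(l) \coloneq f(l/n) - \theta_0$, which is a polynomial of degree at most $Q$ in the integer variable $l$. With $A \coloneq \sum_{l \in I_0} p(l)^2$ and $B \coloneq \sum_{l \in I_1} (p(l) - d)^2$, the claim reads $A + B \ge \Cmiss \kappa_1 d^2$.

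The main technical step, and the principal obstacle, is the following extrapolation inequality: there exists a constant $\tilde C_Q > 0$, depending only on $Q$, such that for every polynomial $p$ of degree at most $Q$,
\begin{equation*}
\sum_{l \in I_1} p(l)^2 \;\le\; \tilde C_Q \sum_{l \in I_0} p(l)^2
\end{equation*}
whenever $\kappa_0 \ge \max\{\kappa_1, 4^{Q+1}\}$. The strategy is to affinely rescale the enveloping interval of $I_0 \cup I_1$, which has length at most $2\kappa_0$, to the reference interval $[-1, 1]$. Since $\kappa_1 \le \kappa_0$, the image of $I_1$ lies in a fixed sub-interval of comparable size, while the image of $I_0$ has length at least $1$. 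The Markov--Chebyshev inequality for polynomials of degree at most $Q$ then bounds the continuous $L^2$ norm on the image of $I_1$ by the continuous $L^2$ norm on the image of $I_0$ with a constant depending only on $Q$. Finally, the condition $\kappa_0 \ge 4^{Q+1}$ provides enough well-spaced evaluation points so that the discrete $\ell^2$ sum and the continuous $L^2$ norm are equivalent on polynomials of degree at most $Q$, converting the continuous bound into its discrete counterpart. A naive Lagrange argument at $Q+1$ nodes inside $I_0$ would introduce a spurious factor $\kappa_1$; removing it is exactly what the Markov--Chebyshev input achieves.

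Given the extrapolation inequality, apply it to $p$ to get $A' \coloneq \sum_{l \in I_1} p(l)^2 \le \tilde C_Q A$. Expanding $B$ and bounding the cross term by Cauchy--Schwarz gives
\begin{equation*}
B \;=\; A' - 2d \sum_{l \in I_1} p(l) + \kappa_1 d^2 \;\ge\; A' - 2d \sqrt{\kappa_1 A'} + \kappa_1 d^2,
\end{equation*}
and the elementary inequality $2d\sqrt{\kappa_1 A'} \le \tfrac14 \kappa_1 d^2 + 4 A'$ (from $2xy \le \epsilon x^2 + y^2/\epsilon$ with $\epsilon = 1/4$) simplifies this to $B \ge \tfrac34 \kappa_1 d^2 - 3 A' \ge \tfrac34 \kappa_1 d^2 - 3 \tilde C_Q A$.

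A short two-case argument finishes the proof. If $A \ge \kappa_1 d^2/(6 \tilde C_Q)$, then $A + B \ge A \ge \kappa_1 d^2/(6 \tilde C_Q)$. Otherwise $3 \tilde C_Q A < \tfrac12 \kappa_1 d^2$, so the bound on $B$ gives $A + B \ge B \ge \tfrac34 \kappa_1 d^2 - \tfrac12 \kappa_1 d^2 = \tfrac14 \kappa_1 d^2$. Hence $\Cmiss \coloneq \min\{1/(6 \tilde C_Q),\, 1/4\}$ works and depends only on $Q$. The statement with the roles of $\kappa_0$ and $\kappa_1$ reversed follows from the identical argument after reflecting the time axis about $\tau + 1/2$.
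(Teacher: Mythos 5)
Your proposal is correct in its overall architecture, and it takes a genuinely different route from the paper. The paper argues by contradiction with nested boxes: either $f$ stays within a narrow band around $\theta_0$ on a cascade of well-separated sub-intervals of the left window, in which case repeated applications of the mean value theorem force the $(Q+1)$-st derivative of $f$ to be nonzero, contradicting its degree, or $f$ leaves the band on a sub-interval of length $\kappa_0 4^{-Q}$, which already yields cost of order $\kappa_0(\theta_1-\theta_0)^2$ up to a $Q$-dependent constant; the threshold $4^{Q+1}$ is precisely what guarantees each such sub-interval contains a grid point. You instead isolate a discrete extrapolation inequality for degree-$Q$ polynomials on equispaced integers and finish with a short Cauchy--Schwarz and two-case computation; that second part of your argument is airtight, and the resulting constant $\Cmiss=\min\{1/(6\tilde C_Q),1/4\}$ is clean and handles all $\kappa_1\ge 1$ uniformly, without the paper's separate treatment of $\kappa_1<4^{Q+1}$.

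The caveat is that the extrapolation inequality carries essentially all of the difficulty and you only sketch it. It is true under the stated hypotheses, but two points in your sketch need care: (a) a discrete--continuous $L^2$ equivalence is not available on $I_1$, which may consist of a single point; you should instead bound $\sum_{l\in I_1}p(l)^2\le\kappa_1\sup p^2$ over the rescaled, bounded extrapolation range and only ever compare discrete and continuous norms on $I_0$; (b) the constant in the discrete Nikolskii/Marcinkiewicz--Zygmund step on $I_0$ must be uniform over all $\kappa_0\ge 4^{Q+1}$ --- a crude Riemann-sum-plus-Markov estimate needs $\kappa_0$ of order $Q^4$ with unspecified absolute constants, so you must either verify this fits under $4^{Q+1}$ for small $Q$ or use a more robust argument, e.g.\ averaging Lagrange interpolation over $\lfloor \kappa_0/(Q+1)\rfloor$ disjoint, well-separated $(Q+1)$-point node configurations inside $I_0$, which gives $\sup_{[\tau-\kappa_0,\tau+\kappa_0]}p^2\le C_Q\,\kappa_0^{-1}\sum_{l\in I_0}p(l)^2$ and requires only $\kappa_0\ge 2(Q+1)$; the removal of the spurious $\kappa_1$ factor then comes from this average-controls-sup step together with $\kappa_1\le\kappa_0$, rather than from the Chebyshev extrapolation itself. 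With that lemma in place your proof is complete and arguably more modular; what the paper's bespoke construction buys is a fully self-contained elementary argument and a direct explanation of where the $4^{Q+1}$ threshold comes from.
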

\begin{proof}
Without loss of generality we assume that $\theta_0 = 0$ and $\theta_{1} > 0$, as subtracting a constant or multiplying everything inside the squares with $-1$ do not change the result. Furthermore, without loss of generality assume that $\kappa_0$ and $\kappa_1$ are as given in the lemma. Otherwise consider the observations in reverse order which is not changing the costs.

We now distinguish the cases $\kappa_1 < 4^{Q + 1}$ and $\kappa_1 \geq 4^{Q + 1}$. If $\kappa_1 < 4^{Q + 1}$, then we distinguish the cases whether
\begin{equation}\label{eq:conditionOnFatTauPlus1}
f\left(\frac{\tau + 1}{n}\right) - \theta_0 \geq \frac{\theta_{1} - \theta_{0}}{2}
\end{equation}
holds or not. If it does not hold, then
\begin{align}
\sum_{l = \tau + 1}^{\tau + \kappa_1} \left( f\left(\frac{l}{n}\right) - \theta_{1}\right)^2
\geq \left( f\left(\frac{\tau + 1}{n}\right) - \theta_{1}\right)^2
\geq \frac{1}{4} (\theta_{1} - \theta_{0})^2.
\end{align}
So the statement holds with $\Cmiss = 4^{-Q - 1}$.

If $\kappa_1 \geq 4^{Q + 1}$, then we distinguish the cases whether
\begin{equation}\label{eq:conditionOnFatTauPlus0p5}
f\left(\frac{\tau + 0.5}{n}\right) - \theta_0 \geq \frac{\theta_{1} - \theta_{0}}{2}
\end{equation}
holds or not. If it does not hold, then we continue with proof below but we consider the observations after $\tau$ instead of the ones before.

The remaining proof uses the following idea. We consider a large number of small boxes around the step function. If $f$ is fully outside of any of these boxes, then the statement holds with $\Cmiss$ resembling the size of the box. If $f$ is inside all boxes for at least one time point, then by using repeatedly the mean value theorem we will show that the $(Q + 1)$-th derivative of $f$ must be non-zero and hence $f$ cannot be a polynomial of degree $Q$. 

We start by constructing the boxes. For any $\iota = (i_1,\ldots,i_m) \in \{0,1\}^m$, $m \in \N$, let $I_{\iota} \coloneq \left( L_{\iota}, R_{\iota} \right]$ with
\begin{equation*}
L_{\iota} = \tau - \kappa_0 + \sum_{l = 1}^{m} 3 i_l \frac{\kappa_0}{4^l} \text{ and } R_\iota = L_\iota + \frac{\kappa_0}{4^m}.
\end{equation*}
For notational convenience we extend the notation to the empty set and define $I_{\emptyset} \coloneq (\tau - \kappa_0, \tau]$. Hence, $I_{(i_1,\ldots,i_m)}$ is of length $\kappa_0 4^{-m}$ and $I_{(i_1,\ldots,i_{m - 1})}$ is the union of $I_{(i_1,\ldots,i_{m - 1}, 0)}$, $I_{(i_1,\ldots,i_{m - 1}, 1)}$, and the interval in between, which is of length $2\kappa_0 4^{-m}$. 

Let
\begin{equation*}
b \coloneq \sum_{l = 2}^{Q+2} 2^l.
\end{equation*}
If there exists an $\iota \in \{0,1\}^Q$ such that for all $x \in I_\iota$ we have that 
\begin{equation*}
f \not\in \left[ \theta_0 - \frac{\theta_{1} - \theta_{0}}{b}, \theta_0 + \frac{\theta_{1} - \theta_{0}}{b} \right],
\end{equation*}
then
\begin{equation*}
\sum_{l = \lceil L_{\iota} \rceil}^{\lfloor R_{\iota} \rfloor } \left( f\left(\frac{l}{n}\right) - \theta_{0} \right)^2 \geq \big(\lfloor R_{\iota} \rfloor - \lceil L_{\iota} \rceil \big) \frac{(\theta_{1} - \theta_{0})^2}{b^2} = \Cmiss \kappa_0 (\theta_1 - \theta_0)^2,
\end{equation*}
with $\Cmiss > 0$ being a constant only depending on $Q$, since $\lfloor R_{\iota} \rfloor - \lceil L_{\iota} \rceil \geq \frac{\kappa_0}{4^Q} - 1$ and $\kappa_0 \geq 4^{Q + 1}$ ensuring that $I_\iota$ contains at least one observation.

Otherwise, we show by induction that for every $m \in \{Q,\ldots,0\}$ and for every $\iota \in \{0,1\}^m$ there exists an $x_{\iota} \in I_{\iota}$ for which
\begin{equation*}
f^{(Q - m)}\big(x_{\iota}\big) \in \left[ -  \frac{\theta_{1} - \theta_{0}}{b} \prod_{l = 1}^{Q - m} \frac{4^{Q - l + 1}}{\kappa_0} , \frac{\theta_{1} - \theta_{0}}{b} \prod_{l = 1}^{Q - m} \frac{4^{Q - l + 1}}{\kappa_0}  \right].
\end{equation*}
Furthermore, for every $m \in \{Q,\ldots, -1\}$ there exists a $\xi_m \in (x_{\mathbf{1}_{m + 1}}, \tau + 1]$, where $\mathbf{1}_m = (1,\ldots,1) \in \{0,1\}^m$ and $\xi_Q = \tau + 0.5$ or $\xi_Q = \tau + 1$, depending on whether \eqref{eq:conditionOnFatTauPlus1} or \eqref{eq:conditionOnFatTauPlus0p5} holds, for which
\begin{equation*}
 f^{(Q - m)}\big(\xi_m\big) \geq \left(1 - \sum_{l = 1}^{Q - m} \frac{2^l}{b}\right) \frac{\theta_{1} - \theta_{0}}{2^{Q - m + 1}} \prod_{l = 1}^{Q - m} \frac{4^{Q - l + 1}}{\kappa_0} .
\end{equation*}

For the first statement, the induction start for $m = Q$ is given by construction of the boxes and that $f$ is inside all boxes for at least one point. Suppose that the statement is true for an $m$. Then, by the mean value theorem we have that there exists a point 
\begin{equation*}
x_{(i_1,\ldots,i_{m - 1})} \in \left[x_{(i_1,\ldots,i_{m - 1}, 0)}, x_{(i_1,\ldots,i_{m - 1}, 1)}\right] \subseteq \left[L_{(i_1,\ldots,i_{m - 1})}, R_{(i_1,\ldots,i_{m - 1})}\right]
\end{equation*}
for which
\begin{align*}
& f^{(Q - m + 1)}\big(x_{(i_1,\ldots,i_{m - 1})}\big)
= \frac{f^{(Q - m)}\big(x_{(i_1,\ldots,i_{m - 1}, 1)}\big) - f^{(Q - m)}\big(x_{(i_1,\ldots,i_{m - 1}, 0)}\big)}{x_{(i_1,\ldots,i_{m - 1}, 1)} - x_{(i_1,\ldots,i_{m - 1}, 0)}}\\
\leq & \frac{2  \frac{\theta_{1} - \theta_{0}}{b} \prod_{l = 1}^{Q - m} \frac{4^{Q - l + 1}}{\kappa_0} }{2\frac{\kappa_0}{4^m}}
= \frac{\theta_{1} - \theta_{0}}{b}\prod_{l = 1}^{Q - m + 1}\frac{4^{Q - l + 1}}{\kappa_0},
\end{align*}
since by construction the distance between the two points is at least $2\kappa_0 4^{-m}$. The lower bound follows from the same arguments but with a minus in front. Hence, the first statement holds for $m - 1$. 

For the second statement, the induction start is given by $f(\xi_Q) \geq \frac{\theta_{1} - \theta_{0}}{2}$ since \eqref{eq:conditionOnFatTauPlus1} or \eqref{eq:conditionOnFatTauPlus0p5} holds. Suppose that the statement holds for an $m$. Then, by the mean value theorem we have that there exists a point 
\begin{equation*}
\xi_{m - 1} \in \left(x_{\mathbf{1}_{m}}, \xi_{m}\right) \subset \left(x_{\mathbf{1}_{m}}, \tau + 1\right]
\end{equation*}
for which
\begin{align*}
&  f^{(Q - m + 1)}\big(\xi_{m - 1}\big) 
= \frac{f^{(Q - m)}\big(\xi_{m}\big) - f^{(Q - m)}\big(x_{\mathbf{1}_{m}}\big)}{\xi_{m} - x_{\mathbf{1}_{m}}}\\
\geq & \frac{\left(1 - \sum_{l = 1}^{Q - m} \frac{2^l}{b}\right) \frac{\theta_{1} - \theta_{0}}{2^{Q - m}} \prod_{l = 1}^{Q - m} \frac{4^{Q - l + 1}}{\kappa_0} -  \frac{\theta_{1} - \theta_{0}}{b} \prod_{l = 1}^{Q - m} \frac{4^{Q - l + 1}}{\kappa_0}}{2\frac{\kappa_0}{4^m}}\\
\geq & \left(1 - \sum_{l = 1}^{Q - m + 1} \frac{2^l}{b}\right)\frac{\theta_{1} - \theta_{0}}{2^{Q - m + 1}}\prod_{l = 1}^{Q - m + 1} \frac{4^{Q - l + 1}}{\kappa_0} ,
\end{align*}
since by construction the distance between the two points is at most $\kappa_0 4^{-m} + 1 \leq 2 \kappa_0 4^{-m}$. This completes the induction step.

All in all, we have shown that there exists a $\xi_{-1} \in (\tau - \kappa_0, \tau + 1]$ for which, by the definition of $b$,
 \begin{equation*}
 f^{(Q +1)}\big(\xi_{-1}\big) \geq \frac{1}{2} \frac{\theta_{1} - \theta_{0}}{2^{Q + 2}} \prod_{l = 1}^{Q + 1} \frac{4^{Q - l + 1}}{\kappa_0} > 0.
\end{equation*}
Thus, the $(Q + 1)$-th derivative is non-zero and hence $f$ cannot be a polynomial of degree $Q$. This is a contradiction. Hence $f$ must be in the boxes and the statement holds as shown before.
\end{proof}

Lemma~\ref{lemma:costMissingCp} bounds the increase in costs from missing a changepoint.
\begin{Lemma}\label{lemma:costMissingCp}
Suppose that our data follows Setting~\ref{setting:trend}. Let $\hat{T} = X \hat{\beta}$ be any polynomial of degree $\rQ$. Suppose that \eqref{eq:boundRemainderTrendT}~and~\eqref{eq:boundNoise}  hold.
If $n$ is large enough, then there exists a constant $\CextraCP > 0$ such that for any $k \in \{1,\ldots,2K\}$, 
\begin{equation*}
\begin{split}
& C_{\tau_k - \kappa_k + 1,\tau_k + \kappa_k}(Y - \hat{T}) - C_{\tau_k - \kappa_k + 1,\tau_k}(\varepsilon + R + W) - C_{\tau_k + 1,\tau_k + \kappa_k}(\varepsilon + R + W)\\
& - \frac{1}{2}\lambda_{\operatorname{coll}}\, \hat{\sigma}_0^2\log(n)^{1 + \delta}
\geq  \CextraCP \sigma_0^2 \log(n)^{1 + \delta}.
\end{split}
\end{equation*}
\end{Lemma}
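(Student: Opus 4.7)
The plan is to reduce the cost difference to a single dominant quantity $\sum_l p_l^2$ that measures the $L^2$-distance between a step function (with jump $\Delta_k$ at $\tau_k$) and a polynomial of degree $\rQ$ on the symmetric window of half-width $\kappa_k$. On the interval $I_1 \cup I_2 = \{\tau_k - \kappa_k + 1, \ldots, \tau_k + \kappa_k\}$, the data satisfies $y_l - \hat{t}_l = \theta_l + (t_l - \hat{t}_l) + w_l + \varepsilon_l$, where $\theta_l$ equals $\theta_{k-1}$ on $I_1$ and $\theta_k$ on $I_2$. Setting $p_l \coloneq \theta_l + (t_l - \hat{t}_l)$, a step function plus a polynomial of degree $\leq \rQ$, and noting that the subtracted costs involve $(\varepsilon_l + r_l + w_l)^2$, a direct expansion gives
\begin{align*}
& C_{\tau_k - \kappa_k + 1, \tau_k + \kappa_k}(Y - \hat{T}) - C_{\tau_k - \kappa_k + 1, \tau_k}(\varepsilon + R + W) - C_{\tau_k + 1, \tau_k + \kappa_k}(\varepsilon + R + W) \\
& = \sum_{l \in I_1 \cup I_2} p_l^2 - \sum_{l \in I_1 \cup I_2} r_l^2 + 2 \sum_{l \in I_1 \cup I_2} (p_l - r_l)(w_l + \varepsilon_l).
\end{align*}

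To lower bound $\sum_l p_l^2$, I apply Lemma~\ref{lemma:meanMissCp} with $f = -(t - \hat{t})$ (a polynomial of degree $\leq \rQ$), the step values $\theta_{k-1}, \theta_k$, and $\kappa_0 = \kappa_1 = \kappa_k$. The hypothesis $\kappa_k \geq 4^{\rQ+1}$ is built into the definition \eqref{eq:definitionKappak}, and the conclusion gives $\sum_l p_l^2 \geq \Cmiss\, \kappa_k \Delta_k^2$. Since $\Delta_k = \vert \mu_{\lceil k/2 \rceil}\vert$ and $\kappa_k \geq \Clambda \sigma_0^2 \log(n)^{1+\delta}/\mu_{\lceil k/2\rceil}^2$ when the first branch of the max is active, and $\kappa_k = 4^{\rQ+1}$ forces $\mu_{\lceil k/2\rceil}^2 \geq \Clambda \sigma_0^2 \log(n)^{1+\delta}/4^{\rQ+1}$ in the other branch, in either case $\kappa_k \Delta_k^2 \geq \Clambda \sigma_0^2 \log(n)^{1+\delta}$ (up to a constant depending only on $\rQ$), so $\sum_l p_l^2 \geq \tilde{C} \Clambda \sigma_0^2 \log(n)^{1+\delta}$.

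The remaining task is to dominate all other terms by a fraction of $\sum_l p_l^2$ plus lower-order $\log(n)$ contributions. The $r_l$-terms are handled directly from \eqref{eq:boundRemainderTrendT}, \eqref{eq:remainderSeasonalyT} and Lemma~\ref{lemma:mixedtermNoisePolynomial}, giving $|\sum r_l^2|, |\sum r_l w_l|, |\sum r_l \varepsilon_l|$ each of order $\sigma_0^2 \log(n)$. For the $p_l$-cross terms I split the sum over $I_1$ and $I_2$ separately, since on each piece $p_l$ is a polynomial of degree $\leq \rQ$, and invoke Lemma~\ref{lemma:mixedtermNoisePolynomial} and Cauchy--Schwarz together with \eqref{eq:boundNoise} and \eqref{eq:remainderSeasonalyT} to obtain $\vert \sum_l p_l(w_l + \varepsilon_l)\vert \leq C \sqrt{\sum_l p_l^2} \sqrt{\sigma_0^2 \log(n)}$. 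The weighted AM--GM inequality $2ab \leq \tfrac{1}{2} a^2 + 2 b^2$ then absorbs this into $\tfrac{1}{2} \sum_l p_l^2$ at the price of an additive $\sigma_0^2 \log(n)$ term. Combining and bounding $\hat{\sigma}_0^2 \leq \CvarianceEstimate^2 \sigma_0^2$, the net lower bound on the LHS becomes $\big(\tfrac{1}{2} \tilde{C}\Clambda - \tfrac{1}{2}\lambda_{\operatorname{coll}} \CvarianceEstimate^2\big) \sigma_0^2 \log(n)^{1+\delta} - O(\sigma_0^2 \log(n))$. Choosing $\Clambda$ sufficiently large relative to $\lambda_{\operatorname{coll}} \CvarianceEstimate^2 / \tilde{C}$ yields the claim with, e.g., $\CextraCP = \tfrac{1}{4} \tilde{C} \Clambda$ for all $n$ large enough; this is precisely the constraint on $\Clambda$ alluded to after \eqref{eq:definitionKappak}. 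The main technical obstacle is the application of Lemma~\ref{lemma:meanMissCp}, whose conclusion is essential for converting the structural mismatch between a polynomial fit and the jump at $\tau_k$ into a quantitative $\log(n)^{1+\delta}$-order gain; the cross-term accounting is standard once that lower bound is secured.
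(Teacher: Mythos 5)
Your proposal is correct and follows essentially the same route as the paper's proof: expand the cost difference so that the dominant term is the squared distance between the step-plus-polynomial discrepancy and zero on the window $(\tau_k-\kappa_k,\tau_k+\kappa_k]$, lower bound it by $\Cmiss\,\kappa_k\Delta_k^2$ via Lemma~\ref{lemma:meanMissCp}, control the noise and remainder cross terms with Lemma~\ref{lemma:mixedtermNoisePolynomial}, Cauchy--Schwarz, \eqref{eq:boundRemainderTrendT}, \eqref{eq:boundNoise} and \eqref{eq:remainderSeasonalyT}, absorb them into half of the main term, and conclude using $\kappa_k\Delta_k^2\geq \Clambda\sigma_0^2\log(n)^{1+\delta}$ with $\Clambda$ chosen large relative to $\lambda_{\operatorname{coll}}$ and the variance-estimate bound. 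The only cosmetic difference is your two-branch discussion of the maximum in \eqref{eq:definitionKappak}, which is unnecessary since $\kappa_k$ always dominates $\Clambda\sigma_0^2\log(n)^{1+\delta}/\mu_k^2$.
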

\begin{proof}
Similarly to the proof of Lemma~\ref{lemma:costReductionEst}, we obtain for the second and third term that
\begin{align*}
& C_{\tau_k - \kappa_k + 1,\tau_k}(\varepsilon + R + W) + C_{\tau_k + 1,\tau_k + \kappa_k}(\varepsilon + R + W)\\
= & \sum_{l = \tau_k - \kappa_k + 1}^{\tau_k} \left( \varepsilon_l + r_l + w_l \right)^2 + \sum_{l = \tau_k + 1}^{\tau_k + \kappa_k} \left( \varepsilon_l + r_l + w_l \right)^2\\
 \leq & \sum_{l = \tau_k - \kappa_k + 1}^{\tau_k + \kappa_k} (\varepsilon_l + w_l)^2 + 2\left( \Cr + 2\sqrt{\CnoiseLin \Cr} + 2\sqrt{\CremainderSeasonal \Cr} \right)\sigma_0^2 \log(n).
\end{align*}
For the first term, we have that
\begin{align*}
& C_{\tau_k - \kappa_k + 1,\tau_k + \kappa_k}(Y - \hat{T})
= \sum_{l = \tau_k - \kappa_k + 1}^{\tau_k + \kappa_k} \left( y_l - \hat{t}_l \right)^2\\
= & \sum_{l = \tau_k - \kappa_k + 1}^{\tau_k} \left(\varepsilon_l + \theta_{k - 1} + w_l + t_l - \hat{t}_l \right)^2 + \sum_{l = \tau_k + 1}^{\tau_k + \kappa_k} \left(\varepsilon_l + \theta_{k} + w_l + t_l - \hat{t}_l\right)^2\\
= & \sum_{l = \tau_k - \kappa_k + 1}^{\tau_k + \kappa_k} (\varepsilon_l + w_l)^2 + 2\sum_{l = \tau_k - \kappa_k + 1}^{\tau_k} (\varepsilon_l + w_l) \left( \theta_{k - 1} + t_l - \hat{t}_l \right) + 2\sum_{l = \tau_k + 1}^{\tau_k + \kappa_k} (\varepsilon_l + w_l) \left( \theta_{k} + t_l - \hat{t}_l \right)\\
&  + \sum_{l = \tau_k - \kappa_k + 1}^{\tau_k} \left( \theta_{k - 1} + t_l - \hat{t}_l \right)^2 + \sum_{l = \tau_k + 1}^{\tau_k + \kappa_k} \left( \theta_{k} + t_l - \hat{t}_l \right)^2.
\end{align*}

Note that $\theta_{k - 1} + T - \hat{T}$ and $\theta_{k} + T - \hat{T}$ are polynomials of degree $\rQ$. Hence, Lemma~\ref{lemma:mixedtermNoisePolynomial}, $\sqrt{a} + \sqrt{b} \leq \sqrt{2a + 2b}$ for any $a,b>0$, and \eqref{eq:boundNoise} yield
\begin{align*}
& 2\sum_{l = \tau_k - \kappa_k + 1}^{\tau_k} \varepsilon_l \left( \theta_{k - 1} + t_l - \hat{t}_l \right) + 2\sum_{l = \tau_k + 1}^{\tau_k + \kappa_k} \varepsilon_l \left( \theta_{k} + t_l - \hat{t}_l \right)\\
\geq & -2 \sqrt{\varepsilon_{\tau_k - \kappa_k + 1:\tau_k} H_{\tau_k - \kappa_k + 1:\tau_k} \varepsilon_{\tau_k - \kappa_k + 1:\tau_k} } \sqrt{\sum_{l = \tau_k - \kappa_k + 1}^{\tau_k} \left( \theta_{k - 1} + t_l - \hat{t}_l \right)^2}\\
& -2 \sqrt{\varepsilon_{ \tau_k + 1:\tau_k + \kappa_k} H_{ \tau_k + 1:\tau_k + \kappa_k} \varepsilon_{ \tau_k + 1:\tau_k + \kappa_k} } \sqrt{\sum_{l = \tau_k + 1}^{\tau_k + \kappa_k} \left( \theta_{k} + t_l - \hat{t}_l \right)^2}\\
\geq & -2 \sqrt{\CnoiseLin \sigma_0^2 \log(n)} \left( \sqrt{\sum_{l = \tau_k - \kappa_k + 1}^{\tau_k} \left( \theta_{k - 1} + t_l - \hat{t}_l \right)^2} + \sqrt{\sum_{l = \tau_k + 1}^{\tau_k + \kappa_k} \left( \theta_{k} + t_l - \hat{t}_l \right)^2} \right)\\
\geq & -2 \sqrt{\CnoiseLin \sigma_0^2 \log(n)}\sqrt{2\sum_{l = \tau_k - \kappa_k + 1}^{\tau_k} \left( \theta_{k - 1} + t_l - \hat{t}_l \right)^2 + 2\sum_{l = \tau_k + 1}^{\tau_k + \kappa_k} \left( \theta_{k} + t_l - \hat{t}_l \right)^2}.
\end{align*}

Similarly, the Cauchy-Schwarz inequality and \eqref{eq:remainderSeasonalyT} yield
\begin{align*}
& 2\sum_{l = \tau_k - \kappa_k + 1}^{\tau_k} w_l \left( \theta_{k - 1} + t_l - \hat{t}_l \right) + 2\sum_{l = \tau_k + 1}^{\tau_k + \kappa_k} w_l \left( \theta_{k} + t_l - \hat{t}_l \right)\\
\geq & -2 \sqrt{\CremainderSeasonal \sigma_0^2 \log(n)}\sqrt{2\sum_{l = \tau_k - \kappa_k + 1}^{\tau_k} \left( \theta_{k - 1} + t_l - \hat{t}_l \right)^2 + 2\sum_{l = \tau_k + 1}^{\tau_k + \kappa_k} \left( \theta_{k} + t_l - \hat{t}_l \right)^2}.
\end{align*}

Moreover, note that $\hat{T} - T$ is a polynomial of degree $\rQ$. We can apply Lemma~\ref{lemma:meanMissCp}, since $\kappa_k \geq 4^{Q + 1}$. Thus, Lemma~\ref{lemma:meanMissCp} yields
\begin{align*}
\sum_{l = \tau_k - \kappa_k + 1}^{\tau_k} \left( \theta_{k - 1} + t_l - \hat{t}_l \right)^2 + \sum_{l = \tau_k + 1}^{\tau_k + \kappa_k} \left( \theta_{k} + t_l - \hat{t}_l \right)^2
\geq  \Cmiss \kappa_k \Delta_k^2.
\end{align*}

Therefore, all in all,
\begin{align*}
& C_{\tau_k - \kappa_k + 1,\tau_k + \kappa_k}(Y - \hat{T}) - C_{\tau_k - \kappa_k + 1,\tau_k}(\varepsilon + R + W) - C_{\tau_k + 1,\tau_k + \kappa_k}(\varepsilon + R + W)\\
& - \frac{1}{2}\lambda_{\operatorname{coll}}\, \hat{\sigma}_0^2\log(n)^{1 + \delta}\\
\geq & \sum_{l = \tau_k - \kappa_k + 1}^{\tau_k} \left( \theta_{k - 1} + t_l - \hat{t}_l \right)^2 + \sum_{l = \tau_k + 1}^{\tau_k + \kappa_k} \left( \theta_{k} + t_l - \hat{t}_l \right)^2\\
& -2 \big(\sqrt{\CnoiseLin} + \sqrt{\CremainderSeasonal} \big) \sqrt{\sigma_0^2 \log(n)}\sqrt{2\sum_{l = \tau_k - \kappa_k + 1}^{\tau_k} \left( \theta_{k - 1} + t_l - \hat{t}_l \right)^2 + 2\sum_{l = \tau_k + 1}^{\tau_k + \kappa_k} \left( \theta_{k} + t_l - \hat{t}_l \right)^2}\\
& - 2\left( \Cr + 2\sqrt{\CnoiseLin \Cr} + 2\sqrt{\CremainderSeasonal \Cr} \right)\sigma_0^2 \log(n)  - \frac{1}{2}\lambda_{\operatorname{coll}}\, \hat{\sigma}_0^2\log(n)^{1 + \delta}\\
\geq & \frac{1}{2}\Cmiss \kappa_k \Delta_k^2 - 2\left( \Cr + 2\sqrt{\CnoiseLin \Cr} + 2\sqrt{\CremainderSeasonal \Cr} \right)\sigma_0^2 \log(n) - \frac{1}{2}\lambda_{\operatorname{coll}}\, \CvarianceEstimate \sigma_0^2\log(n)^{1 + \delta},\\
\geq & \CextraCP \sigma_0^2 \log(n)^{1 + \delta},
\end{align*}
where the last three inequalities follow from the definitions of $\kappa_k$, $\Clambda$ and $n$ being large enough. 
\end{proof}

We can now bound the costs of CAPA between a good and any other trend estimate.
\begin{Lemma}\label{lemma:differenceCosts}
Suppose that our data follows Setting~\ref{setting:trend}. Let $\hat{T} = X \hat{\beta}$ be any polynomial of degree $\rQ$. Let $\tilde{T}$ be a polynomial of degree $\rQ$ such that \eqref{eq:boundRemainderTrendT} holds for $R = (r_1,\ldots,r_{\mn}) \coloneq \tilde{T} - T$. Suppose that \eqref{eq:boundNoise} holds. If $n$ is large enough, then for any $\hat{\mathcal{A}} \not\in \mathcal{B}$, $\hat{\mu}$ we have that
\begin{equation*}
C\big(Y - \hat{T}, \hat{\mathcal{A}}, \hat{\mu}\big) - C\big(Y - \tilde{T}, \mathcal{A}, \mu\big) - \max_{Q \in \mathbf{Q}} Q\, \hat{\sigma}_0 \log(n) > 0.
\end{equation*}
\end{Lemma}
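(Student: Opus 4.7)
The plan is a segment-by-segment comparison of $C(Y-\hat{T},\hat{\mathcal{A}},\hat{\mu})$ and $C(Y-\tilde{T},\mathcal{A},\mu)$, combined with a case analysis on how $\hat{\mathcal{A}}$ fails to lie in $\mathcal{B}$. The key structural observation is that on every segment of the partition induced by $\hat{\mathcal{A}}$, the function $\hat{T}+\hat{\mu}_k$ (or just $\hat{T}$ on non-anomalous segments) is a polynomial of degree $\rQ$ in the regressors $X$, so its contribution to the cost is bounded below by that of the unconstrained least-squares polynomial fit $\hat{\theta}_{a+1:b}$. Since $Y-\tilde{T}=A+(W-R)+\varepsilon$ and subtracting $\mu_k$ on each true anomalous interval cancels the anomaly exactly,
\[
C(Y-\tilde{T},\mathcal{A},\mu)=\sum_{l=1}^{\mn}(\varepsilon_l+w_l-r_l)^2+K\,\lambda_{\operatorname{coll}}\,\hat{\sigma}_0^2\log(n),
\]
whereas the bad-side lower bound takes the form
\[
C(Y-\hat{T},\hat{\mathcal{A}},\hat{\mu})\geq\sum_{(a,b]}C_{a+1,b}(Y-\hat{\theta}_{a+1:b})+\hat{K}\,\lambda_{\operatorname{coll}}\,\hat{\sigma}_0^2\log(n).
\]

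Since $\hat{\mathcal{A}}\notin\mathcal{B}$, one of two structural defects must occur. (i) There exists a true anomaly boundary $\tau_k$ with no estimated boundary in $(\tau_k-\kappa_k,\tau_k+\kappa_k)$; this covers $\hat{K}<K$ and the boundary-mismatch sub-case of $\hat{K}=K$. Then the window $(\tau_k-\kappa_k,\tau_k+\kappa_k]$ lies inside a single $\hat{\mathcal{A}}$-segment $(a,b]$, the restriction of $\hat{\theta}_{a+1:b}$ to that window is still a polynomial of degree $\rQ$, and Lemma~\ref{lemma:costMissingCp}---invoked with $-R$ in place of $R$, valid because \eqref{eq:boundRemainderTrendT} is symmetric in the sign of $R$---yields a local excess of at least $\CextraCP\sigma_0^2\log(n)^{1+\delta}+\tfrac12\lambda_{\operatorname{coll}}\hat{\sigma}_0^2\log(n)^{1+\delta}$ above the corresponding sum of $(\varepsilon_l+w_l-r_l)^2$ on $(\tau_k-\kappa_k,\tau_k+\kappa_k]$. (ii) Every true boundary is matched within $\kappa_k$ but $\hat{K}>K$; by the strong separation of true anomalies in Assumption~\ref{assumption:segmentLength}, at least one entire estimated anomaly $(\hat{b}_j,\hat{e}_j]$ sits strictly inside a single true segment, and applying Lemma~\ref{lemma:costExtraCp} to the pair of flanking estimated segments inside that true segment yields an excess of $\CcostExtraCp\sigma_0^2\log(n)^{1+\delta}$.

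On every other $\hat{\mathcal{A}}$-segment, Lemma~\ref{lemma:costReductionEst} (once again with $-R$) gives $C_{a+1,b}(Y-\hat{\theta}_{a+1:b})\geq\sum_{l=a+1}^{b}(\varepsilon_l+w_l-r_l)^2-\CcostEst\sigma_0^2\log(n)$, and the same inequality extends to the two sub-intervals flanking the window inside the window-containing segment by splitting and exploiting the sub-optimality of $\hat{\theta}_{a+1:b}$ on each half. Summing over all segments, the cost difference is at least
\[
\CcostExtraCp\sigma_0^2\log(n)^{1+\delta}+(\hat{K}-K)\lambda_{\operatorname{coll}}\hat{\sigma}_0^2\log(n)-O\big((\hat{K}+K)\log(n)\big),
\]
which for large enough $n$ dominates $\max_{Q\in\mathbf{Q}}Q\,\hat{\sigma}_0\log(n)$. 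The main obstacle is precisely this accumulated slack: one must argue that it cannot overwhelm the leading $\log(n)^{1+\delta}$ excess. This is handled by absorbing the $\hat{K}\log(n)$ portion of the slack into the leftover CAPA penalty $(\hat{K}-K)\lambda_{\operatorname{coll}}\hat{\sigma}_0^2\log(n)$ (via a suitably large $\lambda_{\operatorname{coll}}$), noting that Cases (i) and (ii) produce multiplicities of the $\log(n)^{1+\delta}$ excess (one per missed or extra changepoint), and using the bounds on $K$ implied by Assumptions~\ref{assumption:segmentLength}~and~\ref{assumption:anomalyLengthCombined} to control the $K\log(n)$ portion.
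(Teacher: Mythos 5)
Your overall route is the same as the paper's: lower-bound the cost of each estimated segment by the unrestricted local polynomial fit $\hat{\theta}$, cut the comparison at the windows $\tau_k\pm\kappa_k$, and drive the argument with Lemmas~\ref{lemma:costReductionEst},~\ref{lemma:costExtraCp},~and~\ref{lemma:costMissingCp}. However, your case (ii) contains a genuine gap. The structural claim that, when every true boundary has an estimated boundary within its $\kappa_k$-window and $\hat{K}>K$, some \emph{entire} estimated anomaly must sit strictly inside a single true segment is false. Take $K=1$, $\hat{K}=2$ with $\hat{b}_1<b_1<\hat{e}_1$ both inside the window around $b_1$ and $\hat{b}_2<e_1<\hat{e}_2$ both inside the window around $e_1$: all true boundaries are matched, $\hat{\mathcal{A}}\notin\mathcal{B}$, yet both estimated anomalies straddle true changepoints, so there is no interval $\tau_k<a\le b\le c\le\tau_{k+1}$ to which your flanking-segments application of Lemma~\ref{lemma:costExtraCp} applies. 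In such configurations the $\log(n)^{1+\delta}$ excess has to come from two (or more) estimated changepoints crowding into a single window; the paper covers this by splitting the window at $\tau_k$ so that each half lies in one true segment and applying Lemma~\ref{lemma:costExtraCp} inside the half that contains an interior estimated changepoint (the $\hat{K}_k>1$ branch of Lemma~\ref{lemma:T2k}). Your proposal needs this additional subcase; without it the claimed positive gap is not established for all $\hat{\mathcal{A}}\notin\mathcal{B}$.

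Two further points. First, the step ``on every other $\hat{\mathcal{A}}$-segment Lemma~\ref{lemma:costReductionEst} gives \dots'' is not quite legitimate: when a boundary is matched only up to $\kappa_k$, the adjacent estimated segments overhang the true changepoint by up to $\kappa_k$ points, hence are not contained in a single true segment, and Lemma~\ref{lemma:costReductionEst} (which requires $\tau_k<i\le j\le\tau_{k+1}$) does not apply to them as stated; you must also cut at $\tau_k\pm\kappa_k$ for \emph{matched} boundaries and bound each window piece separately (this is the $\hat{K}_k=1$ branch, costing an extra $-3\CcostEst\,\sigma_0^2\log n$ per window). Second, the final absorption ``via a suitably large $\lambda_{\operatorname{coll}}$'' is not available: $\lambda_{\operatorname{coll}}$ is a fixed constant and nothing in the lemma's hypotheses makes it dominate $\CcostEst$, which depends on $\CnoiseLin$, $\Cr$, $\CremainderSeasonal$. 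The correct bookkeeping is per changepoint, as in the paper's lemmas, where each extra estimated changepoint carries its own half penalty of order $\log(n)^{1+\delta}$ and therefore yields a net positive term; the residual slack is then only $O(K\log n)$ with $K$ fixed, so ``$n$ large enough'' suffices uniformly in $\hat{K}$, with no condition on $\lambda_{\operatorname{coll}}$.
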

\begin{proof}
There exists a constant $\tilde{C}_1 > 0$ such that
\begin{equation}\label{eq:boundQsigmalogn}   
\max_{Q \in \mathbf{Q}} Q\, \hat{\sigma}_0 \log(n) \leq \tilde{C}_1 \sigma_0 \log(n),
\end{equation}
since $\mathbf{Q}$ is a finite set and $\hat{\sigma}_0 \leq \CvarianceEstimate \sigma_0$. Let
\begin{equation*}
\hat{\mathcal{A}} \coloneq \left\{(\hat{s}_1,\hat{e}_1),\ldots,(\hat{s}_{\hat{K}}, \hat{e}_{\hat{K}})\right\},
\end{equation*}
with $\hat{K}$ being the number of anomalies, $\hat{s}_1,\ldots,\hat{s}_{\hat{K}}$ and $\hat{e}_1,\ldots,\hat{e}_{\hat{K}}$ being the points before the anomalies start and the last point of the anomalies, respectively. We require $0 \leq \hat{s}_1 < \hat{e}_1 \leq \hat{s}_2 < \cdots < \hat{e}_{\hat{K} - 1} \leq \hat{s}_{\hat{K}} < \hat{e}_{\hat{K}} \leq \mn$. Furthermore, we define $\hat{e}_0 = 0$ and $\hat{s}_{\hat{K} + 1} = \mn$.

Furthermore, we rewrite them as a changepoint model: We use the notation $\hat{K}_\tau \coloneq 2\hat{K}$. Furthermore, we define $\hat{\tau}_{2k} \coloneq \hat{e}_k$ and $\hat{\tau}_{2k + 1} \coloneq \hat{s}_k$, for $k = 0,\ldots,\hat{K}$. Moreover, we have function values $\hat{\theta}_{2k - 1} \coloneq \hat{\mu}_k$, for $k = 1,\ldots,\hat{K}$, and $\hat{\theta}_{2k} \coloneq 0$, for $k = 0,\ldots,\hat{K}$. 

Using these definitions, we have that
\begin{align*}
C\big(Y - \hat{T}, \hat{\mathcal{A}}, \hat{\mu}\big)
= & \sum_{k = 0}^{\hat{K}} C_{\hat{e}_k + 1,\hat{s}_{k + 1}}(Y - \hat{T}) + \sum_{k = 1}^{\hat{K}} C_{\hat{s}_k + 1, \hat{e}_k}(Y - \hat{T} - \hat{\mu}_k) + \hat{K} \lambda_{\operatorname{coll}}\, \hat{\sigma}_0 \log(n)^{1 + \delta}\\
= & \sum_{k = 0}^{\hat{K}_\tau} C_{\hat{\tau}_k + 1, \hat{\tau}_{k + 1}}(Y - \hat{T} - \hat{\theta}_k) + \frac{\hat{K}_\tau}{2} \lambda_{\operatorname{coll}}\, \hat{\sigma}_0 \log(n)^{1 + \delta}.
\end{align*}

Furthermore, using the definitions for $\mathcal{A}$, we have that
\begin{align*}
C\big(Y - \tilde{T}, \mathcal{A}, \mu\big)
= & \sum_{k = 0}^{K} C_{e_k + 1,s_{k + 1}}(Y - \tilde{T}) + \sum_{k = 1}^{K} C_{s_k + 1, e_k}(Y - \tilde{T} - \mu_k) + K \lambda_{\operatorname{coll}}\, \hat{\sigma}_0 \log(n)^{1 + \delta}\\
= & \sum_{k = 0}^{K_\tau} C_{\tau_k + 1, \tau_{k + 1}}(Y - \tilde{T} -  \theta_k) + \frac{K_\tau}{2} \lambda_{\operatorname{coll}}\, \hat{\sigma}_0 \log(n)^{1 + \delta}.
\end{align*}

We will split the costs at $\tau_k \pm \kappa_k$. Note that Assumption~\ref{assumption:segmentLengthTrend} yields that $\tau_k - \kappa_k > \tau_{k - 1}$ and $\tau_k + \kappa_k < \tau_{k + 1}$ for large enough $n$.

We further remark that $\hat{T} + \hat{\theta}_k$ is a polynomial of degree $\rQ$ and hence by definition for any $k$, for any $\hat{\tau}_k \leq \tau \leq \hat{\tau}_{k + 1}$ we have that
\begin{align*}
C_{\hat{\tau}_k + 1, \hat{\tau}_{k + 1}}(Y - \hat{T} - \hat{\theta}_k)
 \geq & C_{\hat{\tau}_k + 1, \hat{\tau}_{k + 1}}(Y - \hat{\theta}_{\hat{\tau}_k + 1, \hat{\tau}_{k + 1}})\\
  \geq & C_{\hat{\tau}_k + 1, \tau}(Y - \hat{\theta}_{\hat{\tau}_k + 1:\tau}) + C_{\tau + 1, \hat{\tau}_{k + 1}}(Y - \hat{\theta}_{\tau + 1:\hat{\tau}_{k + 1}}).
\end{align*}

Furthermore, $Y - \tilde{T} = \varepsilon + A + R + W$. Hence, on the set $\{\tau_k + 1, \tau_{k + 1}\}$, we have that $Y - \tilde{T} - \theta_k = \varepsilon + R + W$. Hence, for any $k$, for any $\tau_k \leq \tau \leq \tau_{k + 1}$ we have that
\begin{equation*}
\begin{split}
& C_{\tau_k + 1, \tau_{k + 1}}(Y - \tilde{T} - \theta_k) = C_{\tau_k + 1, \tau_{k + 1}}(\varepsilon + W)\\
= & C_{\tau_k + 1, \tau}(\varepsilon + R + W) + C_{\tau + 1, \tau_{k + 1}}(\varepsilon + R + W).
\end{split}
\end{equation*}
For $k \in \{0,\ldots,K_\tau\}$, let $\tilde{K}_k$ be the number of estimated change-points between $\tau_k + \kappa_k$ and $\tau_{k + 1} - \kappa_{k + 1}$. That is, there exists an $l$ such that
\begin{equation*}
\hat{\tau}_{l} < \tau_k + \kappa_k \leq \hat{\tau}_{l + 1} < \cdots < \hat{\tau}_{l + \tilde{K}_k} \leq \tau_{k + 1} - \kappa_{k + 1} < \hat{\tau}_{l + \tilde{K}_k + 1}.
\end{equation*}

We write $\tilde{\tau}_{k, i} \coloneq \hat{\tau}_{l + i}$ for $i = 1,\ldots,\tilde{K}_k$. Moreover, we write $\tilde{\tau}_{k, 0} \coloneq \tau_k + \kappa_k$ and $\tilde{\tau}_{k, \tilde{K}_k + 1} \coloneq \tau_{k + 1} - \kappa_{k + 1}$. Note that $k = 0$ and $k = K_\tau$ are special boundary cases, for which we have $\kappa_0 = \kappa_{2K+1} = 0$.

Similarly, for $k \in \{1,\ldots,K_\tau\}$, let $\hat{K}_k$ be the number of estimated change-points between $\tau_k - \kappa_k$ and $\tau_k + \kappa_k$. That is, there exists an $l$ such that
\begin{equation*}
\hat{\tau}_{l} \leq \tau_k - \kappa_k < \hat{\tau}_{l + 1} < \cdots < \hat{\tau}_{l + \hat{K}_k} < \tau_k + \kappa_k \leq \hat{\tau}_{l + \hat{K}_k + 1}.
\end{equation*}

We write $\hat{\tau}_{k, i} \coloneq \hat{\tau}_{l + i}$ for $i = 1,\ldots,\hat{K}_k$. Moreover, we write $\hat{\tau}_{k, 0} \coloneq \tau_k - \kappa_k $ and $\hat{\tau}_{k, \hat{K}_k + 1} \coloneq \tau_k - \kappa_k$.

Using these definitions, we have that
\begin{align*}
& C\big(Y - \hat{T}, \hat{\mathcal{A}}, \hat{\mu}\big) - C\big(Y - \tilde{T}, \mathcal{A}, \mu\big)
\geq   \sum_{k = 0}^{K_\tau} T_{1,k} + \sum_{k = 1}^{K_\tau} T_{2,K},
\end{align*}
where 
\begin{equation}\label{eq:defT1k}
\begin{split}
T_{1,k} \coloneq & \sum_{l = 0}^{\tilde{K}_k} C_{\tilde{\tau}_{k, l} + 1,\tilde{\tau}_{k, l + 1}}(Y - \hat{\theta}_{\tilde{\tau}_{k, l} + 1:\tilde{\tau}_{k, l + 1}}) - C_{\tilde{\tau}_{k, 0} + 1:\tilde{\tau}_{k, \tilde{K}_k + 1}}(\varepsilon + R + W)\\
& + \frac{\tilde{K}_k}{2} \lambda_{\operatorname{coll}}\, \hat{\sigma}_0 \log(n)^{1+\delta},
\end{split}
\end{equation}
and
\begin{equation}\label{eq:defT2k}
\begin{split}
T_{2,k} \coloneq & \sum_{l = 0}^{\hat{K}_k} C_{\hat{\tau}_{k, l} + 1,\hat{\tau}_{k, l + 1}}(Y - \hat{\theta}_{\hat{\tau}_{k, l} + 1:\hat{\tau}_{k, l + 1}}) - C_{\hat{\tau}_{k, 0} + 1:\tau_{k}}(\varepsilon + R + W)\\
& - C_{\tau_k + 1:\hat{\tau}_{k, \hat{K}_k + 1}}(\varepsilon + R + W) + \frac{\hat{K}_k - 1}{2} \lambda_{\operatorname{coll}}\, \hat{\sigma}_0 \log(n)^{1+\delta}.
\end{split}
\end{equation}

Lemmas~\ref{lemma:T1k}~and~\ref{lemma:T2k} bound $T_{1,k}$ and $T_{2,k}$, respectively.

If there exists a $k \in \{1,\ldots,K_\tau\}$ such that $\hat{\tau}_1,\ldots,\hat{\tau}_{\hat{K}}\not\in \{\tau_k - \kappa_k,\ldots, \tau_k + \kappa_k\}$, then $\hat{K}_k = 0$. Note that if $\hat{K} < K$, then such a $k$ must exist. Consequently, Lemmas~\ref{lemma:T1k}~and~\ref{lemma:T2k} and \eqref{eq:boundQsigmalogn} imply that
\begin{align*}
& C\big(Y - \hat{T}, \hat{\mathcal{A}}, \hat{\mu}\big) - C\big(Y - \tilde{T}, \mathcal{A}, \mu\big)
\geq  \sum_{k = 0}^{K_\tau} T_{1,k} + \sum_{k = 1}^{K_\tau} T_{2,K}\\
\geq &  -(2K + 1) \CcostEst \sigma_0^2 \log(n) + \CextraCP \sigma_0^2 \log(n)^{1 + \delta} - \max\{2K - 1, 0\} 3 \CcostEst \sigma_0^2 \log(n)\\
\geq & \CextraCP \sigma_0^2 \log(n)^{1 + \delta} - \max\{8K - 2,1\}\CcostEst \sigma_0^2 \log(n)
>  \max_{Q \in \mathbf{Q}} Q\, \hat{\sigma}_0 \log(n),
\end{align*}
where the final inequality follows from $n$ being large enough.

Furthermore, if $\hat{K} > K$, then there exists a $k \in \{0,\ldots,K_\tau\}$ such that $\tilde{K}_k \geq 1$ or there exists a $k \in \{1,\ldots,K_\tau\}$ such that $\hat{K}_k > 1$. Consequently, Lemmas~\ref{lemma:T1k}~and~\ref{lemma:T2k} and \eqref{eq:boundQsigmalogn} imply that
\begin{align*}
& C\big(Y - \hat{T}, \hat{\mathcal{A}}, \hat{\mu}\big) - C\big(Y - \tilde{T}, \mathcal{A}, \mu\big)
\geq  \sum_{k = 0}^{K_\tau} T_{1,k} + \sum_{k = 1}^{K_\tau} T_{2,K}\\
\geq &  \frac{1}{2} \CcostExtraCp\, \sigma_0^2 \log(n)^{1 + \delta} -2 K \CcostEst \sigma_0^2 \log(n) - 2 K 3 \CcostEst \sigma_0^2 \log(n)\\
\geq & \frac{1}{2} \CcostExtraCp\, \sigma_0^2 \log(n)^{1 + \delta} - 8 K \CcostEst \sigma_0^2 \log(n) >  \max_{Q \in \mathbf{Q}} Q\, \hat{\sigma}_0 \log(n),
\end{align*}
where the final inequality follows from $n$ being large enough.

All in all, if $\hat{\mathcal{A}}\not\in\hat{\mathcal{B}}$, then $C\big(Y - \hat{T}, \hat{\mathcal{A}}, \hat{\mu}\big) - C\big(Y - \tilde{T}, \mathcal{A}, \mu\big) - \max_{Q \in \mathbf{Q}} Q\, \hat{\sigma}_0 \log(n) > 0$, as $\hat{\mathcal{A}}\not\in\hat{\mathcal{B}}$ implies that one of the above cases must occur.
\end{proof}

\begin{Lemma}\label{lemma:T1k}
Let $T_{1,k}$ be as defined in \eqref{eq:defT1k}. Under the assumptions of Lemma~\ref{lemma:differenceCosts}, for any $k = 0,\ldots,K_\tau$, we have that
\begin{equation*}
T_{1,k} \geq
\begin{cases}
-\CcostEst \sigma_0^2 \log(n) & \text{if } \tilde{K}_k = 0,\\
\frac{1}{2} \CcostExtraCp \, \sigma_0^2 \log(n)^{1 + \delta} & \text{if } \tilde{K}_k \geq 1.
\end{cases}
\end{equation*}
\end{Lemma}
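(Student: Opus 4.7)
The plan is to dispatch the two cases by direct application of Lemmas~\ref{lemma:costReductionEst}~and~\ref{lemma:costExtraCp}. Note first that by construction $\tilde{\tau}_{k,0} = \tau_k + \kappa_k$ and $\tilde{\tau}_{k,\tilde{K}_k+1} = \tau_{k+1} - \kappa_{k+1}$ lie strictly between $\tau_k$ and $\tau_{k+1}$, and all intermediate $\tilde{\tau}_{k,l}$ do as well. Hence every cost sub-interval $(\tilde{\tau}_{k,l}, \tilde{\tau}_{k,l+1}]$ sits inside a single true constant segment of the step function, so both Lemma~\ref{lemma:costReductionEst} and Lemma~\ref{lemma:costExtraCp} are applicable.

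In the case $\tilde{K}_k = 0$ there is a single summand, and $T_{1,k}$ reduces to the difference
\begin{equation*}
C_{\tilde{\tau}_{k,0}+1,\tilde{\tau}_{k,1}}\bigl(Y - \hat{\theta}_{\tilde{\tau}_{k,0}+1:\tilde{\tau}_{k,1}}\bigr) - C_{\tilde{\tau}_{k,0}+1,\tilde{\tau}_{k,1}}(\varepsilon + R + W).
\end{equation*}
Lemma~\ref{lemma:costReductionEst} bounds this above by $\CcostEst \sigma_0^2 \log(n)$ when the sign is flipped, which gives exactly the claimed lower bound $-\CcostEst\sigma_0^2\log(n)$.

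For the case $\tilde{K}_k \geq 1$, I would iterate Lemma~\ref{lemma:costExtraCp} to merge the $\tilde{K}_k + 1$ estimated pieces back into one interval. Applying the lemma successively with $(a,b,c) = (\tilde{\tau}_{k,0}, \tilde{\tau}_{k,l}, \tilde{\tau}_{k,l+1})$ for $l = 1,\ldots,\tilde{K}_k$ telescopes to
\begin{equation*}
\sum_{l=0}^{\tilde{K}_k} C_{\tilde{\tau}_{k,l}+1,\tilde{\tau}_{k,l+1}}\bigl(Y - \hat{\theta}_{\tilde{\tau}_{k,l}+1:\tilde{\tau}_{k,l+1}}\bigr) + \tfrac{\tilde{K}_k}{2}\lambda_{\operatorname{coll}}\hat{\sigma}_0^2\log(n)^{1+\delta} \geq C_{\tilde{\tau}_{k,0}+1,\tilde{\tau}_{k,\tilde{K}_k+1}}\bigl(Y - \hat{\theta}_{\tilde{\tau}_{k,0}+1:\tilde{\tau}_{k,\tilde{K}_k+1}}\bigr) + \tilde{K}_k \CcostExtraCp \sigma_0^2 \log(n)^{1+\delta}.
\end{equation*}
Subtracting $C_{\tilde{\tau}_{k,0}+1,\tilde{\tau}_{k,\tilde{K}_k+1}}(\varepsilon + R + W)$ from both sides makes the left-hand side equal $T_{1,k}$, and Lemma~\ref{lemma:costReductionEst} applied to the remaining single-interval difference on the right contributes at most $-\CcostEst\sigma_0^2\log(n)$. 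Using $\tilde{K}_k \geq 1$ and absorbing the $O\!\bigl(\sigma_0^2\log(n)\bigr)$ error into half of the $\CcostExtraCp \sigma_0^2\log(n)^{1+\delta}$ term for $n$ large enough yields the claimed bound $T_{1,k} \geq \tfrac{1}{2}\CcostExtraCp \sigma_0^2 \log(n)^{1+\delta}$.

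No real obstacle is expected here: the argument is essentially bookkeeping, and the only mild subtlety is ensuring that Lemma~\ref{lemma:costExtraCp} can indeed be applied at each step of the telescoping, i.e.~that all $\tilde{\tau}_{k,l}$ lie in the open interval $(\tau_k, \tau_{k+1})$. This holds by the definitions of $\tilde{\tau}_{k,0}$ and $\tilde{\tau}_{k,\tilde{K}_k+1}$ together with Assumption~\ref{assumption:segmentLengthTrend}, which guarantees $\tau_k + \kappa_k < \tau_{k+1} - \kappa_{k+1}$ for $n$ large enough.
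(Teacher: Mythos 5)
Your proposal is correct and takes essentially the same route as the paper: the case $\tilde{K}_k = 0$ is a direct application of Lemma~\ref{lemma:costReductionEst}, and for $\tilde{K}_k \geq 1$ the paper likewise applies Lemma~\ref{lemma:costExtraCp} iteratively (your telescoping merge) and then Lemma~\ref{lemma:costReductionEst} on the resulting single interval, absorbing the $\CcostEst\,\sigma_0^2 \log(n)$ error into half of the $\CcostExtraCp\,\sigma_0^2 \log(n)^{1+\delta}$ gain for $n$ large enough. Your write-up simply makes explicit the bookkeeping (the telescoping order and the applicability of Lemma~\ref{lemma:costExtraCp} on $(\tau_k, \tau_{k+1}]$) that the paper leaves implicit.
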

\begin{proof}
If $\tilde{K}_k = 0$, then the statement follows directly from Lemma~\ref{lemma:costReductionEst}. If $\tilde{K}_k \geq 1$, then applying Lemma~\ref{lemma:costExtraCp} iteratively and finally Lemma~\ref{lemma:costReductionEst} yields
\begin{equation*}
T_{1,k} \geq \CcostExtraCp\, \sigma_0^2 \log(n)^{1 + \delta} - \CcostEst \sigma_0^2 \log(n) \geq \frac{1}{2} \CcostExtraCp\, \sigma_0^2 \log(n)^{1 + \delta},
\end{equation*}
where the last inequality follows from $n$ being large enough.
\end{proof}

\begin{Lemma}\label{lemma:T2k}
Let $T_{2,k}$ be as defined in \eqref{eq:defT2k}. Under the assumptions of Lemma~\ref{lemma:differenceCosts}, for any $k = 1,\ldots,K_\tau$, we have that
\begin{equation*}
T_{2,k} \geq
\begin{cases}
\CextraCP \sigma_0^2 \log(n)^{1 + \delta} & \text{if } \hat{K}_k = 0,\\
-3 \CcostEst \sigma_0^2 \log(n) & \text{if } \hat{K}_k = 1,\\
\frac{1}{2} \CcostExtraCp\, \sigma_0^2 \log(n)^{1 + \delta} & \text{if } \hat{K}_k > 1.
\end{cases}
\end{equation*}
\end{Lemma}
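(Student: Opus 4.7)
The plan is to treat the three cases $\hat{K}_k = 0$, $\hat{K}_k = 1$, and $\hat{K}_k \geq 2$ separately, paralleling the structure of the proof of Lemma~\ref{lemma:T1k}. The new difficulty is that the window $(\tau_k - \kappa_k, \tau_k + \kappa_k]$ contains the true change-point $\tau_k$, so any sub-segment that straddles $\tau_k$ is anomalous and Lemma~\ref{lemma:costReductionEst} does not apply to it directly. The unifying trick will be to insert an auxiliary split at $\tau_k$: by optimality of the best polynomial fit $\hat{\theta}$, an extra breakpoint can only decrease the sum of segment costs, so the inequality stays in the right direction for a lower bound on $T_{2,k}$, while every resulting sub-segment now lies entirely on one side of $\tau_k$ and is therefore anomaly-free.

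For $\hat{K}_k = 0$ the single segment $(\hat{\tau}_{k,0}, \hat{\tau}_{k,1}]$ is exactly $(\tau_k - \kappa_k, \tau_k + \kappa_k]$, and the penalty multiplier $(\hat{K}_k - 1)/2 = -1/2$ makes the definition of $T_{2,k}$ coincide, upon identifying $\hat{T}$ with the polynomial extending $\hat{\theta}_{\tau_k - \kappa_k + 1:\tau_k + \kappa_k}$, with the left-hand side of Lemma~\ref{lemma:costMissingCp}. The bound $T_{2,k} \geq \CextraCP\, \sigma_0^2 \log(n)^{1 + \delta}$ follows directly.

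For $\hat{K}_k = 1$ I would assume without loss of generality that $\hat{\tau}_{k,1} \leq \tau_k$, so the first segment is anomaly-free and only the second may contain $\tau_k$. Splitting the second segment at $\tau_k$ by optimality of $\hat{\theta}$ yields at most three anomaly-free sub-segments in total. Three applications of Lemma~\ref{lemma:costReductionEst} contribute total slack $3 \CcostEst\, \sigma_0^2 \log(n)$, and the three resulting noise-cost terms telescope exactly to $C_{\hat{\tau}_{k,0}+1, \tau_k}(\varepsilon + R + W) + C_{\tau_k + 1, \hat{\tau}_{k,2}}(\varepsilon + R + W)$, cancelling the subtracted terms in the definition of $T_{2,k}$. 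Since the penalty contribution vanishes, $T_{2,k} \geq -3 \CcostEst\, \sigma_0^2 \log(n)$.

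For $\hat{K}_k \geq 2$ the same auxiliary split at $\tau_k$ yields at most $\hat{K}_k + 2$ anomaly-free sub-segments, and Lemma~\ref{lemma:costReductionEst} applied to each together with the telescoping of the noise costs gives
\begin{equation*}
T_{2,k} \geq -(\hat{K}_k + 2)\CcostEst\, \sigma_0^2 \log(n) + \frac{\hat{K}_k - 1}{2}\lambda_{\operatorname{coll}} \hat{\sigma}_0^2 \log(n)^{1 + \delta}.
\end{equation*}
For $\hat{K}_k \geq 2$ the penalty term is at least $\tfrac{1}{2}\lambda_{\operatorname{coll}} \hat{\sigma}_0^2 \log(n)^{1 + \delta}$, and combining $\hat{\sigma}_0 \geq \CvarianceEstimateLower\, \sigma_0$ with the inequality $\tfrac{1}{2}\lambda_{\operatorname{coll}} \CvarianceEstimateLower\, \sigma_0^2 \geq \CcostExtraCp$ extracted from the proof of Lemma~\ref{lemma:costExtraCp} lets this penalty dominate the $\log(n)$-scale slack once $n$ is large, giving $T_{2,k} \geq \tfrac{1}{2}\CcostExtraCp\, \sigma_0^2 \log(n)^{1 + \delta}$. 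The main obstacle is the bookkeeping around the auxiliary split, in particular the case where some $\hat{\tau}_{k,i}$ already equals $\tau_k$ and one obtains only $\hat{K}_k + 1$ rather than $\hat{K}_k + 2$ sub-segments, and double-checking that the constant comparison between $\tfrac{1}{2}\lambda_{\operatorname{coll}} \hat{\sigma}_0^2$ and $\CcostExtraCp\, \sigma_0^2$ is consistent with the choice of $\CcostExtraCp$ made earlier in the paper.
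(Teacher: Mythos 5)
Your overall route is the paper's own: split the window at the true change-point $\tau_k$ (an auxiliary split can only decrease the best-fit costs), then argue case by case, using Lemma~\ref{lemma:costMissingCp} when $\hat{K}_k = 0$ and Lemma~\ref{lemma:costReductionEst} on the (at most three) anomaly-free pieces when $\hat{K}_k = 1$; those two cases match the paper's proof and are correct, including the identification of the best-fit $\hat{\theta}$ with a degree-$\rQ$ polynomial so that Lemma~\ref{lemma:costMissingCp} applies.

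The one step that fails as written is the conclusion of the case $\hat{K}_k > 1$. Your displayed bound
\begin{equation*}
T_{2,k} \geq \frac{\hat{K}_k - 1}{2}\,\lambda_{\operatorname{coll}}\,\hat{\sigma}_0^2\log(n)^{1+\delta} - (\hat{K}_k + 2)\,\CcostEst\,\sigma_0^2\log(n)
\end{equation*}
is fine, but you then replace the penalty by its value at $\hat{K}_k = 2$ and claim it dominates ``the $\log(n)$-scale slack'' for large $n$. Here $\hat{K}_k$ is not bounded: $\hat{\mathcal{A}}$ is an arbitrary candidate and the window $(\tau_k - \kappa_k, \tau_k + \kappa_k]$ contains up to $2\kappa_k$ points with $\kappa_k \asymp \log(n)^{1+\delta}$ when $\mu_k$ is of constant order, so $\hat{K}_k$ may grow with $n$ and the slack $(\hat{K}_k + 2)\CcostEst\sigma_0^2\log(n)$ can be of order $\log(n)^{2+\delta}$, which a single $\tfrac{1}{2}\lambda_{\operatorname{coll}}\hat{\sigma}_0^2\log(n)^{1+\delta}$ does not dominate. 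The repair is already contained in your display: keep the factor $\tfrac{\hat{K}_k - 1}{2}$ and use $\tfrac{\hat{K}_k - 1}{2} \geq \tfrac{\hat{K}_k + 2}{8}$ for $\hat{K}_k \geq 2$, so penalty surplus and slack both scale linearly in $\hat{K}_k$ and the comparison is uniform; alternatively do what the paper does and apply Lemma~\ref{lemma:costExtraCp} iteratively on each side of $\tau_k$, which pairs each surplus half-penalty with a gain of $\CcostExtraCp\,\sigma_0^2\log(n)^{1+\delta}$ and reduces $\hat{K}_k > 1$ to the $\hat{K}_k = 1$ bookkeeping, giving $T_{2,k} \geq \CcostExtraCp\,\sigma_0^2\log(n)^{1+\delta} - 3\CcostEst\,\sigma_0^2\log(n) \geq \tfrac{1}{2}\CcostExtraCp\,\sigma_0^2\log(n)^{1+\delta}$. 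Your constant comparison between $\tfrac{1}{2}\lambda_{\operatorname{coll}}\hat{\sigma}_0^2$ and $\CcostExtraCp\sigma_0^2$ is consistent with how $\CcostExtraCp$ arises in the proof of Lemma~\ref{lemma:costExtraCp}, so no condition beyond the paper's is needed; the exact-split subtlety you flag (some $\hat{\tau}_{k,i} = \tau_k$) is harmless, as it only reduces the number of pieces and hence the slack.
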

\begin{proof}
If $\hat{K}_k = 0$, then Lemma~\ref{lemma:costMissingCp} gives the desired statement.

From now on, let $\hat{K}_k \geq 1$. We split the costs at $\tau_{k}$. Let $\hat{K}_{k,1}$ be the number of estimated change-points between $\hat{\tau}_{k, 0} + 1$ and $\tau_{k}$ and $\hat{K}_{k,2}$ be the number of estimated change-points between $\tau_k + 1$ and $\hat{\tau}_{k, \hat{K}_k + 1}$. Furthermore, let $\hat{\tau}_{k, 1, l} \coloneq \hat{\tau}_{k, l}$, for $l = 0, \ldots, \hat{K}_{k,1}$, and $\hat{\tau}_{k, 2, l} \coloneq \hat{\tau}_{k, l + \hat{K}_{k,1} }$, for $l = 1, \ldots,\hat{K}_{k,2} + 1$. We also define $\hat{\tau}_{k, 1, \hat{K}_{k,1} + 1} \coloneq \tau_k \eqcolon  \hat{\tau}_{k, 2, 0}$. Since, $\hat{K}_k = \hat{K}_{k,1} + \hat{K}_{k,2}$ and $\hat{K}_k \geq 1$, we have that $\hat{K}_{k,1} \geq 1$ or $\hat{K}_{k,2} \geq 1$ or both. Without loss of generality let $\hat{K}_{k,1} \geq 1$. Otherwise, we can switch the role of the two terms in the following. Thus, we obtain,
\begin{equation*}
T_{2,k} \geq T_{2, k, 1} + T_{2, k, 2},
\end{equation*}
with
\begin{equation*}
\begin{split}
T_{2, k, 1} \coloneq & \sum_{l = 0}^{\hat{K}_{k,1}} C_{\hat{\tau}_{k, 1, l} + 1,\hat{\tau}_{k, 1, l + 1}}(Y - \hat{\theta}_{\hat{\tau}_{k, 1, l} + 1:\hat{\tau}_{k, 1, l + 1}}) - C_{\hat{\tau}_{k, 1, 0} + 1:\hat{\tau}_{k, 1, \hat{K}_{k,1} + 1}}(\varepsilon + R + W)\\
& + \frac{\hat{K}_{k,1} - 1}{2} \lambda_{\operatorname{coll}}\, \hat{\sigma}_0 \log(n)^{1+\delta},
\end{split}
\end{equation*}
and 
\begin{equation*}
\begin{split}
T_{2, k, 2} \coloneq & \sum_{l = 0}^{\hat{K}_{k,2}} C_{\hat{\tau}_{k, 2, l} + 1,\hat{\tau}_{k, 2, l + 1}}(Y - \hat{\theta}_{\hat{\tau}_{k, 2, l} + 1:\hat{\tau}_{k, 2, l + 1}}) - C_{\hat{\tau}_{k, 2, 0} + 1:\hat{\tau}_{k, 2, \hat{K}_{k,2} + 1}}(\varepsilon + R + W)\\
& + \frac{\hat{K}_{k,2}}{2} \lambda_{\operatorname{coll}}\, \hat{\sigma}_0 \log(n)^{1+\delta}.
\end{split}
\end{equation*}

If $\hat{K}_k = 1$, then $\hat{K}_{k,1} = 1$ and $\hat{K}_{k,2} = 0$. Thus, Lemma~\ref{lemma:costReductionEst} yields
\begin{align*}
T_{2, k, 1}
 = & \sum_{l = 0}^{1} \left\{C_{\hat{\tau}_{k, 1, l} + 1,\hat{\tau}_{k, 1, l + 1}}(Y - \hat{\theta}_{\hat{\tau}_{k, 1, l} + 1:\hat{\tau}_{k, 1, l + 1}}) - C_{\hat{\tau}_{k, 1, l} + 1,\hat{\tau}_{k, 1, l + 1}}(\varepsilon + R + W) \right\}\\
\geq & -2 \CcostEst \sigma_0^2 \log(n),
\end{align*}
and 
\begin{align*}
T_{2, k, 2} \geq  -\CcostEst \sigma_0^2 \log(n).
\end{align*}
Thus, 
\begin{equation*}
T_{2,k} \geq -3 \CcostEst \sigma_0^2 \log(n).
\end{equation*}

If $\hat{K}_k > 1$, then $\hat{K}_{k,1} > 1$ and $\hat{K}_{k,2} > 0$ (or both). Applying Lemma~\ref{lemma:costExtraCp} iteratively and then the same steps as before gives us 
\begin{equation*}
T_{2,k} \geq \CcostExtraCp\, \sigma_0^2 \log(n)^{1 + \delta} - 3\CcostEst \sigma_0^2 \log(n) \geq \frac{1}{2} \CcostExtraCp\, \sigma_0^2 \log(n)^{1 + \delta},
\end{equation*}
for large enough $n$.
\end{proof}

We are now ready to show that the estimated anomalies are within the set $\mathcal{B}$ with probability converging to one. This result is of interest by its own but also helps us to show that we select a good trend estimate.
\begin{Proposition}\label{proposition:withinSetB}
Suppose that our data follows Setting~\ref{setting:trend}. Let $J \to \infty$, as $n \to \infty$. Let $\hat{\mathcal{A}}$ be as defined in \eqref{eq:trend_estimator}. Then,
\begin{equation*}
\Pj\big(\hat{\mathcal{A}} \in \mathcal{B}\big) \to 1, \text{ as } n \to \infty.
\end{equation*}
\end{Proposition}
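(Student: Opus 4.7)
The plan is to work on a high-probability event on which every ingredient assembled in the preceding subsections is simultaneously available, and then use the cost-optimality definition of $\hat{T}$ together with Lemma~\ref{lemma:differenceCosts} to force $\hat{\mathcal{A}} \in \mathcal{B}$.

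Concretely, I would define the good event $\Omega_n$ as the intersection of four sub-events: (i) $\CvarianceEstimateLower \sigma_0 \leq \hat{\sigma}_0 \leq \CvarianceEstimate \sigma_0$, from Proposition~\ref{proposition:varianceEstimate}; (ii) there exists at least one sample $U^\star \in \mathcal{U}$ with $U^\star \cap \mathcal{A} = \emptyset$, from Lemma~\ref{lemma:probGoodSample}; (iii) conditional on (ii), the trend estimate $\tilde{T} \coloneq \hat{T}_{U^\star}$ satisfies $\|\tilde{T} - T\|_2^2 \leq \CerrorTrend \sigma_0^2 \log(n)$, from Lemma~\ref{lemma:trendEstimation}; and (iv) the uniform noise bound $\max_{1 \leq i \leq j \leq \mn} \varepsilon_{i:j}^\top H_{i:j} \varepsilon_{i:j} \leq \CnoiseLin \sigma_0^2 \log(n)$, from Lemma~\ref{lemma:noiseLin}. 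A union bound combined with $J \to \infty$ (which drives $(1/2)^J \to 0$ in Lemma~\ref{lemma:probGoodSample}) gives $\Pj(\Omega_n) \to 1$.

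Next, I work deterministically on $\Omega_n$. The candidate $\tilde{T}$ is a polynomial of degree at most $\max \mathbf{Q}$ and belongs to the collection $\hat{\mathcal{T}}$ defined in \eqref{eq:setOfTrendEstimates}, so by the definition of $\hat{T}$ in \eqref{eq:trend_estimator},
\begin{equation*}
\min_{(\tilde{\mathcal{A}}, \tilde{\mu})} C\big(Y - \hat{T}, \tilde{\mathcal{A}}, \tilde{\mu}\big) + \hat{Q}\,\hat{\sigma}_0 \log(n)
\;\leq\; C\big(Y - \tilde{T}, \mathcal{A}, \mu\big) + Q^\star\,\hat{\sigma}_0 \log(n),
\end{equation*}
where $\hat{Q}$ and $Q^\star$ are the polynomial degrees associated with $\hat{T}$ and $\tilde{T}$, respectively. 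Letting $(\hat{\mathcal{A}}, \hat{\mu})$ achieve the minimum on the left, rearranging yields
\begin{equation*}
C\big(Y - \hat{T}, \hat{\mathcal{A}}, \hat{\mu}\big) - C\big(Y - \tilde{T}, \mathcal{A}, \mu\big) \;\leq\; \max_{Q \in \mathbf{Q}} Q\,\hat{\sigma}_0 \log(n).
\end{equation*}

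Finally, I verify that $\tilde{T}$ meets the hypotheses of Lemma~\ref{lemma:differenceCosts}: condition (iii) above is exactly the bound \eqref{eq:boundRemainderTrendT} with $\Cr = \CerrorTrend$ on $R \coloneq \tilde{T} - T$, while condition (iv) is \eqref{eq:boundNoise}. If $\hat{\mathcal{A}} \notin \mathcal{B}$, Lemma~\ref{lemma:differenceCosts} would give the strict inequality
\begin{equation*}
C\big(Y - \hat{T}, \hat{\mathcal{A}}, \hat{\mu}\big) - C\big(Y - \tilde{T}, \mathcal{A}, \mu\big) - \max_{Q \in \mathbf{Q}} Q\,\hat{\sigma}_0 \log(n) \;>\; 0,
\end{equation*}
contradicting the displayed upper bound. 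Hence $\hat{\mathcal{A}} \in \mathcal{B}$ on $\Omega_n$, which gives $\Pj(\hat{\mathcal{A}} \in \mathcal{B}) \geq \Pj(\Omega_n) \to 1$.

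The essentially routine part is the union-bound assembly of $\Omega_n$ and the deterministic optimality argument. The main obstacle, which is really done already in Lemma~\ref{lemma:differenceCosts}, is that the comparison must hold against \emph{any} candidate $(\hat{T}, \hat{\mathcal{A}}, \hat{\mu})$ produced by the minimisation; here I only need to invoke that lemma and keep careful track that the penalty slack $\max_Q Q\,\hat{\sigma}_0 \log(n)$ is of order $\log(n)$ and is dominated by the $\log(n)^{1+\delta}$ gap produced whenever $\hat{\mathcal{A}} \notin \mathcal{B}$.
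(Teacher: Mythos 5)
Your proposal is correct and follows essentially the same route as the paper: assemble a high-probability event from Lemma~\ref{lemma:probGoodSample}, Lemma~\ref{lemma:trendEstimation}, and Lemma~\ref{lemma:noiseLin} (using $J \to \infty$), then use the cost-optimality in \eqref{eq:trend_estimator} against the good candidate $\hat{T}_{U^\star}$ and invoke Lemma~\ref{lemma:differenceCosts} to rule out $\hat{\mathcal{A}} \notin \mathcal{B}$. The only cosmetic differences are that Setting~\ref{setting:trend} already assumes the deterministic bound on $\hat{\sigma}_0$, so your sub-event (i) is redundant, and the paper makes the independence of $\mathcal{U}$ and $\varepsilon$ explicit when combining (ii) and (iii).
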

\begin{proof}
Suppose that there exists an arbitrary but fixed polynomial $\tilde{T} \in \mathcal{T}$ \eqref{eq:setOfTrendEstimates} of degree $\rQ$ such that $(\tilde{T} - T)^\top (\tilde{T} - T) \leq \tilde{C}_1 \sigma_0^2 \log(n)$. Assume further that
\begin{equation*}
\max_{1 \leq i \leq j \leq \mn} \varepsilon_{i:j}^\top H_{i:j} \varepsilon_{i:j}
\leq \tilde{C}_2 \sigma_0^2 \log(n)
\end{equation*}
for a constant $\tilde{C}_2 > 0$. Let $n$ be large enough. Then, Lemma~\ref{lemma:differenceCosts} implies that $\hat{\mathcal{A}} \in \mathcal{B}$, since $\mathcal{A} \in \mathcal{B}$ and $\hat{\mathcal{A}}$ is the minimiser of the costs in \eqref{eq:trend_estimator}.

Consequently, it follows from a union bound, Lemmas~\ref{lemma:probGoodSample},~\ref{lemma:trendEstimation},~and~\ref{lemma:noiseLin}, the independence of $\mathcal{U}$ and $\varepsilon$, and $J \to \infty$, as $n\to \infty$, that there exists $\tilde{C}_1, \tilde{C}_2 > 0$ such that
\begin{align*}
& \Pj\big(\hat{\mathcal{A}} \in \mathcal{B}\big)\\
\geq & 1 - \left( 1 - \Pj\left( \max_{1 \leq i \leq j \leq \mn} \varepsilon_{i:j}^\top H_{i:j} \varepsilon_{i:j} \leq \tilde{C}_2 \sigma_0^2 \log(n) \right) \right)\\
& - \left( 1 - \Pj\left( \exists\ \tilde{T} \in \mathcal{T}\,:\, (\tilde{T} - T)^\top (\tilde{T} - T) \leq \tilde{C}_1 \sigma_0^2 \log(n) \right) \right)\\
\geq & \Pj\left( \exists\ U \in \mathcal{U}\, :\,  (\hat{T}_U - T)^\top (\hat{T}_U - T) \leq \tilde{C}_1 \sigma_0^2 \log(n) \right)\\
= & \Pj\Big(\exists\ U \in \mathcal{U}\, :\, U \cap \mathcal{A} = \emptyset \Big)\\
& \cdot  \Pj\left( (\hat{T}_U - T)^\top (\hat{T}_U - T) \leq \tilde{C}_1 \sigma_0^2 \log(n) \mid \exists\ U \in \mathcal{U}\, :\, U \cap \mathcal{A} = \emptyset \right)\\
\geq & 1 - \left(\frac{1}{2}\right)^J
\to 1,
\end{align*}
as $n \to \infty$.
\end{proof}

\begin{proof}[Proof of Proposition~\ref{proposition:goodFinalTrendEstimate}]
Let $\tilde{T} = (\tilde{t}_1,\ldots,\tilde{t}_{\mn})$ be as in Proposition~\ref{proposition:withinSetB}, i.e.~$(\tilde{T} - T)^\top (\tilde{T} - T) \leq \tilde{C}_1 \sigma_0^2 \log(n)$. Assume further that
\begin{equation}\label{eq:goodFinalTrendEstimate:noiseLin}
\max_{1 \leq i \leq j \leq \mn} \varepsilon_{i:j}^\top H_{i:j} \varepsilon_{i:j}
\leq \tilde{C}_2 \sigma_0^2 \log(n)
\end{equation}
for a constant $\tilde{C}_2 > 0$. In the proof of Proposition~\ref{proposition:withinSetB} we have shown that both events hold simultaneously with probability converging to one as $n \to \infty$.

We now show, by contradiction, that the statement in Proposition~\ref{proposition:goodFinalTrendEstimate} holds. If not, then there must exists $\hat{T}$, $\hat{\mathcal{A}}$, and $\hat{\mu}$ such that
\begin{equation}\label{eq:trendEstimateErrorRequirement}
(\hat{T} - T)^\top (\hat{T} - T) = \gamma_n \sigma_0^2 \log(n)
\end{equation}
for a sequence $\gamma_n \to \infty$, as $n \to \infty$, and
\begin{equation}\label{eq:trendEstimateRequirement}
C\big(Y - \hat{T}, \hat{\mathcal{A}}, \hat{\mu}\big) - \max_{Q \in \mathbf{Q}} Q\, \hat{\sigma}_0 \log(n) \leq C\big(Y - \tilde{T}, \mathcal{A}, \mu\big).
\end{equation}
Furthermore, Proposition~\ref{proposition:withinSetB} yields that $\hat{\mathcal{A}} \in \mathcal{B}$. Note that $\mathcal{A} \in \mathcal{B}$ by definition.

Then Lemma~\ref{lemma:mixedtermNoisePolynomial} and the Cauchy-Schwarz inequality for the first inequality, as well as \eqref{eq:goodFinalTrendEstimate:noiseLin}, \eqref{eq:remainderSeasonalyT}, and Lemma~\ref{lemma:noiseLin} for the second inequality imply that
\begin{align*}
& C\big(Y - \tilde{T}, \mathcal{A}, \mu\big) - K \lambda_{\operatorname{coll}}\, \hat{\sigma}_0^2 \log(n)^{1 + \delta}\\
= & \sum_{i = 1}^{\mn} (y_i - a_i - \tilde{t}_i)^2
=  \sum_{i = 1}^{\mn} (\varepsilon_i + w_i + t_i - \tilde{t}_i)^2\\
= & \sum_{i = 1}^{\mn} (\varepsilon_i + w_i)^2 + 2\sum_{i = 1}^{\mn} \varepsilon_i (t_i - \tilde{t}_i) + 2\sum_{i = 1}^{\mn} w_i (t_i - \tilde{t}_i) + \sum_{i = 1}^{\mn} (t_i - \tilde{t}_i)^2\\
\leq & \sum_{i = 1}^{\mn} (\varepsilon_i + w_i)^2 + 2 \left(\sqrt{\varepsilon_{1:\mn}^\top H_{1:\mn} \varepsilon_{1:\mn}} + \sqrt{\sum_{i = 1}^{\mn} w_i^2}\right) \sqrt{\sum_{i = 1}^{\mn} (t_i - \tilde{t}_i)^2} + \sum_{i = 1}^{\mn} (t_i - \tilde{t}_i)^2\\
= & \sum_{i = 1}^{\mn} (\varepsilon_i + w_i)^2 + \tilde{C}_3 \sigma_0^2 \log(n),
\end{align*}
where $\tilde{C}_3 \coloneq 2 \sqrt{\CnoiseLin \CerrorTrend} + 2 \sqrt{\CremainderSeasonal \CerrorTrend} + \CerrorTrend$.

Let $\{\tilde{\tau}_0, \tilde{\tau}_1, \ldots, \tilde{\tau}_{\tilde{K}}, \tilde{\tau}_{\tilde{K} + 1} \} = \{ e_0, b_1,\hat{b}_1, e_1, \hat{e}_1, \ldots, b_K,\hat{b}_K, e_K, \hat{e}_K, b_{K + 1} \}$ with $\tilde{\tau}_0 \leq  \tilde{\tau}_1 \leq \cdots \leq \tilde{\tau}_{\tilde{K}} \leq \tilde{\tau}_{\tilde{K} + 1}$ and $\tilde{K} = 4 K$. Then, the first inequality is required by \eqref{eq:trendEstimateRequirement} and the remaining steps follow the same arguments as before as well as \eqref{eq:boundQsigmalogn} 
\begin{align*}
& \sum_{i = 1}^{\mn} (\varepsilon_i + w_i)^2 + \tilde{C}_3 \sigma_0^2 \log(n)\\
\geq & C\big(Y - \hat{T}, \hat{\mathcal{A}}, \hat{\mu}\big) - K \lambda_{\operatorname{coll}}\, \hat{\sigma}_0^2 \log(n)^{1 + \delta} - \max_{Q \in \mathbf{Q}} Q\, \hat{\sigma}_0 \log(n)\\
= & \sum_{i = 1}^{\mn} (y_i - \hat{a}_i - \hat{t}_i)^2 - \max_{Q \in \mathbf{Q}} Q\, \hat{\sigma}_0 \log(n)\\
= & \sum_{i = 1}^{\mn} (\varepsilon_i + w_i + a_i - \hat{a}_i + t_i - \hat{t}_i)^2 - \max_{Q \in \mathbf{Q}} Q\, \hat{\sigma}_0 \log(n)\\
= & \sum_{i = 1}^{\mn} (\varepsilon_i + w_i)^2 + 2\sum_{i = 1}^{\mn}\varepsilon_i (a_i - \hat{a}_i + t_i - \hat{t}_i)\\
& + 2 \sum_{i = 1}^{\mn} w_i (a_i - \hat{a}_i + t_i - \hat{t}_i) + \sum_{i = 1}^{\mn} (a_i - \hat{a}_i + t_i - \hat{t}_i)^2 - \max_{Q \in \mathbf{Q}} Q\, \hat{\sigma}_0 \log(n)\\
\geq & \sum_{i = 1}^{\mn} (\varepsilon_i + w_i)^2 
 - 2 \sqrt{\sum_{k = 0}^{\tilde{K}} \varepsilon_{\tilde{\tau}_k + 1:\tilde{\tau}_{k + 1}}^\top H_{\tilde{\tau}_k + 1:\tilde{\tau}_{k + 1}} \varepsilon_{\tilde{\tau}_k + 1:\tilde{\tau}_{k + 1}} } \sqrt{\sum_{i = 1}^{\mn} (a_i - \hat{a}_i + t_i - \hat{t}_i)^2}\\
& - 2 \sqrt{\sum_{i = 1}^{\mn} w_i^2} \sqrt{ \sum_{i = 1}^{\mn} (a_i - \hat{a}_i + t_i - \hat{t}_i)^2} + \sum_{i = 1}^{\mn} (a_i - \hat{a}_i + t_i - \hat{t}_i)^2 - \max_{Q \in \mathbf{Q}} Q\,\CvarianceEstimate\, \sigma_0 \log(n).
\end{align*}
This can only hold if there exists a constant $\tilde{C}_4 > 0$ such that
\begin{equation}\label{eq:boundhatThatA}
\sum_{i = 1}^{\mn} (t_i - \hat{t}_i - a_i - \hat{a}_i)^2 \leq \tilde{C}_4 \sigma_0^2 \log(n).
\end{equation}

This, combined with the Cauchy-Schwarz inequality, and \eqref{eq:trendEstimateErrorRequirement} gives us
\begin{align*}
\tilde{C}_4 \sigma_0^2 \log(n)
\geq & \sum_{i = 1}^{\mn} (t_i - \hat{t}_i + a_i - \hat{a}_i)^2\\
= &  \sum_{i = 1}^{\mn} (t_i - \hat{t}_i)^2 +  2 \sum_{i = 1}^{\mn} (t_i - \hat{t}_i) (a_i - \hat{a}_i)^2 +  \sum_{i = 1}^{\mn} (a_i - \hat{a}_i)^2\\
\geq & \sum_{i = 1}^{\mn} (t_i - \hat{t}_i)^2 -  2 \sqrt{\sum_{i = 1}^{\mn} (t_i - \hat{t}_i)^2} \sqrt{\sum_{i = 1}^{\mn} (a_i - \hat{a}_i)^2} +  \sum_{i = 1}^{\mn} (a_i - \hat{a}_i)^2\\
\geq & \gamma_n \sigma_0^2 \log(n) - 2 \sqrt{\gamma_n \sigma_0^2 \log(n)} \sqrt{\sum_{i = 1}^{\mn} (a_i - \hat{a}_i)^2} +  \sum_{i = 1}^{\mn} (a_i - \hat{a}_i)^2.
\end{align*}
This can only hold if there exists a constant $\tilde{C}_5 > 0$ such that
\begin{equation*}
\sum_{i = 1}^{\mn} (a_i - \hat{a}_i)^2 > \tilde{C}_5 \gamma_n \sigma_0^2 \log(n).
\end{equation*}

Since $\tilde{K}$ is finite, there must exist an $l \in \{0,\ldots,\tilde{K}\}$ and a constant $\tilde{C}_6 > 0$ such that
\begin{equation}\label{eq:boundhatAforAk}
\sum_{i = \tilde{\tau}_l + 1}^{\tilde{\tau}_{l + 1}} (a_i - \hat{a}_i)^2 > \tilde{C}_6 \gamma_n \sigma_0^2 \log(n).
\end{equation}

We will show the statement by contradiction. We distinguish cases depending on which interval $[\tilde{\tau}_l + 1, \tilde{\tau}_{l + 1}]$ is.

\begin{enumerate}
\item Suppose $[\tilde{\tau}_l + 1, \tilde{\tau}_{l + 1}]$ is non-anomalous and to be estimated non-anomalous, i.e.~there exists a $k \in \{1,\ldots,K + 1\}$ such that $\tilde{\tau}_l = \max \{e_{k - 1}, \hat{e}_{k - 1}\}$ and $\tilde{\tau}_{l + 1} = \min \{b_{k}, \hat{b}_{k}\}$. Then,  $\hat{a}_i = a_i = 0$ for all $\tilde{\tau}_l < i \leq \tilde{\tau}_{l + 1}$. Thus,
\begin{equation*}
\sum_{i = \tilde{\tau}_l + 1}^{\tilde{\tau}_{l + 1}} (a_i - \hat{a}_i)^2 = 0.   
\end{equation*}
This is a contradiction to \eqref{eq:boundhatAforAk}.
\item Suppose $[\tilde{\tau}_l + 1, \tilde{\tau}_{l + 1}]$ is either anomalous or to be estimated anomalous, but not both, i.e.~there exists a $k \in \{1,\ldots,K\}$ such that either $\tilde{\tau}_l = \min \{b_{k}, \hat{b}_{k}\}$ and $\tilde{\tau}_{l + 1} > \min \{b_{k}, \hat{b}_{k}\}$ or $\tilde{\tau}_l < \max \{e_{k}, \hat{e}_{k}\}$ and $\tilde{\tau}_{l + 1} = \max \{e_{k}, \hat{e}_{k}\}$. We focus on the first case. The second case follows in the same way but with $\kappa_0$ and $\kappa_1$ reversed.

Let $\kappa_1 = \tilde{\tau}_{l + 1} - \tilde{\tau}_l$ and $\kappa_0 = \tilde{\tau}_{l} - \tilde{\tau}_{l - 1}$. Because of the definition of $\kappa_k$ and Assumption~\ref{assumption:segmentLengthTrend} we have that $\kappa_0 > \max\{ \kappa_1, 4^{Q + 1}\}$ for $n$ large enough. We have that $a _ i = \hat{a}_i = 0$ for all  $\tilde{\tau}_{l - 1} < i \leq \tilde{\tau}_{l}$ since this interval must be non-anomalous and estimated to be non-anomalous by the order of the $\tau_l$'s. Secondly, there exists a $\hat{\mu} \in \mathbb{R}$ such that $a_i - \hat{a}_i = \hat{\mu}$ for $\tilde{\tau}_l + 1 \leq i \leq \tilde{\tau}_{l + 1}$. Consequently, since $T - \hat{T}$ is a polynomial of degree $Q$, Lemma~\ref{lemma:meanMissCp}, and \eqref{eq:boundhatAforAk} imply 
\begin{align*}
& \sum_{i = \tilde{\tau}_{l} - \kappa_0 + 1}^{\tilde{\tau}_{l}} (t_i - \hat{t}_i + a_i - \hat{a}_i)^2 + \sum_{i = \tilde{\tau}_l + 1}^{\tilde{\tau}_{l} + \kappa_1} (t_i - \hat{t}_i + a_i - \hat{a}_i)^2\\
= & \sum_{i = \tilde{\tau}_l - \kappa_0 + 1}^{\tilde{\tau}_{l}} (t_i - \hat{t}_i)^2 + \sum_{i = \tilde{\tau}_l + 1}^{\tilde{\tau}_{l} + \kappa_0} (t_i - \hat{t}_i - \hat{\mu})^2\\
\geq & \Cmiss \kappa_1 \hat{\mu}^2
=  \Cmiss \sum_{i = \tilde{\tau}_l + 1}^{\tilde{\tau}_{l + 1}} (a_i - \hat{a}_i)^2
 > \Cmiss \tilde{C}_6 \gamma_n \sigma_0^2 \log(n).
\end{align*}
This is a contradiction to \eqref{eq:boundhatThatA}.
\item Suppose $[\tilde{\tau}_l + 1, \tilde{\tau}_{l + 1}]$ is anomalous and to be estimated anomalous, i.e.~there exists a $k \in \{1,\ldots,K\}$ such that $\tilde{\tau}_l = \max \{b_{k}, \hat{b}_{k}\}$ and $\tilde{\tau}_{l + 1} = \min \{e_{k}, \hat{e}_{k}\}$. 

Let $\kappa_0 = \tilde{\tau}_{l + 1} - \tilde{\tau}_l$ and $\kappa_1 =  \tilde{\tau}_{l + 2} - \tilde{\tau}_{l + 1}$. Because of the definition of $\kappa_k$ and Assumption~\ref{assumption:segmentLengthTrend} we have that $\kappa_0 > \max\{ \kappa_1, 4^{Q + 1}\}$ for $n$ large enough. If $\kappa_1 = 0$, then consider the next interval and we may have to switch the roles of $\kappa_0$ and $\kappa_1$. If there exists a sequence $\tilde{\gamma}_n \to \infty$ such that $\sum_{i = \tilde{\tau}_{l + 1} + 1}^{\tilde{\tau}_{l + 2}} (t_i - \hat{t}_i)^2 > \tilde{\gamma}_n \sigma_0^2 \log(n)$, then it follows from the same arguments that lead to \eqref{eq:boundhatAforAk} that there exists a constant $\tilde{C}_7 > 0$ such that $\sum_{i = \tilde{\tau}_{l + 1} + 1}^{\tilde{\tau}_{l + 2}} (a_i - \hat{a}_i)^2 > \tilde{C}_7 \tilde{\gamma}_n \sigma_0^2 \log(n)$. Consequently, considering $l - 1$ and case 2.~leads to a contradiction. Otherwise, there exists a constant $\tilde{C}_8$ such that
\begin{equation*}
\sum_{i = \tilde{\tau}_{l + 1} + 1}^{\tilde{\tau}_{l + 2}} (t_i - \hat{t}_i)^2 \leq \tilde{C}_8 \sigma_0^2 \log(n).
\end{equation*}
Then, from the same arguments as before, there exists constants $\tilde{C}_9,\tilde{C}_{10} > 0$ such that
\begin{align*}
& \sum_{i = \tilde{\tau}_l + 1}^{\tilde{\tau}_{l + 1}} (t_i - \hat{t}_i + a_i - \hat{a}_i)^2 + \sum_{i = \tilde{\tau}_{l + 1} + 1}^{\tilde{\tau}_{l + 2} + \kappa} (t_i - \hat{t}_i + a_i - \hat{a}_i)^2\\
= & \sum_{i = \tilde{\tau}_l + 1}^{\tilde{\tau}_{l + 1}} (t_i - \hat{t}_i + a_i - \hat{a}_i)^2 + \sum_{i = \tilde{\tau}_{l + 1} + 1}^{\tilde{\tau}_{l + 2}} (t_i - \hat{t}_i)^2\\
& - \sum_{i = \tilde{\tau}_{l + 1} + 1}^{\tilde{\tau}_{l + 2}} (t_i - \hat{t}_i)^2 + \sum_{i = \tilde{\tau}_{l + 1} + 1}^{\tilde{\tau}_{l + 2} + \kappa} (t_i - \hat{t}_i + a_i - \hat{a}_i)^2\\
\geq & \tilde{C}_9 \gamma_n \sigma_0^2 \log(n)  - \tilde{C}_8 \sigma_0^2 \log(n) \geq \tilde{C}_{10} \gamma_n \sigma_0^2 \log(n).
\end{align*}
This is a contradiction to \eqref{eq:boundhatThatA}.
\end{enumerate}
In summary all cases lead to a contradiction. This completes the proof.
\end{proof}

\subsection{Seasonality}\label{sec:proofSeasonality}

We show that we can estimate the seasonality well if the following assumption and setting is assumed. Recall that we use $\tilde{\vartheta}_p \coloneq \operatorname{median}(Y_{V_p})$.

\begin{Assumption}\label{assumption:anomalyLengthSeason}
The total length of collective anomalies is bounded, i.e.
\begin{equation*}
\sum_{k = 1}^{K} e_k - b_k < \frac{1}{8}.
\end{equation*}
\end{Assumption}

\begin{Setting}\label{setting:seasonality}
We assume that
\begin{equation*}
Y = A + S + B + R + \varepsilon,
\end{equation*}
where $Y, A, S, \varepsilon$ are as defined in Section~\ref{sec:model}, $B = (b,\ldots,b) \in \mathbb{R}^n$ is a constant baseline with $b = 0$ or $b = \beta_0$, and $R = (r_1,\ldots,r_n)$ being a deterministic vector such that there exists a constant $\CremainderTrend > 0$ such that
\begin{equation*}
\max_{l = 1,\ldots,n} \vert r_i \vert \leq \CremainderTrend \sqrt{\frac{\log(n)}{n}} \sigma_0.
\end{equation*}
We further assume that
\begin{equation}\label{eq:settingSeasonalityAnomaly}
\hat{\mathcal{A}} = \emptyset \text{ or } \hat{\mathcal{A}} \in \mathcal{B},
\end{equation}
where $\hat{\mathcal{A}}$ is deterministic. Finally, suppose that Assumptions~\ref{assumption:segmentLengthTrend}~and~\ref{assumption:anomalyLengthSeason} holds.
\end{Setting}

Under Setting~\ref{setting:seasonality} we show the following proposition.
\begin{Proposition}\label{proposition:estSeasonality}
Suppose that $Y$ follows Setting~\ref{setting:seasonality}. Let $\tilde{\vartheta}_p \coloneq \operatorname{median}(Y_{V_p})$. Then, there exists a constant $\CfinalSeasonalEstimate > 0$ such that
\begin{equation*}
\Pj\left(\sum_{l = 1}^{n} (\hat{s}_l - s_l)^2 \leq \CfinalSeasonalEstimate \log(n) \sigma_0 \right) \to 1,
\end{equation*}
as $n \to \infty$.
\end{Proposition}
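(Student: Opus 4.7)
The plan is to analyse, for each residue class $p \in \{0,\ldots,P-1\}$, how closely $\tilde{\vartheta}_p$ approximates $s_p + b$, and then aggregate the squared errors across $p$. For $i \in V_p$ with $a_i = 0$ we have $y_i - s_p - b = r_i + \varepsilon_i$, which is Gaussian with $|r_i| \leq \CremainderTrend \sqrt{\log(n)/n}\sigma_0$, so the problem reduces to controlling a sample median corrupted by a small number of anomalous entries.

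First I would establish $|V_p| = \Theta(n/P)$ in both allowed cases for $\hat{\mathcal{A}}$, and bound the number of truly anomalous points remaining in $V_p$. When $\hat{\mathcal{A}} \in \mathcal{B}$, only the boundary-mismatch regions $(b_k, \hat{b}_k]$ and $(\hat{e}_k, e_k]$ can leak through, contributing at most $2\sum_k \kappa_k / P + O(K)$ contaminated entries, which is $O(\log(n)^{1+\delta})$ by the definition of $\kappa_k$. When $\hat{\mathcal{A}} = \emptyset$, Assumption~\ref{assumption:anomalyLengthSeason} instead bounds the contaminated fraction by $1/8$.

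Next, I would combine a sample-median concentration bound on the clean subsample $V_p \setminus \mathcal{A}$ with a robustness argument. Applying the Dvoretzky-Kiefer-Wolfowitz inequality to the empirical CDF of the clean entries and inverting, the median of the clean subsample lies within $O(\sqrt{\log(n)/|V_p|})\sigma_0$ of $s_p + b$ with probability at least $1 - n^{-c}$ for $c$ arbitrarily large. A standard robustness lemma then shows that adding in the contaminated entries shifts the sample median by at most the gap between the $(1/2 - \alpha)$- and $(1/2 + \alpha)$-quantiles of the clean distribution, where $\alpha$ is the contamination fraction; when $\hat{\mathcal{A}} \in \mathcal{B}$ this is $O(\log(n)^{1+\delta}/n)\sigma_0$, of the same order as the concentration bound. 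A union bound over $p$ gives simultaneous control.

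Finally I would transfer to $\hat{s}_l$ via the centering step $\hat{s}_l = \tilde{s}_l - n^{-1}\sum_i \tilde{s}_i$; since $\sum_i s_i = 0$ already by \eqref{eq:season_sum}, this merely subtracts the average of the single-class errors. Summing squared errors and using that $\tilde{s}_l$ depends on $l$ only through $l \bmod P$ yields
\begin{equation*}
\sum_{l=1}^{n} (\hat{s}_l - s_l)^2 \lesssim \sum_{p=0}^{P-1} |V_p| \cdot (\tilde{\vartheta}_p - s_p - b)^2 \lesssim P \log(n) \sigma_0^2,
\end{equation*}
and absorbing $P$ into $\CfinalSeasonalEstimate$ gives the result. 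The main obstacle is the case $\hat{\mathcal{A}} = \emptyset$, where one-sided anomalous contamination at fraction up to $1/8$ produces an $O(\sigma_0)$ shift in the sample median that the naive quantile-gap bound cannot tighten. Handling this cleanly requires either exploiting signed-bias cancellation through the global centering step, or a sharper perturbation analysis of the contaminated median that tracks the explicit distribution of anomaly magnitudes rather than only worst-case shifts.
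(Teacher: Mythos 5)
Your route --- DKW concentration of the \emph{clean}-subsample median around $s_p+b$ plus a worst-case quantile-gap perturbation bound for the contaminated entries --- fails exactly where you say it does, and that case cannot be set aside: $\hat{\mathcal{A}}=\emptyset$ is the case actually used for the initial seasonal estimate $\hat{S}_0$ in Step~1 of STAD. There the contamination of each $V_p$ is only bounded by a constant fraction (Assumption~\ref{assumption:anomalyLengthSeason} allows order $n/8$ anomalous points; cf.\ Lemma~\ref{lemma:sizeOfV0}), the anomalies may be one-sided and of arbitrary magnitude, so the contaminated median of $V_p$ can sit at distance of order $\sigma_0$ from $s_p+b$; your argument then only gives a total squared error of order $n\sigma_0^2$, not $\log(n)$. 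The paper's proof implements precisely the idea you gesture at but do not carry out: it never compares $\tilde{\vartheta}_p$ to the clean target $s_p+b$. Lemma~\ref{lemma:medianInterval} instead couples the median of $V_p$ to the median $\tilde{q}$ of the \emph{common} contaminated reference sample $\{a_j+\varepsilon_j : j\in V_0\}$ --- the coupling works because the anomalies are collective, so the contamination pattern of $V_p$ differs from that of $V_0$ by only $O(K)$ entries --- and shows $\tilde{\vartheta}_p \in \vartheta_p + b + \tilde{q} \pm (1+\CremainderTrend)\sqrt{\log(n)/n}\,\sigma_0$ simultaneously for all $p$. The shift $\tilde{q}$ may well be of order $\sigma_0$ (Lemma~\ref{lemma:boundTildeq} only bounds it by $0.752\,\sigma_0$, which is all that is needed to keep the Gaussian density in the relevant window bounded below, Lemma~\ref{lemma:boundSuccessProbability}); but since it is the same for every residue class it is removed, together with $b$, by the centering in \eqref{eq:estimateSeasonality}. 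That signed-bias cancellation is the missing step in your proposal, not an optional refinement.

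A second, quieter gap: even in the case $\hat{\mathcal{A}}\in\mathcal{B}$ your final bound is $\sum_p |V_p|\,(\tilde{\vartheta}_p - s_p - b)^2 \lesssim P\log(n)\sigma_0^2$ and you ``absorb $P$ into the constant''. The period $P$ is not assumed bounded (the paper's argument implicitly only needs $P$ to grow slower than roughly $\sqrt{n\log n}$ for the probability bound in Lemma~\ref{lemma:medianInterval} to vanish), so this absorption is not justified. The coupling argument also repairs this: because the comparison is to the sample median $\tilde{q}$ of an almost identical reference sample rather than to a population quantile, the median of $V_p$ can leave the interval of half-width $\gamma=\sqrt{\log(n)/n}\,\sigma_0$ around $\vartheta_p+b+\tilde{q}$ only if fewer than a \emph{constant} number (about $4K+1$) of clean noise values land in $[\tilde{q},\tilde{q}+\gamma]$, while the expected number landing there is of order $(n/P)\sqrt{\log(n)/n}\to\infty$. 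This gives concentration at scale $\sqrt{\log(n)/n}$ per class instead of the $\sqrt{P\log(n)/n}$ scale that DKW yields for a class of size $n/P$, and hence the $P$-free bound of order $\log(n)\,\sigma_0^2$ claimed in the proposition.
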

We require a few definitions and lemmas before we can prove Proposition~\ref{proposition:estSeasonality}. Let
\begin{equation*}
\tilde{V}_p \coloneq \big\{ i \in \{1,\ldots,n\} : (i \bmod P) = p\big\} \text{ and } V_p \coloneq \tilde{V}_p\setminus \hat{\mathcal{A}},
\end{equation*}
where the latter matches \eqref{eq:setSeasonalEstimation}. By construction, $\tilde{V}_0 = \{P,2P, \ldots, \tilde{n}_{0} P \}$, with $\tilde{n}_{0} \coloneq \lfloor \frac{n}{P} \rfloor$, and $\tilde{V}_p = \{ p, p + P,\ldots, p + (\tilde{n}_{p} - 1) P \}$, with $\tilde{n}_{p} = \tilde{n}_{0}$ or $\tilde{n}_{p} = \tilde{n}_{0} + 1$. 

\begin{Lemma}\label{lemma:medianInterval}
Under the setting of Proposition~\ref{proposition:estSeasonality}, with probability converging to one, $\tilde{\vartheta}_p$, i.e.~the median of the observations in $V_p$, is in the interval
\begin{equation*}
\big[\gamma_{L,p}, \gamma_{U,p}\big] \coloneq \left[ \vartheta_p + b + \tilde{q} - \gamma - \CremainderTrend \sqrt{\frac{\log(n)}{n}} \sigma_0,\ \vartheta_p + b +\tilde{q} + \gamma + \CremainderTrend \sqrt{\frac{\log(n)}{n}} \sigma_0\right],
\end{equation*}
with $\gamma \coloneq \sqrt{\frac{\log(n)}{n}} \sigma_0$ and $\tilde{q}$ being the median of the observations
\begin{equation}\label{eq:definitionTildeq}
\{ a_j + \varepsilon_j\,:\, j \in V_0\}.
\end{equation}
\end{Lemma}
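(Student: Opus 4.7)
The plan is to combine the translation-equivariance of the median with a concentration argument for order statistics of i.i.d.~Gaussians. For $j \in V_p$ write $y_j = (\vartheta_p + b) + (a_j + r_j + \varepsilon_j)$, so that by translation-equivariance,
\begin{equation*}
\tilde{\vartheta}_p = \vartheta_p + b + \operatorname{median}\bigl(\{a_j + r_j + \varepsilon_j : j \in V_p\}\bigr).
\end{equation*}
Since $\max_j |r_j| \leq \CremainderTrend \sqrt{\log(n)/n}\,\sigma_0$ deterministically, replacing every entry $a_j + r_j + \varepsilon_j$ by $a_j + \varepsilon_j$ shifts each order statistic (in particular the median) by at most $\CremainderTrend \sqrt{\log(n)/n}\,\sigma_0$, which is exactly the extra slack built into the interval $[\gamma_{L,p}, \gamma_{U,p}]$. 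It therefore suffices to prove
\begin{equation*}
\operatorname{median}\bigl(\{a_j + \varepsilon_j : j \in V_p\}\bigr) \in [\tilde{q} - \gamma,\, \tilde{q} + \gamma]
\end{equation*}
with probability tending to one for every $p$.

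Next I would split $V_p = V_p^{n} \cup V_p^{a}$ according to whether $j \in \mathcal{A}$. Under Assumption~\ref{assumption:anomalyLengthSeason} when $\hat{\mathcal{A}} = \emptyset$, and under \eqref{eq:settingSeasonalityAnomaly} combined with Assumption~\ref{assumption:segmentLengthTrend} when $\hat{\mathcal{A}} \in \mathcal{B}$, the ratio $|V_p^{a}|/|V_p|$ is bounded away from $1/2$ uniformly in $p$. Therefore the empirical median of $\{a_j + \varepsilon_j : j \in V_p\}$ is sandwiched between two explicit order statistics of the pure-noise subsample $\{\varepsilon_j : j \in V_p^{n}\}$, and the same sandwiching applies to the defining value $\tilde q$ on $V_0^{n}$.

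The final step is to show that these relevant order statistics of the two i.i.d.~$\mathcal{N}(0,\sigma_0^2)$ subsamples lie within $\gamma$ of each other. I would apply Hoeffding's (or DKW) inequality to the empirical CDFs of $\{\varepsilon_j : j \in V_p^{n}\}$ and $\{\varepsilon_j : j \in V_0^{n}\}$ at the points $\tilde q$ and $\tilde q \pm \gamma$; since $P$ is held fixed, $|V_p^{n}| \asymp n$, and the Gaussian density $\Phi'$ is bounded below near the median, a location shift $\gamma = \sqrt{\log(n)/n}\,\sigma_0$ produces a probability change of order $\sqrt{\log(n)/n}$ that, with a Hoeffding constant tuned small enough, strictly dominates the $O(\sqrt{\log(n)/n})$ empirical fluctuation with probability at least $1 - n^{-c}$. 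A union bound over $p \in \{0,\ldots,P-1\}$ and the two endpoints completes the argument. The main obstacle is the second step: the anomalous entries may be arbitrarily extreme, so one must bookkeep their positions in the sorted sample carefully and verify that the displacement they induce on the order statistic of interest is strictly less than the available budget $\gamma$, which uses both the quantitative bound on $|V_p^{a}|/|V_p|$ and the local behaviour of $\Phi$ near its median.
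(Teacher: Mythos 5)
Your reduction steps are fine: translation equivariance, absorbing the $r_j$'s into the $\CremainderTrend\sqrt{\log(n)/n}\,\sigma_0$ slack, and reducing to showing $\operatorname{median}\{a_j+\varepsilon_j : j\in V_p\} \in [\tilde q - \gamma, \tilde q + \gamma]$ match what the lemma requires. The gap is in your steps 4--5. Sandwiching the median of the contaminated sample between order statistics of the pure-noise subsample $\{\varepsilon_j : j\in V_p^{n}\}$ only localizes it to ranks that differ by the number of anomalous points, i.e.\ to quantile levels of the clean noise separated by roughly the contamination fraction (about $1/8$ to $1/6$ here). That is a window of \emph{constant} width in value, on the order of $\big(\Phi^{-1}(0.6)-\Phi^{-1}(0.4)\big)\sigma_0$, whereas the budget you must meet is $\gamma \asymp \sigma_0\sqrt{\log(n)/n} \to 0$. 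The same constant-width uncertainty applies to $\tilde q$. No DKW/Hoeffding comparison of the two clean-noise subsamples can close this: empirical-quantile fluctuations are indeed $O(\sqrt{P\log n/n})$, but the sandwich itself does not pin down \emph{which} quantile level the contaminated median sits at, so "the displacement is strictly less than $\gamma$" is false in the worst case over where the anomalous values $a_j+\varepsilon_j$ land relative to the median.

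What makes the lemma true — and what the paper's proof uses — is that the contamination in $V_p$ and in $V_0$ is (nearly) the \emph{same}: $a_{p+jP}=a_{(j+1)P}$ for all but $O(K)$ boundary indices, and the exclusion sets induced by $\hat{\mathcal A}$ also match up to $O(K)$ indices. This is precisely why $\tilde q$ is defined as the median of the \emph{contaminated} $V_0$ sample rather than of pure noise: the paper compares the exceedance counts $\sum_{j\in V_p}\EINS_{\{Y_j>\gamma_{U,p}\}}$ index-by-index with the corresponding counts over $V_0$, so the anomaly contribution cancels up to an $O(K)$ error, and this error is absorbed because, with high probability, order $\frac{n}{P}\frac{\gamma}{\sigma_0}\to\infty$ noise values fall in $[\tilde q,\tilde q+\gamma]$ (this uses the bounds on $\tilde q$ and on the Gaussian density, i.e.\ the paper's Lemmas on $\tilde q$ and the success probability, together with the Binomial tail bound). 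Your proposal discards this matching entirely by treating the anomalies as arbitrary contamination, and therefore cannot reach the $\gamma$-scale conclusion; to repair it you would have to track the matched anomaly values and show the rank displacement they induce differs between $V_0$ and $V_p$ only by $O(K)$ plus noise-level fluctuations, which essentially reproduces the paper's argument. A secondary point: you assume $P$ is fixed, which the paper does not; its bound only needs $\frac{n}{P}\frac{\gamma}{\sigma_0}\to\infty$, i.e.\ effectively $P = o(\log n)$.
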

We will prove this lemma after we have shown a few more technical lemmas. The following lemma bounds the cardinality of $V_0$ and how many observations are anomalous.
\begin{Lemma}\label{lemma:sizeOfV0}
Under the setting of Proposition~\ref{proposition:estSeasonality}, if $n$ is large enough, then there exists a constant $\CsizeOfVzero > 0$ such that the set $V_0$ contains $\frac{5}{6}\frac{n}{P} - \CsizeOfVzero \leq N_0 \leq \frac{n}{P}$ many observations. Moreover, at most $\frac{1}{6}\frac{n}{P} + \CsizeOfVzero$ many observations are anomalous and at least $\frac{5}{6}\frac{n}{P} - \CsizeOfVzero$ many observations are non-anomalous.
\end{Lemma}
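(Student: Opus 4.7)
My plan is to count multiples of $P$ in the relevant sets ($\hat{\mathcal{A}}$, $\mathcal{A}$, and $\mathcal{A}\cup\hat{\mathcal{A}}$) using the elementary observation that any union of $K'$ disjoint intervals of total length $L$ contains at most $L/P + K'$ multiples of $P$. The upper bound $N_0 \leq n/P$ is immediate from $V_0 \subseteq \tilde V_0$ and $|\tilde V_0| = \lfloor n/P\rfloor$.

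For the lower bound on $N_0$, I would write $|V_0| = |\tilde V_0| - |\tilde V_0 \cap \hat{\mathcal{A}}|$. When $\hat{\mathcal{A}} = \emptyset$ this gives $|V_0| \geq n/P - 1$ directly. When $\hat{\mathcal{A}} \in \mathcal{B}$, the total length of $\hat{\mathcal{A}}$ satisfies $|\hat{\mathcal{A}}| \leq |\mathcal{A}| + |\hat{\mathcal{A}}\triangle\mathcal{A}| \leq \frac{n}{8} + 2\sum_{k=1}^K \kappa_k$ using Assumption~\ref{assumption:anomalyLengthSeason} and the fact that $\hat{\mathcal{A}}\in\mathcal{B}$ forces $\hat K = K$ and $|\hat b_k - b_k|, |\hat e_k - e_k| < \kappa_k$. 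Hence $|\tilde V_0 \cap \hat{\mathcal{A}}| \leq \frac{n}{8P} + \frac{2\sum_k \kappa_k}{P} + K$, which yields $|V_0| \geq \frac{7n}{8P} - 1 - \frac{2\sum_k \kappa_k}{P} - K$.

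For the anomalous/non-anomalous counts, I would argue similarly. The anomalous indices in $V_0$ lie in $\mathcal{A} \cap (\tilde V_0 \setminus \hat{\mathcal{A}})$, which is bounded above by either $|\mathcal{A}|/P + K \leq \frac{n}{8P} + K$ (when $\hat{\mathcal{A}} = \emptyset$) or by $|\mathcal{A}\setminus\hat{\mathcal{A}}|/P + 2K \leq \frac{2\sum_k\kappa_k}{P} + 2K$ (when $\hat{\mathcal{A}}\in\mathcal{B}$). The non-anomalous indices in $V_0$ equal $\tilde V_0 \setminus (\mathcal{A}\cup\hat{\mathcal{A}})$, whose cardinality is at least $\lfloor n/P\rfloor - \frac{n}{8P} - \frac{2\sum_k\kappa_k}{P} - 2K$.

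The only technical step is to absorb the error terms $K$ and $\sum_k \kappa_k$ into the constant $\CsizeOfVzero$. This is where I use Assumption~\ref{assumption:segmentLengthTrend}: every true anomaly has length at least of order $\log(n)^{1+\delta+\tilde\delta}$, so the total-length bound $\sum_k(e_k - b_k) < n/8$ gives $K = O(n/\log(n)^{1+\delta+\tilde\delta})$, and since $\kappa_k = O(\log(n)^{1+\delta})$, we obtain $\sum_k \kappa_k = O(n/\log(n)^{\tilde\delta})$. Both terms are $o(n/P)$, so for $n$ large enough they are dominated by $\frac{n}{24P} = \frac{7n}{8P} - \frac{5n}{6P}$, leaving only an $O(1)$ residue to be absorbed into $\CsizeOfVzero$. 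This slack between $1/8$ and $1/6$ (creating a buffer of $1/24$) is the crucial numerical feature that makes the argument work; the main obstacle is simply ensuring the bookkeeping of this slack is clean in all four sub-cases ($\hat{\mathcal{A}}=\emptyset$ or not, crossed with the two counts).
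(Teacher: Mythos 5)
Your proposal follows the same route as the paper: count the multiples of $P$ that fall inside (estimated) anomalous intervals by ``total length divided by $P$ plus the number of intervals whose boundary is crossed'', control the extra length of $\hat{\mathcal{A}}$ via $\hat{\mathcal{A}}\in\mathcal{B}$, invoke Assumption~\ref{assumption:anomalyLengthSeason}, and live off the $1/24$ slack between $1/8$ and $5/6$. The counting itself, the case split $\hat{\mathcal{A}}=\emptyset$ versus $\hat{\mathcal{A}}\in\mathcal{B}$, and the bounds on the anomalous/non-anomalous counts are all in line with the paper's argument.

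The weak point is exactly the step you flag as ``the only technical step''. First, the rate $K=O\big(n/\log(n)^{1+\delta+\tilde\delta}\big)$ does not follow from Assumption~\ref{assumption:segmentLength} as stated: the length lower bound there is $e_k-b_k\ge \frac{\sigma_0^2}{\mu_k}\log(n)^{1+\delta+\tilde\delta}$, which is not a uniform lower bound of order $\log(n)^{1+\delta+\tilde\delta}$ when $\mu_k$ is large, so anomalies can be short and $K$ need not obey that rate. Second, and more importantly, your absorption compares the boundary count $K$ (which enters your bound on $|\tilde V_0\cap\hat{\mathcal{A}}|$ \emph{without} division by $P$) against $\tfrac{n}{24P}$; a bound of the form $K=o(n)$ does not give $K=o(n/P)$, since $P$ may grow with $n$ (the method explicitly allows $P$ of order $n$). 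The paper avoids both issues by treating $K$ as fixed and absorbing the boundary terms directly into the constant, $\CsizeOfVzero=2K+1$, so that the only condition needed for large $n$ is the absolute-count requirement $\sum_{k}2\big\lceil \Clambda\sigma_0^2\mu_k^{-2}\log(n)^{1+\delta}\big\rceil\le n/24$, which plays the role of your $\sum_k\kappa_k$ term (that part of your bookkeeping is fine, since it is divided by $P$ on both sides). If you likewise absorb the finitely many boundary terms into $\CsizeOfVzero$ instead of trying to dominate them by $n/(24P)$, your proof closes and coincides with the paper's.
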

\begin{proof}
Because of \eqref{eq:settingSeasonalityAnomaly}, $V_0$ contains at least $\tilde{n}_{0}$ many observations minus the points that are part of an estimated anomaly. A season containing an estimated collective anomalies is either fully part of an estimated collective anomaly or at the boundary of at least one estimated collective anomaly. The definition of $\mathcal{B}$ yield that at most 
\begin{equation*}
\frac{\sum_{k = 1}^K e_k - b_k + \sum_{k = 1}^{K} 2 \Big\lceil \Clambda\frac{\sigma_0^2}{\Delta_k^2}\log(n)^{1 + \delta} \Big\rceil}{P}
\end{equation*}
seasons are fully part of an estimated collective anomaly. Furthermore, at most $2 K$ many seasons are at the boundary of an estimated anomaly. So, in total $V_p$ contains at least
\begin{equation}\label{eq:boundCardinalityVp}
\tilde{n}_{0} - \frac{\sum_{k = 1}^K e_k - b_k + \sum_{k = 1}^{K} 2 \Big\lceil \Clambda\frac{\sigma_0^2}{\Delta_k^2}\log(n)^{1 + \delta} \Big\rceil}{P} - 2 K
\end{equation}
many observations. Let $\CsizeOfVzero = 2K + 1$. Let $n$ be large enough such that
\begin{equation*}
\sum_{k = 1}^{K} 2 \big\lceil \Clambda\frac{\sigma_0^2}{\Delta_k^2}\log(n)^{1 + \delta} \big\rceil \leq \frac{1}{24} n.   
\end{equation*}
This holds for large $n$ because of Assumption~\ref{assumption:segmentLengthTrend}. Then, it follows from Assumption~\ref{assumption:anomalyLengthSeason} that
\begin{align*}
& \tilde{n}_{0} - \frac{\sum_{k = 1}^K e_k - b_k + \sum_{k = 1}^{K} 2 \big\lceil \Clambda\frac{\sigma_0^2}{\Delta_k^2}\log(n)^{1 + \delta} \big\rceil}{P} - 2 K\\
\geq & \frac{n}{P} - 1 - \frac{1}{8} \frac{n}{P} - \frac{1}{24} \frac{n}{P} - 2 K\\
\geq & \frac{5}{6}\frac{n}{P} - \CsizeOfVzero.
\end{align*}

This shows the first statement. For the second statement note that a season containing anomalous points is either fully part of a collective anomaly or at the boundary of at least one collective anomaly. At most $\frac{\sum_{k = 1}^K e_k - b_k}{P}$ seasons are fully part of a collective anomaly and at most $2 K$ many seasons are at the boundary of an anomaly. So, in total at most
\begin{equation}\label{eq:boundCardinalityAnomalous}
\frac{\sum_{k = 1}^K e_k - b_k}{P} + 2 K
\end{equation}
many observations in $V_0$ are anomalous. It follows from Assumption~\ref{assumption:anomalyLengthSeason} that
\begin{align*}
\frac{\sum_{k = 1}^K e_k - b_k}{P} + 2 K
\leq \frac{1}{8} \frac{n}{P} + 2 K
\leq \frac{1}{6}\frac{n}{P} + \CsizeOfVzero.
\end{align*}
The number of non-anomalous points follows conversely.
\end{proof}

We have two technical lemmas on how to bound the density and cumulative distribution function of a standard Gaussian random variable.
\begin{Lemma}\label{lemma:boundphi}
Let $\phi$ be the density of a standard Gaussian random variable. If $x \leq 1$, then
\begin{equation*}
\phi(x) \geq \frac{1}{\sqrt{2\pi}}\left(1 - \frac{x^2}{2} \right).
\end{equation*}
\end{Lemma}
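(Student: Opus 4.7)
The plan is to reduce this to the scalar inequality $e^{-y} \geq 1 - y$, which holds for all real $y$, applied at $y = x^2/2$. Writing $\phi(x) = \frac{1}{\sqrt{2\pi}} e^{-x^2/2}$, the claim becomes $e^{-x^2/2} \geq 1 - x^2/2$, and multiplying through by the positive constant $\frac{1}{\sqrt{2\pi}}$ preserves the direction of the inequality.

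To establish $e^{-y} \geq 1 - y$ for all $y \in \mathbb{R}$, I would set $f(y) \coloneq e^{-y} - (1 - y)$ and observe that $f(0) = 0$, $f'(y) = 1 - e^{-y}$ with $f'(0) = 0$, and $f''(y) = e^{-y} > 0$ for all $y$. Hence $f$ is strictly convex with a unique global minimum at $y = 0$, so $f(y) \geq 0$ everywhere. Substituting $y = x^2/2 \geq 0$ then yields the claim.

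There is essentially no obstacle here; the result is a one-line consequence of a classical Taylor/convexity bound. I note that the hypothesis $x \leq 1$ stated in the lemma is in fact not needed for the inequality to hold (when $x^2 > 2$ the right-hand side is negative and the bound is trivial), but the restriction is harmless and presumably reflects the regime in which the lemma is invoked elsewhere in the proofs. Accordingly, I would simply state and prove the stronger ``for all $x \in \mathbb{R}$'' version and deduce the claimed restriction as an immediate special case.
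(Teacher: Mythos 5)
Your proof is correct, and it takes a slightly different route from the paper. The paper expands $\phi(x)$ as the alternating series $\frac{1}{\sqrt{2\pi}}\sum_{k \geq 0} (-1)^k \frac{x^{2k}}{2^k k!}$ and truncates after two terms, using the hypothesis $x \leq 1$ to guarantee that the terms decrease in absolute value so that the alternating-series bound applies; this is where the restriction $x \leq 1$ actually enters their argument. You instead reduce the claim to the classical inequality $e^{-y} \geq 1 - y$, proved by convexity of $y \mapsto e^{-y} - (1-y)$, which holds for every real $y$ and hence gives the lemma for all $x \in \mathbb{R}$ with no restriction. Your observation that the hypothesis $x \leq 1$ is superfluous is accurate: the paper's restriction is an artifact of its alternating-series argument (and of the regime in which the lemma is later invoked, namely bounding $\phi$ at the point $(q+\gamma)/\sigma_0 \leq 1$ in Lemma~\ref{lemma:boundSuccessProbability}), not a genuine requirement of the inequality. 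In short, your argument is marginally more elementary and strictly more general, while the paper's argument is tied to the stated hypothesis; both establish the displayed bound.
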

\begin{proof}
We obtain the Taylor series
\begin{align*}
\phi(x) = \frac{1}{\sqrt{2\pi}}\sum_{k = 0}^{\infty} (-1)^k \frac{1}{k!} 2^{-k} x^{2k}.
\end{align*}
The statement follows from the fact that the terms have alternating signs and decay in absolute values, since $x \leq 1$.
\end{proof}

\begin{Lemma}\label{lemma:boundPhi}
Let $\Phi$ be the cumulative distribution function of a standard Gaussian random variable. If $x \leq 1$, then
\begin{equation*}
\Phi(x) \geq\frac{1}{2} + \frac{5}{6 \sqrt{2 \pi}} x.
\end{equation*}
\end{Lemma}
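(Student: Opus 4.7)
The plan is to prove the bound by integrating the pointwise lower bound on $\phi$ supplied by Lemma~\ref{lemma:boundphi}. First I would write $\Phi(x) = \tfrac{1}{2} + \int_0^x \phi(t)\,dt$, valid for $x \geq 0$ (and at $x=0$ both sides of the claim equal $\tfrac{1}{2}$, so that case is immediate). Since each integration variable $t \in [0,x] \subset [0,1]$, Lemma~\ref{lemma:boundphi} applies pointwise and gives $\phi(t) \geq \tfrac{1}{\sqrt{2\pi}}(1 - t^2/2)$. Integrating this inequality term-by-term on $[0,x]$ yields
\[
\Phi(x) \;\geq\; \frac{1}{2} + \frac{1}{\sqrt{2\pi}}\left(x - \frac{x^3}{6}\right).
\]

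The final step is the elementary observation that $x^3 \leq x$ for all $x \in [0,1]$, so $x - x^3/6 \geq x - x/6 = 5x/6$. Substituting this into the previous display produces the claimed bound $\Phi(x) \geq \tfrac{1}{2} + \tfrac{5}{6\sqrt{2\pi}} x$. No substantive obstacle is expected; the whole argument is a two-line computation once Lemma~\ref{lemma:boundphi} is invoked. I note in passing that the hypothesis "$x \leq 1$" is effectively used only for $x \in [0,1]$: the stated bound is tight at $x=0$ and $x=1$ and, in the usage in Lemma~\ref{lemma:medianInterval}, will be applied with the nonnegative perturbation $\gamma/\sigma_0$.
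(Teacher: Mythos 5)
Your proof is correct and essentially the paper's argument: the paper reaches the same intermediate bound $\Phi(x) \geq \tfrac{1}{2} + \tfrac{1}{\sqrt{2\pi}}\left(x - \tfrac{x^3}{6}\right)$ via a Taylor expansion of $\Phi$ with a Lagrange-type remainder rather than by integrating Lemma~\ref{lemma:boundphi}, and then concludes with the same elementary step $x^3 \leq x$ on $[0,1]$. Your remark that the inequality is really only established (and only needed) for $x \in [0,1]$ is apt, since the paper's own proof likewise implicitly assumes $x \geq 0$.
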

\begin{proof}
Taylor's theorem gives us
\begin{align*}
\Phi(x) = \frac{1}{2} + \frac{x}{\sqrt{2\pi}} - \frac{\tilde{x}^3}{6\sqrt{2\pi}},
\end{align*}
for an $\tilde{x} \in [0, x]$. The statement follows from $\frac{\tilde{x}^3}{6\sqrt{2\pi}} \leq \frac{x^3}{6\sqrt{2\pi}} \leq  \frac{1}{6 \sqrt{2 \pi}} x$, since $\tilde{x} \leq x \leq 1$.
\end{proof}

The following lemma bounds $\tilde{q}$.
\begin{Lemma}\label{lemma:boundTildeq}
Under the setting of Proposition~\ref{proposition:estSeasonality}, if $n$ is large enough, then there exists a constant $0 < \CtildeQ < 0.752$ and a constant $0 < \CprobBoundTildeq < 1$ such that
\begin{equation*}
\Pj(\tilde{q} > \CtildeQ\, \sigma_0) \leq \CprobBoundTildeq \frac{1}{n}.
\end{equation*}
\end{Lemma}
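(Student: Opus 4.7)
The plan is to use the classical reduction that $\tilde{q} > \CtildeQ \sigma_0$ holds only if at least half of the observations in $V_0$ exceed $\CtildeQ \sigma_0$. Since the non-anomalous indices in $V_0$ contribute i.i.d.\ centred Gaussians while the anomalous ones can be arbitrary, I would bound the anomalous contribution adversarially and then control the non-anomalous contribution via binomial concentration, choosing $\CtildeQ$ so that fewer exceedances are expected than would be required.

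First, I would write $N_0 = |V_0|$ and split $V_0 = V_0^{\operatorname{a}} \cup V_0^{\operatorname{na}}$ according to whether the index belongs to $\mathcal{A}$. From Lemma~\ref{lemma:sizeOfV0}, $|V_0^{\operatorname{a}}| \leq \tfrac{1}{6}\tfrac{n}{P} + \CsizeOfVzero$ and $|V_0^{\operatorname{na}}| \geq \tfrac{5}{6}\tfrac{n}{P} - \CsizeOfVzero$, hence $N_0 \geq \tfrac{5}{6}\tfrac{n}{P} - \CsizeOfVzero$. On the event $\{\tilde{q} > \CtildeQ \sigma_0\}$ at least $N_0/2$ observations exceed the threshold; pessimistically counting all anomalous points as exceedances (their magnitudes being arbitrary), the number of $j \in V_0^{\operatorname{na}}$ with $\varepsilon_j > \CtildeQ \sigma_0$ must satisfy
\begin{equation*}
\big|\{j \in V_0^{\operatorname{na}} : \varepsilon_j > \CtildeQ \sigma_0\}\big| \geq \frac{N_0}{2} - |V_0^{\operatorname{a}}| \geq \frac{n}{4P} - \frac{3}{2}\CsizeOfVzero,
\end{equation*}
where I used $a_j = 0$ on non-anomalous indices.

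On $V_0^{\operatorname{na}}$ the indicators $\EINS_{\varepsilon_j > \CtildeQ \sigma_0}$ are i.i.d.\ Bernoulli with parameter $p = 1 - \Phi(\CtildeQ)$. Letting $M \coloneq |V_0^{\operatorname{na}}| \leq n/P$, the expected number of exceedances is $Mp \leq np/P$. Since $\Phi^{-1}(3/4) \approx 0.6745 < 0.752$, any choice $\CtildeQ \in (\Phi^{-1}(3/4), 0.752)$ gives $p < 1/4$; writing $p = 1/4 - \delta$ for some $\delta > 0$, the threshold from the previous paragraph exceeds $Mp$ by a margin of order $M\delta$. Hoeffding's inequality applied to $\sum_{j \in V_0^{\operatorname{na}}} \EINS_{\varepsilon_j > \CtildeQ \sigma_0}$ then yields an $\exp(-c\, n/P)$ bound for a constant $c > 0$ depending on $\delta$, which for $n$ large enough is at most $\CprobBoundTildeq / n$ with $0 < \CprobBoundTildeq < 1$.

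The argument itself is routine once framed this way; the delicate part is the fractional bookkeeping that forces $\CtildeQ$ to lie strictly above $\Phi^{-1}(3/4)$. The ceiling $0.752$ records that $\CtildeQ$ must have enough margin above the $75\%$-quantile of the standard Gaussian to keep Hoeffding's rate constant bounded away from zero and to absorb the lower-order $\CsizeOfVzero$ correction arising from the few seasons that straddle the boundary of a collective anomaly; a tighter upper bound could be used but $0.752$ already leaves comfortable room.
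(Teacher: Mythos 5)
Your proof is correct and follows the same skeleton as the paper's: reduce $\{\tilde q > \CtildeQ\,\sigma_0\}$ to the event that at least half of the values $a_j+\varepsilon_j$, $j \in V_0$, exceed the threshold, count every anomalous index of $V_0$ as an exceedance via Lemma~\ref{lemma:sizeOfV0}, and then apply a binomial tail bound to the i.i.d.\ Gaussian exceedance indicators on the non-anomalous indices (legitimate because $\hat{\mathcal A}$, hence $V_0$, is deterministic in Setting~\ref{setting:seasonality}). Where you differ is only in the tools. First, you get the success probability $p<\tfrac14$ from the exact quantile $\Phi^{-1}(3/4)\approx 0.6745$, whereas the paper uses the Taylor bound of Lemma~\ref{lemma:boundPhi}; that bound is in fact where the constant $0.752\approx 3\sqrt{2\pi}/10$ comes from (it solves $\tfrac12-\tfrac{5}{6\sqrt{2\pi}}x=\tfrac14$), and the cap $\CtildeQ<0.752$ matters later in Lemma~\ref{lemma:boundSuccessProbability}, where $q=0.752\,\sigma_0$ is fixed and $(q+\gamma)/\sigma_0\le 1$ keeps Lemma~\ref{lemma:boundphi} applicable — so your closing remark attributing $0.752$ to a Hoeffding margin is off, though immaterial to the lemma as stated. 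Second, you invoke Hoeffding instead of the paper's Feller/Chebyshev-type bound $\Pj(S_n\ge r)\le r(1-p)/(r-np)^2$, which yields an $\exp(-c\,n/P)$ tail rather than the paper's polynomial $O(P/n)$ bound; this is if anything sharper, with the caveat that reaching the stated $\CprobBoundTildeq/n$ rate then implicitly requires the number of seasonal repetitions to satisfy $n/P\gtrsim\log n$ (the paper's own polynomial bound implicitly needs $P$ essentially bounded for that rate, so you are not assuming more than the paper does). A minor plus: you correctly use the upper bound $|V_0^{\operatorname{na}}|\le n/P$ where the tail bound is monotone in the sample size, whereas the paper substitutes the lower bound $\tfrac56\tfrac nP-\CsizeOfVzero$ as the binomial size, which is slightly loose bookkeeping; your version is the cleaner one.
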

\begin{proof}
Let $n_0$ be the size of $V_0$. As the median, $\tilde{q} > \CtildeQ\, \sigma_0$ requires 
\begin{equation*}
\sum_{j \in V_0} \EINS_{\{a_{j} + \varepsilon_{j} \geq \CtildeQ\, \sigma_0 \}} > \frac{1}{2} n_0.
\end{equation*}
We obtain from Lemma~\ref{lemma:sizeOfV0} and $n_0 = \lceil \frac{n}{P} \rceil \leq \frac{n}{P} - 1$ that
\begin{align*}
& \Pj\left( \sum_{j \in V_0} \EINS_{\{a_{j} + \varepsilon_{j} \geq \CtildeQ\, \sigma_0 \}} > \frac{1}{2} n_0 \right)\\
\leq & \Pj\left( \sum_{j = 1}^{\frac{5}{6}\frac{n}{P} - \CsizeOfVzero} \EINS_{\{\varepsilon_{j} \geq \CtildeQ\, \sigma_0 \}} > \frac{1}{2} \frac{n}{P} - 1 - \frac{1}{6} \frac{n}{P} - \CsizeOfVzero \right).
\end{align*}

The random variable on the left hand side in the probability is Binomial distributed with size $\frac{5}{6}\frac{n}{P} - \CsizeOfVzero$ and success probability $\Pj(\varepsilon_1 \geq \CtildeQ\, \sigma_0)$. We use the following bound from \citep[(3.5)]{feller1958introduction},
\begin{equation*}
\Pj(S_n \geq r) \leq \frac{r (1- p)}{(r - np)^2},
\end{equation*}
if $r \geq np$, where $S_n$ is Binomial distributed with size $n$ and success probability $p$. It follows from Lemma~\ref{lemma:boundPhi} that
\begin{align*}
\Pj(\varepsilon_1 \geq \CtildeQ\, \sigma_0)
= 1 - \Phi\left( \CtildeQ \right)
\leq \frac{1}{2} - \frac{5}{6}\frac{\CtildeQ}{\sqrt{2\pi}}
\leq \frac{1}{4}.
\end{align*}
Thus,
\begin{align*}
\Pj\left( \tilde{q} > \CtildeQ\, \sigma_0 \right)
\leq \frac{\frac{1}{2} \frac{n}{P}}{\left(\frac{1}{2} \frac{n}{P} - 1 - \frac{1}{6} \frac{n}{P} - \CsizeOfVzero  - \frac{1}{4} \frac{n}{P} \right)^2}
\leq \CprobBoundTildeq \frac{1}{n}.
\end{align*}
\end{proof}

The following lemma bounds the probability that the noise is between $\tilde{q}$ and $\tilde{q} + \gamma$.
\begin{Lemma}\label{lemma:boundSuccessProbability}
Under the setting of Proposition~\ref{proposition:estSeasonality}, if $n$ is large enough, then there exists a constant $\CsuccessProbability > 0$ such that
\begin{equation*}
\Pj(\tilde{q} \leq \varepsilon_1 \leq \tilde{q} + \gamma) \geq \CsuccessProbability \frac{\gamma}{\sigma_0}.
\end{equation*}
\end{Lemma}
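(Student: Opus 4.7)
The plan is to condition on $\tilde{q}$, noting that $\tilde{q}$ is independent of $\varepsilon_1$ since $1 \notin V_0$ (because $V_0 = \{P, 2P, \ldots, \tilde{n}_0 P\}$ with $P \geq 2$). Once we condition, we get
\[
\Pj(\tilde{q} \leq \varepsilon_1 \leq \tilde{q} + \gamma \mid \tilde{q}) = \Phi\!\left(\tfrac{\tilde{q} + \gamma}{\sigma_0}\right) - \Phi\!\left(\tfrac{\tilde{q}}{\sigma_0}\right) = \phi(\xi)\, \tfrac{\gamma}{\sigma_0}
\]
for some random $\xi$ between $\tilde{q}/\sigma_0$ and $(\tilde{q} + \gamma)/\sigma_0$, by the mean value theorem.

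The next step is to lower bound $\phi(\xi)$ on a high-probability event. Restricting to $E \coloneq \{|\tilde{q}| \leq \CtildeQ\,\sigma_0\}$ gives $|\xi| \leq \CtildeQ + \gamma/\sigma_0$, which is strictly less than $1$ for $n$ large, since $\CtildeQ < 0.752$ and $\gamma/\sigma_0 = \sqrt{\log(n)/n} \to 0$. Lemma~\ref{lemma:boundphi} then yields
\[
\phi(\xi) \geq \tfrac{1}{\sqrt{2\pi}}\!\left(1 - \tfrac{(\CtildeQ + \gamma/\sigma_0)^2}{2}\right),
\]
and the right-hand side stays above a positive constant for $n$ large, because $(\CtildeQ + o(1))^2/2 < 1$.

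It remains to show $\Pj(E) \to 1$. Lemma~\ref{lemma:boundTildeq} directly gives $\Pj(\tilde{q} > \CtildeQ\,\sigma_0) \leq \CprobBoundTildeq / n$. The lower tail $\Pj(\tilde{q} < -\CtildeQ\,\sigma_0) \leq \CprobBoundTildeq / n$ follows by repeating the proof of Lemma~\ref{lemma:boundTildeq} verbatim after replacing the event $\{a_j + \varepsilon_j \geq \CtildeQ\,\sigma_0\}$ by $\{a_j + \varepsilon_j \leq -\CtildeQ\,\sigma_0\}$; the bound on the number of anomalous points in $V_0$ from Lemma~\ref{lemma:sizeOfV0} is sign-insensitive and still applies, and $\Pj(\varepsilon_1 \leq -\CtildeQ\,\sigma_0) = \Pj(\varepsilon_1 \geq \CtildeQ\,\sigma_0)$ by symmetry. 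Combining everything,
\[
\Pj(\tilde{q} \leq \varepsilon_1 \leq \tilde{q} + \gamma) \geq \E\!\left[\phi(\xi)\,\tfrac{\gamma}{\sigma_0}\,\EINS_E\right] \geq \tfrac{1}{\sqrt{2\pi}}\!\left(1 - \tfrac{(\CtildeQ + \gamma/\sigma_0)^2}{2}\right)\!\left(1 - \tfrac{2\CprobBoundTildeq}{n}\right) \tfrac{\gamma}{\sigma_0},
\]
which is bounded below by $\CsuccessProbability\,\gamma/\sigma_0$ for a suitable constant $\CsuccessProbability > 0$ and $n$ large.

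The main obstacle is minor: Lemma~\ref{lemma:boundTildeq} is only stated as a one-sided bound, so we must explicitly invoke its symmetric counterpart for the lower tail. The constant $\CtildeQ < 0.752$ chosen in Lemma~\ref{lemma:boundTildeq} is comfortably below $1$, which is exactly what is needed so that Lemma~\ref{lemma:boundphi} yields a uniform positive lower bound on $\phi(\xi)$ throughout the good event, even after adding the vanishing slack $\gamma/\sigma_0$.
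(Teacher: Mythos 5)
Your proof is correct, but it takes a different route from the paper's. The paper never conditions on $\tilde{q}$: it fixes the deterministic point $q = 0.752\,\sigma_0$, lower bounds $\Pj(q \leq \varepsilon_1 \leq q + \gamma)$ by the minimal density value times the interval length together with Lemma~\ref{lemma:boundphi}, and then passes to the random interval via the one-sided subtraction $\Pj(\tilde{q} \leq \varepsilon_1 \leq \tilde{q} + \gamma) \geq \Pj(q \leq \varepsilon_1 \leq q + \gamma) - \Pj(\tilde{q} > q)$, so it only ever invokes the upper-tail bound of Lemma~\ref{lemma:boundTildeq} and needs no independence between $\varepsilon_1$ and $\tilde{q}$. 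You instead exploit that independence (valid since $V_0 \subseteq \{P, 2P, \ldots\}$, so $1 \notin V_0$ for $P \geq 2$), apply the mean value theorem to $\Phi$, and restrict to the two-sided event $\{|\tilde{q}| \leq \CtildeQ\,\sigma_0\}$, which forces you to also prove the mirror-image lower-tail bound $\Pj(\tilde{q} < -\CtildeQ\,\sigma_0) \leq \CprobBoundTildeq/n$; your observation that the proof of Lemma~\ref{lemma:boundTildeq} goes through verbatim with the event reflected (Lemma~\ref{lemma:sizeOfV0} is sign-insensitive, and the Gaussian tail is symmetric) is accurate. The trade-off: your argument costs one extra (easy) lemma and the mild independence requirement, but it is more airtight — the paper's subtraction step implicitly treats every realization $\tilde{q} \leq q$ as favourable, yet for $\tilde{q}$ far in the negative tail the interval $[\tilde{q}, \tilde{q} + \gamma]$ carries little mass, so a lower-tail control of $\tilde{q}$ of exactly the kind you supply is what makes that comparison rigorous. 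Both proofs ultimately rest on the same ingredients ($\CtildeQ < 0.752 < 1$, Lemma~\ref{lemma:boundphi}, and the Feller binomial bound behind Lemma~\ref{lemma:boundTildeq}), and both deliver a constant $\CsuccessProbability > 0$ of the same form.
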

\begin{proof}
Let $q \coloneq 0.752\, \sigma_0$. Then, if $n$ is large enough, there exists a constant $0 < \tilde{C}_1 < 1$ such that
\begin{equation}\label{eq:boundqplusgamma}
\frac{q + \gamma}{\sigma_0} \leq \tilde{C}_1.
\end{equation}
In the first inequality we lower bound the integral by minimal value times the length of the interval of integration. We use Lemma~\ref{lemma:boundphi} for the second and \eqref{eq:boundqplusgamma} for the third inequality. We obtain
\begin{align*}
& \Pj(q \leq \varepsilon_1 \leq q + \gamma)\\
= & \int_{\frac{q}{\sigma_0}}^{\frac{q + \gamma}{\sigma_0}} \phi(x) dx\\
\geq & \phi\left( \frac{q + \gamma}{\sigma_0} \right) \frac{\gamma}{\sigma_0} \\
\geq & \frac{1}{\sqrt{2\pi}} \left[ 1 - \frac{1}{2} \left(\frac{q + \gamma}{\sigma_0}\right)^2 \right] \frac{\gamma}{\sigma_0} \\
\geq & \left[ \frac{1}{\sqrt{2\pi}} \left(1 - \frac{1}{2} \tilde{C}_1^2\right)  \right] \frac{\gamma}{\sigma_0}.
\end{align*}
It follows from Lemma~\ref{lemma:boundTildeq} and a union bound that 
\begin{align*}
\Pj(\tilde{q} \leq \varepsilon_1 \leq \tilde{q} + \gamma)
\geq & \Pj(q \leq \varepsilon_1 \leq q + \gamma) - \Pj( \tilde{q} > q )\\
\geq & \left[ \frac{1}{\sqrt{2\pi}} \left(1 - \frac{1}{2} \tilde{C}_1^2\right)  \right] \frac{\gamma}{\sigma_0} - \CprobBoundTildeq \frac{\gamma}{\sigma_0}
\geq \CsuccessProbability \frac{\gamma}{\sigma},
\end{align*}
with $\CsuccessProbability \coloneq \frac{1}{\sqrt{2\pi}} \left(1 - \frac{1}{2} \tilde{C}_1^2\right) - \CprobBoundTildeq > 0$.
\end{proof}

With these lemmas we are finally able to prove Lemma~\ref{lemma:medianInterval}.
\begin{proof}[Proof of Lemma~\ref{lemma:medianInterval}]
For any $p \in \{0,\ldots,P - 1\}$, let $n_p$ be the number of observations in $V_p$. As $\tilde{\vartheta}_p$ is the median of the observations in $V_p$,
\begin{align*}
& \Pj\left( \gamma_{L,p} \leq \tilde{\vartheta}_p \leq  \gamma_{U,p}\right)\\
\geq & 1 - \Pj\left( \sum_{j \in V_p} \EINS_{Y_j > \gamma_{U,p}} \geq \frac{1}{2} n_p \right) - \Pj\left( \sum_{j \in V_p} \EINS_{Y_j < \gamma_{L,p}} \geq \frac{1}{2} n_p \right).
\end{align*}

We focus on the first probability, but bounding the second probability follows similar arguments. The following relies on the fact that $V_0$ and $V_p$ are similar for most observations. For $p \neq 0$, we have that
\begin{align*}
& \sum_{j \in V_p} \EINS_{\{Y_j > \gamma_{U,p}\}}\\
= & \sum_{j = 0}^{\tilde{n}_p - 1} \EINS_{\{Y_{p + j P} > \gamma_{U,p},\ p + j P \in V_p\}}\\
\leq & \sum_{j = 0}^{\tilde{n}_p - 1} \EINS_{\{a_{p + j P} + \varepsilon_{p + j P} > \tilde{q} + \gamma,\ p + j P \in V_p\}}\\
\leq & \sum_{j = 0}^{\tilde{n}_0 - 1} \EINS_{\{a_{(j + 1) P} + \varepsilon_{p + j P} > \tilde{q} + \gamma,\ (j + 1) P \in V_0\}} + \sum_{j = 0}^{\tilde{n}_p - 1} \EINS_{a_{p + j P} \neq a_{(j + 1) P}} + \sum_{j = 0}^{\tilde{n}_0 - 1} \EINS_{p + j P \in V_p,\ (j + 1) P \not\in V_0}\\
\leq & \sum_{j = 0}^{\tilde{n}_0 - 1} \EINS_{\{a_{(j + 1) P} + \varepsilon_{p + j P} > \tilde{q} + \gamma,\ (j + 1) P \in V_0\}} + 4K + 1.
\end{align*}
The last inequality holds, since there are at most $4K + 1$ many $j's$ for which either $a_{p + j P} \neq a_{(j + 1) P}$ or $(j + 1) P \in V_0$, but $p + j P \in V_p$, since this can only happen at the beginning or end of a collective anomaly or an estimated collective anomaly or for the last observation in $V_p$. The last line has the same distribution as
\begin{equation*}
\sum_{j \in V_0} \EINS_{\{a_{j} + \varepsilon_{j} > \tilde{q} + \gamma\}} + 4K + 1.
\end{equation*}
For similar reasons we have that $n_p \geq n_0 - 2K$. Hence, there exists a constant $\tilde{C}_1 > 0$ such that
\begin{equation*}
\Pj\left( \sum_{j \in V_p} \EINS_{Y_j > \gamma_{U,p}} \geq \frac{1}{2} n_p \right)
\leq \Pj\left( \sum_{j \in V_0} \EINS_{\{a_{j} + \varepsilon_{j} > \tilde{q} + \gamma\}} \geq \frac{1}{2} n_0 - \tilde{C}_1 \right).
\end{equation*} 
If $p = 0$, we can obtain the same inequality with $\tilde{C}_1 = 0$. By the definition of $\tilde{q}$ \eqref{eq:definitionTildeq},
\begin{equation*}
\sum_{j \in V_0} \EINS_{\{a_{j} + \varepsilon_{j} \geq \tilde{q} \}} = \frac{1}{2} n_0.
\end{equation*}
Thus,
\begin{align*}
\Pj\left( \sum_{j \in V_0} \EINS_{\{a_{j} + \varepsilon_{j} > \tilde{q} + \gamma\}} \geq \frac{1}{2} n_0 - \tilde{C}_1 \right)
= & \Pj\left( \sum_{j \in V_0} \EINS_{\{\tilde{q} \leq a_{j} + \varepsilon_{j} \leq \tilde{q} + \gamma\}} \leq \tilde{C}_1 \right)\\
\leq & \Pj\left( \sum_{j = 1}^{\frac{5}{6}\frac{n}{P} - \CsizeOfVzero} \EINS_{\{\tilde{q} \leq \varepsilon_{j} \leq \tilde{q} + \gamma\}} \leq \tilde{C}_1 \right),
\end{align*}
where the final inequality follows from Lemma~\ref{lemma:sizeOfV0}.

The random variable on the left hand side in the probability is Binomial distributed with size $\frac{5}{6}\frac{n}{P} - \CsizeOfVzero$ and success probability $\Pj(\tilde{q} \leq \varepsilon_1 \leq \tilde{q} + \gamma)$. We use the following bound from \citep[(3.6)]{feller1958introduction},
\begin{equation*}
\Pj(S_n \leq r) \leq \frac{(n - r)p}{(np - r)^2},
\end{equation*}
if $r \leq np$, where $S_n$ is Binomial distributed with size $n$ and success probability $p$. From this and Lemma~\ref{lemma:boundSuccessProbability} we obtain
\begin{align*}
\Pj\left( \sum_{j = 1}^{\frac{5}{6}\frac{n}{P} - \CsizeOfVzero} \EINS_{\{\tilde{q} \leq \varepsilon_{j} \leq \tilde{q} + \gamma\}} \leq \tilde{C}_1 \right)
\leq \frac{\frac{n}{P}}{\big((\frac{5}{6}\frac{n}{P} - \CsizeOfVzero)\CsuccessProbability \frac{\gamma}{\sigma_0} - \tilde{C}_1\big)^2} \to 0,
\end{align*}
as $n\to \infty$. This completes the proof as showing that the second probability converges to $0$  follows similar arguments.
\end{proof}

We now in the position to show Proposition~\ref{proposition:estSeasonality}.
\begin{proof}[Proof of Proposition~\ref{proposition:estSeasonality}]
It follows from Lemma~\ref{lemma:medianInterval}, $\hat{s}_l = \tilde{s}_l - \frac{1}{P} \sum_{p = 0}^{P - 1} \tilde{s}_p$ that
\begin{equation*}
\Pj\left(\sum_{l = 1}^{n} (\hat{s}_l - s_l)^2 \leq \CfinalSeasonalEstimate \log(n) \sigma_0\right) \to 1,
\end{equation*}
as $n \to \infty$, with $\CfinalSeasonalEstimate \coloneq 1 + \CremainderTrend$.
\end{proof}

\subsection{Anomaly estimation}\label{sec:proofAnomaly}

We will show that under the following setting we can estimate the anomaly component well. In Proposition~\ref{proposition:goodFinalAnomalyDetection} we show that $\hat{\mathcal{A}} \in \mathcal{B}$ with probability converging to one. In Proposition~\ref{proposition:goodFinalAnomalyEstimate} we will bound the L2 error of the anomaly estimate. For both proofs we will make great use of similar results from Section~\ref{sec:theoryTrend}.

\begin{Setting}\label{setting:anomaly}
We assume that $Y = A + R + W + \varepsilon$, with $A, \varepsilon, X, \beta$ as defined in Section~\ref{sec:model}. We further assume that there are no point any anomalies, i.e.~$O = \emptyset$. The remainder $R = (r_1,\ldots,r_{n})$ is a deterministic polynomial of order $Q$ for which there exists a constant $\CremainderTrend > 0$ such that 
\begin{equation}\label{eq:remainderTrendA}
\sum_{i = 1}^{n} r_i^2 \leq \CremainderTrend \sigma_0^2 \log(n).
\end{equation}
The remainder $W = (w_1,\ldots,w_{n})$ is a deterministic periodic vector with period length $P$ for which there exist a constant $\CremainderSeasonal > 0$ such that 
\begin{equation}\label{eq:remainderSeasonalyA}
\sum_{i = 1}^{n} w_i^2 \leq \CremainderSeasonal \sigma_0^2 \log(n).
\end{equation}
Let $\CvarianceEstimateLower \sigma_0 \leq \hat{\sigma}_0 \leq \CvarianceEstimate \sigma_0$ be deterministic. Suppose that Assumptions~\ref{assumption:gridBandQ},~\ref{assumption:segmentLengthTrend},~and~\ref{assumption:anomalyLengthTrend} hold.
\end{Setting}

\begin{Proposition}\label{proposition:goodFinalAnomalyDetection}
Given Setting~\ref{setting:anomaly} we have that
\begin{equation*}
\Pj\big(\hat{\mathcal{A}} \in \mathcal{B}\big) \to 1, \text{ as } n \to \infty.
\end{equation*}
\end{Proposition}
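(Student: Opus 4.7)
The plan is to mirror the proof of Proposition~\ref{proposition:withinSetB} with substantial simplifications, since in Setting~\ref{setting:anomaly} there is no trend to estimate and no sampling procedure involved; the estimator $\hat{\mathcal{A}}$ is simply the minimiser of $C(Y, \tilde{\mathcal{A}}, \tilde{\mu})$ over all anomaly configurations. The key observation is that Setting~\ref{setting:anomaly} matches Setting~\ref{setting:trend}\eqref{setting:Trenda} if we formally identify the true polynomial trend with $R$ and choose the trivial zero polynomial estimate $\tilde{T} = \hat{T} = 0$ of degree $\rQ = 0$. Then $\tilde{T} - T = -R$, and the hypothesis \eqref{eq:boundRemainderTrendT} is satisfied through \eqref{eq:remainderTrendA}, while \eqref{eq:remainderSeasonalyT} follows from \eqref{eq:remainderSeasonalyA}. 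Assumptions~\ref{assumption:segmentLengthTrend} and~\ref{assumption:anomalyLengthTrend} are assumed directly in Setting~\ref{setting:anomaly}.

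First I would condition on the high-probability event
\begin{equation*}
\max_{1 \leq i \leq j \leq n} \varepsilon_{i:j}^\top H_{i:j} \varepsilon_{i:j} \leq \CnoiseLin \sigma_0^2 \log(n),
\end{equation*}
whose probability tends to one by Lemma~\ref{lemma:noiseLin}. Here $H_{i:j}$ is the projection onto constants (the case $\rQ = 0$), for which the lemma applies with $\min\{Q+1, j-i+1\}$ degrees of freedom. On this event, Lemmas~\ref{lemma:costReductionEst},~\ref{lemma:costExtraCp},~and~\ref{lemma:costMissingCp} all apply with $\rQ = 0$. Note that the definition of $\kappa_k$ in \eqref{eq:definitionKappak} enforces $\kappa_k \geq 4^{Q+1} \geq 4^{Q+1} = 4$ as required by Lemma~\ref{lemma:meanMissCp} for constant step comparisons.

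Next I would invoke Lemma~\ref{lemma:differenceCosts} with $T = R$, $\tilde{T} = 0$, and $\hat{T} = 0$: for any candidate $\hat{\mathcal{A}} \notin \mathcal{B}$ and any $\hat{\mu}$, the lemma yields
\begin{equation*}
C(Y, \hat{\mathcal{A}}, \hat{\mu}) - C(Y, \mathcal{A}, \mu) - \max_{Q \in \mathbf{Q}} Q\, \hat{\sigma}_0 \log(n) > 0,
\end{equation*}
which implies in particular $C(Y, \hat{\mathcal{A}}, \hat{\mu}) > C(Y, \mathcal{A}, \mu)$. Since $\hat{\mathcal{A}}$ is by definition the minimiser in \eqref{eq:capa_estimator}, any $\hat{\mathcal{A}} \notin \mathcal{B}$ would be beaten by the true configuration $\mathcal{A}$ (which always lies in $\mathcal{B}$). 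Hence, on the noise event above, $\hat{\mathcal{A}} \in \mathcal{B}$, and taking $n \to \infty$ gives the claim.

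The main obstacle I foresee is a purely bookkeeping one, namely verifying that Lemma~\ref{lemma:differenceCosts} and its supporting lemmas really do apply in the degenerate case $\rQ = 0$ with $\hat{T} = \tilde{T} = 0$ and with $R$ playing both the role of the true trend and (up to sign) of the residual. In particular, in the decomposition into the terms $T_{1,k}$ and $T_{2,k}$ from \eqref{eq:defT1k} and \eqref{eq:defT2k}, the polynomial fits $\hat{\theta}_{i:j}$ degenerate to the sample mean on each sub-segment, and Lemma~\ref{lemma:meanMissCp} reduces to the elementary fact that fitting a single constant across a true jump of magnitude $\Delta_k$ on $\kappa_k$ observations incurs cost at least of order $\kappa_k \Delta_k^2$. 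All these reductions are straightforward, so no genuinely new analytic input is required beyond what Sections~\ref{sec:proofBoundNoise} and~\ref{sec:proofBoundCostDifference} already provide.
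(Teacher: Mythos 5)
Your overall architecture is the right one and is essentially the paper's: the paper proves this proposition in two lines by viewing the polynomial remainder $R$ as the estimation error of a fixed trend estimate, taking the singleton set $\mathcal{T}=\{\tilde{T}\}$, and invoking Proposition~\ref{proposition:withinSetB}, which itself is exactly the combination of Lemma~\ref{lemma:noiseLin} and Lemma~\ref{lemma:differenceCosts} plus the minimality of $\hat{\mathcal{A}}$ that you spell out directly. So inlining those two lemmas instead of citing Proposition~\ref{proposition:withinSetB} is not a different route, just an unpacked version of the same one.

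However, your instantiation contains a genuine error: you set $\rQ=0$, claim that the projections $H_{i:j}$ degenerate to projections onto constants, and assert that Lemma~\ref{lemma:meanMissCp} reduces to the elementary constant-versus-jump bound. In Setting~\ref{setting:anomaly} the remainder $R$ is a polynomial of degree $Q$, and since $R$ plays the role of the trend $T$ of Setting~\ref{setting:trend}\eqref{setting:Trenda}, the correct degree is $\rQ=Q$, not $0$ (your chain ``$\kappa_k \geq 4^{Q+1} \geq 4^{Q+1}=4$'' only makes sense if $Q=0$). With $\rQ=0$ the hypotheses of the supporting lemmas fail: condition \eqref{eq:boundRemainderTrendT} requires $\tilde{T}-T=-R$ to be a polynomial of degree $\rQ$; the identity $\hat{\theta}_{i:j}=\theta_k+T_{i:j}+H_{i:j}W_{i:j}+H_{i:j}\varepsilon_{i:j}$ used in Lemma~\ref{lemma:costReductionEst} breaks down because $\theta_k+R$ is not constant on a segment; and in Lemma~\ref{lemma:costMissingCp} the function $\theta_k+t_l-\hat{t}_l=\theta_k+r_l$ is a genuine degree-$Q$ polynomial, so you need Lemma~\ref{lemma:meanMissCp} in its full polynomial form -- the whole point of that lemma (and of the $4^{Q+1}$ term in \eqref{eq:definitionKappak}) is to rule out a degree-$Q$ polynomial tracking the step, which is not ``elementary'' and does not reduce to fitting a single constant. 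The fix is straightforward: keep $\hat{T}=\tilde{T}=0$ but regard them as degree-$Q$ polynomials with zero coefficients, take $\rQ=Q$, verify \eqref{eq:boundRemainderTrendT} via \eqref{eq:remainderTrendA} and \eqref{eq:remainderSeasonalyT} via \eqref{eq:remainderSeasonalyA}, and then your conditioning on the event of Lemma~\ref{lemma:noiseLin} followed by Lemma~\ref{lemma:differenceCosts} and the minimality of $\hat{\mathcal{A}}$ gives the claim, in agreement with the paper.
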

\begin{proof}
Let $\tilde{T}$ be the estimate that was used to obtain the remainder $R$ and let $\mathcal{T} = \{\tilde{T}\}$. Then, the statement is a direct consequence of Proposition~\ref{proposition:withinSetB}.
\end{proof}

\begin{Proposition}\label{proposition:goodFinalAnomalyEstimate}
Given Setting~\ref{setting:anomaly} there exists a constant $\CfinalAnomalyEstimate > 0$ such that
\begin{equation*}
\Pj\big( (\hat{A} - A)^\top (\hat{A} - A) \leq \CfinalAnomalyEstimate \sigma_0^2 \log(n) \big) \to 1,
\end{equation*}
as $n \to \infty$.
\end{Proposition}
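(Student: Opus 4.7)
The plan is to condition on the high-probability event, guaranteed by Proposition~\ref{proposition:goodFinalAnomalyDetection} together with the uniform noise bound from Lemma~\ref{lemma:noiseLin}, that $\hat{\mathcal{A}} \in \mathcal{B}$ and that $\max_{1 \leq i \leq j \leq n} \varepsilon_{i:j}^\top H_{i:j} \varepsilon_{i:j} \leq \CnoiseLin \sigma_0^2 \log(n)$. On this event the number of estimated segments equals $K$, the locations are pinned to within $\kappa_k$ of the truth, and the noise averages are uniformly controlled.

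First I would write the pointwise decomposition
\begin{equation*}
\| \hat{A} - A \|_2^2 = \sum_{k = 1}^K \bigg[ \vert \mathcal{I}_k \vert (\hat{\mu}_k - \mu_k)^2 + \sum_{i \in \mathcal{D}_k}(\hat{a}_i - a_i)^2 \bigg],
\end{equation*}
where $\mathcal{I}_k = (b_k, e_k] \cap (\hat{b}_k, \hat{e}_k]$ is the overlap and $\mathcal{D}_k$ is the symmetric difference, of total length at most $d_k^L + d_k^R$ with $d_k^L \coloneq \vert \hat{b}_k - b_k \vert$ and $d_k^R \coloneq \vert \hat{e}_k - e_k \vert$. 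On $\mathcal{D}_k$ exactly one of $a_i, \hat{a}_i$ vanishes, so the sliver contribution is bounded by $(d_k^L + d_k^R)(2\mu_k^2 + 2(\hat{\mu}_k - \mu_k)^2)$. This reduces the problem to controlling $(\hat{\mu}_k - \mu_k)^2$ on the overlap and $\sum_k (d_k^L + d_k^R)\mu_k^2$ on the slivers.

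Next I would bound $(\hat{\mu}_k - \mu_k)^2$ using the explicit form $\hat{\mu}_k = (\hat{e}_k - \hat{b}_k)^{-1}\sum_{i \in (\hat{b}_k, \hat{e}_k]} y_i$. Substituting $y_i = a_i + r_i + w_i + \varepsilon_i$ and separating the contribution of $a$ on overlap versus sliver gives
\begin{equation*}
\hat{\mu}_k - \mu_k = -\frac{\vert \mathcal{S}_k \vert}{\hat{e}_k - \hat{b}_k} \mu_k + \frac{1}{\hat{e}_k - \hat{b}_k}\sum_{i \in (\hat{b}_k, \hat{e}_k]}(\varepsilon_i + r_i + w_i),
\end{equation*}
where $\mathcal{S}_k \subset \mathcal{D}_k$. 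The noise average is handled via Lemma~\ref{lemma:noiseLin} applied to the rank-one projector onto constants, yielding $\big(\frac{1}{\hat{e}_k - \hat{b}_k}\sum \varepsilon_i\big)^2 (\hat{e}_k - \hat{b}_k) \leq \CnoiseLin \sigma_0^2 \log(n)$, and the remainder averages by Cauchy--Schwarz together with \eqref{eq:remainderTrendA} and \eqref{eq:remainderSeasonalyA}. After multiplying by $\vert \mathcal{I}_k \vert \leq \hat{e}_k - \hat{b}_k$ and invoking Assumption~\ref{assumption:segmentLength} to ensure $\hat{e}_k - \hat{b}_k \asymp e_k - b_k$, the noise and remainder contributions each give $O(\sigma_0^2 \log(n))$ per segment, and the bias term contributes $O\big((d_k^L + d_k^R)^2 \mu_k^2 /(e_k - b_k)\big)$.

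The main obstacle is the sliver contribution $\sum_k (d_k^L + d_k^R)\mu_k^2$, since using only $d_k^{L/R} < \kappa_k$ and \eqref{eq:definitionKappak} crudely yields $O(\sigma_0^2 \log(n)^{1 + \delta})$, which exceeds the advertised rate. The resolution is to refine the location guarantee supplied by $\hat{\mathcal{A}} \in \mathcal{B}$ via the CAPA optimality $C_O(Y, \hat{\mathcal{A}}, \hat{\mu}) \leq C_O(Y, \mathcal{A}, \mu)$ satisfied by the final-stage minimiser. Adapting the cost-difference calculation of Lemma~\ref{lemma:costMissingCp} to this final run (where only $R, W$ with $\|R\|_2^2, \|W\|_2^2 = O(\sigma_0^2 \log n)$ remain rather than a full polynomial trend), each misplaced boundary of size $d$ produces a deterministic cost gain of at least $\Cmiss d \mu_k^2$ set against a stochastic slack of at most $O(\sigma_0 \mu_k \sqrt{d \log n}) + O(\sigma_0^2 \log n)$; solving the resulting quadratic in $\sqrt{d}$ forces $d \mu_k^2 = O(\sigma_0^2 \log n)$ at every boundary. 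Combining this sharpened location bound with the overlap estimate then delivers $(\hat{A} - A)^\top(\hat{A} - A) \leq \CfinalAnomalyEstimate \sigma_0^2 \log(n)$ on the conditioning event, completing the proof.
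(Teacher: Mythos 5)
Your conditioning on $\hat{\mathcal{A}} \in \mathcal{B}$ plus the uniform noise bound, and your observation that the crude localization $d_k^{L/R} < \kappa_k$ only yields $O(\sigma_0^2 \log(n)^{1+\delta})$, are both on target. However, the step that carries your whole proof --- the sharpened localization $d\,\mu_k^2 = O(\sigma_0^2\log n)$ at every boundary --- is only asserted, and as sketched it has a gap. When the estimator misses a piece of a true anomaly of length $d$, comparing $\hat{\mathcal{A}}$ with the boundary-corrected candidate (keeping the fitted level $\hat{\mu}_k$, so that the penalties and all other segments cancel) gives a deterministic gain of $d\,\hat{\mu}_k(2\mu_k - \hat{\mu}_k) = d\bigl(\mu_k^2 - (\mu_k - \hat{\mu}_k)^2\bigr)$, not $\Cmiss\, d\,\mu_k^2$; to convert this into control of $d\mu_k^2$ you need $|\hat{\mu}_k - \mu_k|$ small relative to $|\mu_k|$, and your argument also needs $\hat{e}_k - \hat{b}_k \asymp e_k - b_k$. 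Neither is established: your overlap analysis only gives $(\hat{\mu}_k - \mu_k)^2 \lesssim \sigma_0^2\log(n)/(e_k - b_k)$, and under Assumption~\ref{assumption:segmentLengthTrend} as stated (minimal length of order $\sigma_0^2\mu_k^{-1}\log(n)^{1+\delta+\tilde{\delta}}$) this need not be $o(\mu_k^2)$ when $\mu_k$ is small; in the same regime $\kappa_k \asymp \sigma_0^2\mu_k^{-2}\log(n)^{1+\delta}$ can be of the same order as, or larger than, $e_k - b_k$, so membership in $\mathcal{B}$ does not give $\hat{e}_k - \hat{b}_k \asymp e_k - b_k$ either. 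If you instead compare with the true segmentation and refitted means so as to avoid $\hat{\mu}_k$, the cost difference no longer localizes to a single boundary and you are back to a global comparison.

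That global comparison is exactly how the paper proves the proposition, and it makes your overlap/sliver decomposition, the explicit formula for $\hat{\mu}_k$, and the refined localization unnecessary. On the event $\hat{\mathcal{A}} \in \mathcal{B}$ we have $\hat{K} = K$, so the penalties cancel and $C(Y,\hat{\mathcal{A}},\hat{\mu}) \le C(Y,\mathcal{A},\mu)$ expands into $\sum_i(\varepsilon_i + r_i + w_i)^2 \ge \sum_i(\varepsilon_i + r_i + w_i)^2 + \sum_i(a_i - \hat{a}_i)^2 + 2\sum_i(\varepsilon_i + r_i + w_i)(a_i - \hat{a}_i)$. Since $A - \hat{A}$ is constant on each of the at most $4K + 1$ intervals of the partition generated by $\{b_k, \hat{b}_k, e_k, \hat{e}_k\}$, the noise cross term is bounded segment-wise by Lemma~\ref{lemma:mixedtermNoisePolynomial} together with Lemma~\ref{lemma:noiseLin}, and the $R$ and $W$ cross terms by Cauchy--Schwarz with \eqref{eq:remainderTrendA} and \eqref{eq:remainderSeasonalyA}; this yields $\|\hat{A} - A\|_2^2 \le c\,\sigma_0\sqrt{\log n}\,\|\hat{A} - A\|_2$ for a fixed constant $c$ and hence the claim with $\CfinalAnomalyEstimate = c^2$. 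I recommend replacing your final step by this basic-inequality argument; your per-boundary bound $d\mu_k^2 = O(\sigma_0^2\log n)$ then follows as a corollary of the proposition rather than serving as an input to it.
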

\begin{proof}
It follows from the proof of Propositions~\ref{proposition:withinSetB}~and~\ref{proposition:goodFinalAnomalyDetection} that the events $\hat{\mathcal{A}} \in \mathcal{B}$ and
\begin{equation*}
\max_{1 \leq i \leq j \leq \mn} \varepsilon_{i:j}^\top H_{i:j} \varepsilon_{i:j}
\leq \CnoiseLin \sigma_0^2 \log(n)
\end{equation*}
holds simultaneously with a probability converging to one as $n \to \infty$. In the remaining proof we show that this implies the desired statement which completes the proof.

Since $\mathcal{A} \in \mathcal{B}$, it follows from the definition of $\hat{\mathcal{A}}$ that
\begin{equation}\label{eq:inequalityCostsBoundA}
 C\big(Y, \hat{\mathcal{A}}, \hat{\mu}\big) \leq C\big(Y, \mathcal{A}, \mu\big).
\end{equation}
Also note that $\hat{\mathcal{A}} \in \mathcal{B}$ implies $\hat{K} = K$.

Let $\{\tilde{\tau}_0, \tilde{\tau}_1, \ldots, \tilde{\tau}_{\tilde{K}}, \tilde{\tau}_{\tilde{K} + 1} \} = \{ e_0, b_1,\hat{b}_1, e_1, \hat{e}_1, \ldots, b_K,\hat{b}_K, e_K, \hat{e}_K, b_{K + 1} \}$ with $\tilde{\tau}_0 \leq  \tilde{\tau}_1 \leq \cdots \leq \tilde{\tau}_{\tilde{K}} \leq \tilde{\tau}_{\tilde{K} + 1}$ and $\tilde{K} = 4 K$. Then, we obtain from \eqref{eq:inequalityCostsBoundA} for the first inequality, Lemma~\ref{lemma:mixedtermNoisePolynomial} and the Cauchy-Schwarz inequality for the second inequality, as well as Lemma~\ref{lemma:noiseLin}, \eqref{eq:remainderTrendA} and \eqref{eq:remainderSeasonalyA} that
\begin{align*}
& C\big(Y, \mathcal{A}, \mu\big) - K \lambda_{\operatorname{coll}}\, \hat{\sigma}_0^2 \log(n)^{1 + \delta}
= \sum_{i = 1}^{\mn} (y_i - a_i)^2
= \sum_{i = 1}^{\mn} (\varepsilon_i + r_i + w_i)^2\\
\geq & C\big(Y, \hat{\mathcal{A}}, \hat{\mu}\big) - K \lambda_{\operatorname{coll}}\, \hat{\sigma}_0^2 \log(n)^{1 + \delta}
= \sum_{i = 1}^{\mn} (y_i - \hat{a}_i)^2
= \sum_{i = 1}^{\mn} (a_i - \hat{a}_i + \varepsilon_i + r_i + w_i)^2\\
\geq &  \sum_{i = 1}^{\mn} (\varepsilon_i + r_i + w_i)^2 + \sum_{i = 1}^{\mn} (a_i - \hat{a}_i)^2 + 2\sum_{i = 1}^{\mn} (\varepsilon_i + r_i + w_i)(a_i - \hat{a}_i)\\
\geq &  \sum_{i = 1}^{\mn} (\varepsilon_i + r_i + w_i)^2 + \sum_{i = 1}^{\mn} (a_i - \hat{a}_i)^2\\
& - 2 \left(\sqrt{\sum_{k = 0}^{\tilde{K}} \varepsilon_{\tilde{\tau}_k + 1:\tilde{\tau}_{k + 1}}^\top H_{\tilde{\tau}_k + 1:\tilde{\tau}_{k + 1}} \varepsilon_{\tilde{\tau}_k + 1:\tilde{\tau}_{k + 1}} } + \sqrt{\sum_{i = 1}^{\mn} r_i^2} + \sqrt{\sum_{i = 1}^{\mn} w_i^2} \right) \sqrt{\sum_{i = 1}^{\mn} (a_i - \hat{a}_i)^2}\\
\geq & \sum_{i = 1}^{\mn} (\varepsilon_i + r_i + w_i)^2 + \sum_{i = 1}^{\mn} (a_i - \hat{a}_i)^2\\
& - 2 \left(\sqrt{4K \CnoiseLin \sigma_0^2 \log(n) } + \sqrt{\CremainderTrend \sigma_0^2 \log(n)} + \sqrt{\CremainderSeasonal \sigma_0^2 \log(n)} \right) \sqrt{\sum_{i = 1}^{\mn} (a_i - \hat{a}_i)^2}.
\end{align*}
This and $(a+b+c)^2 \leq 3a^2 + 3b^2 + 3c^2$ for any $a,b,c \in \mathbb{R}$ yield
\begin{align*}
\sum_{i = 1}^{\mn} (a_i - \hat{a}_i)^2 \leq & 4 \left(\sqrt{4K \CnoiseLin \sigma_0^2 \log(n) } + \sqrt{\CremainderTrend \sigma_0^2 \log(n)} + \sqrt{\CremainderSeasonal \sigma_0^2 \log(n)} \right)^2\\
\leq & \CfinalAnomalyEstimate \sigma_0^2 \log(n),
\end{align*}
with $\CfinalAnomalyEstimate \coloneq 48 K \CnoiseLin + 12 \CremainderTrend + 12 \CremainderSeasonal$. This completes the proof.
\end{proof}

\subsection{Proof of Theorem~\ref{theorem:main}}\label{sec:proofMainTheorem}

\begin{proof}[Proof of Theorem~\ref{theorem:main}]
Proposition~\ref{proposition:varianceEstimate} implies that there exists constants $0 < \CvarianceEstimateLower \leq 1$ and $\CvarianceEstimate \geq 1$ such that
\begin{equation}\label{eq:mainGoodVarianceEstimate}
\Pj\big(\CvarianceEstimateLower \sigma_0 \leq \hat{\sigma}_0 \leq \CvarianceEstimate \sigma_0\big) \to 1, \text{ as } n \to \infty.
\end{equation}
Let $\tilde{E}_1$ denote the event in \eqref{eq:mainGoodVarianceEstimate}. It follows from Lemma~\ref{lemma:differencedSequence} that if the event $\tilde{E}_1$ holds, then the differenced series $Y_D$ follows Setting~\ref{setting:trend}\eqref{setting:Trendb}.

Given Setting~\ref{setting:trend}\eqref{setting:Trendb}, Proposition~\ref{proposition:goodFinalTrendEstimate} implies that there exists a constant $\tilde{C}_1 > 0$ such that
\begin{equation}\label{eq:goodInitialTrendEstimate}
\Pj\big( (\hat{T}_0 - T_D)^\top (\hat{T}_0 - T_D) \leq \tilde{C}_1 \sigma_0^2 \log(n) \big) \to 1,
\end{equation}
Let $\tilde{E}_2$ be the event in \eqref{eq:goodInitialTrendEstimate}. Since $\hat{T}_0 - T$ is a polynomial of degree $Q$ the event $\tilde{E}_2$ implies that there exists a constant $\tilde{C}_2 > 0$ such that 
\begin{equation*}
\max_{i = 1,\ldots,n} \vert (\hat{T}_0 - T_D)_i \vert \leq \tilde{C}_2 \sigma_0^2 \log(n).
\end{equation*}
Given that they have the same parameter vector, it also implies that there exists a constant $\tilde{C}_3 > 0$ such that 
\begin{equation*}
\max_{i = 1,\ldots,n} \vert (T - \beta_0 - X \hat{\beta})_i \vert \leq \tilde{C}_3 \sigma_0^2 \log(n)
\end{equation*}
Therefore, given $\tilde{E}_2$ the remainder $Y - X \hat{\beta}$ satisfy Setting~\ref{setting:seasonality}.

Given Setting~\ref{setting:seasonality}, Proposition~\ref{proposition:estSeasonality} implies that there exists a constant $\tilde{C}_4 > 0$ such that
\begin{equation}\label{eq:goodInitialSeasonalEstimate}
\Pj\big( (\hat{S}_0 - S)^\top (\hat{S}_0 - S) \leq \tilde{C}_4 \sigma_0^2 \log(n) \big) \to 1,
\end{equation}
Let $\tilde{E}_3$ be the event in \eqref{eq:goodInitialSeasonalEstimate}. Then, given $\tilde{E}_3$ the remainder $Y - \hat{S}_0$ satisfy Setting~\ref{setting:trend}\eqref{setting:Trenda}.

Given Setting~\ref{setting:trend}\eqref{setting:Trenda}, Proposition~\ref{proposition:goodFinalTrendEstimate} implies that there exists a constant $\CfinalTrendEstimate > 0$ such that 
\begin{equation}\label{eq:goodFinalTrendEstimate}
\Pj\big( (\hat{T} - T)^\top (\hat{T} - T) \leq \CfinalTrendEstimate \sigma_0^2 \log(n) \big) \to 1,
\end{equation}
as $n \to \infty$. Moreover, Proposition~\ref{proposition:withinSetB} ensures that there exists an anomaly estimate $\tilde{\mathcal{A}}$ for which
\begin{equation}\label{eq:goodInitialAnomalyDetection}
\Pj\big(\tilde{\mathcal{A}} \in \mathcal{B}\big) \to 1, \text{ as } n \to \infty.
\end{equation}
Let $\tilde{E}_4$ be union of the events in \eqref{eq:goodFinalTrendEstimate} and \eqref{eq:goodInitialAnomalyDetection}. Since $\hat{T} - T$ is a polynomial of degree $Q$ the event $\tilde{E}_4$ implies that there exists a constant $\tilde{C}_5 > 0$ such that 
\begin{equation*}
\max_{i = 1,\ldots,n} \vert (\hat{T} - T)_i \vert \leq \tilde{C}_5 \sigma_0^2 \log(n).
\end{equation*}
Therefore, given $\tilde{E}_4$ the remainder $Y - \hat{T}$ and $\tilde{\mathcal{A}}$ satisfy Setting~\ref{setting:seasonality}.

Given Setting~\ref{setting:seasonality}, Proposition~\ref{proposition:estSeasonality} implies that there exists a constant $\CfinalSeasonalEstimate > 0$ such that 
\begin{equation}\label{eq:goodFinalSeasonalEstimate}
\Pj\left( (\hat{S} - S)^\top (\hat{S} - S) \leq \CfinalSeasonalEstimate \log(n) \sigma_0 \right) \to 1,
\end{equation}
as $n\to\infty$. Let $\tilde{E}_5$ be the event in \eqref{eq:goodFinalSeasonalEstimate}. Then, given $\tilde{E}_4 \cup \tilde{E}_5$ the remainder $Y - \hat{T} - \hat{S}$ satisfy Setting~\ref{setting:anomaly}.

Given Setting~\ref{setting:anomaly}, Proposition~\ref{proposition:goodFinalAnomalyDetection} implies that \eqref{eq:goodFinalAnomalyDetection} holds. Finally, given Setting~\ref{setting:anomaly} Proposition~\ref{proposition:goodFinalAnomalyEstimate} yields that there exists a constant $\CfinalAnomalyEstimate > 0$ such that
\begin{equation*}
\Pj\big( (\hat{A} - A)^\top (\hat{A} - A) \leq \CfinalAnomalyEstimate \sigma_0^2 \log(n) \big) \to 1,
\end{equation*}
as $n \to \infty$.

Thus, both statements in Theorem~\ref{theorem:main} hold given the events $\tilde{E}_1, \tilde{E}_2, \tilde{E}_3, \tilde{E}_4$, and $\tilde{E}_5$ with $\CfinalEstimate = \max\{\CfinalTrendEstimate, \CfinalSeasonalEstimate, \CfinalAnomalyEstimate\}$. The proof follows from a union bound as this is a finite number of events and all of them hold with probability converging to one as $n \to \infty$.
\end{proof}

\section*{Funding}
This research was conducted whilst Y. Zhang was an EPSRC funded PhD student at Lancaster University (EP/V520214/1, project number 2614268). Eckley gratefully acknowledges the financial support of EPSRC grants EP/T025964/1 (Net0i) and EP/Z531327/1 (DASS).

\bibliographystyle{apalike}
\bibliography{ref}

\end{document}